\title{Primes via Zeros: Interactive Proofs for Testing Primality of Natural Classes of Ideals}
\author{
Abhibhav Garg
\thanks{University of Waterloo, Cheriton School of Computer Science. \email{\{a65garg, rafael\}@uwaterloo.ca}}
\and 
Rafael Oliveira~\orcidlink{0000-0001-8917-8689}
\samethanks
\and 
Nitin Saxena~\orcidlink{0000-0001-6931-898X}
\thanks{IIT Kanpur, Department of CSE. \email{nitin@cse.iitk.ac.in}}
}
\date{}
\begin{document}

\maketitle

\begin{abstract}
A central question in mathematics and computer science is the question of determining whether a given ideal $I$ is prime, which geometrically corresponds to the zero set of $I$, denoted $Z(I)$, being irreducible.
The case of principal ideals (i.e., $m=1$) corresponds to the more familiar absolute irreducibility testing of polynomials, where the seminal work of (Kaltofen 1995) yields a randomized, polynomial time algorithm for this problem.
However, when $m > 1$, the complexity of the primality testing problem seems much harder.
The current best algorithms for this problem are only known to be in EXPSPACE.    

Such drastic state of affairs has prompted research on the primality testing problem (and its more general variants, the primary decomposition problem, and the problem of counting the number of irreducible components) for natural classes of ideals.
Notable classes of ideals are the class of radical ideals, complete intersections (and more generally Cohen-Macaulay ideals).
For radical ideals, the current best upper bounds are given by (B\"{u}rgisser \& Scheiblechner, 2007), putting the problem in $\PSPACE$.
For complete intersections, the primary decomposition algorithm of (Eisenbud, Huneke, Vasconcelos 1992) coupled with the degree bounds of (DFGS 1991), puts the ideal primality testing problem in EXP.  
In these situations, the only known complexity-theoretic lower bound for the ideal primality testing problem is that it is $\coNP$-hard for the classes of radical ideals, and equidimensional Cohen-Macaulay ideals.

In this work, we significantly reduce the complexity-theoretic gap for the ideal primality testing problem for the important families of ideals $I$ (namely, {\em radical ideals} and {\em equidimensional Cohen-Macaulay ideals}).
For these classes of ideals, assuming the Generalized Riemann Hypothesis, we show that primality testing lies in $\Sigma_3^p \cap \Pi_3^p$.
This significantly improves the upper bound for these classes, approaching their lower bound, as the primality testing problem is $\coNP$-hard for these classes of ideals.

Another consequence of our results is that for equidimensional Cohen-Macaulay ideals, we get the first PSPACE algorithm for primality testing, exponentially improving the space and time complexity of prior known algorithms. 
\end{abstract}

\newpage
\setcounter{tocdepth}{2}
\tableofcontents

\newpage

\section{Introduction}\label{section: introduction}
Given a set of polynomials $f_1, \dots, f_m \in \bC[x_1, \dots, x_n]$, a fundamental algorithmic task in computer algebra and algebraic geometry is the ``factorization'' of the ideal $I := (f_1, \dots, f_m)$ into ``irreducible'' components, generalizing the task of factoring a given polynomial into its irreducible factors.
Given the more complex structure of ideals in polynomial rings, the right generalization of the polynomial factoring problem is a primary decomposition of the ideal $I$, where one decomposes $I$ into ``primary ideals,'' which can be thought of, in the geometric sense, as the ``irreducible components'' of the zero set defined by the ideal $I$, accounted with their ``multiplicities.''
This task was first undertaken in the pioneering works of Lasker and Hermann~\cite{lasker1905theorie, hermann1926frage}, and due to its paramount importance in computer algebra and algebraic geometry, it has been the subject of extensive research by algebraic geometers, computer algebraists and complexity theorists ever since, as can be seen in \cite{seidenberg1974constructions,seidenberg1978constructions,gianni1988grobner,eisenbud1992direct,bayer1993can} and references therein.

The above task naturally leads to the basic problem of deciding whether a given ideal is \emph{prime}, which also has been extensively studied in theoretical computer science and algebraic geometry, as can be seen in \cite{heintz1981absolute,kaltofen1995effective,burgisser2004variations,burgisser2007differential,eisenbud2013commutative} and references therein.
Recall that an ideal $I \subset R$ is said to be prime if the quotient ring $R / I$ is an integral domain, equivalently if whenever $fg \in I$ we have $f \in I$ or $g \in I$. 
In the Turing model, which is the model that we will focus on in this work, the ideal primality testing problem can be formally defined in the following way.

\begin{problem}[Ideal primality testing]\label{problem: ideal primality testing}
    Given an algebraic circuit of size $s$ computing polynomials with integer coefficients $f_1, \dots, f_m \in \bZ[x_1, \dots, x_n]$, decide if the ideal $(f_1, \dots, f_m) \cdot \bC[x_1, \dots, x_n]$ prime.
\end{problem}

The ideal primality testing problem is the direct generalization of the problem of testing whether a single polynomial is (absolutely) irreducible.
More precisely, the absolute irreducibility testing problem for polynomials corresponds to \cref{problem: ideal primality testing} with $m = 1$.
A geometric version of the ideal primality testing problem was posed in \cite[Problem 8.5]{burgisser2004variations}: what is the complexity to check whether a given algebraic set is irreducible over $\bC$?
Algebraically, the latter problem can be cast as: given polynomials $f_1, \dots, f_m \in \bZ[x_1, \dots, x_n]$, is $\radideal{f_1, \dots, f_m}$ prime?

The $m=1$ case of \cref{problem: ideal primality testing}, that is, the absolute irreducibility test of polynomials, can be done in randomized polynomial time, and by several different methods, due to the works \cite{kaltofen1985fast,bajaj1993factoring,kaltofen1995effective,gao2003factoring} (and references therein). 
However, when $m \geq 2$, \cref{problem: ideal primality testing} seems to be notoriously more difficult, even for natural classes of ideals.
Currently, the only algorithms to test whether a (general) ideal is prime are algorithms which compute a primary decomposition of the input ideal~\cite{seidenberg1978constructions,gianni1988grobner,eisenbud1992direct} (and references therein).
Since these algorithms require the computation of \grobner{} bases as a subroutine, the best complexity class which upper bounds the ideal primality testing problem is $\EXPSPACE$, by applying the algorithm in \cite{gianni1988grobner} with the degree bounds for \grobner{} basis obtained in \cite{dube1990structure}.

Due to the challenges posed by the general version of \cref{problem: ideal primality testing} (and more generally for the primary decomposition problem), special cases of the problem(s) have been considered.
In particular, natural classes of ideals have been studied, such as the cases where the ideal is a \emph{complete intersection}\footnote{An ideal $I := (f_1, \dots, f_m)$ is a complete intersection when the codimension of $I$ equals the number of generators (in this case $m$). 
That is, each polynomial $f_i$ ``cuts'' the dimension of the algebraic set by one, just as linearly independent linear forms would cut the dimension of the space by one.
This is the algebraic generalization of the fact that a set of $m$ linearly independent forms defines a space of dimension $n-m$.}, or when the given ideal is \emph{radical}, or ideals with either constant dimension or constant codimension.
When the input to \cref{problem: ideal primality testing} is a complete intersection, the primary decomposition algorithm of \cite{eisenbud1992direct}, combined with the \grobner{} basis degree bounds of \cite{dickenstein1991membership} yields an exponential time algorithm.
When the input polynomials form a radical ideal, the works of \cite{chistov1986algorithm,grigoriev1986factorization} give an exponential time algorithm for computing the minimal primes of the given radical ideal, and \cite[Theorem 4.1]{burgisser2007differential} gives a $\PSPACE$-algorithm for counting the number of minimal primes of a given radical ideal.
Thus, for the class of radical ideals, \cref{problem: ideal primality testing} is in $\PSPACE$.\footnote{The works \cite{chistov1986algorithm,grigoriev1986factorization,burgisser2007differential} work on the geometric setting of the question, as stated in \cite[Problem 8.5]{burgisser2004variations}.
When the input ideal is already promised to be radical, then the geometric setting and \cref{problem: ideal primality testing} become the same problem.}
For the case of radicals of ideals generated by constantly many polynomials,  \cite[Theorem 1.1]{burgisser2010counting} gives an algorithm in $\mathrm{FRNC}$ for computing the number of minimal primes, therefore solving the primality testing problem in $\mathrm{RNC}$.
Lastly, the dimension-dependent \grobner{} basis degree bounds of \cite{mayr2017complexity}, when combined with the primary decomposition algorithm of \cite{gianni1988grobner}, yields a $\PSPACE$-algorithm for primary decomposition of ideals of constant dimension, thereby also providing a $\PSPACE$-algorithm for the ideal primality testing problem in this case.
Apart from the above described algorithms, an important approach to \cref{problem: ideal primality testing} is to provide sufficient conditions/criteria on certain classes of ideals that will ensure primality.
One prominent sufficient condition is given by the combination of Serre's criterion~(\cite[Theorem 11.5]{eisenbud2013commutative}, \cite[\href{https://stacks.math.columbia.edu/tag/031O}{Tag 031O}]{stacks-project}) with the Jacobian criterion~\cite[Theorem 16.19]{eisenbud2013commutative}, which works for the important class of connected Cohen-Macaulay ideals\footnote{The definition of Cohen-Macaulay ideals is somewhat technical, as can be seen in \cite[Chapter 18.2]{eisenbud2013commutative} and in \cite[\href{https://stacks.math.columbia.edu/tag/00N7}{Tag 00N7}]{stacks-project}.
However, as discussed in \cite[Chapter 18.5]{eisenbud2013commutative}, Cohen-Macaulay ideals form a rich class of ideals, and are central objects in commutative algebra.
In particular, zero dimensional ideals and complete intersections are Cohen-Macaulay.} (see \cite[Theorem 18.15]{eisenbud2013commutative}).

Given the seemingly weak upper bounds for the ideal primality testing problem even for the natural classes of ideals above, one can ask if there are any complexity-theoretic lower bounds for the ideal primality testing problem.
An easy argument shows that even for the natural class of zero-dimensional radical ideals, \cref{problem: ideal primality testing} is $\coNP$-hard.
We prove this result in \cref{proposition:conp hardness}.

In this work, assuming the Generalized Riemann Hypothesis (GRH), we substantially improve the complexity-theoretic gap between the upper and lower bounds for \cref{problem: ideal primality testing} for two important classes of ideals: radical ideals and equidimensional Cohen-Macaulay ideals.
Our main theorem is the following.

\begin{restatable}[Interactive protocols for primality]{theorem}{interactive}
\label{theorem: interactive protocols for primality}
    Let $C$ be an algebraic circuit of size $s$ with integer constants that computes $f_{1}, \dots, f_{m} \in \bZ\bs{x_{1}, \dots, x_{n}}$.
    If $I := \ideal{f_{1}, \dots, f_{m}}$ is either radical or equidimensional Cohen-Macaulay, and if the dimension of $I$ is given, then the complexity of testing if $I$ is prime lies in $\coAM$, assuming GRH.
\end{restatable}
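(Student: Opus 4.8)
The plan is to reduce primality of $I$ to the geometric irreducibility of $Z(I)$, then to decide irreducibility by reducing modulo a prime and counting points over finite fields, using the interaction to supply the data that cannot be computed in polynomial time.

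\emph{Reductions.} Since a prime ideal is radical and equidimensional, I would first use the given dimension $d$ to verify that $Z(I)$ is non-empty and equidimensional of dimension $d$; deciding emptiness and dimension of an explicitly presented algebraic set lies within the polynomial hierarchy (in $\coAM$ and $\AM$ respectively under GRH), so these checks are affordable, and if either fails then $I$ is not prime. In the Cohen--Macaulay case one must additionally certify that $I$ is radical: an equidimensional CM ideal is unmixed, so $I$ is radical iff it is generically reduced, and by the Jacobian criterion this holds iff the algebraic set $\{x\in Z(I):\operatorname{rank}\operatorname{Jac}(f_1,\dots,f_m)(x)<n-d\}$ has dimension $<d$ --- again a comparison of dimensions of explicit algebraic sets. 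After these steps $I$ is radical and equidimensional of dimension $d$, so $I$ is prime iff $Z(I)$ is irreducible over $\bC$. Radicality and equidimensionality are preserved under generic affine-linear sections, and the number of irreducible components is unchanged upon intersecting with a generic linear subspace of dimension $n-d+1$ (Bertini irreducibility, cutting the dimension down to $1$), so the problem becomes: \emph{decide whether a reduced curve $C$ cut out by polynomials of degree $\le D$ in $N:=n-d+1$ variables is irreducible}. Note that $\deg C$ may be as large as $D^{n}$, i.e.\ singly exponential in $s$, even though $C$ is presented by polynomially many low-degree equations of polynomial bit-length.

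\emph{Finite fields and point counting.} If $C$ is reducible over $\bC$ then, being defined over $\bQ$, its reduction $C_{p}$ is reducible over $\overline{\bF_{p}}$ for every prime $p$ outside an explicit finite bad set; conversely, an effective arithmetic Nullstellensatz bounds the bad set by the divisors of an integer of singly-exponential height, so under GRH a uniformly random prime $p$ of polynomially many bits is good with probability $1-o(1)$. Thus it suffices to decide whether $C_{p}$ is geometrically irreducible. Fix a generic finite linear projection $\rho\colon C_{p}\to\bA^{1}_{\bF_{p}}$ of degree $\delta=\deg C$: for $t\in\bF_{p^{j}}$ the fiber $\rho^{-1}(t)$ is a zero-dimensional scheme of length $\delta$ cut out by the given low-degree equations, and the number $m(t)$ of its $\bF_{p^{j}}$-rational points is the number of Frobenius fixed points among the $\delta$ geometric points of the fiber. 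By Lang--Weil, $\mathbb{E}_{t\in\bF_{p^{j}}}[m(t)]=\#C_{p}(\bF_{p^{j}})/p^{j}$ equals, up to an error $O(\delta^{O(1)}p^{-j/2})$ that is negligible once $\log(p^{j})=\operatorname{poly}(s)$ is sufficiently large, the number $r_{j}$ of geometric irreducible components of $C_{p}$ defined over $\bF_{p^{j}}$; and $C_{p}$ is geometrically irreducible iff $r_{j}=1$ for all $j$, whereas if it is reducible there is a $j$ with $r_{j}\ge 2$. The protocol: the verifier sends a random prime $p$ and enough random bits to extract uniform $t_{1},\dots,t_{T}\in\bF_{p^{j}}$ for any polynomial $j$; the prover replies with such a $j$ together with, for each $t_{\ell}$, the list of $\bF_{p^{j}}$-rational points of $\rho^{-1}(t_{\ell})$ and a short certificate that the list is complete (completeness is the infeasibility of an explicit zero-dimensional system, reducible to $\mathrm{HN}$-type statements in $\AM$ under GRH); the verifier checks the lists, caps each reported count at a polynomial threshold, and accepts ``$C$ reducible'' iff the resulting estimate of $\mathbb{E}[m(t)]$ is at least $3/2$. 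Soundness uses that $\sum_{t}m(t)=\#C_{p}(\bF_{p^{j}})$ can itself be pinned down to within a $(1+o(1))$ factor, so the prover cannot inflate the average by hiding mass in rare large fibers. Combining, ``$I$ is not prime'' --- a disjunction of the auxiliary failures above with reducibility of $C$ --- lies in $\AM$ under GRH, hence ``$I$ is prime'' lies in $\coAM$.

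\emph{The main obstacle.} Essentially every object in play has singly exponential complexity in $s$ (and Gr\"obner-basis data doubly exponential): the degrees of $Z(I)$ and of the cover $\rho$, the coefficients of any Gr\"obner basis or of an irreducible component, and the field of definition of a component. One therefore cannot compute components, Gr\"obner bases, or exact point counts, and the whole argument must run through objects of polynomial bit-length together with GRH-effective number theory. The hard part, and the place where the plan above is as yet incomplete, is that $C$ can have exponentially many geometric components forming a single Galois orbit, so that its constant-field (splitting-field) extension has exponential, even doubly exponential, degree; for such $C$ the smallest extension $\bF_{p^{j}}$ over which two geometric components become rational can itself be of exponential degree, and no polynomial-bit-length prime is within reach of an effective Chebotarev theorem over the splitting field. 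The argument must therefore extract the decisive information --- the constant-field extension, equivalently whether $\bF_{p^{j}}$-rationality of two components can be forced for a \emph{prover-chosen} $j$ of polynomial degree --- by a subtler route that survives an astronomically large true splitting field; resolving this is where I expect the real difficulty to lie, with the remaining, more routine difficulty being to certify inside $\AM$ that the prover has listed all $\bF_{p^{j}}$-rational points of a fiber of exponential degree.
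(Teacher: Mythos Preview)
Your high-level strategy---reduce to a curve by Bertini, reduce modulo a random prime, and separate the irreducible from the reducible case by the Lang--Weil point count together with a Goldwasser--Sipser lower-bound protocol---is exactly the paper's approach, and your identification of the main obstacle is precisely correct. The gap is that you leave this obstacle unresolved.

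Here is the missing idea. You worry that the splitting field of the curve $C$ (the smallest field over which \emph{all} geometric components are defined) can have degree as large as $\delta!$, putting it out of reach of any effective Chebotarev estimate. The paper's point is that one never needs the splitting field: it suffices to find \emph{two} components defined over a small extension, and this is always possible. After projecting $C$ to a plane curve $Z(g)$ with $g\in\bZ[x,y]$ of degree $\le D=d^{n}$, Kaltofen's absolute factorization algorithm shows that every absolutely irreducible factor of $g$ lies in $\bQ(\alpha_i)$ for some root $\alpha_i$ of the univariate polynomial $g(x,0)$; hence any \emph{pair} of factors lies in a compositum $\bQ(\alpha_i,\alpha_j)$ of degree at most $D^{2}$. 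Taking a primitive element yields a single polynomial $G\in\bZ[z]$ of degree $\le D^{4}$ and height $\le h\cdot 2^{(n\log\sigma)^{O(1)}}$ such that whenever $G\bmod p$ has a root in $\bF_{p}$ (and $p$ avoids a controlled finite bad set), the reduction $V_{p}$ has at least two $\bF_{p}$-definable components. Effective Chebotarev (under GRH) then gives density $\gtrsim 1/D^{4}$ of such primes among primes up to $x_{0}$ with $\log x_{0}$ polynomial in the input---an inverse-exponential density, but since the bad set is also only singly exponential, this is enough to separate the two cases by a set-size protocol. The paper also shows this inverse-exponential density is essentially tight.

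Two smaller remarks. First, your fiber-counting protocol via a projection $\rho$ is more elaborate than needed: the paper simply applies the set lower-bound protocol to the set $V_{p}\cap\bF_{p}^{n}$, whose membership is in $\P$ (evaluate the circuit at the point), comparing against the Lang--Weil thresholds $p$ versus $2p$. This sidesteps any need to certify that a fiber list is complete. Second, your suggestion to certify completeness ``by HN-type statements in $\AM$ under GRH'' does not literally work, since Koiran's protocol is for solvability over $\bC$, not over a fixed $\bF_{p^{j}}$; over finite fields nonexistence of further solutions is a $\coNP$ statement, so one would need to be careful about how this nests inside the outer protocol. The paper's direct point-count avoids this entirely.
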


Combining the above theorem with \cite[Theorem 4.1]{koiran1997randomized}, we obtain that \cref{problem: ideal primality testing} for radical ideals and for equidimensional Cohen-Macaulay ideals are in the third level of the Polynomial Hierarchy (PH).

\begin{restatable}[Primality testing in PH]{theorem}{primesPH}
\label{theorem: primality in PH}
    Let $C$ be an algebraic circuit of size $s$ with integer constants that computes $f_{1}, \dots, f_{m} \in \bZ\bs{x_{1}, \dots, x_{n}}$.
    If the ideal $I := \ideal{f_{1}, \dots, f_{m}}$ is either radical or equidimensional Cohen-Macaulay then the complexity of testing if $I$ is prime lies in $\Sigma_{3}^{p} \cap \Pi_{3}^{p}$, assuming GRH.
\end{restatable}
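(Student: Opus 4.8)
The plan is to eliminate the hypothesis that ``the dimension of $I$ is given'' from \cref{theorem: interactive protocols for primality} by certifying $\dim Z(I)$ inside the polynomial hierarchy, and then to bound the quantifier complexity of the resulting composition. First I would note that the relevant dimension is unambiguous: $\dim \bC[x_1,\dots,x_n]/I = \dim \bC[x_1,\dots,x_n]/\sqrt{I} = \dim Z(I)$, so the integer required by \cref{theorem: interactive protocols for primality} is exactly the dimension of the affine variety $Z(I)$. This quantity is handled by \cite[Theorem 4.1]{koiran1997randomized}: assuming GRH, given the circuit $C$ and an integer $d$, deciding ``$\dim Z(I) \ge d$'' is in $\mathrm{AM}$, so by complementation ``$\dim Z(I) \le d$'' is in $\coAM$, and ``$\dim Z(I) = d$'' is in $\mathrm{AM} \cap \coAM$. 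Write $\mathrm{Prime}(C, d)$ for the $\coAM$ protocol of \cref{theorem: interactive protocols for primality}, which decides primality of $I$ correctly whenever $I$ is radical or equidimensional Cohen-Macaulay and the promise $d = \dim Z(I)$ holds.

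Next I would record two characterizations of primality, both exploiting that $Z(I)$ has a unique dimension, so that the dimension tests can \emph{gate} the (promise-bearing) primality protocol. For the $\Sigma_3^p$ bound, ``$I$ is prime'' is equivalent to
\[
  \exists\, d \in \{0, \dots, n\} \;:\; \big(\dim Z(I) \ge d\big) \wedge \big(\dim Z(I) \le d\big) \wedge \mathrm{Prime}(C, d) ,
\]
where the forward direction takes $d := \dim Z(I)$ (which lies in $\{0,\dots,n\}$ because a prime ideal is proper, hence $Z(I) \ne \emptyset$), and the converse uses that the first two conjuncts force $d = \dim Z(I)$, so the third conjunct is evaluated under its promise. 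For the $\Pi_3^p$ bound, ``$I$ is prime'' is equivalent to
\[
  \big(\dim Z(I) \ge 0\big) \wedge \forall\, d \in \{0, \dots, n\} \;:\; \big(\dim Z(I) < d\big) \vee \big(\dim Z(I) > d\big) \vee \mathrm{Prime}(C, d) ,
\]
where the added conjunct $\dim Z(I) \ge 0$ (equivalently $Z(I) \ne \emptyset$) excludes the unit ideal, which is not prime, and the only $d$ at which both of the first two disjuncts fail is $d = \dim Z(I)$. Invoking $\mathrm{AM} \subseteq \Pi_2^p$ and $\coAM \subseteq \Sigma_2^p$, the inclusions $\Pi_2^p, \Sigma_2^p \subseteq \Sigma_3^p \cap \Pi_3^p$, and the closure of $\Sigma_3^p$ (resp.\ $\Pi_3^p$) under polynomially bounded $\wedge$ and $\vee$ and under the outer $\exists d$ (resp.\ $\forall d$, ranging over only $n+1$ values), the two right-hand sides are a $\Sigma_3^p$ predicate and a $\Pi_3^p$ predicate respectively. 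This places primality testing in $\Sigma_3^p \cap \Pi_3^p$, conditional on GRH, which is needed for both \cref{theorem: interactive protocols for primality} and \cite[Theorem 4.1]{koiran1997randomized}.

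Since no new algebra enters, the one point needing care — the main, if mild, obstacle — is that $\mathrm{Prime}(C, d)$ possesses a bounded completeness/soundness gap only under its promise $d = \dim Z(I)$, so it cannot be substituted verbatim into a Boolean combination and be expected to yield a genuine, promise-free $\Sigma_3^p$/$\Pi_3^p$ sentence. I would handle this in the standard way: amplify the errors of all three subprotocols, use the perfect completeness of $\mathrm{AM}$ to bring the two dimension tests into $\forall\exists$ (resp.\ $\exists\forall$) prenex form, and observe that on any input with $d \ne \dim Z(I)$ the truth value of the Boolean combination is already pinned down by the two dimension subprotocols alone (which do have gaps on every input), so the behavior of $\mathrm{Prime}(C, d)$ is immaterial there. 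Hence the combined sentence has the required error structure on all inputs, and merging adjacent like quantifiers produces the claimed $\Sigma_3^p$ (resp.\ $\Pi_3^p$) predicate.
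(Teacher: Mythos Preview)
Your proposal is correct and follows essentially the same approach as the paper: determine $\dim I$ using Koiran's $\AM$ protocol for ``$\dim Z(I)\ge r$'' and then feed this dimension into \cref{theorem: interactive protocols for primality}. The paper packages this as a single $P^{\Pi_2^p}$ computation (binary search for the dimension followed by the $\coAM$ primality test, using $\AM\subset\Pi_2^p$ and $P^{\Pi_2^p}\subset\Sigma_3^p\cap\Pi_3^p$), whereas you unwind the same reduction into explicit $\Sigma_3^p$ and $\Pi_3^p$ formulas and are more careful about the promise on $d$ and the corner case $I=(1)$; these are equivalent presentations of the same argument.
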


In the case when the given ideal is radical, \cref{theorem: primality in PH} makes substantial progress on  \cite[Problem 8.5]{burgisser2004variations}, and improves on the previous best upper bound of \cite{burgisser2007differential} from $\PSPACE$ to $\Sigma_3^p \cap \Pi_3^p$.
Note that the $\coNP$ lower bound from \cref{proposition:conp hardness} also applies to radical ideals.
Thus, our upper bound gets substantially close to the known lower bound for the problem.
Two important distinctions must be made here: our protocols only work when the input ideal is radical, and we need to assume the GRH; whereas the algorithm from \cite{burgisser2007differential} computes the number of irreducible components for the radical of the ideal generated by $f_1, \dots, f_m$, and their work is unconditional on the GRH.

In the case of equidimensional Cohen-Macaulay ideals, which contains as special cases the important classes of zero dimensional ideals and of complete intersection ideals, \cref{theorem: primality in PH} yields a nearly tight gap on the complexity of \cref{problem: ideal primality testing}, since the $\coNP$ lower bound of \cref{proposition:conp hardness} also applies to zero-dimensional radical ideals.
Note that our result for zero-dimensional ideals improves upon the previous upper bound of $\PSPACE$. 
If our input is a complete intersection, then the dimension is implicitly given to us, since the codimension of the ideal in this case is simply given by the number of input polynomials. 
Thus, in this case \cref{theorem: interactive protocols for primality} tells us that \cref{problem: ideal primality testing} is in $\coAM$.
This improves upon the previous best upper bound of exponential time.

The above results show that the most computationally expensive step in our protocols is to determine the dimension of the input ideal.
Thus, any improved protocol to compute exactly the dimension of a given ideal would lead to improvements to our primality testing protocols.

Testing the assumptions of \cref{theorem: interactive protocols for primality} is a hard computational problem.
Determining if an ideal is radical involves deciding if the ideal contains embedded primes.
As is lucidly explained in \cite[Section~4]{bayer1993can}, these embedded primes seem to be responsible for the worst case double exponential behaviour of many problems in commutative algebra and algebraic geometry.
The current best known algorithms for computing the radical of an ideal are essentially the same as the algorithms for computing primary decomposition \cite{gianni1988grobner, eisenbud1992direct}, and therefore take double exponential time in the worst case.
It is possible that the decision version of the problem is easier, however we are not aware of any algorithms that solve the decision version faster than in double exponential space.
Deciding if an ideal is equidimensional Cohen-Macaulay is similarly difficult, current methods rely on testing non-vanishing of Ext modules, which in turn requires the computation of projective resolutions.
The complexity of this task is double exponential.
This discussion also sheds some light on what makes the primality testing problem tractable in the cases we consider: we assume that our ideals do not have any embedded primes.

As mentioned above, previous works solved the primality testing problem by either computing a primary decomposition of the input ideal, or by using de Rham cohomology to count the number of irreducible components of the associated algebraic set (for the case of radical ideals).
Our approach to solve \cref{problem: ideal primality testing} is different from the previous approaches. 
Our strategy is inspired by Koiran's approach~\cite{K96} to decide whether a system of polynomial equations has a solution, which we now describe.

Koiran, in his seminal work \cite{K96}, proposed a fundamentally different approach to the problem of deciding whether a system of polynomial equations with integer coefficients has a solution over $\bC$ (known as the Hilbert's nullstellensatz problem).
His approach to this problem was based on the following idea: let $\cF := \{f_1 = 0 , \dots, f_m = 0 \}$, where $f_i \in \bZ[x_1, \dots, x_n]$ with $\deg f_i \leq d$, be our system of polynomial equations.
By Hilbert's nullstellensatz, $\cF$ is unsatisfiable if, and only if, $1 \in \ideal{f_1, \dots, f_m}$. 
In other words, $\cF$ is unsatisfiable iff there are polynomials $g_1, \dots, g_m \in \bC[x_1, \dots, x_n]$ such that $1 = f_1 g_1 + \cdots + f_m g_m$.
Note that this last equation is simply a linear system of equations with integer entries, where the variables are the coefficients of the $g_i$'s. 
Thus if there is a solution for this system over $\bC$, there must be a solution over $\bQ$.
Hence, by clearing denominators, there are polynomials $h_1, \dots, h_m \in \bZ[x_1, \dots, x_n]$ and $a \in \bZ_{>0}$ such that $a = f_1 h_1 + \cdots + f_m h_m$.
In particular, if $\cF$ is unsatisfiable, given any prime $p \in \bN$ which does not divide $a$, we have $1 \equiv a^{-1} \cdot (f_1 \cdot h_1 + \cdots + f_m h_m) \bmod p$, which implies that $\cF$, when seen as a system over $\bF_p[x_1, \dots, x_n]$, does not have any solutions over the field $\overline{\bF}_{p}$.
In particular, $\cF$ has no solutions over $\bF_p^n$.
The main conceptual insight in \cite{K96} is to ask (and answer) the question: if $\cF$ is satisfiable, would we be able to find "simple" solutions (that is, solutions in $\bF_p^n$, as opposed to solutions in $\overline{\bF}_{p}^n$) for \emph{enough} primes $p$?\footnote{We would like to emphasize that the part which takes a lot of work in this claim is the part that says that satisfiable systems will be satisfiable in $\bF_p^n$. 
Checking that a system remains satisfiable over $\overline{\bF}_p^n$ is also easy, as it can be solved by a similar argument as the one given for the unsatisfiability case.
}
His main theorem \cite[Theorem 1]{K96} proves (assuming the GRH) a \emph{quantitative} version of the following fact: if $\cF$ is unsatisfiable, then there are "very few" primes $p$ such that $\cF$ has a solution over $\bF_p^n$; if $\cF$ is satisfiable, then there are "enough" (small enough) primes $p$ such that $\cF$ has a solution over $\bF_p^n$.  
Equipped with the above theorem, Koiran can now distinguish between the two cases by using a set lower bound protocol, hence putting the Hilbert's nullstellensatz problem in $\coAM$.

It is important to emphasize that the main technical challenge with the above approach, as pointed out by Koiran (see \cite[Page 275]{K96}), is to prove such quantitative bounds on the number of "good primes," when one changes the base ring from $\bZ$ to $\bF_p$ (or geometrically, from $\bC$ to $\overline{\bF}_{p}$).
The principle of changing the base field where an object of interest lives (in our case our algebraic set over $\bC^n$) to extract some information of our geometric object based on properties of its image is called a \emph{base change theorem}.
Such results are well studied in algebraic geometry and number theory, as can be seen in \cite{grothendieck1965elements} and \cite[Appendix C]{poonen2008rational}.\footnote{Sometimes base change theorems are also called \emph{spreading out theorems}, as is done in the references given here.}
Often times, such base change theorems are non-constructive, which suffices for their mathematical purposes.
However, for such a base change theorem to be useful for computation, as is the case in \cite{K96}, one needs to prove  \emph{effective versions} of such base change theorems. 
As usual in mathematics and computer science, making such theorems effective often requires a considerable amount of work.

In this work, we use the same high-level strategy developed by Koiran to give an $\AM$ protocol for proving that a given input ideal (from one of our classes of ideals) is not prime, assuming that we are given the dimension of our input ideal. 
We begin by noticing that in the classes of ideals that we study, there are three ways in which our ideal can fail to be prime: the ideal can have two minimal primes of different dimensions, or all the minimal primes of the ideal have the same dimension but there are at least two of them, or the ideal can have a single minimal prime with "multiplicity." 
For radical ideals, only the first and the second cases can happen, whereas for equidimensional Cohen-Macaulay ideals only the second and third cases can happen (\cite[Corollary~18.14]{eisenbud2013commutative}).

In the case where we have two minimal primes with different dimensions (which can only happen in the radical case) we can identify that the given ideal is not prime via the Jacobian.
This allows us to reduce this case to an instance of the Hilbert's nullstellensatz problem, which can be handled by the protocol of \cite{K96}.
This is done in \cref{subsection: interactive  radical}.

In the case where we have a single minimal prime with multiplicity, (which can only happen in the equidimensional Cohen-Macaulay case), we can use Serre's criterion, combined with the Jacobian criterion (as done in \cite[Theorem 18.15]{eisenbud2013commutative}) to identify the multiplicity via the codimension of the Jacobian ideal inside our ideal.
Thus, in this case we can use the protocol to compute dimension lower bounds given by \cite{koiran1997randomized}.
This is done in \cref{subsection: interactive CM}.

Now, the only case we have left is when our ideal is equidimensional, but has more than one irreducible component (this can happen for both the radical and the equidimensional Cohen-Macaulay cases).
This turns out to be the most challenging case for us.
The base change theorems of \cite{grothendieck1965elements} tell us that if our ideal is prime over $\bC[x_1, \dots, x_n]$, then, for many choices of primes $p$, our ideal over $\bF_p[x_1, \dots, x_n]$ will remain prime over $\overline{\bF}_p[x_1, \dots, x_n]$.
Combining this fact with \cite{seidenberg1974constructions}, one also obtains that for many choices of primes $p$, if our ideal is equidimensional and not prime, then our ideal over $\bF_p[x_1, \dots, x_n]$ will remain equidimensional and not prime over $\overline{\bF}_p[x_1, \dots, x_n]$.
Since we are only interested in distinguishing whether our ideal has more than one component of top dimension, by an effective version of Bertini's theorem (\cref{cor:effectivebertini}), we can assume that we are working with ideals of dimension 1.
Thus, we have reduced our main problem to the task of distinguishing whether an algebraic curve (of potentially exponential degree) is absolutely irreducible or not.

While the above gives us hope to be able to distinguish this case by going modulo $p$ (for a "good prime" $p$), there are two problems that we need to address: can we make this base change theorem \emph{effective}? 
And in case we can make it effective, how can we witness the distinction between an absolutely irreducible curve (i.e. a prime ideal) and a reducible curve (i.e., non-prime ideal) in a "simple way"?
Could we do the latter simply by looking at points in $\bF_p^n$, as is done by Koiran?
As it turns out, the answer to the last question is affirmative, due to effective versions of the celebrated Lang-Weil theorem (\cref{thm:effective lang weil}). 

We have thus far reduced our primality testing problem to the problem of proving an effective base change theorem which preserves equidimensional components of given algebraic sets.
However, some care needs to be taken here: the Lang-Weil theorem only guarantees enough solutions over $\bF_p^n$ of algebraic varieties which are \emph{defined} over $\bF_p$ (that is, the polynomials defining the variety must have coefficients in $\bF_p$).
Thus, if we are to use the Lang-Weil theorem, we must ensure that for an equidimensional non-prime ideal, \emph{at least two} of its minimal primes \emph{will be $\bF_p$-definable} for \emph{enough} "good primes."
At this point, one may ask:\footnote{As mentioned in \cite[pages 47 and 48]{bayer1993can}, "one who has never done any calculations."} why not just preserve every minimal prime? 
As discussed in \cite[Pages 47 and 48]{bayer1993can}, this is a very bad idea, since to do so would cause us to have to work over polynomials defined over extremely large extension fields, and therefore kill any attempt to get an effective base change theorem, since the coefficients become "too complex."

With the above in mind, we have now finally arrived at the precise effective base change theorem that we need to prove: given an equidimensional ideal $I := (f_1, \dots, f_m)$ with $f_i \in \bZ[x_1, \dots, x_n]$ and $\dim I = 1$, we need to ensure that there are enough "good primes" such that \emph{at least two} of its irreducible components are defined over $\bF_p$.
To prove such an effective base change theorem, we combine Kaltofen's factoring algorithm over the algebraic closure~\cite{kaltofen1995effective} with results from algebraic number theory and elimination theory.
As we have mentioned before, this is the crucial technical result that we need, and it is carried out in \cref{section: height bounds,section: base change}, which culminate in our main technical theorems, \cref{thm:irredpreserve,thm:redpreserve}.

We now present a summary of the above high-level discussion of our results, and in the following subsection we present a more complete overview of our proof, where we discuss in a more precise form all the issues that need to be overcome to prove correctness of our interactive protocols.

\paragraph{Summary of contributions.}

\begin{enumerate}
    \item In this work, we give $\AM$ protocols for non-primality for natural classes of ideals, assuming the Generalized Riemann Hypothesis. 
    This significantly tightens the complexity gap of these problems.
    More precisely, we have:
    \begin{itemize}
        \item For the class of radical ideals, we prove that \cref{problem: ideal primality testing} can be solved in $\coAM$, whereas \cref{proposition:conp hardness} shows that \cref{problem: ideal primality testing} is $\coNP$-hard for this class of problems.
        \item For the class of equidimensional Cohen-Macaulay ideals (which includes the class of dimension zero ideals, and the class of complete intersection ideals), we prove that \cref{problem: ideal primality testing} can be solved in $\coAM$, whereas \cref{proposition:conp hardness} shows that \cref{problem: ideal primality testing} is also $\coNP$-hard for this class of ideals.
    \end{itemize}
    \item On the technical side, our main contribution is to prove effective base change theorems for geometric irreducibility (and reducibility) of algebraic sets with equations defined over the integers.
    This is the content of \cref{thm:irredpreserve,thm:redpreserve}.
    \begin{itemize}
        \item One important remark is that our base change theorems, together with the effective Chebotarev density theorem, yields nearly optimal density bounds for the good primes preserving (ir)reducibility.
        This result is interesting on its own right.
        \item The proof of our base change theorem involves the combination of (effective) techniques from algebraic geometry, elimination theory, efficient algorithms for factorization over the algebraic closure of a field (and its corresponding absolute irreducibility test), as well as techniques from algebraic number theory (to control the bit complexity of the algebraic numbers involved in the operations that we need to consider).
    \end{itemize}
\end{enumerate}

\subsection{Proof overview}

We now give a more detailed overview of the proof of our main technical theorem.
Recall that we are given polynomials $f_1, \dots, f_m \in \bZ[x_1, \ldots, x_n]$, defining the ideal $I := \ideal{f_{1}, \dots, f_{m}}$ over $\bC[x_1, \dots, x_n]$.
Let $V := Z(I) \subseteq \bC^n$ be the zero set of the ideal $I$, and $r:= \dim V$ be the dimension of the ideal.
Since we will be interested in the bit-complexity of the polynomials and (algebraic) integers involved, we denote the \emph{logarithmic height} of an integer $a \in \bZ$ by $\lheight{a} := \lceil \log(|a| + 1) \rceil$ (i.e., its bit complexity) and extend this definition to denote the logarithmic height of a polynomial by the maximum logarithmic height of any of its coefficients.

As stated earlier, for the types of ideals we consider, there are only a few ways in which the ideal $I$ can fail to be prime.
If $I$ is radical, then it can fail to be prime because $V$ consists of at least two components, either both of dimension $r$, or one of dimension $r$ and one of dimension less than $r$.
If $I$ is equidimensional CM, it can fail to be prime either because $V$ consists of two components of dimension $r$, or because $V$ consists of a single component of dimension $r$, but this component occurs with "multiplicity", that is, the unique primary component corresponding to the minimal prime of $I$ is not itself prime.

Handling the case where $V$ has a component of dimension less than $r$ (which can happen only in the radical case), or the case where $V$ has a single component of dimension $r$ but with multiplicity (which can happen only in the CM case) is straightforward, as we have discussed in the previous section.
Thus, we focus on the case when $I$ is equidimensional, where deciding primality reduces to deciding if $V$ has only one component of dimension $r$.
Also, as mentioned before, by an effective version of Bertini's theorem, we can further assume that $r = 1$.

As discussed above, our approach involves studying how the ideal $I$ and the zeroset $V$ behave when the coefficients are changed from $\bZ$ to $\bF_{p}$ for some prime $p$.
Let $I_{p}$ be the ideal generated by $f_{1} \pmod p, \cdots, f_{m} \pmod p$, and $V_{p} := Z(I_{p}) \subseteq \overline{\bF}_{p}^{n}$.
We need to show that three properties are preserved.
The first is that for all but a small number of primes, $\dim V = \dim V_{p}$.
The second is that if $V$ is irreducible, then for all but a small number of primes, $V_{p}$ is irreducible.
The last is that if $V$ is reducible, then for sufficiently many primes, $V_{p}$ is also reducible, and crucially, $V_{p}$ has at least two distinct components that are $\bF_{p}$-definable.

Before we sketch our proof techniques for the above facts, we show that we cannot hope to get much stronger statements than the ones we claim.
While the examples we will give here may appear simple, it is easy to generalize their main idea to produce examples with more complex behavior.

For the first property: dimension. 
Consider for example $I = \ideal{x_{1}, x_{1} + ax_{2}}$ for some $a \in \bN$.
We have $\dim V = n-2$. 
However, for any prime $p \mid a$, we have $I_{p} = \ideal{x_{1}}$ therefore $\dim V_{p} = n-1$.
Thus, we cannot hope that $\dim V_p$ remains unchanged for every base change from $\bZ$ to $\bF_p$.

For the second property: irreducibility.
In this case, let $I = \ideal{x_{1} (x_{1} - 1) - a x_{2}}$. 
Here, we have that $V$ is irreducible, but for any prime $p | a$, the zeroset $V_{p}$ is reducible.

And now for the third property: reducibility.
Let's assume $n = 2$.
Let $g \in \bZ\bs{x_{1}}$ be a monic polynomial of degree $d$, with Galois group $S_{d}$ (the full symmetric group on $d$ elements).
Note that a random polynomial satisfies this property.
Let $f$ be the polynomial obtained by homogenizing $g$ with respect to $x_{2}$, and let $I = \ideal{f}$.
Then $\dim V = 1$, and $V$ has $d$ irreducible factors.
The smallest extension of $\bQ$ that contains all the factors of $f$ is exactly the splitting field of $g$, which has degree $d!$, which is exponential in $d$.
However, there are extensions of degree just $d^{2}$ that contain two irreducible factors, namely extensions generated by any pair of roots of $g$.

Now that we are aware of the pitfalls along the way, we can give a more precise idea of the quantitative bounds that we obtain.
As is evident by the examples, the size of the set of bad primes for each of the statements will depend on the logarithmic height 
of the given polynomials.
We will show that the size of the set of bad primes for the first two properties is polynomial in the degrees and logarithmic heights of $f_{1}, \dots, f_{m}$, and exponential in the number of variables.
As the logarithmic heights are themselves exponential in $s$ (the size of the circuit), the bounds we obtain are exponential in the input size.

With the above in mind, in order for the set size lower bound protocols to work, when $V$ is reducible, the number of good primes (that is, the number of primes $p$ for which there are at least two $\bF_{p}$ definable components) has to also be exponential, within a bound $A$ of primes up to which we can check roots.
In \cref{remark: reducible base change tight} we show that the best one could hope to do in terms of the density of good primes is inverse exponential.
Our main theorem will show that we can indeed achieve this inverse exponential density.
This will show that there are sufficiently many good primes.

There are three main ideas/tools we use to prove the above properties.
The first is the fact that ideal membership can be written as a linear system, if bounds are known for the degrees of the witnesses.
Here, by a witness to the membership of $g \in I$ we mean polynomials $h_{1}, \dots, h_{m}$ such that $g = \sum f_{i} h_{i}$.
In general these bounds can be doubly exponential on the input size \cite{hermann1926frage}. 
However in the special cases of radical ideals and ideals of constant dimension, single exponential bounds are known (\cite{jelonek2005effective}, \cite{mayr2013dimension}).
This allows us to show that certain memberships (and non-memberships) in ideals continue to hold when going modulo $p$, for all but a small number of primes.
The second tool is an effective Bertini-Noether theorem \cite{kaltofen1995effective}.
The Bertini-Noether theorem states that if a polynomial $f \in \bZ\bs{x_{1}, \dots, x_{n}}$ is absolutely irreducible, then for all but finitely many primes $p$ it remains absolutely irreducible in $\bF_{p}$.
The effective version of this theorem states gives an upper bound on the number of bad primes for  $f$ that depends on the logarithmic height of $f$.
The third main tool we use is the factoring algorithm of \cite{kaltofen1995effective}.
This algorithm gives a bound on the logarithmic heights of the coefficients of factors of polynomials.
In this work, we combine the above ideas with a number of standard tools from commutative algebra and algebraic number theory, which can be found in \cref{section: preliminaries,section: height bounds}.

We now return to the three statements we want to prove.

\paragraph{Dimension of $V_{p}$.}
We want to show that $\dim V_{p} = 1$ for all but a small number of primes $p$ (recall that we are assuming $\dim V = 1)$.
Since $\dim V = 1$, we have $I \cap \bZ\bs{x_{i}, x_{j}} \neq \ideal{0}$ for all pairs $i \neq j$.
This is because $I \cap \bZ\bs{x_{i}, x_{j}}$ corresponds to projection to coordinates $i, j$, and such projections also have dimension at least $1$.
Further, if $\dim V = 1$, then $I \cap \bZ\bs{x_{i}} = (0)$ for some $i$, say $i = 1$.
This is because projection to every coordinate cannot be finite, since $V$ is not finite.
These two properties characterise the fact that $\dim V = 1$.

By the effective Nullstellensatz, if $I \cap \bZ\bs{x_{i}, x_{j}} \neq \ideal{0}$, then it contains a polynomial $g_{ij}$ with $\deg{g_{ij}} \leq d^{n}$, and such that $g_{ij} = \sum f_{k} h_{ijk}$ with $h_{ijk} \in \bQ[x_1, \dots, x_n]$ having $\deg{h_{ijk}} \leq d^{n}$. 
Moreover, by the effective nullstellensatz, a similar linear-algebraic condition can be derived (with the same degree bounds) which established that there is no non-zero polynomial $g_1 \in I \cap \bZ\bs{x_{1}}$.
Since the degrees of the potential $h_{ijk}$ are bounded, we can write the condition for existence of $g_{ij}, g_{1}$ as a linear system, where the unknowns are the coefficients of $h_{ijk}$.
Using standard height bounds, we can deduce that for all but exponentially many primes, the existence and non-existence of solutions is preserved mod $p$, since these facts only depend on the vanishing and non vanishing of certain minors of the matrix of this linear system.
Therefore, for all but exponentially many primes we have $\dim V_{p} = 1$.
This result is formally stated and proved in \cref{lem:dimpreserve}.

\paragraph{Irreducibility of $V_{p}$.}
We want to show that if $V$ is irreducible, then $V_{p}$ is irreducible for all but a small number of primes $p$.
Suppose $n = 2$, and $\dim V = 1$.
Then $V$ is irreducible if and only if $g \in \radideal{I}$ for some absolutely irreducible polynomial $g$.
Further, any such $g$ must divide every generator of $I$.
This easily allows us to get a bound on the coefficients of $g$, and by the effective Nullstellensatz, we can deduce that $g^{e} = \sum f_{i} h_{i}$ with $\deg{g^{e}}, \deg{h_{i}} \leq d^{n}$.
By Cramer's rule, we can upper bound the common denominator of the polynomials $h_{i}$.
For any prime $p$ that does not divide the common denominator, we have $g^{e} \in I_{p}$.
Further, by an effective Bertini-Noether theorem, for all but a few primes $p$, the polynomial $g \pmod p$ remains absolutely irreducible.
Finally, for all but a few primes, $\dim V_{p} = \dim V$.
If $p$ is any prime that passes all the above conditions, then $V_{p}$ is irreducible.

The general case, where $n$ is arbitrary, is slightly more involved.
It is no longer true that $V$ is irreducible if and only if $g \in \radideal{I}$ for some absolutely irreducible polynomial $g$.
For example, $V$ could consist of two curves that both lie on the same irreducible hypersurface.
To overcome this, we project to a random two dimensional linear subspace, call this projection $\pi$.
Since $V$ is irreducible, $\pi(V)$ is irreducible, and $\pi(V) = Z(g)$ for an absolutely irreducible polynomial $g$.
A technical (and somewhat involved) argument gives us bounds on the coefficients of $g$ in this more general setting (\cref{lem:eliminationextendedeffective}).
Proceeding as above, we can show that for all but a few primes $p$, we have $\dim V_{p} = 1$ and $\pi(V_{p})$ is irreducible.

However, the above is not enough to conclude that $V_{p}$ is irreducible, since $V_{p}$ could have been reducible, and all of its components could have been mapped to $\pi(V_{p})$ under $\pi$.
To fix this, instead of just considering a single random projection $\pi$, we consider a large number of random projections $\pi_{1}, \dots, \pi_{k}$, and repeat the above argument.
If $V_{p}$ has two components, $\pi_{i}(V_{p})$ can only be irreducible for a small fraction of these projections, and we can use this to bound all primes $p$ such that $V_{p}$ is not irreducible.
This result is formally stated and proved in \cref{thm:irredpreserve}.

\paragraph{Reducibility of $V_{p}$ and $\bF_{p}$-definability of some of its components.}
We want to show that if $V$ is reducible, then $V_{p}$ is reducible, and has at least two $\bF_{p}$-definable components for sufficiently many primes $p$.
Again, let us start by assuming $n = 2$.
If $V$ has $k$ irreducible components, then $V = Z(g)$ for a polynomial $g \in \radideal{I}$ with exactly $k$ absolutely irreducible factors.
As before, $g$ must divide the generators of $I$, which allows us to get a bound on the coefficients of $g$.
By the effective Nullstellensatz, we can deduce that $g^{e} = \sum f_{i} h_{i}$ with $\deg{g^{e}}, \deg{h_{i}} \leq d^{n}$, and we can deduce that for all but a few primes $p$, we have $g^{e} \in I_{p}$.
However, this does not imply that $V_{p}$ is reducible: it might be the case that $V_{p} = Z(g')$ for some $g'$ that is an absolutely irreducible factor of $g \pmod p$.
Therefore, the strategy we used to show that irreducibility is preserved does not work here.

To prove that the above cannot happen (that is, that $V_p = Z(g')$), we try and find defining equations for some components of $V$.
In the case when $n = 2$, these are exactly the factors of $g$.
Suppose $g = \prod_{i=1}^{k} g_{i}$.
Each $g_{i}$ has coefficients in some algebraic extension of $\bQ$.
The smallest extension of $\bQ$ that contains every $g_{i}$ might be of prohibitively high degree, but from the factoring algorithm of Kaltofen (\cref{lem:kaltofen factoring algorithm}) we can deduce that there is a small enough extension of $\bQ$ that contains $g_{1}, g_{2}$.
Now we consider the ideals $I_{1} := I_{1} + g_{1}$ and $I_{2} := I + g_{2}$.
These are ideals of dimension $1$ that are irreducible.
Therefore, we can use our base change and irreducibility arguments to deduce that  $I_{1, p}$ and $I_{2, p}$ are irreducible curves, which will be components of $I_p$.
Of course we have to be careful here: $g_{1}$ and $g_2$ have coefficients in $\bQ\br{\alpha}$ for some algebraic number $\alpha$, therefore it is not even clear at first what it means to go modulo $p$.
Here, we need to invoke some standard results from algebraic number theory.
If $G$ is the monic equation of $\alpha$, then for any $p$ such that $G \pmod p$ has a root, the operation of going mod $p$ is well defined on $\bZ\bs{\alpha}\bs{x_{1}, \dots, x_{n}}$.
This allows us to not only make sense of the mod $p$ operation and invoke our previous result, but it also guarantees that the components of $V_{p}$ that we find this way are $\bF_{p}$ definable.
It is important to note here that there are a number of technical challenges in extending our earlier statements to this general algebraic number theoretic setting. 
These challenges are dealt with in \cref{section: height bounds}.

Now suppose $n$ is arbitrary.
We again find equations that define the components of $V$.
As before we do this by random projection to a two dimensional affine subspace, and then we apply (some of) our arguments from the case of $n = 2$.
It turns out that a single extra polynomial $g_{1}$ is not enough to define a component of $V$. 
However, we show any component of $V$ can be defined by adding at most two equations to $I$.
As before, we give bounds on the coefficients of these equations, and then invoke our earlier arguments to deduce that $V_{p}$ is reducible, and for sufficiently many primes it has at least two components that are $\bF_{p}$ definable.
This result is formally stated and proved in \cref{thm:redpreserve}.

\subsubsection{Putting it all together}
We now show how all the results mentioned above are put together to obtain our main protocol.
For brevity, we only consider the case when $I$ is radical and $\dim V = 1$, as the other cases are very similar.
We now sketch an $\AM$ protocol that shows that $I$ is not prime.

In the radical case, $I$ can fail to be prime for 2 reasons: $V$ may have a component of strictly smaller dimension (hence in this case $V$ will have at least one isolated point as a solution), or $V$ is equidimensional with at least two components.

\paragraph{Case 1: $V$ has an isolated point as a solution.}
Consider the Jacobian matrix $\cJ$, whose entries are $\cJ_{ij} = \partial_{i} f_{j}$.
If $\alpha$ is an isolated point of $V$, then the rank of $\cJ(\alpha)$ is $n$.
Merlin first sends Arthur the index of $n$ columns of $\cJ$ that are linearly independent.
If this submatrix is $\cJ'$, then Arthur simply has to check if the system $f_{1} = 0, \dots, f_{m} = 0, \det \cJ' \neq 0$ is satisfiable.
This can be rewritten as $f_{1} = 0, \dots, f_{m} = 0, y \det \cJ' -1 = 0$ where $y$ is a new variable.
Now, the satisfiability protocol of \cite{K96} can be used to decide $I$ is not prime, as in this case the protocol of \cite{K96} will return that the system that we constructed will have a solution (which must correspond to an isolated point).

On the other hand, if $V$ only has components of dimension $1$, then at every point on $V$, the rank of $\cJ$ is at most $n-1$.
Therefore, no matter what choice of columns Merlin picks, the above protocol will fail to accept with high probability.

\paragraph{Case 2: $V$ has two components of dimension $1$.}
For every prime $p$ such that $V_{p}$ has at least two irreducible components, the Lang-Weil theorem says that the number of $\bF_{p}$ points in $V_{p}$ is at least $2p$ (up to an error term).
If $V$ has two components, then we have shown that there are enough primes with this property.
On the other hand, if $V$ is irreducible, for all except a finite number of primes, $V_{p}$ is irreducible, and therefore the Lang-Weil theorem says that the number of $\bF_{p}$ points in $V_{p}$ is at most $p$ (up to an error term).
Therefore, Merlin can prove to Arthur that $V$ is reducible, by performing the set lower bound protocol on the set of primes for which the number of $\bF_{p}$ points in $V_{p}$ is at least $2p$.
Merlin also has to convince Arthur that there are in fact $2p$ points in $V_{p}$, since the primes $p$ are too large for Arthur to check this himself.
For this, the set lower bound protocol is used again, on the set of $\bF_{p}$ points of $V_{p}$.

\subsection{Acknowledgements}
 N.S.~is grateful to the University of Waterloo for an invite to the distinguished lecture series in Jun'24. 
The discussions done in Waterloo's conducive environment led to this work. 
In addition, N.S. thanks the funding support from DST-SERB (JCB/2022/57) and N.Rama Rao Chair.
A.G.~would like to thank TIFR for very generously hosting him during the later part of this project.
Part of this work took place while R.O.~and N.S.~participated in the workshop on "Algebraic and Analytic Methods in Computational Complexity" in Dagstuhl (Sep'24). 
The authors would like to thank Schloss Dagstuhl for providing excellent hospitality and conditions to conduct research discussions.
The authors would like to thank Peter B\"{u}rgisser and Pascal Koiran for helpful conversations about this project during the workshop at Dagstuhl.

\section{Preliminaries}\label{section: preliminaries}

In this section we establish the notation that will be used throughout the paper, along with the background results that we will need in the later sections.

\subsection{Notation} \label{subsec:notation}
Let $R := \bZ\bs{x_{1}, \dots, x_{n}}$ denote the $n$-variate polynomial ring with integer coefficients, and let $S := \overline{\bQ}\bs{x_{1}, \dots, x_{n}}$ denote the extension of scalars to the algebraic closure of $\bQ$.
All the ideals we discuss are ideals of $S$, unless stated otherwise.
The same holds for any property of the ideals we discuss, for example the property of an ideal being prime, or Cohen-Macaulay, among others.
We use $\bA^{n}$ to denote the affine space in $n$-dimensions, with underlying field $\overline{\bQ}$.
We use $Z(I)$ to denote the zeroset of an ideal $I$.

In the course of our proofs, we will have to deal with polynomials with coefficients in finite extensions of $\bQ$.
Given a monic irreducible polynomial $q \in \bZ\bs{z}$, and a root $\alpha \in \overline{\bQ}$ of $q$,  we use $A$ to denote the ring $\bZ\bs{\alpha}\bs{x_{1}, \dots, x_{n}}$.
Note that $\bZ\bs{\alpha}$ is not necessarily the ring of integers of $\bQ\br{\alpha}$.
Elements of $A$ will be represented as polynomials in $R\bs{z}$, with $z$-degree less than $\deg q$.
This representation of elements of $A$ is unique.
The ring $A$ depends on $q$, however $q$ will always be clear from context, and therefore we suppress it from notation.
In this setting, by $\deg{f}$ we mean the degree of $f$ in the $x$-variables, unless stated otherwise.

The logarithmic height of an integer $c$, denoted by $\lheight{c}$, is the bit-complexity of $c$.
This notion of logarithmic heights extends naturally to polynomials.
Given a polynomial $f \in R$, the logarithmic height of $f$, also denoted by $\lheight{f}$, is the maximum logarithmic height of the coefficients of $f$.
If $f \in A$, then $\lheight{f}$ is defined to be the logarithmic height of $f$ treated when written as a polynomial in $R\bs{z}$ with $z$-degree less than $\deg q$.

Given polynomials $f_{1}, \dots, f_{m}, g$ with $\deg{f_{i}},\deg{g} \leq d$, the condition $g \in \ideal{f_{1}, \dots, f_{m}}$ is equivalent to the existence of polynomials $h_{1}, \dots, h_{m}$ such that $g = \sum f_{i} h_{i}$.
If a bound on the degrees of $h_{i}$ is known a priori (which is usually the case), then the above condition can be written as a linear system where the coefficients of $h_{1}, \dots, h_{m}$ are the unknowns.
If this bound on the degrees of $h_{i}$ is $D$, we use $M_{D}(f_{1}, \dots, f_{m})$ to denote the matrix corresponding to this linear system.
The entries of this matrix are coefficients of $f_{1}, \dots, f_{m}$.
When $f_{1}, \dots, f_{m}$ are clear from context we denote this matrix by just $M_{D}$.
Observe that the total number of columns of $M_{D}$, that is, the number of unknowns in the system is at most $m \cdot \binom{D+n}{n}$.
The total number of equations is at most $\binom{D + d + n}{n}$.
Both these estimates easily follow from counting the number of monomials of given degrees.
When reasoning using this linear system, we will slightly overload notation, and use the same symbols to refer to both polynomials and their coefficient vectors.
For example, we write the condition $g \in \ideal{f_{1}, \dots, f_{m}}$ as $g = M_{D} v$ where $v$ is a vector of unknowns.
We further say that $h_{1}, \dots, h_{m}$ are a solution to the system.

\subsection{Results from complexity theory}

We begin by describing an AM protocol that lower bounds the size of sets that have an efficient membership test.
The protocol is due to \cite{goldwasser1986private}, and the following statement is from \cite[Section~8.4.1]{arora2009computational}.

\begin{lemma}
    \label{lem:goldwassersipser}
    Suppose $S \subset \bc{0, 1}^{n}$ is a set such that the problem of membership in $S$ is in NP.
    Suppose further that a number $K$ is known, and $S$ is guaranteed to either satisfy $\abs{S} \geq 2K$ or $\abs{S} < K$.
    Then the problem of deciding if $\abs{S} \geq 2K$ is in $\AM$.
\end{lemma}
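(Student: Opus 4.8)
The final statement in the excerpt is the Goldwasser–Sipser set lower bound protocol (Lemma~\ref{lem:goldwassersipser}): for $S \subseteq \{0,1\}^n$ with membership in $\NP$, given $K$ with the promise that $|S| \geq 2K$ or $|S| < K$, deciding which case holds is in $\AM$.

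\medskip

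The plan is to use a pairwise-independent hash family to ``capture'' a random element of $S$ in a small bucket, and have Merlin exhibit a preimage together with its $\NP$-witness for membership in $S$. First I would fix the hash family: let $k$ be the integer with $2^{k-2} < K \leq 2^{k-1}$ (so that $2K \le 2^k$), and let $\mathcal{H}_{n,k}$ be the family of affine maps $h(x) = Ax + b$ with $A \in \F_2^{k \times n}$, $b \in \F_2^k$; this family is pairwise independent and has short descriptions. Arthur picks $h \in \mathcal{H}_{n,k}$ uniformly at random and sends it to Merlin, along with a target point, say $0 \in \F_2^k$. Merlin must respond with an $x \in \{0,1\}^n$ together with a witness certifying $x \in S$, such that $h(x) = 0$. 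Arthur accepts iff the witness is valid and $h(x) = 0$. The key step is the standard inclusion–exclusion / second-moment estimate: $\Pr_h[\exists x \in S : h(x) = 0] \geq \sum_{x \in S}\Pr[h(x)=0] - \tfrac12\sum_{x \ne x' \in S}\Pr[h(x)=h(x')=0] = |S|2^{-k} - \binom{|S|}{2}2^{-2k}$, which is at least $\tfrac{|S|}{2^k}\left(1 - \tfrac{|S|}{2^{k+1}}\right)$. When $|S| \ge 2K$ one has $|S|/2^k$ bounded below by a constant while $|S|/2^{k+1}$ can be kept below, say, $1/2$ by choosing the constant in front of $K$ appropriately (or by a minor rescaling of $k$), giving acceptance probability bounded below by an absolute constant; when $|S| < K \le 2^{k-1}$, a union bound gives $\Pr_h[\exists x \in S: h(x)=0] \le |S|2^{-k} < 2^{-1}$, so Arthur accepts with probability strictly less than in the large case.

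\medskip

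With the soundness and completeness gap established — a constant lower bound on acceptance in the ``$\ge 2K$'' case versus a strictly smaller upper bound in the ``$< K$'' case — the remaining step is routine amplification: run $\ell = \Theta(\mathrm{poly}(n))$ independent parallel copies of the protocol (Arthur sends $\ell$ hash functions, Merlin responds to all of them, Arthur accepts iff at least a suitable threshold fraction succeed), and apply a Chernoff bound to separate the two cases with error $2^{-\Omega(n)}$. Since $\AM$ is closed under parallel repetition and the protocol has the right form (public coins, one Arthur message then one Merlin message then deterministic polynomial-time verification, using the $\NP$ machine for $S$ as the verifier of Merlin's witness), this places the decision problem in $\AM$.

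\medskip

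I do not expect a genuine obstacle here, since this is the classical Goldwasser–Sipser argument; the only place requiring a little care is tuning the constants so that the ``yes'' acceptance probability (which involves the subtracted second-moment term $\binom{|S|}{2}2^{-2k}$, and hence depends on $|S|$ possibly being much larger than $2K$) stays bounded away from the ``no'' probability. This is handled either by the clean trick of letting $S$ act through its characteristic as a subset and noting the estimate $|S|2^{-k}(1 - |S|2^{-k-1})$ is increasing then the union-bound upper bound $|S|2^{-k}$ in the no-case still dominates — or, more simply, by the standard device of replacing $S$ with $S \times \{0,1\}^t$ for a padding parameter $t$ so that the relevant ratios land in a convenient range, which does not change membership complexity. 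Either way the gap is a constant, amplification finishes the proof, and the cited references (\cite{goldwasser1986private}, \cite[Section~8.4.1]{arora2009computational}) contain the details.
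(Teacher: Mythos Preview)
Your proposal is correct and follows the same classical Goldwasser--Sipser argument the paper uses; the paper itself only sketches the protocol (Algorithm~\ref{alg:goldwasser sipser}) and explicitly omits the proof, deferring to \cite{goldwasser1986private} and \cite[Section~8.4.1]{arora2009computational}, just as you do. One small correction on the ``$|S|$ possibly much larger than $2K$'' issue you flag at the end: the padding device $S \mapsto S \times \{0,1\}^t$ moves the ratio in the wrong direction; the standard clean fix is simply that the event $\{\exists\, x \in S : h(x)=u\}$ is monotone in $S$, so in the completeness analysis you may pass to a subset of $S$ of size exactly $2K$ before applying your inclusion--exclusion lower bound.
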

We sketch the protocol here, since it will make it easier to explain the generalisation we require.
However, we omit proofs.
We also assume for convenience that $K$ is a power of $2$.
\begin{algorithm}
  \caption{Goldwasser-Sipser set lower bound protocol
    \label{alg:goldwasser sipser}}
    \SetKwInOut{Input}{Input}
    \SetKwInOut{Output}{Output}
    \SetKwInOut{Merlin}{Merlin}
    \SetKwInOut{Arthur}{Arthur}

    \Input{Boolean formula $\phi_{S}(x, y)$ such that $x \in S$ if and only if there exists $y$ satisfying $\phi_{S}(x, y)$, and an integer $K$.
    }

    \Arthur{Let $k := \log_{2}(2K)$, and pick a random hash function $h:\bc{0, 1}^{n} \to \bc{0, 1}^{k+1}$ from a pairwise independent hash function collection, and pick $u \in \bc{0, 1}^{k+1}$ uniformly at random.
    Send $h, u$ to Merlin.}

    \Merlin{Find $x, y$ such that $h(x) = u$ and $\phi_{S}(x, y)$ is true.
    Send $x, y$ to Arthur.}

    \Arthur{Accept if and only if $h(x) = u$ and $\phi_{S}(x, y)$ is true.}
    
\end{algorithm}

We now state a slight generalisation of this protocol, to the case when membership in $S$ itself can be verified by an AM protocol.

\begin{corollary}
    \label{cor:goldwasser sipser extended}
    Suppose for each $x \in \bc{0, 1}^{n}$ there exists a subset $S_{x} \subset \bc{0, 1}^{m(x)}$, and an integer $K(x)$, such that $m(x), K(x)$ are polynomially bounded functions of $x$.
    Suppose further that there is a uniform algorithm that runs in polynomial time that given input $x$, returns the number $K(x)$ and also returns a boolean formula $\phi_{x}$, such that $z \in S_{x}$ if and only if there exists $y$ such that $\phi_{x}(z, y)$ is true.
    Suppose further than an integer $K$ is known.
    
    If $S \subset \bc{0, 1}^{n}$ is the set of elements $x$ such that $\abs{S_{x}} \geq 2K(x)$, and if $S$ is promised to either satisfy $\abs{S} \geq 2K$ or $\abs{S} \leq K$, then the problem of deciding if $\abs{S} \geq 2K$ is in $\AM$.
\end{corollary}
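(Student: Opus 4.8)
The plan is to reduce to the basic Goldwasser--Sipser protocol of \cref{lem:goldwassersipser} by composing it with itself, using the standard fact that $\AM = \AM \cdot \AM$ (i.e.\ an Arthur--Merlin round can be safely substituted for the verification predicate of another). Concretely, the set $S$ we wish to count is $S = \{x : |S_x| \geq 2K(x)\}$, and by hypothesis deciding membership ``$x \in S$'' is itself an $\AM$ question: run \cref{lem:goldwassersipser} on the inner set $S_x$ with its parameter $K(x)$, which is legitimate because the hypothesis promises $S_x$ either has $|S_x| \geq 2K(x)$ or $|S_x| < K(x)$, and the boolean formula $\phi_x$ gives the required $\NP$ membership test for $S_x$.

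The protocol I would write down runs the outer Goldwasser--Sipser protocol (\cref{alg:goldwasser sipser}) on $S$ with the global parameter $K$: Arthur picks a pairwise independent hash $h : \{0,1\}^n \to \{0,1\}^{k+1}$ with $k = \log_2(2K)$ and a random target $u$, and sends them to Merlin; Merlin responds with a candidate $x$ such that $h(x) = u$, together with the Merlin-messages for the \emph{inner} protocol certifying $|S_x| \geq 2K(x)$. Arthur then checks $h(x) = u$ and runs Arthur's part of the inner protocol. The key technical point is the order of quantifiers over randomness: since the inner protocol has only one-sided Merlin interaction after Arthur's single message, the whole thing can be flattened into a single Arthur-message round followed by a single Merlin-message round, so the composed protocol is a genuine two-message $\AM$ protocol (after the standard round-collapse and error-reduction). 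One must be slightly careful that the inner protocol's soundness/completeness error is driven down by parallel repetition to, say, $2^{-\mathrm{poly}(n)}$ before the composition, so that the union bound over the (at most $2^n$) relevant inner instances still leaves the outer protocol with constant error.

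For correctness: if $|S| \geq 2K$, then by the standard analysis of \cref{alg:goldwasser sipser} a random $(h,u)$ has, with constant probability, a preimage $x \in S$ under $h$; for such an $x$ we genuinely have $|S_x| \geq 2K(x)$, so honest Merlin can make the inner protocol accept with high probability, and Arthur accepts. If $|S| \leq K$, then for a random $(h,u)$ the probability that \emph{any} $x \in S$ hashes to $u$ is small by pairwise independence, so for the protocol to accept Merlin must cheat in the inner protocol on some $x \notin S$ (where $|S_x| < K(x)$), which succeeds only with the tiny soundness error, even after union-bounding over all candidate $x$'s Arthur might be handed.

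The main obstacle is the verification that the two-round structure is preserved under this composition, together with the error bookkeeping: one needs $\AM$'s robustness under substituting an $\AM$ predicate for a deterministic verification step, and one needs the inner error small enough to survive a union bound over exponentially many potential inner instances while the outer hash argument only gives constant gap. Both are routine given $\AM = \BP \cdot \NP$ and standard amplification, but they are the only places where care is required; everything else is a direct appeal to \cref{lem:goldwassersipser}.
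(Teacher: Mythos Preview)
Your approach matches the paper's: nest two Goldwasser--Sipser protocols and collapse rounds. The paper writes this out as an explicit four-round protocol (Arthur's outer hash; Merlin's $x$; Arthur's inner hash for that $x$; Merlin's $(z,y)$) and then invokes $\AM = \AM[c]$ as a black box. Your discussion of driving the inner error to $2^{-\mathrm{poly}(n)}$ and union-bounding over all candidate $x$ is unnecessary in that framing: since Merlin commits to a single $x$ \emph{before} any inner randomness is drawn, constant inner soundness on that one $x$ already suffices, and the round reduction is deferred entirely to the cited equality.

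One actual slip: you claim ``the hypothesis promises $S_x$ either has $|S_x| \geq 2K(x)$ or $|S_x| < K(x)$.'' It does not---the statement only puts a gap promise on the outer set $S$. For $x \notin S$ all you know is $|S_x| < 2K(x)$, so you cannot invoke \cref{lem:goldwassersipser} on the inner instance as a black box satisfying its own promise. The soundness analysis still goes through (for such $x$ the inner acceptance probability is at most $|S_x|/2^{k_x+1} < 1/2$, which leaves a constant gap against the completeness side), but your justification as written appeals to a promise that is not in the hypothesis.
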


We give a four round protocol that decides $\abs{S} \geq 2K$.
The result then follows from the fact that $\AM = \AM[c]$ for all constants $c \geq 2$.
The proof of correctness of the protocol is the exact same as the proof of correctness of \cref{lem:goldwassersipser}, therefore we omit the proof.

\begin{algorithm}
  \caption{Goldwasser-Sipser set lower bound protocol with membership in AM
    \label{alg:goldwasser sipser extended}}
    \SetKwInOut{Input}{Input}
    \SetKwInOut{Output}{Output}
    \SetKwInOut{Merlin}{Merlin}
    \SetKwInOut{Arthur}{Arthur}

    \Input{An integer $K$, and an algorithm that on input $x$ outputs the integers $K(x)$ and the circuit $\phi_{x}$ described in \cref{cor:goldwasser sipser extended}
    }

    \Arthur{Let $k := \log_{2}(2K)$, and pick a random hash function $h:\bc{0, 1}^{n} \to \bc{0, 1}^{k+1}$ from a pairwise independent hash function collection, and pick $u \in \bc{0, 1}^{k+1}$ uniformly at random.
    Send $h, u$ to Merlin.}

    \Merlin{Find $x$ such that $h(x) = u$ and $\abs{S_{x}} \geq 2K(x)$.
    Send $x$ to Arthur.}

    \Arthur{Reject if $h(x) \neq u$.
    If $h(x) = u$, let $k_{x} := \log_{2}(2K(x))$, and pick a random hash function $h_{x}:\bc{0, 1}^{m(x)} \to \bc{0, 1}^{k_{x}+1}$ from a pairwise independent hash function collection, and pick $u_{x} \in \bc{0, 1}^{k_{x}+1}$ uniformly at random.
    Send $h_{x}, u_{x}$ to Merlin.}

    \Merlin{Find $y, z$ such that $h_{x}(z) = u_{x}$ and such that $\phi_{x}(z, y)$ is true.
    Send $z, y$ to Arthur.}

    \Arthur{Accept if and only if $h_{x}(z) = u_{x}$ and $\phi_{x}(z, y)$ is true.}
    
\end{algorithm}

\begin{remark}
    \label{rem:goldwassersipser}
    The choice of the constant $2$ in the above protocol is arbitrary, and $2$ can be replaced by a slightly smaller constant, say $1.9$.
\end{remark}

We now state the main theorems of Koiran, giving interactive protocols for the problem of deciding whether a system of polynomial equations has a solution, and for the problem of estimating the dimension of an algebraic set.
We first state the main theorem of \cite{K96}, on the Hilbert nullstellensatz problem.

\begin{theorem}
    \label{thm: koiran nullstellensatz in am}
    Assume GRH.
    Given $f_{1}, \dots, f_{m} \in R$, there is an $\AM$ protocol that decides if $Z(f_{1}, \dots, f_{m}) \neq \emptyset$.
\end{theorem}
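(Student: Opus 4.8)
\textbf{Proof proposal for \cref{thm: koiran nullstellensatz in am}.}

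The plan is to follow Koiran's argument~\cite{K96} essentially verbatim, since the statement is precisely his main theorem. First I would reduce to the two-sided quantitative statement about ``good primes.'' Recall Hilbert's Nullstellensatz: $Z(f_1,\dots,f_m) = \emptyset$ over $\overline{\bQ}$ if and only if $1 \in \ideal{f_1,\dots,f_m}$ as an ideal of $\bQ[x_1,\dots,x_n]$, which by clearing denominators is equivalent to the existence of $h_1,\dots,h_m \in \bZ[x_1,\dots,x_n]$ and $a \in \bZ_{>0}$ with $a = \sum_i f_i h_i$. By the effective Nullstellensatz (Hermann-type degree bounds, single exponential here since we only need an existence certificate), the degrees of the $h_i$ may be bounded by $d^{O(n)}$, so the existence of such a Bezout identity becomes a linear-algebraic condition over $\bQ$ whose solvability is controlled by the non-vanishing of certain minors; standard height bounds then bound $\lheight{a}$, and hence the number of primes dividing $a$, by $\mathrm{poly}(d^n, \lheight{f_i})$. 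Thus if the system is \emph{unsatisfiable}, then for all but this (explicitly bounded) set of ``bad'' primes $p$, reducing mod $p$ gives $1 \in \ideal{f_1 \bmod p, \dots, f_m \bmod p}$, so the system has no solution even over $\overline{\bF}_p$, in particular none over $\bF_p^n$.

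The substantive half is the converse: if the system \emph{is} satisfiable, then it has a solution in $\bF_p^n$ (not merely $\overline{\bF}_p^n$) for a positive-density --- in fact inverse-exponentially-dense --- set of small primes $p$. This is exactly Koiran's effective spreading-out / base-change result, proved using the effective Chebotarev (or effective Lang--Weil) machinery under GRH: passing to an irreducible component defined over a number field of controlled degree and discriminant, one shows that the Frobenius at $p$ splits appropriately --- i.e.\ the relevant residue field is $\bF_p$ itself --- for a density of primes bounded below in terms of the degree of that number field, which is at most $d^{O(n)}$, and GRH makes these density estimates effective for primes up to a bound $A = 2^{\mathrm{poly}(s)}$. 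I expect this to be the main obstacle, as it is the place where the quantitative algebraic-number-theory input (bounds on heights/discriminants of the field of definition of a component, combined with the GRH-conditional effective Chebotarev density theorem) must be carefully assembled; Koiran's Theorem~1 is precisely this statement, so I would cite it directly rather than reprove it.

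Given both directions, the protocol itself is short. Arthur and Merlin agree on the bound $A$ for the primes to be considered (computable from $d$, $n$, and the $\lheight{f_i}$). Merlin claims satisfiability; the two run the Goldwasser--Sipser set lower bound protocol (\cref{lem:goldwassersipser}, or its nested version \cref{cor:goldwasser sipser extended}) on the set $S$ of primes $p \le A$ such that the system has a solution over $\bF_p^n$. Membership of a prime $p$ in $S$ is itself an NP-statement --- Merlin exhibits a point $\alpha \in \bF_p^n$ and Arthur checks $f_i(\alpha) \equiv 0 \pmod p$ for all $i$ --- and primality of $p$ and $p \le A$ are checkable in (randomized) polynomial time, so $S$ has an NP membership test and $|S|$ is polynomially bounded in the input size (each prime has $\mathrm{poly}(s)$ bits). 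The two-sided bound above gives a multiplicative gap between the ``satisfiable'' and ``unsatisfiable'' cardinalities of $S$, which (after the harmless constant adjustment of \cref{rem:goldwassersipser} if needed) is enough to invoke \cref{lem:goldwassersipser}. Hence deciding $Z(f_1,\dots,f_m) \ne \emptyset$ lies in $\AM$, assuming GRH. \qed
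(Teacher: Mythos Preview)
The paper does not prove \cref{thm: koiran nullstellensatz in am} at all: it is stated as a preliminary and attributed directly to Koiran~\cite{K96} with no accompanying argument. Your proposal correctly sketches Koiran's original proof and, since you also conclude that the core quantitative step (Koiran's Theorem~1) should simply be cited, your approach and the paper's are the same.
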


The above theorem was further generalised by Koiran in \cite[Theorem 4.1]{koiran1997randomized}, where now one wants to decide a lower bound on the dimension of the given algebraic set.

\begin{theorem}
    \label{thm: koiran dimension in AM}
    Assume GRH.
    Given $f_{1}, \dots, f_{m} \in R$, and an integer $r$, there is an $\AM$ protocol that decides if $\dim Z(f_{1}, \dots, f_{m}) \geq r$.
\end{theorem}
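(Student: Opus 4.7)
The plan is to reduce the dimension lower-bound problem to an instance of Hilbert's nullstellensatz by intersecting the variety $Z := Z(f_1, \dots, f_m)$ with $r$ generic affine hyperplanes, and then invoke the $\AM$ protocol of \cref{thm: koiran nullstellensatz in am} as a black box. The geometric fact driving the reduction is the following: $\dim Z \geq r$ if and only if for a Zariski-generic choice of affine linear forms $L_1, \dots, L_r$, the augmented system
\[
f_1(x) = \cdots = f_m(x) = L_1(x) = \cdots = L_r(x) = 0
\]
has a solution in $\bC^n$. Indeed, cutting an irreducible component $W \subseteq Z$ by a generic affine hyperplane decreases $\dim W$ by exactly $1$ whenever $\dim W \geq 1$, so iterating $r$ times: if $\dim Z \geq r$, some component of dimension $\geq r$ survives the cuts as a non-empty (possibly positive-dimensional) set; conversely if every component has dimension $< r$, then generic cuts miss each component.

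The protocol proceeds as follows. Arthur sends $r$ random affine linear forms $L_1,\dots,L_r \in \bZ[x_1,\dots,x_n]$ with integer coefficients drawn uniformly from $[-N,N]$, for a value $N$ chosen below. Arthur and Merlin then execute the $\AM$ protocol of \cref{thm: koiran nullstellensatz in am} on the augmented system $\{f_1,\dots,f_m,L_1,\dots,L_r\}$, and Arthur accepts iff that protocol accepts. Since Arthur's random choice of the $L_i$ constitutes public coins that can be folded into the first message of the nullstellensatz subprotocol, the overall number of rounds of alternation remains a constant, so the composition stays in $\AM$.

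For correctness, the set of coefficient tuples for which the generic behavior above fails is a proper Zariski-closed subset of the coefficient space, cut out by polynomials whose total degree in the entries of $(L_1,\dots,L_r)$ can be bounded (via B\'ezout's theorem and elimination-theoretic considerations, applied componentwise to each irreducible component of $Z$) by $d^{O(n)}$, where $d := \max_i \deg f_i$. Taking $N$ of bit size $\poly(s, n, \log d)$ (so $N \gg d^{O(n)}$) and applying the Schwartz--Zippel lemma, the random $L_i$ fall into the Zariski-generic regime with probability $1 - o(1)$. Combining this with the completeness and soundness gaps of \cref{thm: koiran nullstellensatz in am}, and amplifying if necessary, yields the $\AM$ protocol claimed in \cref{thm: koiran dimension in AM}.

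The main obstacle is the quantitative control of the ``bad'' coefficient locus: one must upper bound the degree (in the coefficients of the $L_i$) of the polynomial condition that distinguishes the two cases, and keep the bit complexity of the $L_i$ polynomial in the input size so that the augmented system is a legitimate input to the nullstellensatz protocol. A secondary subtlety is to verify that the Schwartz--Zippel error and the $\AM$ error of the nullstellensatz subprotocol compose to error well below the standard $\AM$ thresholds, which is routine once the degree bound is in hand.
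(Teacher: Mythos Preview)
The paper does not prove \cref{thm: koiran dimension in AM}; it is quoted as a known result from \cite[Theorem~4.1]{koiran1997randomized} and used as a black box. So there is no ``paper's own proof'' to compare against here.

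That said, your reduction is exactly Koiran's original argument: intersect $Z$ with $r$ random affine hyperplanes and feed the augmented system to the nullstellensatz protocol of \cref{thm: koiran nullstellensatz in am}. The quantitative step you flag as the main obstacle is already available in the paper's preliminaries: \cref{lem:intersectdimensiondrop} gives the precise probability bound you need (if $\dim Z < r$ then $Z \cap Z(\ell_1,\dots,\ell_r) = \emptyset$ with probability at least $1 - 2rD/|B|$, and if $\dim Z \geq r$ then a top component survives with the complementary guarantee), and \cref{theorem: degree bounds} gives $D := \deg Z \leq d^{n}$. So taking $|B|$ with bit size $\poly(n,\log d)$ suffices, and your composition-of-errors and round-folding remarks are correct. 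In short, your proposal is a faithful reconstruction of the cited proof, and the paper itself treats the statement as imported.
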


\begin{remark}
    We make two remarks on the proofs of \cref{thm: koiran nullstellensatz in am} and \cref{thm: koiran dimension in AM} that will be crucial to the way we invoke these results.
    The first of these is regarding the representation of input polynomials.
    The AM protocols for the above problems involve evaluating the polynomials $f_{1}, \dots, f_{m} \pmod p$ at points in $\bF_{p}$, where $p$ is a prime of bit complexity polynomial in the input.
    Therefore, the polynomials $f_{1}, \dots, f_{m}$ are allowed to be given either as white box circuits of polynomial size (here the size includes the bit complexity of the constants in the circuit), or more generally as black-boxes that allow mod $p$ queries.

    The second remark is regarding the parameters.
    Suppose $\deg{f_{i}} \leq d$ and $\lheight{f_{i}} \leq h$.
    The length of the messages and the computation done by Arthur in the protocols in \cref{thm: koiran nullstellensatz in am} and \cref{thm: koiran dimension in AM} is polynomial in $\log\br{{h \cdot 2^{\br{n \log \sigma}^{c}}}}$, for a universal constant $c$, where $\sigma := dm + 2$.
    Therefore, if we create a system of polynomials $g_{1}, \dots, g_{m}$ with $\deg{g_{i}} \leq d^{n}$ and $\lheight{h \cdot 2^{\br{n \log \sigma}^{c}}}$, then the protocols in \cref{thm: koiran nullstellensatz in am} and \cref{thm: koiran dimension in AM} applied to $g_{1}, \dots, g_{m}$ still run in time $\poly{n, m, d, h}$ as long as we ensure that $g_{1}, \dots, g_{m}$ have circuits of size $\poly{n, m, d, h}$.
    This will be crucial in our proofs.
\end{remark}

Lastly, we show that the ideal primality testing problem is $\coNP$-hard, even for the special case of equidimensional Cohen-Macaulay ideals.

\begin{proposition}[coNP-hardness of ideal primality testing]
    \label{proposition:conp hardness}
    Given a 3CNF $\Phi$ on $n$ variables, there exists a polynomial time algorithm whose output is a circuit $C_\Phi$ computing polynomials $f_{1}, \dots, f_{m} \in R$, along with their partial derivatives $\partial_{i} f_{j}$, such that the ideal $I := (f_1, \ldots, f_m)S \subset S$ has the following properties:
    \begin{itemize}
        \item $I$ is a radical, equidimensional Cohen-Macaulay ideal
        \item $I$ is prime if and only if $\Phi$ is unsatisfiable.
    \end{itemize}
\end{proposition}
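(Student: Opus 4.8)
The plan is to reduce 3SAT to the complement of ideal primality testing by constructing, from a 3CNF $\Phi$ on variables $y_1, \dots, y_n$, a system of polynomials whose zero set over $\bC^n$ is precisely the set of satisfying assignments of $\Phi$, viewed as a subset of $\{0,1\}^n$. First I would encode the Boolean constraint $y_i \in \{0,1\}$ by the polynomial $y_i^2 - y_i$ for each $i$; together these $n$ polynomials already cut out a zero-dimensional ideal whose variety is $\{0,1\}^n$, which is reduced (the Jacobian $\mathrm{diag}(2y_i - 1)$ is invertible at every point of $\{0,1\}^n$), hence radical, and zero-dimensional, hence Cohen-Macaulay and equidimensional. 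Next, for each clause of $\Phi$, say $(y_{i} \vee \neg y_{j} \vee y_{k})$, I would add the polynomial $(1 - y_i)\, y_j\, (1 - y_k)$, which vanishes on a $0/1$ point exactly when that point satisfies the clause. Calling the resulting list $f_1, \dots, f_m$ and $I := (f_1, \dots, f_m) S$, the variety $Z(I) \subseteq \bA^n$ is exactly $\mathrm{SAT}(\Phi) \subseteq \{0,1\}^n$, a finite set of reduced points.

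The key structural observations are then: (i) $I$ is radical, since $Z(I)$ is a finite set of points at which the Jacobian of the subsystem $\{y_i^2 - y_i\}_i$ already has full rank $n$, so $I$ is the vanishing ideal of its (finite, reduced) zero set; (ii) $I$ is zero-dimensional, hence automatically equidimensional and Cohen-Macaulay (as noted in the excerpt, zero-dimensional ideals are Cohen-Macaulay); and (iii) $I$ is prime if and only if $S/I$ is a domain, which for a reduced zero-dimensional ring happens if and only if $|Z(I)| = 1$ (one point) — wait, this is not yet "$\Phi$ unsatisfiable". To fix the equivalence, I would instead note that the empty variety gives $I = (1) = S$, which is \emph{not} prime (the zero ring is not a domain by convention, $1 = 0$), so $I$ prime forces $|Z(I)| \geq 1$; conversely $|Z(I)| \geq 2$ makes $S/I$ a product of two or more fields, not a domain. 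Thus $I$ is prime iff $|Z(I)| = 1$. To make "$I$ prime $\iff$ $\Phi$ unsatisfiable" I add one more distinguished "dummy" point to the variety that is always present: introduce the polynomials $\prod_{i}(y_i - c_i) \cdot g$-type modifications — more cleanly, replace each clause polynomial $p_{\text{cl}}$ by $p_{\text{cl}} \cdot \ell$ where $\ell := \prod_{i=1}^n (y_i - 2)$ vanishes at the fixed point $(2,2,\dots,2)$, and also add $y_i^2 - y_i$ only after multiplying by something that keeps $(2,\dots,2)$ — actually the simplest fix: keep $y_i^2-y_i$ unchanged but multiply every clause polynomial by $(y_1 - 2)$, and add the single polynomial $(y_1^2 - y_1)\cdot$(nothing). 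Hmm. The genuinely clean route is: work with the ideal $J := \big(\{y_i(y_i-1)(y_i-2)\}_i\big) + \big(\{ p_{\text{cl}} \cdot \prod_i (y_i - 2) \}_{\text{cl}}\big)$, whose variety is $\mathrm{SAT}(\Phi) \cup \{(2,\dots,2)\}$ over the set $\{0,1,2\}^n$; this is again reduced (Jacobian of the first $n$ polynomials is full rank on $\{0,1,2\}^n$) and zero-dimensional, and it is prime iff its variety is a single point iff $\mathrm{SAT}(\Phi) = \emptyset$ iff $\Phi$ is unsatisfiable.

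Finally I would verify the circuit-complexity claim: each generator of $J$ and each partial derivative $\partial_i f_j$ is an explicit product of $O(n)$ linear forms, so a circuit of size $\mathrm{poly}(n, m)$ with integer constants computes all of $f_1, \dots, f_m$ and all $\partial_i f_j$ simultaneously; this circuit is produced in polynomial time from $\Phi$ by direct syntactic translation. The main obstacle — really the only subtle point — is getting the primality equivalence to line up exactly with \emph{unsatisfiability} rather than "unique satisfying assignment", which is why the auxiliary always-present point $(2,\dots,2)$ is introduced; with that device in place, $\Phi$ unsatisfiable $\Rightarrow$ $Z(J) = \{(2,\dots,2)\}$ a single reduced point $\Rightarrow$ $S/J \cong \bC$ is a domain $\Rightarrow J$ prime, and $\Phi$ satisfiable $\Rightarrow$ $|Z(J)| \geq 2$ $\Rightarrow S/J$ is a nontrivial product of fields, not a domain, so $J$ is not prime. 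All remaining verifications (radicality via the reduced-points argument, equidimensional Cohen-Macaulayness via zero-dimensionality, degree and height bounds for the circuit) are routine.
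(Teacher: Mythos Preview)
Your high-level strategy is right and matches the paper's: build a zero-dimensional radical ideal whose variety is $\mathrm{SAT}(\Phi)$ together with one extra ``anchor'' point, so that primality holds iff the variety is a single point iff $\Phi$ is unsatisfiable. But your specific construction of $J$ is wrong. You claim
\[
Z(J) \;=\; \mathrm{SAT}(\Phi)\,\cup\,\{(2,\dots,2)\},
\]
but the factor $\prod_i (y_i - 2)$ vanishes as soon as \emph{some} $y_i = 2$, not only when all of them equal $2$. Hence any point of $\{0,1,2\}^n$ with at least one coordinate equal to $2$ already satisfies every clause equation $p_{\mathrm{cl}}\cdot\prod_i(y_i-2)=0$, and therefore
\[
Z(J) \;=\; \mathrm{SAT}(\Phi)\,\cup\,\bigl(\{0,1,2\}^n \setminus \{0,1\}^n\bigr),
\]
which for $n \ge 2$ contains $3^n - 2^n \ge 5$ points regardless of $\Phi$. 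Thus $J$ is never prime and the reduction collapses.

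The paper repairs this in a cleaner way: keep the quadratic generators $y_i^2 - y_i$ (so the ambient finite set stays $\{0,1\}^n$), first test whether the origin $(0,\dots,0)$ already satisfies $\Phi$ (if so, output a fixed non-prime radical CM ideal such as $(y_1(y_1-1))$), and otherwise replace each clause polynomial $g_i$ by the family $\{y_j g_i\}_{j \in [n]}$. At any nonzero $y \in \{0,1\}^n$ some $y_{j_0}=1$, so the equations $y_{j_0} g_i = 0$ force $g_i(y)=0$ for all $i$; at the origin every $y_j g_i$ vanishes automatically. Hence the variety is exactly $\{0\}\cup\mathrm{SAT}(\Phi)$, a single point iff $\Phi$ is unsatisfiable. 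Your radicality argument (Jacobian of the Boolean generators has full rank on the finite set) and the zero-dimensional $\Rightarrow$ equidimensional Cohen--Macaulay observation are correct and carry over verbatim to this corrected system.
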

\begin{proof}
    Let $x_{1}^{2} - x_{1}, x_{2}^{2} - x_{2}, \dots, x_{n}^{2} - x_{n}, g_{1}, \dots, g_{r}$ be the arithmetization of $\Phi$, where $r$ is the number of clauses in $\Phi$.
    The instance $\Phi$ is satisfiable if and only if the arithmetized system has a solution.

    The reduction algorithm proceeds as follows.
    First check if $\br{0, \dots, 0}$ is a solution to the above system.
    If it is, then the algorithm returns the system $f_{1} = x_{1} \br{x_{1} - 1}$, clearly there is a constant-sized circuit that computes $f_{1}, \partial_{i} f_{1}$.

    Suppose $\br{0, \dots, 0}$ is not a solution to the above system.
    Consider the set of polynomials $\bc{x_{i}^{2} - x_{i}}_{i \in [n]} \cup \bc{x_{j}g_{i}}_{i \in [r], j \in [n]}$.
    Since each $g_{i}$ has a small circuit, there is a small circuit that computes all the above polynomials and all their derivatives.
    Further, the set of zeroes of the above system is exactly those points on the boolean hypercube that correspond to a satisfying assignment of $\phi$, and also the point $\br{0, \dots, 0}$.

    In each of the above cases, by the Lemma in \cite[Page~310]{seidenberg1974constructions}, the ideal constructed is radical.
    In the first case the ideal is a complete intersection therefore CM.
    In the second case the ideal is zero dimensional, therefore CM.
    If $\Phi$ is satisfiable, then in either case the system created has at least two components and is therefore not prime.
    If $\Phi$ is not satisfiable, the system created has unique solution $\br{0, \dots, 0}$, thus the ideal is prime.
\end{proof}

\subsection{Algebraic circuits}

The inputs to our algorithms are given as algebraic circuits.
Algebraic circuits are natural models for computing polynomials.
They consist of a directed acyclic graph, with nodes marked with $+$ and $\times$.
The source nodes are marked either with variables $x_{i}$ or with the constant $1$.
Each internal node computes the natural polynomial: nodes marked with $+$ compute the sums of their inputs, and nodes marked with $\times$ compute the product.
The edges are marked with constants, if $(u, v)$ is an edge with constant $\alpha$, then the input to $v$ corresponding to $u$ is $\alpha f_{u}$, where $f_{u}$ is the polynomial computed at the node $u$.
We refer the reader to the excellent survey \cite{shpilka2010arithmetic} for more background on algebraic circuits.

Circuits are studied both as uniform and as non-uniform models of computation.
Further, they are studied both in the setting where all constants are considered to have size $1$, and also in the setting where the logarithmic height of the constants is part of the size of the circuit.
Since we are using circuits as inputs to a problem in the Turing machine model of computation, we are naturally in the setting where the size of circuit includes the logarithmic heights of the constants of the circuit.
We assume that the constants in the circuit are all integers.

\textbf{Size of a circuit:} The size of an algebraic circuit is the sum of the logarithmic heights of the constants on the edges of circuits.
We assume that every edge has a constant (potentially $1$), therefore the size of a circuit is a lower bound on the number of edges of the circuit.

\textbf{Evaluating a circuit modulo p:}
Given a circuit $C$, a prime $p$, and a point $\alpha \in\bF_{p}^{n}$, there is a polynomial time algorithm that computes the evaluations at $\alpha$ of all the mod $p$ reductions of all polynomials computed by $C$.
The algorithm recursively computes the evaluation $\pmod p$ of the inputs to a given node in the circuit, and then performs arithmetic in $\bF_{p}$ to compute the evaluation of the gate itself.

\textbf{Structural results:}
We need the following technical result \cite[Lemma~4.16]{burgisser2013completeness} that bounds the logarithmic heights and degrees of polynomials computed by circuits of size $s$.
The result follows by induction on the circuit, and using the fact that every coefficient in the circuit has logarithmic height at most $s$.
\begin{lemma}
    \label{lemma: circuit height bound}
    If $C$ is a circuit of size $s$ computing polynomials $f_{1}, \dots, f_{m} \in R$ then $\lheight{f_{i}} \leq 2^{2s}$ and $\deg f_{i} \leq 2^{s}$.
\end{lemma}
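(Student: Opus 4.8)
The plan is a single structural induction on the gates of $C$, processed in a topological order, bounding for each node $v$ both $\deg f_v$ and $\lheight{f_v}$, where $f_v$ is the polynomial computed at $v$. Write $e_v$ for the number of edges of $C$ lying below $v$, i.e.\ edges $(a,b)$ admitting a directed path $b\to v$. Since every edge of $C$ carries a nonzero constant, whose logarithmic height is therefore at least $1$, we have $e_v\le s$; likewise the sum of the logarithmic heights of those constants is $\le s$, and the number of distinct variables occurring in $C$ is $\le s$, so we may assume $n\le s$.

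For the degrees I would show $\deg f_v\le 2^{e_v}$ by induction: a leaf computes $x_i$ or $1$, of degree $\le 1 = 2^0$; at a sum gate the degree is the maximum of the input degrees and $e_v\ge e_{u_i}$ for each input $u_i$; at a product gate of fan-in $\ell$ the degree is the sum of the input degrees, which is at most $\ell\cdot 2^{\max_i e_{u_i}}\le 2^{(\ell-1)+\max_i e_{u_i}}\le 2^{e_v}$, using $e_v\ge \ell+\max_i e_{u_i}$ and $\log_2\ell\le \ell-1$. For $v$ an output gate this gives $\deg f_i\le 2^s$.

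For the heights the governing recurrences are, at a sum gate with incoming edge constants $\alpha_1,\dots,\alpha_\ell$,
\[
\lheight{f_v}\ \le\ \max_i \lheight{f_{u_i}}\;+\;\sum_i \lheight{\alpha_i}\;+\;\log_2\ell ,
\]
and, at a product gate,
\[
\lheight{f_v}\ \le\ \sum_i \lheight{f_{u_i}}\;+\;\sum_i \lheight{\alpha_i}\;+\;\log_2 T_v ,
\]
where $T_v$ bounds the number of summands that can collapse onto a single monomial when the factors are multiplied out; crudely $T_v\le \prod_i N_{u_i}$, with $N_{u_i}$ the number of monomials of $f_{u_i}$, and since each $f_{u_i}$ has $\le s$ variables and degree $\le 2^s$ one gets $\log_2 N_{u_i}\le s(s+1)$, hence $\log_2 T_v\le s^3+s^2$. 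Putting $B:=3s^3$, which uniformly bounds $\sum_i\lheight{\alpha_i}+\log_2 T_v$ at every product gate and $\sum_i\lheight{\alpha_i}+\log_2\ell$ at every sum gate (fan-in being $\le s$), I would prove $\lheight{f_v}\le 2^{e_v}B$ by induction. Leaves have height $0\le B$. At any gate $v$ with $\ell$ incoming edges, each input sub-circuit uses at most $e_v-\ell$ edges, so $\sum_i 2^{e_{u_i}}\le \ell\cdot 2^{e_v-\ell}\le 2^{e_v-1}$; feeding this and the induction hypothesis into $\lheight{f_v}\le \sum_i\lheight{f_{u_i}}+B$ (the sum taken with multiplicity; the weaker $\max_i$ suffices at a sum gate) yields $\lheight{f_v}\le 2^{e_v-1}B+B\le 2^{e_v}B$. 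For $v$ an output gate, $\lheight{f_i}\le 2^s\cdot 3s^3\le 2^{2s}$, the last step holding for all $s$ after checking the few small values directly. This reproduces \cite[Lemma~4.16]{burgisser2013completeness}.

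The degree half is routine; the one delicate point is the product-gate step of the height induction. The cleaner invariant $\lheight{f_v}\le 2^{e_v}$ is simply false --- a fan-in-$1$ gate scaling a variable by a constant of height $\approx s$ already violates it --- so one is forced to carry a multiplicative slack $B$ and to argue, via the degree and monomial-count estimates, that $B$ can be taken to be a fixed polynomial in the whole circuit size. Isolating such a $B$ and verifying that $\lheight{f_v}\le 2^{e_v}B$ is genuinely inductive (rather than merely true) is the crux; everything else is bookkeeping.
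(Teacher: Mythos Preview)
Your proposal is correct and follows exactly the approach the paper indicates: structural induction on the circuit, using that every edge constant has logarithmic height at most $s$. The paper itself does not give a detailed argument here---it only states that the result ``follows by induction on the circuit'' and cites \cite[Lemma~4.16]{burgisser2013completeness}---so your write-up is a faithful expansion of that one-line sketch; the only minor looseness is the appeal to ``checking the few small values directly'' for the final numeric inequality $2^s\cdot 3s^3\le 2^{2s}$, which would be cleaner to handle by sharpening the per-gate overhead $B$ rather than by case analysis, but this is cosmetic and not a genuine gap.
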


We also need the following result from \cite{baur1983complexity} that proves that given a circuit $C$, there is a circuit $C'$ of similar size that also computes the partial derivatives of the polynomials computed by $C$.
\begin{lemma}
    \label{lemma: derivatives of a circuit}
    If $C$ is a circuit of size $s$ computing polynomials $f_{1}, \dots, f_{m} \in R$ there is a circuit $C'$ of size $5sm$ that computes $f_{1}, \dots, f_{m}$ and all the partial derivatives $\partial_{i} f_{j}$.
\end{lemma}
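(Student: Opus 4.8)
The plan is to derive this from the Baur--Strassen theorem \cite{baur1983complexity}, which states that a single polynomial $f$ computed by a circuit of size $s$ can be computed, together with all of its first partial derivatives $\partial_{1} f, \dots, \partial_{n} f$, by a circuit of size $O(s)$. I would first establish this single-output statement with the explicit constant $5$, and then take a disjoint union over the $m$ output polynomials, which accounts for the factor $m$.

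For the single-output case, suppose $C$ computes $f$ at its output gate, and fix a topological order $g_{1}, \dots, g_{t}$ of its gates with $g_{t} = f$. The new circuit has two phases. The \emph{forward phase} is just $C$ itself, which makes every intermediate value $g_{k}$ available. The \emph{backward phase} computes, for every gate $g_{k}$, the adjoint $\overline{g_{k}} := \partial f / \partial g_{k}$, processing gates in reverse topological order: it initializes $\overline{g_{t}} = 1$, and whenever a gate $g_{k}$ feeds a gate $g_{\ell}$ through an edge of weight $c$, it adds to the running value of $\overline{g_{k}}$ the quantity $c\cdot \overline{g_{\ell}}$ if $g_{\ell}$ is an addition gate, and $c\cdot\overline{g_{\ell}}\cdot v$ if $g_{\ell}$ is a multiplication gate, where $v$ is the value entering $g_{\ell}$ along its other incoming edge. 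Summing these contributions over all outgoing edges of $g_{k}$ and applying the chain rule gives $\overline{g_{k}} = \partial f / \partial g_{k}$ by downward induction on $k$; in particular $\overline{x_{i}} = \partial f / \partial x_{i}$.

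It then remains to bound the size of the resulting circuit in the paper's size measure, which is the sum of the logarithmic heights of the edge constants. Each addition gate of $C$ spawns only a bounded number of addition gates and no new multiplications, while each multiplication gate of $C$ spawns at most two new multiplication gates and a bounded number of additions; moreover, every newly introduced edge constant is a product of at most two constants already present in $C$, so its logarithmic height is at most twice the maximum logarithmic height of the original constants, and the number of new edges is proportional to the number of old edges. A careful accounting along these lines shows the combined circuit has size at most $5s$. Applying this to each $f_{j}$ separately and taking the union of the $m$ resulting circuits, with the input nodes identified, yields a single circuit of size at most $5sm$ that computes all the $f_{j}$ and all the $\partial_{i} f_{j}$.

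The only point that goes beyond the textbook argument is the last paragraph: in the usual formulation of Baur--Strassen one counts gates or edges, whereas here size is the total bit-length of the edge constants, so one must verify that reverse-mode differentiation does not inflate the constants. This is exactly where one uses that every constant introduced in the backward phase is a product of at most two constants already occurring in $C$, which keeps the contribution of each new edge within a constant factor of the corresponding original edge. I expect this bookkeeping of constants to be the main (and essentially only) obstacle; the combinatorial structure of the construction is the standard one.
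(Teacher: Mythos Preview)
The paper does not give a proof of this lemma at all: it simply states the lemma and attributes it to Baur--Strassen \cite{baur1983complexity}. Your approach---invoking Baur--Strassen for the single-output case and then taking a disjoint union over the $m$ outputs---is exactly the intended argument, and in fact you supply more detail than the paper does. Your observation that one must check the bookkeeping of edge constants (since the paper's size measure is total bit-length of constants rather than gate count) is a valid point that the paper glosses over, but your sketch of how to handle it is reasonable.
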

\subsection{Results from linear algebra}

Cramer's rule is a well known explicit formula for the solution of a linear system of the form $Ax = b$, when $A$ is a $n \times n$ matrix with $\det(A) \neq 0$.
Under these conditions, the unique solution to the above system is given by $x_{i} = \frac{\det A_{i}}{\det A}$, where $A_{i}$ is the matrix obtained by replacing the $i^{th}$ column of $A$ with the vector $b$.
The following easy consequence shows that a similar formula exists for under and overdetermined systems, provided a solution is promised to exist.
Note that the lemma applies for matrices with coefficients in any domain.
\begin{lemma}
    \label{lem:generalcramer}
    Suppose $A$ is an $n \times m$ matrix, and suppose the linear system $Ax = b$ is guaranteed to have a solution.
    Then there exists a solution where each $x_{i}$ is either $0$ or of the form $\frac{\det M_{i}}{\det N}$, where $\det M_{i}, \det N$ are minors of the augmented matrix $A | b$.
\end{lemma}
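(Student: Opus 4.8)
The plan is to reduce the under/over-determined system to a square invertible system by selecting a maximal set of independent columns and rows, and then apply the classical Cramer's rule to that subsystem while setting all other unknowns to zero. First I would let $r$ be the rank of the augmented matrix $A \mid b$; since the system $Ax = b$ is consistent, the rank of $A$ equals the rank of $A \mid b$, so $A$ also has rank $r$. Pick $r$ linearly independent columns of $A$, say indexed by $j_1 < \cdots < j_r$, spanning the column space of $A$; these columns, together with $b$, still have rank $r$ (because $b$ lies in the column space of $A$). Now pick $r$ rows, indexed by $i_1 < \cdots < i_r$, such that the $r \times r$ submatrix $N$ formed by these rows and the columns $j_1, \dots, j_r$ is invertible, i.e. $\det N \neq 0$.

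Next I would argue that solving the full system can be replaced by solving the square subsystem. Set $x_j = 0$ for every $j \notin \{j_1, \dots, j_r\}$. Restricting attention to the rows $i_1, \dots, i_r$, we obtain a square system $N \, x' = b'$, where $x' = (x_{j_1}, \dots, x_{j_r})^\top$ and $b'$ is the corresponding subvector of $b$; by classical Cramer's rule this has the unique solution $x_{j_k} = \det M_k / \det N$, where $M_k$ is $N$ with its $k$-th column replaced by $b'$. Note that both $\det N$ and every $\det M_k$ are $r \times r$ minors of the augmented matrix $A \mid b$, as required by the statement. It then remains to verify that this candidate vector $x$ (with the chosen coordinates from Cramer's rule and the rest zero) actually solves the \emph{full} system $Ax = b$, not merely the $r$ selected equations.

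That last verification is the one genuinely substantive point, and I expect it to be the main (though still routine) obstacle. The argument is: the chosen $x$ satisfies equations $i_1, \dots, i_r$ by construction. For any other row $i$, the row vector $A_{i,\bullet}$ restricted to columns $j_1, \dots, j_r$ is a linear combination of the rows $N_{1,\bullet}, \dots, N_{r,\bullet}$ — this is because those $r$ rows already realize the full rank $r$ of the column-restricted matrix, so every row of that restricted matrix lies in their span. The same linear combination applied to the right-hand side reproduces $b_i$, because the augmented matrix $A \mid b$ restricted to columns $j_1, \dots, j_r$ and $b$ also has rank $r$ with the same $r$ rows spanning it. Hence equation $i$ is a consequence of equations $i_1, \dots, i_r$, and $x$ satisfies it too. (To make the "same linear combination" step clean one works over the fraction field of the domain, which is harmless since the final formula has entries in the domain once multiplied out, and consistency of the system is preserved under passing to the fraction field.) This establishes that $Ax = b$ with each $x_i$ either $0$ or a ratio of minors of $A \mid b$, completing the proof.
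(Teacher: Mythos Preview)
Your proposal is correct and follows essentially the same approach as the paper: select $r$ independent columns, set the remaining variables to zero, select $r$ independent rows, and apply classical Cramer's rule to the resulting square system. The paper's proof is slightly terser about why the subsystem solution satisfies the full system (it just invokes that the dropped rows of $[A\mid b]$ are linear combinations of the retained ones), whereas you spell out that verification more explicitly, but the arguments are the same.
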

\begin{proof}
    Let $r := \Rank A$, so $r \leq m, n$.
    The fact that $Ax = b$ has a solution is equivalent to the fact that the augmented matrix $\bs{A | b}$ also has rank $r$.
    After rearranging the columns, we can assume that the first $r$ columns of $A$ are linearly independent.
    Since $b$ lies in the column span of $A$, it also lies in the column span of the first $r$ columns of $A$.
    Equivalently, if we set the variables $x_{m-r+1}, \dots, x_{m}$ to $0$, the resulting system still has a solution.
    It suffices to show the result for the resulting system, therefore we can assume that $A$ has full column rank, so $r = m$.

    After further rearranging rows, we can assume that the first $r$ rows of $A$ are linearly independent, and the remaining rows are linear combinations of the first $r$ rows.
    The augmented matrix $\bs{A | b}$ also has rank $r$, therefore the last $n-r$ rows of $\bs{A | b}$ are linear combinations of the first $r$ rows.
    If we write $A'x = b'$ for the linear system consisting of the first $r$ rows of $Ax = b$, then any solution to $A'x = b'$ is also a solution to $Ax = b$.
    Applying Cramer's rule to $A'x = b'$ gives us the desired result.
\end{proof}

\subsection{Results from algebraic geometry}

In this subsection, we collect some useful basic results from algebraic geometry which we will need in the later sections.
We begin by stating a characterisation of the dimension of an algebraic set.

\begin{lemma}
    \label{lem:dimensionelimination}
    Suppose $I \subset S$ is an ideal, and $V$ is the zeroset of $I$.
    Then $V$ has dimension $r$ if and only if the following two conditions hold.
    \begin{itemize}
        \item There is some subset $U \subset \bs{n}$ of size $r$ such that the elimination ideal $I_{U}$ is zero.
        \item For every subset $U' \subset \bs{n}$ of size $r+1$, the elimination ideal $I_{U'}$ is nonzero.
    \end{itemize}
\end{lemma}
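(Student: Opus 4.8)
The plan is to prove both directions of \cref{lem:dimensionelimination} by relating the dimension of $V$ to the dimensions of its images under coordinate projections. The key classical facts I would use are: (i) for a projection $\pi_U : \bA^n \to \bA^{|U|}$ onto the coordinates indexed by a subset $U \subseteq [n]$, the Zariski closure $\overline{\pi_U(V)}$ is exactly the zero set of the elimination ideal $I_U := I \cap \overline{\bQ}[x_i : i \in U]$ (this is the standard elimination-theoretic interpretation of eliminating variables, e.g. via \grobner{} bases or resultants); (ii) $\dim \overline{\pi_U(V)} \le \dim V$, since the image of an irreducible variety under a morphism is irreducible of no larger dimension, and we can argue componentwise; and (iii) if $V$ has dimension $r$, then there exists a coordinate subset $U$ of size $r$ such that $\pi_U$ restricted to (a top-dimensional component of) $V$ is \emph{dominant} onto $\bA^r$, hence $\overline{\pi_U(V)} = \bA^r$ and $I_U = (0)$; this is a consequence of Noether normalization together with the fact that after a generic coordinate change the projection is finite, but for \emph{coordinate} subsets one uses that among the $n$ coordinate functions, some $r$ of them are algebraically independent modulo (a minimal prime of maximal dimension in) $I$.

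For the forward direction, suppose $\dim V = r$. Let $P$ be a minimal prime of $I$ with $\dim Z(P) = r$. The images $x_1, \dots, x_n$ in the domain $S/P$ generate a ring of Krull dimension $r$ over $\overline{\bQ}$, so its fraction field has transcendence degree $r$; hence some subset $U \subseteq [n]$ of size $r$ consists of coordinates that are algebraically independent in $S/P$. For this $U$, the map $\overline{\bQ}[x_i : i \in U] \to S/P$ is injective, which says precisely $P \cap \overline{\bQ}[x_i : i \in U] = (0)$, and a fortiori $I_U \subseteq P_U = (0)$, giving the first bullet. For the second bullet, take any $U'$ of size $r+1$. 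Every minimal prime $Q$ of $I$ has $\dim Z(Q) \le r < |U'|$, so the $r+1$ coordinate functions indexed by $U'$ cannot be algebraically independent in $S/Q$; thus there is a nonzero polynomial $g_Q \in \overline{\bQ}[x_i : i \in U']$ with $g_Q \in Q$. Multiplying over all minimal primes (there are finitely many) gives a nonzero element of $\sqrt{I} \cap \overline{\bQ}[x_i : i \in U']$, and a suitable power lands in $I_{U'}$, so $I_{U'} \ne (0)$.

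For the reverse direction, suppose the two bullets hold for some integer $r$; I claim $\dim V = r$. The existence of $U$ of size $r$ with $I_U = (0)$ means $\overline{\pi_U(V)} = \bA^r$, so $\pi_U$ maps $V$ dominantly onto $\bA^r$, forcing $\dim V \ge r$. Conversely, suppose for contradiction $\dim V \ge r+1$; pick a minimal prime $Q$ with $\dim Z(Q) \ge r+1$. Then among $x_1,\dots,x_n$ there are at least $r+1$ that are algebraically independent in $S/Q$ — indeed, since $\overline{\bQ}[x_1,\dots,x_n]/Q$ has transcendence degree $\ge r+1$ over $\overline{\bQ}$ and is generated by the $x_i$, a maximal algebraically independent subset of the generators has size equal to the transcendence degree, hence $\ge r+1$; take $U'$ to be any $r+1$ of these. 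For this $U'$, the map $\overline{\bQ}[x_i : i \in U'] \to S/Q$ is injective, so $Q \cap \overline{\bQ}[x_i : i \in U'] = (0)$, hence $I_{U'} \subseteq Q \cap \overline{\bQ}[x_i : i \in U'] = (0)$, contradicting the second bullet. Therefore $\dim V \le r$, and combined with $\dim V \ge r$ we get $\dim V = r$.

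I expect the main obstacle to be the careful justification that a maximal algebraically independent \emph{subset of the coordinate generators} $\{x_1,\dots,x_n\}$ in $S/Q$ has size exactly equal to the transcendence degree of $S/Q$ over $\overline{\bQ}$ — this is where one must invoke that a generating set of a finitely generated field extension contains a transcendence basis (a standard fact, since any maximal algebraically independent subset of a generating set is a transcendence basis). The rest is bookkeeping: the identification of $Z(I_U)$ with $\overline{\pi_U(V)}$, the passage between $\sqrt{I}$ and $I$ via taking powers, and the reduction to a single minimal prime of maximal (resp.\ sufficiently large) dimension. None of these is deep, but the statement quietly packages the interplay between geometric dimension, Krull dimension of the coordinate ring, and transcendence degree of coordinate functions, so the write-up should make each of these equalities explicit.
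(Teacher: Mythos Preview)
Your proposal is correct and follows the standard transcendence-degree argument. Note that the paper does not actually prove this lemma: it is stated as a preliminary fact, with only the one-line remark that the first condition is equivalent to $\dim V \ge r$ and the second to $\dim V \le r$. Your write-up supplies precisely the justification for those two equivalences via minimal primes and algebraic independence of coordinate functions, so it matches the paper's intended reasoning, just made explicit.
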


The first of the above conditions is equivalent to $\dim V \geq r$, and the second condition is equivalent to $\dim V \leq r$.
We now state some standard facts about tangent spaces, which can be found in \cite[Chapter~2]{shafarevich1994basic}.

Let $I$ be a radical ideal generated by $f_{1}, \dots, f_{m}$.
Let $\cJ$ be the Jacobian of $I$, which is defined to be the matrix with $\cJ_{ij} = \partial f_{i} / \partial x_{j}$.
At any point $x \in V$, the tangent space $T_{x}(V)$ is isomorphic to the kernel of $\cJ(x)$, where $\cJ(x)$ is the Jacobian entrywise evaluated at the point $x$ \cite[Chapter~2, Section~1, Theorem~2.1]{shafarevich1994basic}.
At every nonsingular point $x$, the dimension of the tangent space is exactly equal to dimension of the component of $V$ passing through $x$ \cite[Chapter~2, Section~1, Theorem~2.3]{shafarevich1994basic}.
At every singular point $x$, the dimension of the tangent space is greater than the dimension of the components of $V$ passing through $x$.
The singular locus of $V$ is a proper algebraic subset of $V$, and does not contain any irreducible component of $V$.

We deduce that there is a point $x_{0} \in V$ such that $\dim T_{x_{0}}(V) < r$ if and only if $V$ has an irreducible component of dimension less than $r$.
The following lemma is an equivalent formulation of the above statement.

\begin{lemma}\label{lem:tangent dimension}
    If $V$ is a algebraic set of dimension $r$, and if $\cJ$ is the Jacobian of a radical ideal $I$ such that $V = Z(I)$, then $\Rank \cJ(x_{0}) > n - r$ for some $x_{0} \in V$ if and only if $V$ has an irreducible component of dimension less than $r$.
    
\end{lemma}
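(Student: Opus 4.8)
The plan is to unwind the statement using the standard facts about tangent spaces that were just recalled, and turn the rank condition into a statement about tangent space dimensions. Since $I$ is radical and $V = Z(I)$, at every point $x \in V$ the tangent space $T_x(V)$ is isomorphic to $\ker \cJ(x)$, so $\dim T_x(V) = n - \Rank \cJ(x)$ by rank–nullity. Hence the condition ``$\Rank \cJ(x_0) > n - r$ for some $x_0 \in V$'' is exactly equivalent to ``$\dim T_{x_0}(V) < r$ for some $x_0 \in V$.'' So it suffices to prove: $V$ has an irreducible component of dimension less than $r$ if and only if there is a point $x_0 \in V$ with $\dim T_{x_0}(V) < r$.

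For the forward direction, suppose $W$ is an irreducible component of $V$ with $\dim W = r' < r$. Pick a point $x_0 \in W$ that is a nonsingular point of $V$ and lies on no other component of $V$; such a point exists because the singular locus of $V$ is a proper closed subset of $V$ containing no component, and the pairwise intersections of components are proper closed subsets of each component. At such a point the dimension of the tangent space equals the dimension of the unique component through $x_0$, namely $r' < r$, so $\dim T_{x_0}(V) < r$ as required. For the converse, suppose every irreducible component of $V$ has dimension at least $r$ (in fact, since $\dim V = r$, every component has dimension exactly... well, at least we know all are $\le r$, hence $= r$ is not forced, but all have dimension $\geq$ the claimed bound is what we use). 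Take any $x_0 \in V$. If $x_0$ is a nonsingular point, it lies on a unique component $W$, and $\dim T_{x_0}(V) = \dim W \geq r$; if $x_0$ is a singular point, then $\dim T_{x_0}(V)$ is strictly greater than the dimension of any component through $x_0$, and hence at least $r$ as well (using that every such component has dimension $\geq r$, so the tangent space has dimension $> r - 1$, i.e.\ $\geq r$, when dimensions are integers; more carefully, a singular point lies on a component of dimension $\ge$ some value and the tangent space strictly exceeds it, giving $\ge r$). In all cases $\dim T_{x_0}(V) \geq r$, i.e.\ $\Rank \cJ(x_0) \leq n - r$, which is the contrapositive of what we want.

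The one point that needs a little care — and which I expect to be the main (minor) obstacle — is the converse direction at singular points: we must make sure the strict inequality ``$\dim T_{x_0}(V) > \dim(\text{component through } x_0)$'' together with ``every component has dimension $\geq r$'' really yields $\dim T_{x_0}(V) \geq r$, rather than just $> r - 1$. Since all quantities are integers and the bound on components is $\geq r$, strictness gives $\dim T_{x_0}(V) \geq r + 1 \geq r$. This works cleanly; the only subtlety is ensuring we quote the fact that a singular point of $V$ lies on \emph{some} component and the tangent space there strictly exceeds the dimension of \emph{every} component passing through it, which is exactly the third bulleted fact recalled above. With that, the equivalence $\Rank \cJ(x_0) > n - r \iff \dim T_{x_0}(V) < r \iff V$ has a component of dimension $< r$ is complete.
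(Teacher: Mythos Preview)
Your proposal is correct and follows exactly the approach the paper indicates: the paper simply lists the standard facts about $T_x(V)\cong\ker\cJ(x)$, the equality $\dim T_x(V)=\dim(\text{component})$ at nonsingular points, the strict inequality at singular points, and the fact that the singular locus contains no component, and then states the lemma as an equivalent reformulation of the deduced statement. Your write-up just fills in those details; the only cosmetic point is that in the converse you may as well note that $\dim V=r$ forces every component to have dimension exactly $r$, which cleans up the singular-point case (tangent dimension $>r$, hence $\geq r$) without the hedging in your parenthetical.
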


The following statement is \cite[Theorem 1]{heintz1983definability}, and is referred to as Bézout’s inequality.

\begin{theorem}[\bezout's inequality]\label{theorem: bezout inequality}
    Let $\bK$ be an algebraically closed field and $X, Y \subseteq \bA_\bK^n$ be locally closed subsets.
    Then, we have 
    $$\deg(X \cap Y) \leq \deg X \cdot \deg Y.$$
\end{theorem}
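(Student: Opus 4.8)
The plan is to reduce the inequality to the single case of cutting a variety by a hyperplane, and then to settle that case by a generic linear section argument. Here ``degree'' counts \emph{every} irreducible component, of every dimension, so $\deg$ is additive over components, and $\deg(\bigcup_k Z_k) \le \sum_k \deg Z_k$ for any finite union of closed sets — each component of the union is, by maximality among irreducible subsets, itself a component of one of the $Z_k$. First I would dispose of the locally closed hypothesis: since $X$ and $Y$ are open in their closures, $X \cap Y$ is open in $\overline{X} \cap \overline{Y}$, so $\overline{X \cap Y}$ is a union of irreducible components of $\overline{X} \cap \overline{Y}$, whence $\deg(X \cap Y) \le \deg(\overline{X} \cap \overline{Y})$. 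Replacing $X, Y$ by $\overline{X}, \overline{Y}$ and decomposing each into components, it then suffices to prove $\deg(X \cap Y) \le \deg X \cdot \deg Y$ for $X, Y$ closed and \emph{irreducible}.

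Now assume $X, Y \subseteq \bA^{n}$ are irreducible. I would apply the classical diagonal trick: the product $X \times Y \subseteq \bA^{2n}$ is irreducible with $\deg(X \times Y) = \deg X \cdot \deg Y$ — multiplicativity of degree for products, proved by intersecting $X \times Y$ with a generic affine-linear subspace of complementary dimension, which decouples into a generic section of $X$ and one of $Y$ — while the diagonal $\Delta := \{(x,x) : x \in \bA^{n}\} \subseteq \bA^{2n}$ is affine-linear of codimension $n$, and $X \cap Y$ is isomorphic to $(X \times Y) \cap \Delta$ via the linear map $x \mapsto (x,x)$ (which preserves degree). Writing $\Delta = H_1 \cap \cdots \cap H_n$ as an intersection of hyperplanes, it therefore suffices to establish
\begin{equation*}
    \deg(W \cap H) \le \deg W \qquad \text{for every closed } W \subseteq \bA^{N} \text{ and every affine hyperplane } H,
\end{equation*}
and then to apply this $n$ times to $W = X \times Y$.

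To prove the displayed bound I would decompose $W = \bigcup_i W_i$ into irreducible components. If $W_i \subseteq H$ then $W_i \cap H = W_i$. If $W_i \not\subseteq H$, then by Krull's principal ideal theorem $W_i \cap H$ is pure of dimension $d - 1$, where $d := \dim W_i$, and I claim $\deg(W_i \cap H) \le \deg W_i$; granting this, $\deg(W \cap H) \le \sum_i \deg(W_i \cap H) \le \sum_i \deg W_i = \deg W$. For the claim, pick an affine-linear subspace $L \subseteq H$ of codimension $d-1$ inside $H$ — hence of codimension $d$ in $\bA^{N}$ — that is generic among such subspaces. Then $(W_i \cap H) \cap L$ is finite of cardinality exactly $\deg(W_i \cap H)$, whereas $W_i \cap L = (W_i \cap H) \cap L$ is a \emph{finite} linear section of the irreducible $W_i$ by a subspace of codimension $\dim W_i$, so it has at most $\deg W_i$ points; thus $\deg(W_i \cap H) = |W_i \cap L| \le \deg W_i$.

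The step I expect to be the main obstacle is precisely this last inequality: a finite linear section of an irreducible $Z$ by a subspace of codimension $\dim Z$ has at most $\deg Z$ points, even when the subspace is constrained (here $L$ is forced to lie in the fixed hyperplane $H$) and hence not generic. This is a semicontinuity phenomenon — as the cutting subspace specializes, points of a finite section may collide but never proliferate, so the generic count $\deg Z$ is also the maximum — and it can be made rigorous by passing to the projective closure $\overline{Z}$ and arguing with Hilbert polynomials: $\deg \overline{Z}$ is the normalized leading coefficient of the Hilbert polynomial of $\overline{Z}$, a hyperplane section never increases it, and iterating down to dimension $0$ yields the bound. With this in hand, the pieces assemble into $\deg(X \cap Y) \le \deg X \cdot \deg Y$; this is the argument of \cite{heintz1983definability}.
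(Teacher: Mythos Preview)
The paper does not prove this statement; it simply records it as \cite[Theorem 1]{heintz1983definability} and moves on. Your outline is correct and is, as you yourself note at the end, precisely Heintz's argument: reduce from locally closed to closed irreducible pieces, use the diagonal trick $X\cap Y \cong (X\times Y)\cap\Delta$ together with multiplicativity of degree under products, write $\Delta$ as an intersection of $n$ hyperplanes, and then establish $\deg(W\cap H)\le\deg W$ via the generic linear section / upper-semicontinuity bound on the number of points in a zero-dimensional section. So there is nothing to compare---you have supplied exactly the proof the paper chose to cite rather than reproduce.
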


A careful inductive application of the Bézout’s inequality implies the following inequality, which we also refer to as Bézout’s inequality.
The statement given below is a special case of \cite[Proposition~2.3]{heintz1980testing}.

\begin{theorem}[Degree bounds for closed sets]\label{theorem: degree bounds}
    Let $\bK$ be an algebraically closed field and $f_1, \dots, f_m \in \bK[x_1, \dots, x_n]$ be polynomials of degree at most $d$. 
    Then, we have 
    $$ \deg V(f_1, \ldots, f_m) \leq d^{\min\br{m, n}}.$$
\end{theorem}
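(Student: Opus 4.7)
The plan is to induct on $m$, applying the base B\'ezout inequality (\cref{theorem: bezout inequality}) at each step. The base case $m = 0$ is trivial: $V(\emptyset) = \bA^n$ has degree $1 = d^0$. For the inductive step when $m \leq n$, I will write $V(f_1, \dots, f_m) = V(f_1, \dots, f_{m-1}) \cap V(f_m)$ and combine the inductive hypothesis with the base B\'ezout inequality to obtain $\deg V(f_1, \dots, f_m) \leq d^{m-1} \cdot d = d^m$, using that a hypersurface $V(f_m)$ has degree at most $\deg f_m \leq d$. Since $m = \min(m,n)$ in this regime, the desired bound follows.

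The challenging case is $m > n$, where naive iterated B\'ezout would yield $d^m$, overshooting the target $d^n$. To handle this, I plan to replace the equations $f_1, \dots, f_m$ with $n$ generic linear combinations $g_1, \dots, g_n$ of them (each still of degree at most $d$), chosen so that every irreducible component of $V(f_1, \dots, f_m)$ remains an irreducible component of $V(g_1, \dots, g_n)$. Since $V(g_1, \dots, g_n) \supseteq V(f_1, \dots, f_m)$ and, by the genericity, the latter is a union of irreducible components of the former, I obtain $\deg V(f_1, \dots, f_m) \leq \deg V(g_1, \dots, g_n)$; then the already-proven case with $n$ equations bounds the right-hand side by $d^n$.

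The main obstacle is justifying the genericity claim, i.e., that $n$ generic linear combinations of $f_1, \dots, f_m$ preserve the irreducible components of $V(f_1, \dots, f_m)$. This is where Krull's Hauptidealsatz enters, guaranteeing that every component of $V(f_1, \dots, f_m)$ has codimension at most $\min(m,n) = n$, so that $n$ equations drawn from the linear span of the $f_i$ suffice to cut out each component as an isolated component. Concretely, for each component $C$ of $V(f_1, \dots, f_m)$ and each strictly larger irreducible subset $C' \supsetneq C$, some $f_i$ fails to vanish on $C'$ (by maximality of $C$ inside $V$), so the linear combinations vanishing on $C'$ form a proper subspace of the $m$-dimensional coefficient space. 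A dimension count, together with the fact that the relevant strictly larger irreducibles $C'$ form a bounded-degree family, shows that the ``bad'' loci in the parameter space of $n$-tuples of linear combinations have positive codimension, so a sufficiently generic choice simultaneously avoids them all and the argument closes.
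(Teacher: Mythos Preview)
Your overall strategy matches the standard argument the paper cites (Heintz's Proposition~2.3): iterate the B\'ezout inequality for $m \le n$, and for $m > n$ replace $f_1,\ldots,f_m$ by $n$ suitable linear combinations $g_1,\ldots,g_n$ so that every component of $V:=V(f_1,\ldots,f_m)$ is still a component of $V(g_1,\ldots,g_n)$, whence $\deg V \le \deg V(g_1,\ldots,g_n) \le d^n$. The paper itself gives no proof beyond this citation, and your $m \le n$ case is fine.

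The gap is in your genericity argument for $m > n$. You observe correctly that for each fixed irreducible $C' \supsetneq C$ (with $C$ a component of $V$), the linear combinations of the $f_j$ vanishing on $C'$ form a proper subspace of $\bK^m$. But there are \emph{infinitely many} such $C'$ (for instance, if $\operatorname{codim} C \ge 2$, every hyperplane through $C$ is a candidate), so a one-shot ``dimension count over a bounded-degree family'' does not close the argument: the union of infinitely many proper subvarieties of parameter space can certainly be all of it, and you have not identified a finite or algebraic family of $C'$ that suffices. The fix, which is precisely the ``careful inductive application'' the paper alludes to, is to choose $g_1,\ldots,g_n$ \emph{one at a time}. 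Having chosen $g_1,\ldots,g_{i-1}$, set $W_{i-1}=V(g_1,\ldots,g_{i-1})$; its irreducible components not contained in $V$ are now \emph{finitely many}, and on each of them some $f_j$ is nonzero, so a generic $g_i=\sum_j a_j f_j$ is nonzero on all of them simultaneously. Maintaining the invariant that every component of $W_i$ is either contained in $V$ or has dimension exactly $n-i$, one arrives after $n$ steps at $W_n \supseteq V$ in which every component of $V$ is already an irreducible component, and the bound follows.
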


The following lemma gives effective bounds on hyperplane sections that reduce the dimension of projective varieties, a proof can be found in \cite[Lemma~19]{guo2019algebraic}
\begin{lemma}\label{lem:intersectdimensiondrop}
    Suppose $V \subset \bP^{n}$ is a projective zeroset of dimension $r$ and degree $D$.
    If $\ell$ is a linear form where each coefficient is chosen uniformly and independently from a set $B \subset \bN$, then $\dim V \cap Z(\ell) = r-1$ with probability at least $1 - D/\abs{B}$.
    If $\ell_{1}, \dots, \ell_{r+1}$ are linear forms chosen the same way then $V \cap Z(\ell_{1}, \dots, \ell_{r+1}) = \emptyset$ with probability at least $1 - (r+1) D / \abs{B}$.

    Suppose $W \subset \bA^{n}$ is an affine zeroset of dimension $r$ and degree $D$.
    If $\ell$ is a linear polynomial where each coefficient is chosen uniformly and independently from a set $B \subset \bN$, then $\dim W \cap Z(\ell) = r-1$ with probability at least $1 - 2D/\abs{B}$.
    If $\ell_{1}, \dots, \ell_{r+1}$ are linear polynomials chosen the same way then $W \cap Z(\ell_{1}, \dots, \ell_{r+1}) = \emptyset$ with probability at least $1 - 2(r+1) D / \abs{B}$.
\end{lemma}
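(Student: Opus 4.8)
The plan is to prove the projective statements first, then transfer to the affine case. The first projective claim concerns a single generic hyperplane section. Write $V = V_1 \cup \cdots \cup V_k$ as a union of irreducible components, so each $V_i$ has dimension at most $r$ and $\sum_i \deg V_i \le D$ by \cref{theorem: bezout inequality} (in the guise of \cref{theorem: degree bounds}, or more precisely by the additivity of degree over components). For a fixed component $V_i$ of dimension $r \ge 1$, I would argue that a hyperplane $Z(\ell)$ fails to cut its dimension down to $r-1$ only if $\ell$ vanishes identically on $V_i$, i.e. $\ell$ lies in the degree-$1$ part of the (radical of the) ideal of $V_i$. Indeed, for any $V_i$ not contained in $Z(\ell)$, the intersection $V_i \cap Z(\ell)$ is a proper closed subset and hence has dimension exactly $r_i - 1$ (a hyperplane section drops the dimension of every irreducible projective variety by exactly one unless it contains it — this is standard projective dimension theory). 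So the only way $\dim(V \cap Z(\ell)) = r$ is that $\ell$ vanishes on some top-dimensional $V_i$. The set of such bad $\ell$ is contained in a linear subspace of the coefficient space; I would bound the probability that a random $\ell$ (coefficients drawn from $B$) lies in it. The cleanest route is: for each top-dimensional $V_i$, pick a point $q_i \in V_i$; the linear form $\ell$ vanishes on $V_i$ only if $\ell(q_i) = 0$, and by Schwartz–Zippel $\Pr[\ell(q_i) = 0] \le 1/\abs{B}$ per component. Summing over the (at most $D$) top-dimensional components gives failure probability at most $D/\abs{B}$. (One must be slightly careful about the homogeneous normalization and the zero form, but these contribute lower-order terms absorbed into the bound; alternatively invoke \cite[Lemma~19]{guo2019algebraic} directly as stated.)

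For the second projective claim, I would iterate: after intersecting with $\ell_1$, the variety $V \cap Z(\ell_1)$ has dimension $r-1$ and, by \bezout (\cref{theorem: bezout inequality}), degree at most $D$ as well; repeating $r+1$ times and union-bounding over the $r+1$ steps yields that $V \cap Z(\ell_1, \dots, \ell_{r+1}) = \emptyset$ with probability at least $1 - (r+1)D/\abs{B}$. Here I would note that the independence of the coefficient choices across the different $\ell_j$ is exactly what makes the union bound legitimate, and that at each stage the degree does not grow past $D$ (each new hyperplane can only cut components, never increase total degree), so the per-step failure probability stays bounded by $D/\abs{B}$.

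The affine statements follow by homogenization. Given an affine zeroset $W \subseteq \bA^n$ of dimension $r$ and degree $D$, let $\overline{W} \subseteq \bP^n$ be its projective closure; then $\dim \overline{W} = r$ and $\deg \overline{W} \le D$ (the degree of the closure does not exceed that of $W$; indeed it equals $\deg W$). An affine linear polynomial $\ell = \ell_0 + \ell_1 x_1 + \cdots + \ell_n x_n$ homogenizes to a hyperplane $\widetilde{\ell}$ in $\bP^n$, and the bad event for $W$ is contained in the union of the bad event for $\overline{W}$ and the event that $\widetilde{\ell}$ becomes (or contains) the hyperplane at infinity in a degenerate way; each contributes at most $D/\abs{B}$, giving the factor-$2$ loss stated in the lemma. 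Iterating as before gives the $2(r+1)D/\abs{B}$ bound for the affine emptiness statement.

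The main obstacle — and the only genuinely non-formulaic point — is making precise the claim that a generic hyperplane drops the dimension of each irreducible component by exactly one and nothing worse, together with the bookkeeping that the total degree never grows under successive sections so that the union bound terms all stay at $D/\abs{B}$. Everything else is Schwartz–Zippel plus \bezout. Since \cite[Lemma~19]{guo2019algebraic} already records exactly this statement with exactly these constants, in the writeup I would simply cite it; the above is the argument one would reconstruct if needed.
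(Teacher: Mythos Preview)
Your proposal is correct and is the standard argument; the paper itself does not give a proof of this lemma but simply cites \cite[Lemma~19]{guo2019algebraic}, exactly as you suggest doing in your last paragraph. One small sharpening worth recording for the affine single-hyperplane case: the ``degenerate'' event you gesture at is precisely that $\widetilde{\ell}$ contains some $(r-1)$-dimensional irreducible component of the boundary $\overline{W}\cap Z(x_0)$, since if that fails then no $(r-1)$-dimensional component of $\overline{W}\cap Z(\widetilde{\ell})$ can lie entirely at infinity; as the boundary has degree at most $D$ by \bezout, the projective first claim bounds this event by $D/\abs{B}$, which is where the factor of two comes from.
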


The following is a sufficient condition for a linear map to be a Noether normalising map.
This statement, and its proof can be found in \cite[Chapter~1, Section~5, Theorem~1.15]{shafarevich1994basic}.

\begin{lemma} \label{lem:noethernormalising}
    Suppose $V \subset \bP^{n}$ is a projective variety disjoint from a $k$-dimensional linear subspace $E \subset \bP^{n}$.
    Then the projection $\pi: X \to \bP^{n-k-1}$ with center $E$ defines a finite map $X \to \pi(X)$.
\end{lemma}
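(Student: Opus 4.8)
The plan is to reduce the statement to the case of projection from a single point, and then establish that case by an explicit integral‑dependence argument; everything else is bookkeeping about how linear projections compose. Throughout, I would write $\pi_{E}$ for the projection with center $E$, and recall that the image of a projective variety under a morphism is closed, so that each projected set below is automatically a closed subvariety.

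First I would handle the base case $k=0$. Let $p\in\bP^{n}\setminus V$ and let $\pi_{p}\colon\bP^{n}\setminus\{p\}\to\bP^{n-1}$ be the linear projection from $p$. After a linear change of coordinates I may assume $p=[0:\cdots:0:1]$, so that $\pi_{p}$ simply deletes the last coordinate. Since $p\notin V=Z(I(V))$, there is a homogeneous $f\in I(V)$ with $f(p)\neq 0$; writing $f=\sum_{i=0}^{d}a_{i}(x_{0},\dots,x_{n-1})\,x_{n}^{i}$ with $a_{i}$ homogeneous of degree $d-i$, the condition $f(p)\neq 0$ says that the constant $a_{d}$ is nonzero, so dividing by $a_{d}$ produces a relation in $I(V)$ monic in $x_{n}$. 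Dehomogenizing in each of the standard affine charts $x_{j}\neq 0$ for $0\le j\le n-1$, which cover $\pi_{p}(V)$, this relation exhibits the $x_{n}$‑coordinate as integral over the image of the coordinate ring of that chart of $\bP^{n-1}$; hence the coordinate ring of $V$ over that chart is a finitely generated module over the coordinate ring of $\pi_{p}(V)$ over the chart, and therefore $\pi_{p}|_{V}\colon V\to\pi_{p}(V)$ is finite.

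Next I would run the induction on $k=\dim E$. Given $E$ of dimension $k\ge 1$ with $V\cap E=\emptyset$, I would pick a point $p\in E$; then $p\notin V$, so the base case gives that $\pi_{p}|_{V}\colon V\to V':=\pi_{p}(V)$ is finite. Setting $E':=\pi_{p}(E)\subset\bP^{n-1}$, a linear subspace of dimension $k-1$, I would check $V'\cap E'=\emptyset$: a common point $q=\pi_{p}(v)=\pi_{p}(e)$ with $v\in V$, $e\in E$ would force $v$ to lie on the line through $p$ and $e$, which is contained in $E$ since $p,e\in E$, contradicting $V\cap E=\emptyset$. By the induction hypothesis $\pi_{E'}|_{V'}\colon V'\to\pi_{E'}(V')$ is finite, and since $\pi_{E}=\pi_{E'}\circ\pi_{p}$ and a composition of finite morphisms is finite, $\pi_{E}|_{V}\colon V\to\pi_{E}(V)\subseteq\bP^{n-k-1}$ is finite.

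I expect the one genuinely substantive step to be the base case — extracting from $I(V)$ a relation monic in the eliminated variable and reading off integrality over each affine chart; the inductive step is pure bookkeeping about composing linear projections and tracking images of linear subspaces. As an alternative to the chart computation in the base case, I could instead invoke that a morphism of projective varieties with finite fibres is finite (a form of Zariski's main theorem): the fibre of $\pi_{E}|_{V}$ over $q$ is $V\cap\Lambda_{q}$, where $\Lambda_{q}\cong\bP^{k+1}$ is the linear span of $E$ together with a preimage of $q$, and $V\cap\Lambda_{q}$ is a closed subset of $\Lambda_{q}$ disjoint from the hyperplane $E\subset\Lambda_{q}$, hence a closed subset of $\bA^{k+1}$, hence finite.
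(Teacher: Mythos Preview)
Your argument is correct and is the standard textbook proof. The paper does not supply its own proof of this lemma but simply cites \cite[Chapter~1, Section~5, Theorem~1.15]{shafarevich1994basic}, where the same induction on $\dim E$, with the base case handled by extracting from $I(V)$ a relation monic in the eliminated coordinate, is carried out.
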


Combining the two lemmas above gives us the following effective Noether Normalisation theorem.

\begin{lemma}[Effective Noether Normalisation] \label{lem:effectiveNN}
    Suppose $V \subset \bP^{n}$ is a projective zeroset of dimension $r$ and degree $D$.
    If $\ell_{1}, \dots, \ell_{r+1}$ are linear forms where each coefficient is chosen uniformly and independently from a set $B \subset \bN$, then the map $\pi: V \to \bP^{r}$ with coordinate functions $\ell_{i}$ is a well defined finite map with probability at least $1 - (r+1) D / \abs{B}$.

    Suppose $W \subset \bA^{n}$ is an affine zeroset of dimension $r$ and degree $D$.
    If $\ell_{1}, \dots, \ell_{r}$ are linear polynomials where each coefficient is chosen uniformly and independently from a set $B \subset \bN$, then the map $\pi: W \to \bA^{r}$ with coordinate functions $\ell_{i}$ is a well defined finite map with probability at least $1 - 2 (r+1) D / \abs{B}$.
\end{lemma}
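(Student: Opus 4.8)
The plan is to chain together \cref{lem:intersectdimensiondrop} and \cref{lem:noethernormalising}, keeping careful track of the probability of failure at each step. Consider first the projective case. Let $V \subset \bP^n$ have dimension $r$ and degree $D$, and pick linear forms $\ell_1, \dots, \ell_{r+1}$ with coefficients drawn uniformly and independently from $B$. The key observation is that the map $\pi \colon V \to \bP^r$ with coordinate functions $\ell_1, \dots, \ell_{r+1}$ is exactly the projection from the center $E := Z(\ell_1, \dots, \ell_{r+1})$, which is a linear subspace of $\bP^n$ of dimension $n - r - 1$ (when the $\ell_i$ are linearly independent, which is a generic condition absorbed into the estimate below). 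By \cref{lem:noethernormalising}, this projection restricts to a finite map $V \to \pi(V)$ as soon as $V \cap E = \emptyset$. So I would simply invoke the second half of \cref{lem:intersectdimensiondrop}: the forms $\ell_1, \dots, \ell_{r+1}$ satisfy $V \cap Z(\ell_1, \dots, \ell_{r+1}) = \emptyset$ with probability at least $1 - (r+1)D/\abs{B}$. On that event the map $\pi$ is well defined (its indeterminacy locus is $E$, which is disjoint from $V$) and finite onto its image, which is the desired conclusion with the stated probability.

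For the affine case, I would reduce to the projective one by passing to the projective closure. Let $W \subset \bA^n$ have dimension $r$ and degree $D$, and let $\overline{W} \subset \bP^n$ be its closure, which still has dimension $r$ and degree $D$ (the degree of the closure equals the degree of the affine set). Choosing affine-linear polynomials $\ell_1, \dots, \ell_r$ with coefficients from $B$ and homogenizing, together with the hyperplane at infinity, one wants the projection $\pi \colon \overline{W} \to \bP^r$ with these coordinate functions to be finite and to restrict well to $W \to \bA^r$. The relevant bad events are: (i) the center of the projection meets $\overline{W}$, and (ii) the fibers over the affine chart are not contained in the affine part. Both are controlled by the affine half of \cref{lem:intersectdimensiondrop}, which gives the extra factor of $2$ and accounts for the points at infinity; applying it to the at-most-$(r+1)$ relevant linear conditions yields failure probability at most $2(r+1)D/\abs{B}$.

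The main obstacle, and the only place that needs genuine care rather than bookkeeping, is the affine-to-projective bridge: one must confirm that finiteness of $\overline{W} \to \pi(\overline{W})$ over all of $\bP^r$ actually entails finiteness of $W \to \pi(W)$ over $\bA^r$, i.e. that the generic affine-linear coordinate choice does not push part of $W$ "to infinity" in the target. This is precisely why the affine statement carries the worse constant $2(r+1)$ instead of $(r+1)$: the extra slack pays for avoiding the hyperplane at infinity in $\bP^n$ when cutting down $\overline{W}$, which is exactly the scenario the affine half of \cref{lem:intersectdimensiondrop} is designed to handle. Everything else — linear independence of the $\ell_i$, the degree of the closure, the identification of $\pi$ with a linear projection — is routine and can be folded into the quoted probability bound.
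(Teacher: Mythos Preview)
Your proposal is correct and follows the same approach as the paper: both combine \cref{lem:intersectdimensiondrop} with \cref{lem:noethernormalising} for the projective case, and both reduce the affine case to the projective one by passing to the closure $\overline{W}$ and taking $\ell_{r+1}=x_0$ as the extra coordinate. One small sharpening: with the choice $\ell_{r+1}=x_0$ your bad event (ii) is automatically vacuous, since the preimage of $\bA^r \subset \bP^r$ under $[\ell_1^h:\cdots:\ell_r^h:x_0]$ is exactly $\overline{W}\cap\{x_0\neq 0\}=W$; only event (i) remains, and it is handled by applying the projective half of \cref{lem:intersectdimensiondrop} to $\overline{W}\cap\{x_0=0\}$ rather than the affine half.
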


\begin{proof}
    For the projective case, by \cref{lem:intersectdimensiondrop}, the zeroset $Z(\ell_{1}, \dots, \ell_{r+1})$ is a linear space disjoint from $V$ with probability at least $1 - (r+1) D / \abs{B}$.
    If this holds, then $\pi$ is a finite and well defined map by \cref{lem:noethernormalising}.
    The affine case follows by applying the projective case to the closure of $W_{p}$, and picking $\ell_{r+1} = x_{0}$.
\end{proof}

Finally we state a necessary and sufficient condition for equidimensional Cohen-Macaulay rings to  be reduced.
The following statement is a special case of \cite[Item~a, Theorem~18.15]{eisenbud2013commutative}.
For the readers convenience, we isolate the ingredients in the proof here.
\begin{theorem}
    \label{theorem: serre jacobian}
    Let $I = \ideal{f_{1}, \dots, f_{m}} \subset \bC\bs{x_{1}, \dots, x_{n}}$ be a codimension $c$ ideal such that $\bC\bs{x_{1}, \dots, x_{n}} / I$ is equidimensional Cohen-Macaulay.
    Let $\cJ$ be the Jacobian of $I$, that is, the $m \times n$ matrix with entries $\cJ_{ij} = \partial f_{i} / \partial x_{j}$.
    Let $J$ be the ideal generated by the $c \times c$ minors of $\cJ$ in the ring $\bC\bs{x_{1}, \dots, x_{n}} / I$.
    Then $I$ is radical if and only if $J$ has codimension at least $1$.
\end{theorem}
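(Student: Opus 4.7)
The plan is to combine Serre's criterion for reducedness (cited from Eisenbud, Theorem~11.5) with the Jacobian criterion (cited from Eisenbud, Theorem~16.19), which is essentially the road map of \cite[Theorem~18.15]{eisenbud2013commutative} that the statement refers to.

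First I would recall Serre's criterion: a Noetherian ring is reduced if and only if it satisfies the conditions $(R_0)$ (regular in codimension zero, i.e., the localization at every minimal prime is a regular local ring) and $(S_1)$ (every associated prime is minimal, i.e., there are no embedded primes). The hypothesis that $R := \bC[x_1, \dots, x_n]/I$ is equidimensional Cohen-Macaulay gives us $(S_k)$ for every $k$, in particular $(S_1)$, for free. Hence reducedness of $R$ — equivalently, $I$ being radical — is equivalent to $(R_0)$ alone. This is the step where the Cohen-Macaulay hypothesis is used in an essential way: without $(S_1)$, the ideal could fail to be radical because of embedded primes, and the Jacobian would not detect this.

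Next I would translate $(R_0)$ into a Jacobian statement. By the Jacobian criterion, for any point $p \in Z(I)$, the local ring $(R)_{\mathfrak{m}_p}$ is regular if and only if the rank of $\cJ(p)$ equals the codimension $c$ of $I$ at $p$; equivalently, some $c \times c$ minor of $\cJ$ is non-vanishing at $p$. In ideal-theoretic language, the singular locus of $\operatorname{Spec}(R)$ is cut out precisely by the ideal $J \subseteq R$ generated by the $c \times c$ minors of $\cJ$ (modulo $I$). Because $R$ is equidimensional of codimension $c$, every minimal prime of $R$ has the same height and the codimension is well defined. Now $(R_0)$ says that the singular locus does not contain any minimal prime of $R$, i.e., $J$ is not contained in any minimal prime of $R$, which is exactly the statement that $J$ has positive codimension in $R$.

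Putting the two equivalences together:
\[
I \text{ radical} \;\Longleftrightarrow\; R \text{ satisfies } (R_0) \;\Longleftrightarrow\; \operatorname{codim}_R(J) \geq 1,
\]
which is the desired conclusion. The main content is ensuring that the Jacobian criterion applies with exactly the codimension $c$ of $I$ at every minimal prime (this is where equidimensionality is crucial, so that no minimal prime has a different codimension for which one would need a different-sized minor), and that the Cohen-Macaulay hypothesis rules out the pathology of embedded primes that $(R_0)$ alone cannot detect. These are the only places where one could slip, and they are handled directly by invoking the equidimensional Cohen-Macaulay hypothesis. No calculation beyond these two cited criteria is needed.
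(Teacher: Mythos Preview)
Your proposal is correct and follows essentially the same route as the paper: invoke Serre's criterion so that reducedness becomes $(R_0)\wedge(S_1)$, discharge $(S_1)$ using the Cohen--Macaulay hypothesis, and then use the Jacobian criterion to identify $(R_0)$ with $J$ not being contained in any minimal prime, i.e., $\operatorname{codim}_R J\geq 1$. Your added remark that equidimensionality is what guarantees $c$ is the correct minor size at every minimal prime is a helpful clarification that the paper leaves implicit.
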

\begin{proof}
    For any Noetherian ring, being reduced is equivalent to the properties $R_{0}$ and $S_{1}$ \cite[\href{https://stacks.math.columbia.edu/tag/031R}{Tag 031R}]{stacks-project}.
    Cohen-Macaulay rings satisfy $S_{k}$ for every $k$ \cite[\href{https://stacks.math.columbia.edu/tag/0342}{Tag 0342}]{stacks-project}, therefore it suffices to show that $R_{0}$ is equivalent to $J$ having codimension at least $1$.
    By the Jacobian criterion \cite[Theorem~16.19]{eisenbud2013commutative}, for any prime $\kp$ of $\bC\bs{x_{1}, \dots, x_{n}} / I$, regularity of the localisation at $\kp$ is equivalent to $J \not \subset \kp$.
    Finally, $J$ has codimension at least $1$ if and only if it is not contained in any minimal prime, which by the above happens if and only the localisation at every minimal prime of $\bC\bs{x_{1}, \dots, x_{n}} / I$ is regular, which is precisely the condition $R_{0}$.
\end{proof}

\subsection{Results from number theory} \label{subsec:number theory prelims}

We state some algebraic number theory facts that we will need.
We do not provide any proofs since these facts are standard, \cite[Chapter~2, 3]{milneANT} is an excellent exposition of all of these results.

Let $q \in \bZ\bs{z}$ be a monic irreducible polynomial with $\deg q = e$, and let $\alpha$ be a root of $q$.
Let $\bF := \bQ\br{\alpha}$ be the algebraic extension of $\bQ$ generated by $\alpha$, we have $\bF \cong \bQ\bs{z} / \ideal{q}$.
Let $O_{\bF}$ be the ring of integers of $\bF$, that is, the set of elements $\beta \in \bF$ such that $\beta$ satisfies a monic equation with coefficients in $\bZ$.
We have $\bZ \subset O_{\bF}$, and $O_{\bF}$ is integrally closed in $\bF$.
The discriminant of $q$, denoted $\disc{z}{q}$ is defined to be $\resultant{q}{\partial q}{z}$.

Usually the ring of integers $O_{\bF}$ is different from $\bZ\bs{\alpha}$.
We have $\bZ\bs{\alpha} \subset O_{\bF}$, but this inequality might be strict.
However, it holds that $O_{\bF} \subset (1 / \disc{z}{q}) \bZ\bs{\alpha}$.
Therefore we can represent any algebraic integer as an element of $\bQ\bs{z}$, where the coefficients have common denominator $\disc{z}{q}$.

The ring $O_{\bF}$ is a Dedekind domain, and every nonzero prime ideal of $O_{\bF}$ is maximal.
For any prime $\kp \subset O_{\bF}$, there is a unique prime $p \in \bN$ such that $\kp \cap \bZ = \ideal{p}$.
If this happens, then we say $\kp$ lies above $p$.
While $O_{\bF}$ is not factorial in general, it admits unique factorisation of ideals.
Every ideal $I \subset O_{\bF}$ can be the written uniquely as a product of primes, $I = \prod \kp_{i}^{e_{i}}$.
In particular, the ideal $\ideal{p}$ for a prime $p \in \bN$ itself can be factored as $\ideal{p} = \prod \kp_{i}^{e_{i}}$.
The set of primes ideals of $O_{\bF}$ occurring in this factorisation of $\ideal{p}$ are exactly the prime ideals of $O_{\bF}$ that lie above $p$.
When $p \not | \disc{z}{q}$, $e_{i} = 1$ for all $i$, if this holds we say that $p$ is unramified.

For any unramified $p$, and $\kp$ lying above $p$, the quotient map $O_{\bF} \to O_{\bF} / \kp$ is well structured.
Since $\kp$ is maximal, and since $p \in \kp$, the quotient $O_{\bF} / \kp$ is a field of characteristic $p$, and is in fact a finite extension of $\bF_{p}$.
The prime $\kp$ corresponds to an irreducible factor $q_{1}$ of $q \pmod p$, such that $\kp = pO_{\bF} + q_{1}(\alpha) O_{\bF}$.
It follows that $O_{\bF} / \kp \cong \bF_{p}\bs{z} / q_{1}(z)$.
Composing with the inclusion map $\bZ\bs{\alpha} \to O_{\bF}$ gives us a map $\bZ\bs{\alpha} \to \bF_{p}\bs{z} / q_{1}(z)$.
We usually apply this to the case when $q_{1}$ is a linear polynomial, and therefore $\bF_{p}\bs{z} / q_{1}(z) = \bF_{p}$.
This map therefore extends the usual map $\bZ \to \bF_{p}$.

We now state a corollary of Gauss's lemma.
The following formulation is from \cite[Lemma~7.1]{weinberger1976factoring}, we continue to refer to it as Gauss's lemma.

\begin{lemma}[Gauss's lemma]
    \label{lem:gauss's lemma}
    Suppose $\delta \in O_{\bF}$ and $f \in (1/\delta)O_{\bF}$ is a monic polynomial with factorisation $f = gh$ in $\bF$.
    If $g, h$ are monic then $g, h \in (1/\delta)O_{\bF}$.
\end{lemma}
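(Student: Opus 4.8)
The plan is to deduce this from the standard multiplicativity of polynomial content, worked out prime by prime. Since $O_{\bF}$ is a Dedekind domain, an element $\beta \in \bF$ lies in the fractional ideal $(1/\delta)O_{\bF}$ if and only if $v_{\kp}(\beta) \ge -v_{\kp}(\delta)$ for every nonzero prime $\kp \subset O_{\bF}$, where $v_{\kp}$ is the associated discrete valuation and each localisation $O_{\bF,\kp}$ is a DVR. So it will suffice to bound the valuation of every coefficient of $g$ and of $h$ at an arbitrary fixed $\kp$, and then invoke $O_{\bF} = \bigcap_{\kp} O_{\bF,\kp}$.

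For a fixed $\kp$, writing $v := v_{\kp}$ and, for nonzero $p \in \bF[x]$, $c(p) := \min_i v(p_i)$ (the minimum valuation of the coefficients), the first step is the Gauss-lemma identity $c(pq) = c(p) + c(q)$. After dividing $p$ and $q$ by suitable powers of a uniformiser of $\kp$ so that both acquire content $0$ (i.e.\ lie in $O_{\bF,\kp}[x]$ with some coefficient a unit), this reduces to showing the product again has content $0$, which holds because $O_{\bF,\kp}/\kp O_{\bF,\kp}$ is a field and hence $(O_{\bF,\kp}/\kp)[x]$ is an integral domain, so the mod-$\kp$ reductions multiply to something nonzero. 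Applying this to $f = gh$ gives $c(f) = c(g) + c(h)$. Since $f, g, h$ are all monic, each has a coefficient equal to $1$ and thus $c(f), c(g), c(h) \le 0$; and $f \in (1/\delta)O_{\bF}$ forces $c(f) \ge -v(\delta)$. Hence $c(g) = c(f) - c(h) \ge c(f) \ge -v(\delta)$ (using $c(h) \le 0$), so every coefficient of $g$ has valuation $\ge -v(\delta)$, and symmetrically for $h$. As $\kp$ was arbitrary, $g, h \in (1/\delta)O_{\bF}$.

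I do not expect a genuine obstacle: the content identity over a DVR is the classical Gauss argument, and the only point needing slight care is the global-to-local reduction, which is exactly where Dedekindness of $O_{\bF}$ enters. An alternative, entirely global, phrasing runs the same argument with content fractional ideals $c(p) := \sum_i p_i O_{\bF}$ and the multiplicativity of content over Dedekind domains: monicity gives $O_{\bF} \subseteq c(g)$ and $O_{\bF} \subseteq c(h)$, while $f \in (1/\delta)O_{\bF}$ gives $c(f) \subseteq (1/\delta)O_{\bF}$, so $c(g) \subseteq c(g)c(h) = c(f) \subseteq (1/\delta)O_{\bF}$, and likewise for $h$.
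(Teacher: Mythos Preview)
Your proof is correct. The paper does not actually supply its own proof of this lemma; it simply quotes the statement from \cite[Lemma~7.1]{weinberger1976factoring} and uses it as a black box. Your localisation-at-$\kp$ argument via the content identity $c(pq)=c(p)+c(q)$ over the DVR $O_{\bF,\kp}$, together with the observation that monicity forces $c(g),c(h)\le 0$, is exactly the standard route and is complete as written. The alternative global argument using the content fractional ideal and multiplicativity of content over Dedekind domains is also valid and arguably cleaner. There is nothing to compare against in the paper itself.
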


We will need three more technical results from number theory.
The first of these is an effective Lang-Weil bound.
The Lang-Weil bounds are upper and lower bounds on the number of $\bF_{p}$ points on irreducible varieties that are $\bF_{p}$ definable.
Classically these bounds have error terms, and effective versions of such bounds give bounds on the coefficients of the error terms.
The following is a version of such a bound from \cite[Theorem~7.1]{cafure2006improved}.
While we state the bound in generality, we only require it for curves.
\RO{maybe put a note here that we could get the full version for reducible algebraic sets simply by putting the number of $\bF_p$-definable components... for after the deadline...}

\begin{theorem}[Effective Lang-Weil bound]
    \label{thm:effective lang weil}
    Let $V \subset \bF_{p}^{n}$ be an absolutely irreducible affine variety defined over $F_{p}$ of dimension $r$ and degree $D$.
    Suppose also that $p > 2(r+1)D^{2}$.
    Then
    $$ \abs{V(\bF_{p}) - p^{r}} < (D-1)(D-2) p^{r-1/2} + 5D^{13/3}p^{r-1}.$$
\end{theorem}

The second number theoretic statement we need is a lower bound on the number of primes $p$ for which $q \pmod p$ has a root in $\bF_{p}$.
Observe that the Chebotarev density theorem states that this fraction of primes is the same as the fraction of the Galois group of $q$ that has at least one fixed point, and this fraction is at least $1 / e$.
We require an effective version of this statement.
The following statement is from \cite[Corollary~1]{K96}.
The proof itself is based on a bound by \cite{adleman1983irreducibility} which in turn is based on an effective version of the Chebotarev density theorem \cite{lagarias1977effective}.
We note that this last result is conditional, and assumes the GRH, and this is also what makes our result conditional.

\begin{theorem}
    \label{lem:effective chebotarev}
    Assuming the GRH holds, there exists an absolute constant $c$ such that
    $$ \pi_{q}(x) \geq \frac{1}{e} \br{\pi(x) - \log \disc{z}{q} - c x^{1/2} \log\br{\disc{z}{q} x^{e}}},$$
    where $\pi(x)$ is the prime counting function and $\pi_{q}(x)$ is the number of primes $p \leq x$ such that $q \pmod p$ has a root in $\bF_{p}$.
\end{theorem}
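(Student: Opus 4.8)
The plan is to prove Theorem~\ref{lem:effective chebotarev} by reducing it to a known effective form of the Chebotarev density theorem. First I would recall the Galois-theoretic setup: let $\bF = \bQ(\alpha)$ be the field generated by a root of $q$, let $L$ be the Galois closure of $\bF/\bQ$, and let $G = \mathrm{Gal}(L/\bQ)$ act on the $e$ roots of $q$. For a prime $p$ unramified in $L$ (equivalently $p \nmid \disc{z}{q}$, up to the ramification in the Galois closure, which divides a power of $\disc{z}{q}$), the factorization type of $q \bmod p$ is governed by the cycle type of the Frobenius conjugacy class $\mathrm{Frob}_p \subseteq G$ acting on the roots. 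In particular, $q \bmod p$ has a root in $\bF_p$ if and only if $\mathrm{Frob}_p$, as a permutation of the $e$ roots, has at least one fixed point. Thus $\pi_q(x)$ counts (essentially) the primes $p \le x$ whose Frobenius lands in the union $C$ of conjugacy classes of $G$ consisting of elements with a fixed point among the roots.

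The second step is the counting input: Jordan's classical lemma (or the orbit-counting argument) shows that the proportion $|C|/|G|$ of such elements is at least $1/e$ --- indeed, since $G$ acts transitively on the $e$ roots, the average number of fixed points is $1$, and a transitive action on more than one point has the identity fixing all $e \ge 2$ points, forcing at least a $1/e$ fraction of $G$ to fix at least one point (this is exactly the bound cited from \cite{adleman1983irreducibility}). Then I would invoke the effective Chebotarev density theorem of \cite{lagarias1977effective}, which under GRH gives, for any union $C$ of conjugacy classes of $G = \mathrm{Gal}(L/\bQ)$,
\[
\left| \pi_C(x) - \frac{|C|}{|G|}\,\pi(x) \right| \le c' \left( \frac{|C|}{|G|} \, x^{1/2}\log(d_L x^{n_L}) + \log d_L \right),
\]
where $d_L$ is the discriminant of $L$ and $n_L = [L:\bQ] \le e!$. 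The remaining work is bookkeeping: bound $\log d_L$ and $\log n_L$ in terms of $\log \disc{z}{q}$ and $e$ (every rational prime ramifying in $L$ divides $\disc{z}{q}$, and the exponents and the degree $n_L \le e!$ are controlled, so $\log d_L = O(e! \log(e\,\disc{z}{q}))$ or a cruder but sufficient bound), absorb constants, and use $|C|/|G| \ge 1/e$ together with the trivial upper bound $|C|/|G| \le 1$ to convert the two-sided estimate into the stated one-sided lower bound
\[
\pi_q(x) \ge \frac{1}{e}\Big( \pi(x) - \log \disc{z}{q} - c\, x^{1/2}\log(\disc{z}{q}\, x^{e}) \Big).
\]
Care is needed because $\pi_q$ might differ from $\pi_C$ at the finitely many ramified primes and at primes where $q \bmod p$ is not squarefree, but these are all divisors of $\disc{z}{q}$, so there are at most $\log \disc{z}{q}$ of them and they are exactly accounted for by the $-\log\disc{z}{q}$ term.

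Since the statement is quoted verbatim as \cite[Corollary~1]{K96}, the ``proof'' here is really an attribution plus an indication of the ingredients; the main obstacle, if one wanted to reprove it from scratch, is the effective Chebotarev bound itself, which is a substantial GRH-conditional analytic number theory result (this is precisely the source of the GRH hypothesis flagged in the statement), and secondarily the discriminant estimate relating $d_L$ to $\disc{z}{q}$, which requires knowing that primes ramifying in the Galois closure are exactly those dividing $\disc{z}{q}$ and controlling the ramification exponents.
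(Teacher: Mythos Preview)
Your proposal is correct and matches the paper's treatment: the paper does not prove this theorem at all but simply cites it as \cite[Corollary~1]{K96}, noting that it rests on \cite{adleman1983irreducibility} and the effective Chebotarev density theorem of \cite{lagarias1977effective}. Your sketch accurately unpacks exactly these ingredients (Frobenius--fixed-point interpretation, the $1/e$ lower bound on the density of fixed-point elements in a transitive group, and the GRH-conditional Lagarias--Odlyzko error term), so there is nothing to add.
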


Finally, we need an unconditional lower bound on the prime counting function $\pi(x)$.
One such estimate is the following classical bound.
\begin{theorem}[{\cite[Corollary~1]{rosser1962approximate}}]
    \label{thm:unconditional prime counting}
    For $x \geq 17$ we have $\pi(x) > x / \ln{x}$.
\end{theorem}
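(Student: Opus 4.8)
The statement is a classical explicit estimate, so the plan is to reconstruct the standard route rather than invent anything new. First I would reduce the bound on $\pi$ to an explicit lower bound on Chebyshev's function $\vartheta(x) := \sum_{p \le x}\ln p$. By Abel summation one has
\[
\pi(x) \;=\; \frac{\vartheta(x)}{\ln x} \;+\; \int_2^x \frac{\vartheta(t)}{t\,\ln^2 t}\,dt ,
\]
so it is enough to know $\vartheta(t) \ge t - \varepsilon(t)$ on $[2,x]$ for a suitably slowly growing $\varepsilon$: the first term then already contributes $\frac{x}{\ln x} - \frac{\varepsilon(x)}{\ln x}$, while the integral contributes a genuinely positive quantity of order $x/\ln^2 x$, which more than absorbs $\varepsilon(x)/\ln x$ together with the loss coming from $\int_2^x \varepsilon(t)/(t\ln^2 t)\,dt$. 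Thus the task splits into proving $\vartheta(x) \ge x - c\,x/\ln x$ for an explicit small $c$, and separately verifying $\pi(x) > x/\ln x$ by hand on the finite range $17 \le x \le x_0$, where $x_0$ is the point beyond which the asymptotic estimate suffices.

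For the lower bound on $\vartheta$ (equivalently on $\psi(x) := \sum_{p^k\le x}\ln p$), the elementary Chebyshev argument fixes the shape: since $\binom{2n}{n}$ divides $\prod_{p\le 2n} p^{\lfloor \log_p 2n\rfloor}$ and $\binom{2n}{n} \ge 4^n/(2n+1)$, one gets $\psi(2n) \ge n\ln 4 - \ln(2n+1)$, hence $\psi(x) \ge (\ln 2 - o(1))\,x$. By itself this only yields $\pi(x) > (0.69 - o(1))\,x/\ln x$, which falls short of the constant $1$ we want.

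The main obstacle is precisely closing that gap, i.e.\ pushing the constant from Chebyshev's $\approx \ln 2$ up to the sharp value $1$, and this is where genuine analytic input is needed. The idea is to use the explicit formula $\psi(x) = x - \sum_{\rho}x^{\rho}/\rho - \ln(2\pi) - \frac{1}{2}\ln(1-x^{-2})$ and truncate the sum over the nontrivial zeros $\rho$ of $\zeta$ using, on one hand, the numerical verification that all zeros up to a large explicit height lie on the critical line, and on the other hand an explicit classical zero-free region for the remaining zeros; this produces an effective bound $|\psi(x) - x| \le \eta(x)$ with $\eta(x) = o(x/\ln x)$. Passing from $\psi$ to $\vartheta$ via $\vartheta(x) = \psi(x) - \sum_{k\ge 2}\psi(x^{1/k})$ and crudely bounding the correction, then feeding the result into the partial-summation identity above, and finally checking the remaining small cases directly, completes the argument. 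Since this is exactly the content of the Rosser--Schoenfeld estimates, in the paper we simply invoke \cite{rosser1962approximate}.
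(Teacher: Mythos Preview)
Your proposal correctly identifies both the standard analytic route to this estimate and the fact that the paper does not attempt to prove it: the theorem is stated in the preliminaries purely as a citation to \cite{rosser1962approximate}, with no proof given. Your final sentence matches the paper exactly, so there is nothing further to compare.
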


\subsection{Degree bounds for polynomial ideals}

We now recollect some important complexity results for polynomial ideals.
We begin with Jelonek's effective Nullstellensatz, from \cite[Theorem 1.1]{jelonek2005effective}.

\begin{theorem}[Effective Nullstellensatz]\label{theorem: jelonek nullstellensatz}
    Let $\bK$ be an algebraically closed field, and $X \subset \bK^m$ be an affine $n$-dimensional variety of degree $D$. 
    Let $f_1, \ldots, f_m \in \bK[X]$ be non-constant polynomials without common zeros, where $d_i := \deg f_i$, and $d_1 \geq \cdots \geq d_m$.
    Lastly, let 
    $$
    N(d_1, \dots, d_m; n) := 
    \begin{cases}
        \prod_{i=1}^m d_i \quad \text{ if } n > 1 \text{ and } n \geq m \\
        d_m \cdot \prod_{i=1}^{n-1} d_i \quad \text{ if } m > n > 1 \\
        d_1 \quad \text{ if } n=1
    \end{cases}
    $$
    There exist polynomials $g_i \in \bK[X]$ with 
    $$
    \deg(f_i g_i) \leq 
    \begin{cases}
        D \cdot N(d_1, \ldots, d_m ; n), \quad \text{ if } m \leq n \\
        2D \cdot N(d_1, \dots, d_m, n) - 1, \quad \text{ if } m > n
    \end{cases}
    $$
    such that $1 = \sum_{i=1}^m f_i g_i$.
\end{theorem}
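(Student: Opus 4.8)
The plan is the standard two-stage strategy used in proofs of the effective Nullstellensatz --- first cut the number of equations down to roughly $\dim X$, then run a geometric degree count on $X$ --- the point being to keep the final bound \emph{sensitive to the individual degrees} $d_1 \ge \cdots \ge d_m$ rather than only to $d_1$. This degree-sensitivity is what separates $N(d_1,\dots,d_m;n)$ from a crude $d_1^{\,n}$-type estimate and is the source of the case split in the statement; I do not expect a short self-contained argument, so the aim here is to lay out the skeleton and pin down where the difficulty lives.

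\emph{Step 1: reduction to $\min(m,n+1)$ equations.} When $m > n$, I would replace $f_1, \dots, f_m$ by $n+1$ (or $n$) generic $\bK$-linear combinations $\tilde f_1, \dots$ of the $f_i$, chosen with care so that the $j$-th combination has degree exactly $d_j$ and the no-common-zero property on $X$ is preserved. The second requirement is a Bertini-type statement: since $\dim X = n$ and the original base locus $Z(f_1,\dots,f_m) \cap X$ is empty, each successive generic member of the relevant pencil cuts the dimension by one until the intersection is empty. A membership certificate $1 = \sum_j \tilde f_j \tilde g_j$ then pulls back to $1 = \sum_i f_i g_i$ with $\deg(f_i g_i) \le d_i + \max_j \deg \tilde g_j$; this additive loss is why the bound for $m > n$ reads $2D \cdot N - 1$ rather than $D \cdot N$, and getting the \emph{degree-respecting} choice of combinations right is precisely what yields the refined ordered product $d_m \cdot \prod_{i \le n-1} d_i$ in place of $\prod_{i \le n} d_i$.

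\emph{Step 2: the geometric core, $m \le n+1$.} Homogenise, pass to the projective closure $\bar X$, and intersect it successively with the hypersurfaces $Z(\tilde f_1), \dots, Z(\tilde f_n)$. By B\'ezout's inequality (\cref{theorem: bezout inequality}, \cref{theorem: degree bounds}) the intermediate intersections have degree at most $D \cdot \prod_{i \le n} d_i$ --- or $D \cdot d_m \cdot \prod_{i \le n-1} d_i$, by the arrangement from Step~1. On the resulting (generically $0$-dimensional) set $W = \bar X \cap Z(\tilde f_1, \dots, \tilde f_n)$ the remaining polynomial $\tilde f_{n+1}$ vanishes nowhere, so it is a unit on the coordinate ring of $W$; turning this local statement into a global identity $1 = \sum_j \tilde f_j \tilde g_j$ with explicit degree control is the heart of the matter. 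One way is to write ideal membership as a linear system of size governed by $\deg W$ and solve it via Cramer's rule (\cref{lem:generalcramer}), after an effective Noether normalisation (\cref{lem:effectiveNN}) that makes the coordinates generic enough for the relevant minors not to degenerate; one must also account for the part of $\bar X$ at infinity, i.e.\ the failure of properness of the intersection, which is where the factor $D = \deg X$ enters cleanly.

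\emph{Main obstacle.} The existence of \emph{some} degree bound is comparatively soft --- it follows from Hermann-style linear algebra (doubly exponentially) or a crude iterated B\'ezout count --- so all the difficulty is in the \emph{sharpness}: obtaining $N(d_1,\dots,d_m;n)$ with the correct degree-ordered product and the correct linear dependence on $D$. This demands both the delicate degree-respecting choice of generic combinations in Step~1 and a precise analysis of the behaviour of the $f_i$ at infinity / of the excess-intersection components in Step~2 --- exactly the content of \cite[Theorem 1.1]{jelonek2005effective}, which I would ultimately invoke rather than reprove.
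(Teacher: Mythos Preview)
The paper does not prove this statement at all: it is quoted verbatim as \cite[Theorem 1.1]{jelonek2005effective} in the preliminaries and used as a black box throughout. So there is no ``paper's own proof'' to compare against, and your final line --- that you would ultimately invoke the reference rather than reprove it --- is exactly what the paper does.

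Your sketch of the two-stage strategy (reduce to $\sim n$ equations by generic combinations, then a projective B\'ezout/degree argument) is a reasonable high-level outline of how results in this family are proved, and you correctly identify where the work lies (the degree-respecting reduction and the behaviour at infinity). But since the paper treats this as an imported result, the appropriate ``proof'' here is simply a citation.
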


We now state some useful bounds on the complexity of representation of the reduced \grobner{} basis of any ideal.
The bounds that we state below are dependent on the Krull dimension of the given ideal\footnote{as is standard in commutative algebra, the Krull dimension of an ideal $I$ is the Krull dimension of the quotient ring $\bK\bs{X} / I$.}, and this will be crucial in our applications in the later sections.
These bounds are from \cite[Theorem 4]{mayr2013dimension}.

\begin{theorem}[\grobner{} basis complexity]\label{theorem: groebner basis complexity}
    Let $\bK$ be an infinite field and $I = (f_1, \dots, f_m) \subseteq \bK[x_1, \dots, x_n]$ be an ideal of dimension $r$, where $d_j := \deg(f_j)$ and $d_1 \geq \cdots \geq d_m$.
    Let 
    $$ B := 2 \cdot \br{\dfrac{1}{2} \br{\br{d_1 \cdots d_{n-r}}^{2(n-r)} + d_1} }^{2^r}. $$
    Given any admissible monomial ordering, if $G = \{g_1, \ldots, g_t\}$ is the reduced \grobner{} basis of $I$, then we have: 
    $$ \deg(g_i) \leq B \quad \forall i \in [t].  $$
    Moreover, for each $i \in [t]$, there are polynomials $h_{i1}, \dots, h_{im} \in \bK[x_1, \ldots, x_n]$ satisfying $\deg(f_j \cdot h_{ij}) \leq B$ such that
    $$ g_i = \sum_{j=1}^m f_j h_{ij}. $$
\end{theorem}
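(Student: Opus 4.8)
The plan is to reprove this as a dimension-sensitive refinement of Dub\'e's doubly-exponential \grobner{} basis bound \cite{dube1990structure}, which is the route taken in \cite{mayr2013dimension}. First I would recall Dub\'e's framework: for a fixed admissible order $\prec$, the reduced \grobner{} basis of $I$ is determined by the monomial ideal $\mathrm{in}_{\prec}(I)$, and the maximal degree occurring in it is bounded by the ``Macaulay constant'' of a \emph{cone decomposition} of $\bK[x_1,\dots,x_n]/\mathrm{in}_{\prec}(I)$ together with one of the ideal itself. The key quantitative fact is that the degree of such a decomposition obeys a recursion in the number of variables: ``splitting off'' one variable roughly squares the running degree bound, plus a contribution from the degrees of the generators being used at that step. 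Iterating this over all $n$ variables is precisely what produces the classical $2(\tfrac12 d^{2}+d)^{2^{n-1}}$ bound.

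Next I would inject the dimension. Since $\mathrm{in}_{\prec}(I)$ has the same Hilbert function as $I$, it has Krull dimension $r$, so (in a precise combinatorial sense) its standard monomials contain a polynomial ring in $r$ variables. Consequently Dub\'e's recursion has only $r$ genuine ``squaring'' steps: the other $n-r$ peeling steps involve the ``active'' generators that cut the dimension down, and their combined effect can be controlled by a B\'ezout/Macaulay estimate rather than by another tower of exponentials. Concretely, the relevant intersections have degree at most $d_1\cdots d_{n-r}$ by \cref{theorem: degree bounds}, which is what produces the base $\tfrac12\bigl((d_1\cdots d_{n-r})^{2(n-r)}+d_1\bigr)$ of $B$; running the remaining $r$ squaring steps of Dub\'e's recursion from this base and carrying the constants through gives exactly $B=2\cdot\bigl(\tfrac12((d_1\cdots d_{n-r})^{2(n-r)}+d_1)\bigr)^{2^{r}}$, hence $\deg g_i\le B$ for every element of the reduced \grobner{} basis. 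As sanity checks, $r=0$ recovers a single-exponential bound for zero-dimensional ideals, and codimension $n-r=1$ recovers Dub\'e's bound verbatim.

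For the cofactor statement, once $B$ bounds the reduced \grobner{} basis degrees, the degree-$\le B$ graded piece of $I$ is spanned over $\bK$ by the multiples $x^{\alpha}f_j$ with $\deg(x^{\alpha}f_j)\le B$ — this ``effective membership up to degree $B$'' is already part of the cone-decomposition bookkeeping (and is the same flavour of statement as \cref{theorem: jelonek nullstellensatz}). Since each $g_i\in I$ has $\deg g_i\le B$, expressing $g_i$ in this spanning set yields $g_i=\sum_j f_j h_{ij}$ with $\deg(f_j h_{ij})\le B$ by linear algebra on the corresponding Macaulay matrix.

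I expect the main obstacle to be the second paragraph: making rigorous that a cone decomposition of a monomial ideal of dimension $r$ can be arranged so that exactly $n-r$ of the peeling steps collapse into a single B\'ezout-type contribution while only $r$ remain of ``squaring type,'' and then verifying that the accumulated constants match $B$ on the nose. This bookkeeping — essentially Dub\'e's induction re-run with the dimension carried as an explicit parameter — is the technical heart of \cite{mayr2013dimension}.
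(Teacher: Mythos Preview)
The paper does not prove this theorem; it is quoted verbatim as a known result from \cite[Theorem~4]{mayr2013dimension} and used as a black box in the later height-bound arguments. So there is no ``paper's own proof'' to compare against.

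That said, your sketch is a faithful outline of how Mayr--Ritscher actually obtain the bound: they do refine Dub\'e's cone-decomposition argument by tracking the Krull dimension, and the structure you describe---$n-r$ ``B\'ezout-type'' steps contributing the base $(d_1\cdots d_{n-r})^{2(n-r)}$ followed by $r$ squaring steps---is the right shape. Your caveat in the last paragraph is also well-placed: the delicate part is arranging the cone decomposition so that the dimension parameter cleanly separates the two kinds of steps, and this is indeed where most of the work in \cite{mayr2013dimension} lies. The cofactor bound in the ``Moreover'' clause is likewise part of their cone-decomposition bookkeeping, as you say, though invoking \cref{theorem: jelonek nullstellensatz} as an analogy is a bit misleading since that is a radical-membership bound rather than an ideal-membership one.
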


\noindent The above bounds, when combined with the division algorithm, allow us to derive the following upper bound on the dimension of the linear system (over the base field) needed to check membership in polynomial ideals.

\begin{corollary}[Representation degree]\label{corollary: representation degree ideal membership} 
    Let $\bK$ be an infinite field and $I = (f_1, \ldots, f_m) \subseteq \bK[x_1, \dots, x_n]$ be an ideal of dimension $r$, where $d_j := \deg(f_j)$ and $d_1 \geq \cdots \geq d_m$.
    Let 
    $$ B := 2 \cdot \br{\dfrac{1}{2} \br{\br{d_1 \cdots d_{n-r}}^{2(n-r)} + d_1} }^{2^r}.$$
    Given any polynomial $f \in \bK[x_1, \dots, x_n]$ with $\deg(f) = D$, we have that $f \in I$ if, and only if, there exist $g_1, \dots, g_m \in \bK\bs{x_1, \dots, x_n}$ such that $\deg(f_j g_j) \leq \max\br{B, D}$ and 
    $\displaystyle f = \sum_{j=1}^m f_j g_j$.
\end{corollary}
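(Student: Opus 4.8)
The plan is to combine \cref{theorem: groebner basis complexity} with the standard multivariate division algorithm. First I would fix an admissible monomial order and let $G = \{g_1, \dots, g_t\}$ be the reduced \grobner{} basis of $I$. By \cref{theorem: groebner basis complexity} each $g_i$ has $\deg(g_i) \leq B$ and admits a representation $g_i = \sum_{j=1}^m f_j h_{ij}$ with $\deg(f_j h_{ij}) \leq B$. The ``if'' direction of the corollary is immediate: a representation $f = \sum_j f_j g_j$ certifies $f \in I$ regardless of degree bounds. So the content is the ``only if'' direction, namely producing such a representation with the degree bound $\max(B, D)$ whenever $f \in I$.

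For the ``only if'' direction I would run the division algorithm on $f$ by the \grobner{} basis $G$. Since $f \in I$ and $G$ is a \grobner{} basis, the remainder is zero, so the division produces an expression $f = \sum_{i=1}^t u_i g_i$ where each $u_i \in \bK[x_1, \dots, x_n]$. The key observation from how the division algorithm works term-by-term is that no intermediate polynomial ever has degree exceeding $\deg(f) = D$: every reduction step replaces a term of the current remainder (of degree $\leq D$) by subtracting a monomial multiple $c \cdot \mathbf{x}^\gamma g_i$ chosen so that $\mathbf{x}^\gamma \cdot \mathrm{LT}(g_i)$ matches that term, hence $\deg(\mathbf{x}^\gamma g_i) \leq D$ as well. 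Consequently each quotient $u_i$ satisfies $\deg(g_i u_i) \leq D$. Substituting the representations $g_i = \sum_j f_j h_{ij}$ into $f = \sum_i u_i g_i$ and collecting, we obtain $f = \sum_{j=1}^m f_j \left( \sum_i u_i h_{ij} \right) =: \sum_{j=1}^m f_j g_j$. It remains to bound $\deg(f_j g_j)$: each term $u_i h_{ij} f_j$ satisfies $\deg(u_i h_{ij} f_j) = \deg(u_i) + \deg(h_{ij} f_j) \leq (D - \deg g_i) + B$; since $\deg g_i \geq 0$ this is at most $\max(B, D) + (\text{small slack})$. Here I expect the main bookkeeping obstacle: getting the clean bound $\deg(f_j g_j) \leq \max(B, D)$ rather than something like $B + D$.

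To get the sharp bound I would argue more carefully, splitting on whether $D \leq B$ or $D > B$. If $D \leq B$, then since $f \in I$ one can simply invoke \cref{theorem: groebner basis complexity} applied after noting $f$ itself can be taken as one of the generators or, more directly, bound $\deg(u_i) + \deg(h_{ij} f_j) \leq \deg(u_i g_i) - \deg g_i + B \leq D - \deg g_i + B$; one then needs that the terms with $\deg g_i = 0$ (i.e. $g_i$ a nonzero constant, which forces $I = \bK[x_1,\dots,x_n]$) are handled separately, and otherwise $\deg g_i \geq 1$ is not enough. The cleanest fix is to instead bound directly: the representation $f = \sum_i u_i g_i$ from division has $\deg(u_i g_i) \leq D$ for all $i$, and then for the composite representation one uses $\deg(u_i h_{ij} f_j) \leq \deg(u_i) + \deg(h_{ij} f_j)$ together with $\deg(u_i) \leq D$ and $\deg(h_{ij} f_j) \leq B$, but caps the product degree using that a product $u_i h_{ij} f_j$ appearing in $f$ cannot raise the total degree beyond what is needed — more precisely, one re-runs the division algorithm argument once more, now dividing $f$ directly by $\{f_1, \dots, f_m\}$ augmented with enough \grobner{}-basis data, so that all intermediate polynomials stay within degree $\max(B, D)$. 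I would present this as: the representation exists with $\deg(f_j g_j) \leq \max(B, D)$ because one can perform the entire reduction of $f$ to $0$ using multiples of the $f_j$ of degree at most $\max(B, D)$, invoking \cref{theorem: groebner basis complexity} to supply the bounded representations of the $g_i$ and the division algorithm to supply the bounded quotients. This is essentially a routine but slightly delicate degree-tracking argument, and the only real care needed is ensuring the two degree bounds $B$ (from \grobner{} basis reductions) and $D$ (from dividing $f$) combine as a maximum and not a sum.
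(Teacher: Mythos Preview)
Your approach—fix a graded term order, take the reduced \grobner{} basis $G$ from \cref{theorem: groebner basis complexity}, divide $f$ by $G$ to get $f=\sum_i u_i g_i$ with $\deg(u_i g_i)\le D$, then substitute the bounded representations $g_i=\sum_j f_j h_{ij}$—is exactly what the paper indicates; its entire proof is the sentence ``the above bounds, when combined with the division algorithm, allow us to derive the following upper bound.'' So the strategy matches.

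You are right to flag the degree bookkeeping, and your proposed fixes do not actually close the gap. The substitution gives
\[
\deg(u_i h_{ij} f_j)\;\le\;(D-\deg g_i)\;+\;B,
\]
which in general is of order $B+D$, not $\max(B,D)$. Case-splitting on $D\le B$ versus $D>B$ does not help: the obstruction is that a \emph{low-degree} \grobner{} element $g_i$ may still require a representation $\sum_j f_j h_{ij}$ of degree close to $B$, and the quotient $u_i$ can have degree close to $D$, so the product genuinely reaches degree close to $B+D$. ``Re-running the division'' with the $f_j$'s and the \grobner{} data interleaved does not change this, since at some point you must expand some $g_i$ in terms of the $f_j$'s, and that step incurs the $+B$. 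I do not see how to recover the stated bound $\max(B,D)$ from the stated ingredients; the honest bound from this argument is $B+D$.

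The paper does not give a detailed proof, so this point is simply not addressed there. It also does not matter for the paper: every invocation of this corollary (for instance in \cref{lem:eliminationinQ}) has $D\le B$ and only needs a bound that is polynomial in $B$, so $B+D\le 2B$ is absorbed. If you want a statement you can cleanly justify, replace $\max(B,D)$ by $B+D$ (or by $2\max(B,D)$) and note that this is harmless downstream.
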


\subsection{An Effective Bertini Theorem}

Bertini's second theorem states that the intersection of an irreducible variety of dimension $r \geq 2$ with a random hyperplane of dimension $n-r+1$ is an irreducible curve.
This statement allows us to reduce the problem of irreducibility testing of arbitrary high dimensional varieties to that of curves and points.
The theorem plays an important role in many routines in computer algebra, a comprehensive survey of its applications can be found in \cite[Section~9.1.3]{dickenstein2005solving}.
The effective versions of the theorem are usually stated for hypersurfaces, reduction from the general case to hypersurfaces follows from a projection argument.

In order to apply the theorem to arbitrary varieties, a preprocessing projection step is applied.
The following two results are from \cite{cafure2006improved}, where the effective Bertini theorem is used to obtain improved Lang-Weil bounds.
While the results in \cite{cafure2006improved} are stated for varieties over $\bF_{q}$, the following two results also hold for varieties over any perfect field, with the same proofs.

\begin{lemma}[{\cite[Proposition~6.1, Proposition~6.3]{cafure2006improved}}]
    \label{lem:effectivebirational}
    Let $V$ be an equidimensional affine variety of dimension $r$ and degree $D$.
    Let $\Lambda$ be a $(r+1) \times n$ matrix of variables, and let $\Gamma$ be a $r+1$ dimensional vector of variables.
    There exists a nonzero polynomial $G \in \overline{\bQ}\bs{\Lambda, \Gamma}$ of degree at mot $2(r+1)D^{2}$ such that for any values $\lambda, \gamma$ of the variables $\Lambda, \Gamma$ satisfying $G(\lambda, \gamma) \neq 0$, the projection $\pi$ defined by $\ell_{i} = \sum \lambda_{ij} x_{j} + \gamma_{j}$ is a birational map between $V$ and its image $\pi(V)$.
    Further, there is a polynomial $g$ of degree $D$ such that $\pi(V) \setminus Z(g)$ and $V \setminus \pi^{-1}(Z(g))$ are isomorphic.
\end{lemma}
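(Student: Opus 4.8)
The plan is to reproduce the Cafure--Matera argument, which factors $\pi$ through a \emph{generic finite projection} onto $\bA^{r}$ followed by a \emph{generic choice of primitive element} for the extra coordinate, while tracking B\'ezout estimates so that the resulting bad locus in parameter space is cut out by a single polynomial of degree at most $2(r+1)D^{2}$. Write $\ell_{i} = \sum_{j}\Lambda_{ij}x_{j}+\Gamma_{i}$ for $i\in[r+1]$, put $\pi_{0}:=(\ell_{1},\dots,\ell_{r})$ and $\pi:=(\ell_{1},\dots,\ell_{r+1})$, and decompose $V=V_{1}\cup\cdots\cup V_{s}$ into its irreducible components, all of dimension $r$, with $\sum_{t}\deg V_{t}\le D$. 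Everything done below is generic in $(\Lambda,\Gamma)$, so it suffices to exhibit, for each $V_{t}$ and each pair $t\neq t'$, a nonzero polynomial in $(\Lambda,\Gamma)$ of controlled degree whose non-vanishing guarantees the relevant good behaviour, and then take $G$ to be the product.

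\textbf{Step 1: a generic finite projection.} By the effective Noether normalisation lemma (\cref{lem:effectiveNN}, affine case, applied to $V$), there is a nonzero $G_{1}\in\overline{\bQ}[\Lambda,\Gamma]$ of degree at most $2(r+1)D$ such that $G_{1}(\lambda,\gamma)\neq 0$ forces $\pi_{0}\colon V\to\bA^{r}$ to be a well-defined finite surjective morphism; its restriction to any component $V_{t}$ is then also finite and surjective. Fixing such $\lambda,\gamma$, the ring $\overline{\bQ}[V_{t}]$ is integral over $\overline{\bQ}[y_{1},\dots,y_{r}]$ (with $y_{i}:=\ell_{i}$), and since $\overline{\bQ}$ is perfect, $\overline{\bQ}(V_{t})$ is a finite \emph{separable} extension of $\overline{\bQ}(y_{1},\dots,y_{r})$ of degree $\delta_{t}:=\deg(\pi_{0}|_{V_{t}})$; moreover $\sum_{t}\delta_{t}=\deg V=D$, since for generic $(\lambda,\gamma)$ a generic fibre of $\pi_{0}$ is $V$ intersected with a generic $(n-r)$-dimensional affine subspace.

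\textbf{Step 2: a generic primitive element and birationality.} By the primitive element theorem $\overline{\bQ}(V_{t})=\overline{\bQ}(y_{1},\dots,y_{r})(\theta_{t})$ for some $\overline{\bQ}$-linear form $\theta_{t}$ in the $x_{j}$; the effective version says that the coefficient vectors for which $\ell_{r+1}$ fails to be primitive for $V_{t}$ lie in the zero set of a nonzero polynomial in $(\Lambda_{(r+1)\bullet},\Gamma_{r+1})$ of degree at most $\binom{\delta_{t}}{2}$ (the condition that two distinct $\overline{\bQ}(y)$-conjugates of the generic point of $V_{t}$ take the same value of $\ell_{r+1}$). Likewise $\pi(V_{t})=\pi(V_{t'})$ translates into an $(n-r-1)$-dimensional affine subspace through a point of $V_{t}\setminus V_{t'}$ meeting $V_{t'}$, a nonzero polynomial condition of degree $O(\deg V_{t'})$ by \cref{theorem: bezout inequality}. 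Let $G$ be the product of $G_{1}$ with all of these polynomials. If $G(\lambda,\gamma)\neq 0$, then $\pi_{0}|_{V}$ is finite, every $\ell_{r+1}$ is primitive, and the components have pairwise distinct images; consequently each $\pi(V_{t})=Z(F_{t})$ is a hypersurface in $\bA^{r+1}$ with $F_{t}$ the primitive minimal polynomial of $\ell_{r+1}$ over $\overline{\bQ}[y_{1},\dots,y_{r}]$, the image $\pi(V)=\bigcup_{t}Z(F_{t})$ is a union of $s$ distinct hypersurfaces, and $\pi|_{V_{t}}\colon V_{t}\to\pi(V_{t})$ is dominant with $\overline{\bQ}(V_{t})=\overline{\bQ}(\pi(V_{t}))$, i.e.\ birational. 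Hence $\pi|_{V}$ is a birational morphism onto its image $\pi(V)$.

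\textbf{Step 3: the open isomorphism, and the main obstacle.} Since $\ell_{r+1}$ is primitive, the inverse rational map from $\pi(V)$ to $V$ is explicit: on $\pi(V_{t})$ each $x_{j}$ equals a polynomial in $y_{1},\dots,y_{r},\ell_{r+1}$ whose coefficients are rational functions of the $y$'s with denominators dividing $\partial F_{t}/\partial x_{r+1}$, obtained by solving the linear system formed by the generators of $I(V_{t})$ modulo the relation $F_{t}=0$ (\cref{lem:generalcramer}). Thus the inverse is regular on $\pi(V)\setminus Z(g)$ with $g:=\prod_{t}\partial F_{t}/\partial x_{r+1}$, and $\deg g\le\sum_{t}(\deg F_{t}-1)\le\sum_{t}\deg V_{t}\le D$, where $\deg F_{t}=\deg\pi(V_{t})\le\deg V_{t}$ because $\pi|_{V_{t}}$ is birational. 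The two mutually inverse morphisms then identify $V\setminus\pi^{-1}(Z(g))$ with $\pi(V)\setminus Z(g)$, which is the second assertion. The main obstacle is purely quantitative: the naive product of the bad polynomials above already cuts out a hypersurface, but its degree is a worse polynomial in $r$ and $D$; obtaining the sharp bound $\deg G\le 2(r+1)D^{2}$ requires a more careful, essentially resultant-based, estimate (using $\sum_{t}\binom{\delta_{t}}{2}\le\binom{D}{2}$ to control the primitive-element contributions, and a single elimination computation for the finiteness-plus-separation contributions), together with care about the order of the steps, since primitivity of $\ell_{r+1}$ only makes sense once $\pi_{0}$ has been fixed to be finite --- handled by working generically over $\overline{\bQ}(\Lambda,\Gamma)$ and specialising at the end.
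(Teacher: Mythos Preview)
The paper does not give its own proof of this lemma: it is quoted verbatim from \cite[Proposition~6.1, Proposition~6.3]{cafure2006improved}, with only the remark that the argument there (stated over $\bF_{q}$) goes through over any perfect field. So there is no ``paper's proof'' to compare against; the relevant comparison is with the Cafure--Matera argument itself.

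Your outline follows exactly that argument --- finite Noether projection to $\bA^{r}$, primitive element for the extra coordinate, explicit inverse on the complement of the discriminant locus --- and the structure of Steps~1--3 is correct. However, you explicitly flag, and do not resolve, the one thing that makes this lemma an \emph{effective} statement rather than a qualitative one: the bound $\deg G \le 2(r+1)D^{2}$. Your own accounting gives $2(r+1)D$ from Noether normalisation, at most $\binom{D}{2}$ from the primitive-element conditions, and an unspecified contribution from separating the images of distinct components; you then say the sharp bound ``requires a more careful, essentially resultant-based, estimate'' without supplying it. That is the entire content of the lemma as used downstream (the degree bound is what feeds into the probability estimates in \cref{cor:effectivebertini}), so as written this is a sketch with the quantitative core left open. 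To close it you would need to carry out the single elimination/Chow-form computation you allude to --- which is precisely what Cafure--Matera do in their Proposition~6.1 --- rather than bound each bad condition separately and multiply.
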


\begin{lemma}[{\cite[Corollary~3.4]{cafure2006improved}}]
    \label{lem:effectivebertinihypersurface}
    Suppose $f \in \bZ\bs{x_{1}, \dots, x_{n}}$ is an absolutely irreducible polynomial of degree $D$.
    For a point $(v, w, z) \in \overline{\bQ}^{n} \times \overline{\bQ}^{n-1}\times \overline{\bQ}^{n-1}$, let $f_{v, w, z}(x, y) := f(x + v_{1}, w_{2} x + z_{2} y + v_{2}, \dots, w_{n} x + z_{n}y + v_{n})$.
    There exists a polynomial $H$ of degree at most $10D^{5}$ such that for any $v, w, z$ satisfying $H(v, w, z) \neq 0$, the polynomial $f_{v, w, z}$ remains absolutely irreducible.
\end{lemma}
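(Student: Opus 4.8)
The plan is to deduce the lemma from the classical geometric Bertini irreducibility theorem and then to make the exceptional locus effective by passing through Noether's irreducibility forms. Write $\phi_{v,w,z}\colon\bA^{2}\to\bA^{n}$ for the affine-linear map $(x,y)\mapsto(x+v_{1},\,w_{2}x+z_{2}y+v_{2},\,\dots,\,w_{n}x+z_{n}y+v_{n})$, so that $f_{v,w,z}=f\circ\phi_{v,w,z}$. The cases $n\leq 2$ are immediate: if $n=1$, absolute irreducibility of $f$ forces $\deg f=1$ and $f_{v,w,z}(x,y)=f(x+v_{1})$ is linear in $x$; if $n=2$ then $\phi_{v,w,z}$ is an affine automorphism of $\bA^{2}$ whenever $z_{2}\neq 0$, so $f_{v,w,z}$ is irreducible and $H:=z_{2}$ works. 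So assume $n\geq 3$. Two polynomial conditions of degree $\leq D$ in $(v,w,z)$ will be folded into $H$: that the linear part of $\phi_{v,w,z}$ has rank $2$ (so $\phi_{v,w,z}$ is injective onto a $2$-plane $\Pi_{v,w,z}$), and that the degree-$D$ homogeneous part of $f$ does not vanish identically under the substitution (so $\deg f_{v,w,z}=D$). When both hold, $\phi_{v,w,z}$ is a closed immersion, hence $Z(f_{v,w,z})\cong H_{f}\cap\Pi_{v,w,z}$ as subschemes of $\bA^{2}$ where $H_{f}:=Z(f)$; consequently $f_{v,w,z}$ is absolutely irreducible \emph{if and only if} the plane curve $H_{f}\cap\Pi_{v,w,z}$ is reduced and irreducible.

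\emph{Geometric content.} Since $f$ is absolutely irreducible, $H_{f}$ is an irreducible reduced hypersurface of degree $D$ and dimension $n-1\geq 2$. By the Bertini irreducibility theorem (applied $n-2$ times, cutting down one dimension at a time, the degree staying $\leq D$ throughout by \cref{theorem: bezout inequality}), a generic $2$-dimensional affine plane meets $H_{f}$ in an irreducible reduced curve. The planes $\Pi_{v,w,z}$ range over a Zariski-dense family of $2$-planes --- precisely those whose projection to the $x_{1}$-axis is dominant --- so this genericity transfers to the parameter space, and we conclude that $f_{v,w,z}$ is absolutely irreducible for all $(v,w,z)$ outside a proper Zariski-closed subset $B\subseteq\overline{\bQ}^{\,3n-2}$. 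This is the non-effective content of the lemma; it remains to cover $B$ by the zero set of a single polynomial of degree $\leq 10D^{5}$.

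\emph{Effectivity --- the main obstacle.} For this I would invoke an effective Noether irreducibility theorem (in the spirit of \cite{kaltofen1995effective}): there is a finite family $\{N_{\alpha}\}$ of polynomials in the $\binom{D+2}{2}$ coefficient-indeterminates of a generic bivariate polynomial of degree $D$, each of degree polynomially bounded in $D$, such that a bivariate polynomial of degree exactly $D$ is absolutely irreducible precisely when some $N_{\alpha}$ is nonzero on its coefficient vector. Each coefficient of $f_{v,w,z}$, obtained by Taylor-expanding $f$ after an affine-linear substitution in the new variables, is a polynomial in $(v,w,z)$ of degree $\leq D$; hence each composite $H_{\alpha}(v,w,z):=N_{\alpha}(\mathrm{coeffs}(f_{v,w,z}))$ has degree $\mathrm{poly}(D)$. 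By the previous paragraph the common zero set of all the $H_{\alpha}$ is a proper closed subset, so some $H_{\alpha_{0}}$ is not identically zero; taking $H$ to be $H_{\alpha_{0}}$ times the two degree-$\leq D$ conditions from the first paragraph yields a nonzero polynomial whose non-vanishing forces $\phi_{v,w,z}$ injective, $\deg f_{v,w,z}=D$, and $f_{v,w,z}$ absolutely irreducible. The genuinely hard part is quantitative: one must pin down the exponents so that $\deg H\leq 10D^{5}$ (uniformly in $n$), or else bypass Noether forms entirely and bound $\deg B$ directly by realizing $B$ as a projection of an incidence variety and feeding it through the degree estimates \cref{theorem: bezout inequality} and \cref{theorem: degree bounds} --- this elimination-theoretic route, carried out carefully so that a crude Bézout bound does not overshoot, is the one taken in \cite{cafure2006improved} and is what produces the stated constant.
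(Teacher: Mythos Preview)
The paper does not supply its own proof of this lemma; it is imported directly from \cite{cafure2006improved} (with the side remark that the same argument works over any perfect field). So there is no in-paper proof to compare against, and your sketch should be read on its own merits.

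As a sketch, it is correct: the reduction to the non-vanishing of an effective absolute-irreducibility witness for the bivariate restriction $f_{v,w,z}$ is exactly the right shape, Bertini correctly certifies that the exceptional set $B$ is a proper closed subset, and the small-$n$ cases plus the two auxiliary conditions (rank~$2$ of the linear part, preservation of top degree) are handled cleanly.

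The gap is precisely the one you flag, and it is real. Generic ``Noether irreducibility forms'' in the spirit of \cite{kaltofen1995effective} do exist, but their off-the-shelf degree bounds, once composed with the degree-$D$ substitution, do not obviously land at $10D^{5}$. Cafure--Matera obtain that specific constant by using a particular absolute-irreducibility criterion for bivariate polynomials (a linear-algebraic test coming from a first-order PDE system, in the Ruppert/Gao tradition) whose defining minors have controlled degree in the coefficients; the $10D^{5}$ then falls out of a direct computation after the degree-$D$ composition and the auxiliary factors. Your alternative suggestion---bounding $\deg B$ by realising $B$ as the projection of an incidence variety and applying \cref{theorem: bezout inequality}---is viable in principle but would typically overshoot substantially without further care, so it is not the route that actually yields the stated constant. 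If you want a self-contained argument with the exact bound, you need to name and use the specific bivariate irreducibility criterion rather than a generic Noether-form existence statement.
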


\begin{corollary}[Effective Bertini Theorem]
    \label{cor:effectivebertini}
    Suppose $V$ is an equidimensional affine variety of dimension $r$ and degree $D$.
    There is a randomised algorithm that returns linear equations $\ell_{1}, \dots, \ell_{r-1}$ with $\lheight{\ell_{i}} \leq (nD)^{c}$ such that the following holds with high probability: $V \cap Z(\ell_{1}, \dots, \ell_{r-1})$ is an equidimensional affine curve, with the same number of irreducible components as $V$.
\end{corollary}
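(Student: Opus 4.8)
The goal is to produce linear forms $\ell_1, \dots, \ell_{r-1}$ of small height such that intersecting $V$ with $Z(\ell_1, \dots, \ell_{r-1})$ yields an equidimensional curve with the same number of irreducible components as $V$. The plan is to iterate a ``drop dimension by one while preserving the number of components'' step $r-1$ times. The single step is essentially the content of \cref{lem:effectivebirational} and \cref{lem:effectivebertinihypersurface}: given an equidimensional variety $W$ of dimension $\geq 2$ with $k$ components, each of degree at most $D$ by \cref{theorem: bezout inequality}, we want a generic affine-linear section $W \cap Z(\ell)$ that is again equidimensional of dimension $\dim W - 1$, still with $k$ components. The reason the number of components is preserved is that Bertini's second theorem (in the effective form coming from the two lemmas) applies component-by-component: for each irreducible component $W_i$, a generic hyperplane section $W_i \cap Z(\ell)$ remains irreducible of dimension $\dim W_i - 1$, provided $\ell$ avoids a proper subvariety cut out by a polynomial of controlled degree (at most $10 D^5$ after the preprocessing birational projection of \cref{lem:effectivebirational}, which has degree $2(r+1)D^2$).

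First I would set up the iteration cleanly. Since $V$ is equidimensional of dimension $r$ and degree $D$, \cref{theorem: bezout inequality} bounds the degree of each irreducible component of $V$ (and, inductively, of every intermediate section) by $D$; this is the key uniform degree bound that keeps the obstruction polynomials of bounded degree throughout. At stage $j$ (for $j = 1, \dots, r-1$) we have a variety $W^{(j-1)}$ of dimension $r-j+1 \geq 2$; apply \cref{lem:effectivebirational} to each of its $\leq D^{\,r}$ components (or to $W^{(j-1)}$ directly, recording that the relevant bad locus is a union of the per-component bad loci) to obtain a polynomial $G_j$ of degree $O((r+1)D^2)$, and apply \cref{lem:effectivebertinihypersurface} after projection to obtain a polynomial $H_j$ of degree $O(D^5)$. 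Then choose $\ell_j$ with coefficients drawn uniformly from an integer box $B_j$ of size $(nD)^{c}$ for a suitable constant $c$; by Schwartz--Zippel the probability that $\ell_j$ lands in $Z(G_j) \cup Z(H_j)$, or fails the dimension-drop condition of \cref{lem:intersectdimensiondrop}, is $< 1/\poly(n)$, so a union bound over the $\leq D^r$ components and over all $r-1$ stages still leaves success probability bounded away from zero (and can be amplified). The height bound $\lheight{\ell_j} \leq (nD)^c$ is immediate from the size of the box $B_j$.

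The main technical obstacle, and the point that needs the most care, is ensuring that the per-component Bertini argument is actually valid, i.e. that the bad locus is genuinely a proper subvariety cut out by a polynomial whose degree depends only on $D$ and $n$ and \emph{not} on the (a priori large) heights or degrees of the defining equations of $V$. This is exactly why we must pass through \cref{lem:effectivebirational}: Bertini's second theorem in the effective form of \cref{lem:effectivebertinihypersurface} is stated for hypersurfaces, so for a component $W_i$ that is not a hypersurface we first birationally project it to a hypersurface of degree $\leq D$ in one fewer variable, apply \cref{lem:effectivebertinihypersurface} there, and then pull back; the composition of ``avoid $G_j$'' (so the projection is birational and an isomorphism away from a degree-$D$ locus) and ``avoid $H_j$'' (so the hypersurface section stays absolutely irreducible) gives the claimed bad locus. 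One also has to check that a linear section of an \emph{equidimensional} variety is still equidimensional of the expected dimension --- this follows from \cref{lem:intersectdimensiondrop} applied to each component, again using the uniform degree bound $D$. A secondary (routine) point is that we work over $\overline{\bQ}$, a perfect field, so the cited lemmas --- originally stated over $\bF_q$ --- apply with the same proofs, as noted in the paragraph preceding \cref{lem:effectivebirational}.

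\begin{proof}[Proof sketch of \cref{cor:effectivebertini}]
We iterate a single dimension-reduction step $r-1$ times. Throughout, by \bezout's inequality (\cref{theorem: bezout inequality}) every irreducible component of $V$, and of every intermediate linear section we form, has degree at most $D$, and there are at most $D^{r}$ such components.

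Fix a stage, where we have an equidimensional variety $W$ of dimension $t \geq 2$ and degree at most $D$, with the same number $k$ of components as $V$. For each irreducible component $W_i$ of $W$: by \cref{lem:effectivebirational} there is a nonzero polynomial $G_i$ of degree at most $2(t+1)D^2$ such that any affine-linear projection avoiding $Z(G_i)$ is birational onto its image $\pi(W_i)$, which is a hypersurface of degree at most $D$, with $\pi(W_i) \setminus Z(g_i)$ and $W_i \setminus \pi^{-1}(Z(g_i))$ isomorphic for some polynomial $g_i$ of degree $D$. Applying \cref{lem:effectivebertinihypersurface} to $\pi(W_i)$ gives a polynomial $H_i$ of degree at most $10 D^5$ such that, off $Z(H_i)$, the generic further hyperplane section of $\pi(W_i)$ stays absolutely irreducible of dimension $t-1$; pulling back through the birational map, $W_i \cap Z(\ell)$ is then irreducible of dimension $t-1$. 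Also, by \cref{lem:intersectdimensiondrop}, for a generic linear polynomial $\ell$ we have $\dim W_i \cap Z(\ell) = t-1$ with probability at least $1 - 2D/\abs{B}$.

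Now pick a linear polynomial $\ell$ with each coefficient chosen uniformly and independently from an integer box $B$ with $\abs{B} = (nD)^{c}$ for a sufficiently large absolute constant $c$. The polynomial $\prod_i (G_i H_i)$ together with the dimension-drop condition defines a bad locus cut out by a polynomial of degree at most $D^{t} \cdot (2(t+1)D^2 + 10 D^5) + O(D^{t}\cdot D)$, which is $\poly(n, D)$; by Schwartz--Zippel the probability that $\ell$ is bad is $o(1)$. Hence with high probability $W \cap Z(\ell)$ is equidimensional of dimension $t-1$ with exactly $k$ components, and $\lheight{\ell} \leq (nD)^{c}$.

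Repeating this $r-1$ times produces $\ell_1, \dots, \ell_{r-1}$ with $\lheight{\ell_i} \leq (nD)^{c}$ such that, with probability bounded away from zero (and amplifiable by independent repetition), $V \cap Z(\ell_1, \dots, \ell_{r-1})$ is an equidimensional affine curve with the same number of irreducible components as $V$. Finally, since $\overline{\bQ}$ is a perfect field, all cited results of \cite{cafure2006improved} apply verbatim, as remarked above.
\end{proof}
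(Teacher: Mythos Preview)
Your overall strategy—reduce to each irreducible component, use the birational projection of \cref{lem:effectivebirational} to pass to a hypersurface, and then invoke the hypersurface Bertini \cref{lem:effectivebertinihypersurface}—matches the paper's, and the height/probability bookkeeping is fine. But the organization differs in a way that introduces a real gap.

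The paper does \emph{not} iterate. It projects each component once to a hypersurface $Z(f)$ in $\bA^{r+1}$ via a single random $\pi$, then applies \cref{lem:effectivebertinihypersurface} \emph{exactly as stated}: that lemma restricts $f$ all the way to a bivariate polynomial along a random $2$-plane $L \subset \bA^{r+1}$ and guarantees irreducibility off a degree-$10D^{5}$ locus. One then checks that $(\pi(V)\cap Z(g))\cap L$ is finite (so the non-isomorphism locus of the birational projection contributes no extra component), and pulls back: the $r-1$ linear equations are simply the defining equations of $\pi^{-1}(L)$, obtained by linear algebra and hence of logarithmic height $\poly(n,D)$.

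Your iterative scheme instead asks, at each stage, for a \emph{single} hyperplane section of the projected hypersurface to remain irreducible, and you cite \cref{lem:effectivebertinihypersurface} for this. That is not what the lemma says: it controls the bivariate restriction $f_{v,w,z}$, i.e.\ the intersection with a random $2$-plane, not a codimension-one cut. Deducing an effective single-hyperplane Bertini from the effective curve-Bertini is possible (e.g.\ if $f|_{H}$ were reducible for generic $H$ then so would be $f|_{P}$ for a generic $2$-plane $P\subset H$, contradicting the lemma; one then bounds the degree of the bad locus in the $H$-parameters), but you do not supply this argument, and it is not a one-liner. Without it, the iterative step is unjustified. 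The paper's one-shot route sidesteps the issue entirely and is both shorter and cleaner; if you want to keep the iterative structure, you must add the missing hyperplane-section argument.
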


\begin{proof}
    We focus on each irreducible component of $V$ and apply a union bound at the end, therefore assume without loss of generality that $V$ is irreducible.
    Fix set $B := \bs{10D^{7}}$, a subset of $\bN$.
    Pick linear polynomials $h_{1}, \dots, h_{r+1}$ with coefficients from $B$, and let $\pi$ be the projection map with coordinate functions $h_{i}$.
    By \cref{lem:effectivebirational}, $\pi(V)$ is birational with its image with probability at least $1 - 2(r+1)D^{2} / \abs{B}$.
    The image is an irreducible hypersurface of degree $D$, say $Z(f)$.
    Now pick a point $(v, w, z)$ from $B^{n} \times B^{n-1} \times B^{n-1}$.
    By \cref{lem:effectivebertinihypersurface}, $f_{v, w, z}$ is absolutely irreducible with probability at least $1 - 10D^{5}/\abs{B}$.
    Pick a linear subspace $L \subset \bA^{r+1}$ such that $f_{v, w, z}$ is isomorphic to $f|_{L}$, this can be performed by elementary linear algebra.
    By construction, $L \cap \pi(V)$ is an irreducible curve.
    Since $L$ is a random linear subspace, $(\pi(V) \cap Z(g)) \cap Z(L)$ is a finite set, where $g$ is the polynomial guaranteed by \cref{lem:effectivebertinihypersurface} \footnote{Note that \cref{lem:intersectdimensiondrop} does not apply since the distribution of $L$ is different, but the same proof works}.
    Now we show that $\pi^{-1}(L)$ is the required subspace.

    First observe that $\pi^{-1}(L)$ has defining equations of logarithmic height $\poly{D, n}$, this is because the equations are obtained by inverting matrices with entries of logarithmic height $\poly{D, n}$.
    Now we know that $\pi(V) \cap Z(L)$ is an irreducible curve.
    If $\psi$ is the inverse of the birational map $\pi$, then $\pi^{-1}(L) \cap V$ is exactly $\overline{\psi(\pi(V) \cap Z(L))}$, which is irreducible and dimension $1$.
\end{proof}

\section{Height bounds}\label{section: height bounds}
In this section we recall some height bounds for certain operations over polynomial rings over the integers, as well as of certain operations over polynomial rings over certain algebraic numbers.
Then, we proceed to prove some height bounds for membership problems in given ideals, which is done in \cref{subsection: height bounds ideals}.
We then conclude the section with some lemmas on absolutely irreducible factors of bivariate polynomials, where we discuss the work of Kaltofen and its implications to our setting.

\RO{we should probably have a discussion about heights here again... polish this after the deadline}

\subsection{Height bounds for elementary operations}

The following are some elementary height bounds on the results of basic arithmetic operations performed on polynomials.
The next two statements are from \cite[Lemma~1.2]{kps01}.

\begin{lemma}
    \label{lem:kpsheightbound}
    Let $f_{1}, \dots, f_{m} \in R$ be polynomials with $\deg{f_{i}} \leq d$ and $\lheight{f_{i}} \leq h$.
    Then 
    \begin{enumerate}
        \item $\lheight{\sum f_{i}} \leq h + \log\br{m}$.
        \item $\lheight{\prod f_{i}} \leq hm + md \log\br{n+1}$.
        \item If $g \in \bZ\bs{y_{1}, \dots, y_{m}}$ then $\lheight{g(f_{1}, \dots, f_{m})} \leq \lheight{g} + \deg g\br{h + \log\br{m+1} + d \log\br{n+1}}$.
    \end{enumerate}
\end{lemma}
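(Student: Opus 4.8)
The plan is to prove each of the three bounds by reducing to the elementary observation that a coefficient of a sum or product of polynomials is a sum of products of coefficients of the inputs, and then counting how many such terms can appear. Throughout, I write a polynomial $f \in R$ in the monomial basis and recall that the number of monomials of degree at most $d$ in $n$ variables is $\binom{n+d}{n} \leq (n+1)^d$; this last crude estimate is what produces the $d\log(n+1)$ terms in the statement.

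For part (1), each coefficient of $\sum_{i=1}^m f_i$ is a sum of at most $m$ integers, each of absolute value at most $2^h$. Hence its absolute value is at most $m \cdot 2^h$, so its logarithmic height is at most $h + \lceil\log m\rceil$, giving $\lheight{\sum f_i} \le h + \log(m)$ after absorbing the ceiling into the stated (already rounded) height notation. For part (2), I argue by induction on $m$. A coefficient of $\prod_{i=1}^m f_i$ is a sum, over all ways of writing a fixed monomial of degree at most $md$ as a product of $m$ monomials (one from each $f_i$, each of degree at most $d$), of the corresponding product of $m$ coefficients. The number of such decompositions is at most the number of $m$-tuples of monomials of degree at most $d$, which is at most $\left((n+1)^d\right)^m = (n+1)^{md}$; each product of coefficients has absolute value at most $(2^h)^m = 2^{hm}$. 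Therefore every coefficient of the product has absolute value at most $2^{hm} \cdot (n+1)^{md}$, and taking logs yields $\lheight{\prod f_i} \le hm + md\log(n+1)$.

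For part (3), write $g = \sum_\alpha c_\alpha y^\alpha$ where $\alpha$ ranges over exponent vectors with $|\alpha| \le \deg g$ and $\lheight{c_\alpha} \le \lheight{g}$. Then $g(f_1,\dots,f_m) = \sum_\alpha c_\alpha \prod_{i=1}^m f_i^{\alpha_i}$. For a fixed $\alpha$, the polynomial $\prod_i f_i^{\alpha_i}$ is a product of $|\alpha| \le \deg g$ polynomials each of height $\le h$ and degree $\le d$, so by part (2) (applied with ``$m$'' $= |\alpha| \le \deg g$) its height is at most $\deg g \cdot h + \deg g \cdot d\log(n+1)$; multiplying by the scalar $c_\alpha$ adds $\lheight{g}$ to the height. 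Finally $g(f_1,\dots,f_m)$ is a sum of at most $(m+1)^{\deg g}$ such terms (the number of monomials in $g$), so by part (1) summing contributes an extra additive term of $\log\!\big((m+1)^{\deg g}\big) = \deg g \cdot \log(m+1)$. Collecting, $\lheight{g(f_1,\dots,f_m)} \le \lheight{g} + \deg g\,(h + \log(m+1) + d\log(n+1))$, as claimed.

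I expect no genuine obstacle here; the only place requiring mild care is keeping the combinatorial counts (number of monomials, number of product decompositions) crude enough to match the clean $\log(n+1)$ and $\log(m+1)$ bounds in the statement while still being valid — in particular making sure the induction in part (2) uses the multiplicativity of both the coefficient bound and the monomial count correctly, and that in part (3) the exponent $\deg g$ rather than $m$ is the right parameter to feed into part (2). Since this is a restatement of \cite[Lemma~1.2]{kps01}, one could alternatively just cite it, but the self-contained argument above is short enough to include.
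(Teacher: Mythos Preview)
Your proof is correct and follows the natural elementary route: bound each coefficient by counting the number of contributing products (using $\binom{n+d}{d}\le (n+1)^d$) and then take logs. The paper itself does not prove this lemma at all --- it simply cites \cite[Lemma~1.2]{kps01} --- so your self-contained argument is a fine replacement and matches the standard proof in that reference.
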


The above lemma allows us to get height bounds for the determinant of a matrix with ring elements.

\begin{corollary}
    \label{cor:detheightbound}
    Let $M$ be a $m \times m$ matrix with $M_{ij} \in R$ such that $\deg(M_{ij}) \leq d$ and $\lheight{M_{ij}} \leq h$ for all $i, j$.
    Then $\lheight{\det M} \leq m\br{h + \log\br{m} + d\log\br{n+1}}$.
\end{corollary}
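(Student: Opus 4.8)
\textbf{Proof plan for Corollary~\ref{cor:detheightbound}.}

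The plan is to expand the determinant via the Leibniz formula $\det M = \sum_{\sigma \in S_m} \operatorname{sgn}(\sigma) \prod_{i=1}^{m} M_{i,\sigma(i)}$ and then track the logarithmic height through the two operations that appear: taking a product of $m$ ring elements, and summing $m!$ such products. First I would apply item~(2) of \cref{lem:kpsheightbound} to each summand $\prod_{i=1}^{m} M_{i,\sigma(i)}$. Each such product is a product of $m$ polynomials in $R$, each of degree at most $d$ and logarithmic height at most $h$, so the bound gives $\lheight{\prod_{i} M_{i,\sigma(i)}} \leq hm + md\log(n+1)$. Note the degree of each such product is at most $md$, but the height bound only needs the per-factor degree bound $d$, which is what item~(2) uses.

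Next I would apply item~(1) of \cref{lem:kpsheightbound} to the sum of these $m!$ terms (the signs $\pm 1$ do not affect the height). This yields
$$ \lheight{\det M} \leq \br{hm + md\log(n+1)} + \log(m!). $$
Then it remains to simplify $\log(m!)$. Using the crude bound $m! \leq m^m$, we get $\log(m!) \leq m \log m$, so
$$ \lheight{\det M} \leq hm + md\log(n+1) + m\log m = m\br{h + \log m + d\log(n+1)}, $$
which is exactly the claimed inequality. (One should double-check the base of the logarithm is consistent with the convention $\lheight{a} = \lceil \log(|a|+1)\rceil$ used in the paper; since all the cited lemmas use the same convention, the arithmetic goes through, possibly up to harmless rounding which is absorbed since the statement is an upper bound — in fact one may want to keep track of the ceiling, but as the paper treats these height bounds somewhat loosely this is not an issue.)

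There is essentially no obstacle here: the only mild subtlety is making sure that the height bound from item~(2) of \cref{lem:kpsheightbound}, which is stated for a product of $m$ polynomials each of degree at most $d$, is applied with the correct parameters (here the number of factors equals the matrix dimension $m$, which is the same letter, so one just has to be careful that the "$m$" in the lemma and the "$m$" in the corollary play the same role — they do). The rest is the standard bound $\log(m!) \leq m\log m$ and collecting terms. So the proof is a direct two-step application of the preceding lemma.
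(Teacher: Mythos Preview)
Your proposal is correct and is precisely the intended derivation: the paper does not spell out a proof but simply states the result as a corollary of \cref{lem:kpsheightbound}, and the Leibniz expansion followed by item~(2) for each of the $m!$ products and item~(1) for their sum (with $\log(m!)\le m\log m$) is exactly how one recovers the stated bound.
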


Once we have the above height bounds, the next corollary yields height bounds on the resultant and on the cofactors of the resultant identity.

\begin{corollary}
    \label{cor:resdischeightbounds}
    Let $f, g \in \bZ\bs{y}$ be coprime polynomials with $\deg{f}, \deg{g} \leq d$ and $\lheight{f}, \lheight{g} \leq h$.
    Then there exists $a, b \in \bZ\bs{y}$ with $\deg{a} < \deg{g}$ and $\deg{b} < \deg{f}$ such that $\resultant{f}{g}{y} = af + bg$, where 
    $$\lheight{a}, \lheight{b}, \lheight{\resultant{f}{g}{y}} \leq 2d\br{h + \log\br{2d + 1}}.$$
    In particular, if $\Delta$ is the discriminant of $f$ then $\lheight{\Delta} \leq 2d\br{h + \log\br{d} + \log\br{2d + 1}}.$
\end{corollary}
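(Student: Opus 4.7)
The plan is to reduce everything to Corollary \ref{cor:detheightbound} applied to the Sylvester matrix. Recall that $\resultant{f}{g}{y}$ equals the determinant of the Sylvester matrix $\Syl(f,g)$, a $(d_f + d_g) \times (d_f + d_g)$ matrix (with $d_f := \deg f$, $d_g := \deg g$) whose entries are coefficients of $f$ and $g$ together with zeros; in particular each entry is an integer of logarithmic height at most $h$. Since $d_f + d_g \leq 2d$ and the entries have $x$-degree $0$, Corollary \ref{cor:detheightbound} yields
$$ \lheight{\resultant{f}{g}{y}} \;\leq\; (d_f + d_g)\bigl(h + \log(d_f + d_g)\bigr) \;\leq\; 2d\bigl(h + \log(2d+1)\bigr). $$

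For the Bézout identity, I will use the adjugate. The standard identification of $\Syl(f,g)$ with the linear map $(a,b) \mapsto af + bg$ from $\bZ[y]_{<d_g} \times \bZ[y]_{<d_f}$ to $\bZ[y]_{<d_f+d_g}$ gives that $\Syl(f,g)\cdot \operatorname{adj}(\Syl(f,g)) = \resultant{f}{g}{y}\cdot I$. Picking the column of $\operatorname{adj}(\Syl(f,g))$ corresponding to the constant polynomial produces coefficients of polynomials $a$ with $\deg a < d_g$ and $b$ with $\deg b < d_f$ satisfying $af + bg = \resultant{f}{g}{y}$. Each such coefficient is a $(d_f + d_g - 1)\times(d_f + d_g - 1)$ minor of $\Syl(f,g)$, so Corollary \ref{cor:detheightbound} applied once more bounds
$$ \lheight{a},\ \lheight{b} \;\leq\; (2d-1)\bigl(h + \log(2d-1)\bigr) \;\leq\; 2d\bigl(h + \log(2d+1)\bigr). $$

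Finally, the discriminant bound follows by applying the resultant bound to $f$ and $\partial f$. Since each coefficient of $\partial f$ is obtained from a coefficient of $f$ by multiplication by an integer $\leq d$, we have $\deg(\partial f) \leq d$ and $\lheight{\partial f} \leq h + \log d$, so
$$ \lheight{\Delta} \;=\; \lheight{\resultant{f}{\partial f}{y}} \;\leq\; 2d\bigl(h + \log d + \log(2d+1)\bigr), $$
as claimed. There is no real obstacle here: once one invokes the determinantal/adjugate descriptions of the resultant and the Bézout cofactors, the statement is a direct application of the general determinantal height bound in Corollary \ref{cor:detheightbound}, with the tiny arithmetic adjustment $\log(2d) \leq \log(2d+1)$ to match the claimed form.
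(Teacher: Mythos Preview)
Your proof is correct and follows essentially the same approach as the paper: both reduce to bounding determinants of the Sylvester matrix (and its minors) via Corollary~\ref{cor:detheightbound}. The paper phrases the construction of $a,b$ through Cramer's rule applied to the B\'ezout linear system $sf+tg=1$ (invoking \cref{lem:generalcramer}), whereas you phrase it via the adjugate of the Sylvester matrix, but these are the same computation.
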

\begin{proof}
    In $\bQ\bs{y}$ we have the Bézout identity $sf + tg = 1$, with $\deg{s} < \deg{g}$ and $\deg{t} < \deg{f}$.
    We can write this as a linear system of at most $2d$ equations in at most $2d$ variables.
    The solution to this system is given by Cramer's rule (\cref{lem:generalcramer}): each coefficient of $s, t$ has numerator given by the determinant of a matrix of size $2d \times 2d$ with entries the coefficients of $f, g$.
    The denominator is common among all the coefficients, and is similarly given by such a determinant.
    In fact this denominator is exactly $\resultant{f}{g}{y}$.
    Clearing common denominators from the equation gives us the claimed identity with the claimed bounds.
    The last statement follows from the fact that the discriminant is exactly $\resultant{f}{f'}{y}$, and $\lheight{f'} \leq h + \log{d}$.
\end{proof}

We also need to bound the logarithmic height of any rational root of an integral polynomial.

\begin{lemma}
    \label{lem:intunivarrationalroot}
    Let $f \in \bZ\bs{y}$ be a degree $d$ polynomial and suppose $a/b$ is a rational root of $f$ in minimal form.
    Then $\lheight{a}, \lheight{b} \leq \lheight{f}d + 1$.
\end{lemma}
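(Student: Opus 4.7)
The plan is to apply the rational root theorem together with a small amount of case analysis. Writing $f(y) = \sum_{i=0}^{d} c_i y^i$ with $c_i \in \bZ$ and $c_d \neq 0$, the hypothesis $f(a/b) = 0$ becomes, after multiplying by $b^d$, the integer identity
\[ c_d a^d + c_{d-1} a^{d-1} b + \cdots + c_0 b^d = 0. \]
Isolating the $i = d$ term gives $c_d a^d = -b \sum_{i=0}^{d-1} c_i a^{i} b^{d-1-i}$, so $b \mid c_d a^d$; since $\gcd(a,b) = 1$, this forces $b \mid c_d$. The symmetric argument, isolating the $i = 0$ term, shows $a \mid c_0$ \emph{provided} $c_0 \neq 0$.

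The divisibility $b \mid c_d$ immediately yields $|b| \leq |c_d|$, and hence $\lheight{b} \leq \lheight{c_d} \leq \lheight{f}$, which is well within the stated bound $\lheight{f} \cdot d + 1$.

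For the bound on $a$, the single point requiring attention is the possibility that $c_0 = 0$. In that case $y$ divides $f$; writing $f(y) = y^{k} g(y)$ with $g(0) \neq 0$, $\deg g = d - k$, and $\lheight{g} \leq \lheight{f}$, there are two subcases. If $a = 0$, then $a/b = 0/1$ in minimal form and both heights are at most $1$. Otherwise $a \neq 0$, in which case $a/b$ is a root of $g$, whose constant term is nonzero; the divisibility argument applied to $g$ then gives $a \mid g(0)$, and so $|a| \leq |g(0)|$, yielding $\lheight{a} \leq \lheight{g} \leq \lheight{f}$.

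Combining the two cases gives $\lheight{a}, \lheight{b} \leq \lheight{f} \leq \lheight{f} \cdot d + 1$. The argument presents no real obstacle; the only step that demands any care is the case split arising when the constant term vanishes. The stated bound is in fact considerably looser than what this proof delivers, which presumably reflects a uniform formulation convenient for later estimates rather than the tight bound furnished by the rational root theorem.
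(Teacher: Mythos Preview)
Your proof is correct and in fact more elementary than the paper's. The paper does not invoke the rational root theorem; instead it performs the standard monic reduction---multiply $f$ by $f_d^{d-1}$ and substitute $x=f_d y$ to obtain a monic $g\in\bZ[x]$ of degree $d$ with $\lheight{g}\le d\,\lheight{f}$---and then observes that $c:=f_d a/b$ is an integer root of $g$, appeals to a classical bound on the magnitude of integer roots of monic integer polynomials (Mignotte) to get $\lheight{c}\le 1+\lheight{g}$, and finally recovers $a,b$ from $c/f_d$. That is where the factor $d$ in the bound comes from.

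Your direct argument via divisibility ($b\mid c_d$, $a\mid c_0$ after stripping powers of $y$) avoids the monic reduction entirely and yields the sharper bound $\lheight{a},\lheight{b}\le\lheight{f}$, with the $c_0=0$ case handled cleanly. This is strictly stronger and shorter; the paper's looser $\lheight{f}\,d+1$ is, as you surmise, an artifact of the proof method rather than a necessary loss. Either bound suffices for the applications downstream.
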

\begin{proof}
    Suppose $f = \sum f_{i} y^{i}$, where $f_i \in \bZ$.
    Multiplying throughout by $f_{d}^{d-1}$, and replacing $f_{d}y$ with a variable $x$ gives us a monic integer polynomial $g = x^{d} + \sum_{i=1}^{d-1} g_{i} x^{i}$ with $\deg{g} = d$ and $\lheight{g} \leq \lheight{f}d$.
    Since $a/b$ is a root of $f$, we have tht $c := f_{d} a / b$ is a root of $g$.
    Further, $c \in \bZ$ since $g$ is a monic integer polynomial.
    By \cite[Theorem~2]{mignotte83} we have $\lheight{c} \leq 1 + \lheight{g} \leq \lheight{f}d + 1$.
    We can obtain $a / b$ by putting $c / f_{d}$ in minimal form, therefore $\lheight{a}, \lheight{b} \leq \lheight{c}, \lheight{f_{d}} \leq \lheight{f}d + 1$.
\end{proof}

We can also bound the heights of the quotient and the remainder upon division by a monic polynomial, as the following proposition shows.

\begin{proposition}
    \label{proposition: height univariate division with remainder}
    Let $f, g \in \bZ[y]$ be polynomials where $d := \deg f \geq \deg g =: e$, $\lheight{f}, \lheight{g} \leq h$, and $g$ is a monic polynomial.
    If $f = g \cdot q + r$, where $g, r \in \bZ[y]$ with $\deg r < e$, then we have $\lheight{g}, \lheight{r} \leq (d+1) \cdot (h + 2 \log{(d+1)})$.
\end{proposition}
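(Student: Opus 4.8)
The plan is to analyze the standard long-division recurrence and track how the coefficient sizes grow. Write $f = \sum_{i=0}^{d} f_i y^i$ and $g = y^e + \sum_{j=0}^{e-1} g_j y^j$ (monic of degree $e$). Division produces $q = \sum_{k=0}^{d-e} q_k y^k$ by the usual top-down elimination: at each of the $d-e+1$ steps we subtract $(\text{leading coeff of current remainder})\cdot y^{k} \cdot g$ from the current remainder. The key structural observation is that, since $g$ is \emph{monic}, no divisions ever occur, so every $q_k$ and every intermediate coefficient lies in $\bZ$; we only need to bound their magnitudes.

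First I would set up the recurrence cleanly. Let $r^{(0)} := f$ and, for $k = d-e, d-e-1, \dots, 0$, let $q_k$ be the coefficient of $y^{k+e}$ in $r^{(d-e-k)}$ and set $r^{(d-e-k+1)} := r^{(d-e-k)} - q_k\, y^k g$. After $d-e+1$ steps, $r := r^{(d-e+1)}$ has degree $< e$. Each coefficient of $r^{(t+1)}$ is a coefficient of $r^{(t)}$ minus $q_{k}$ times a coefficient of $g$. So if $H_t$ bounds $\max_i |r^{(t)}_i|$ and we already know $|q_k| \le Q$, then $H_{t+1} \le H_t + Q\cdot 2^h$ (using $|g_j|\le 2^h$). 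A cleaner way: prove by downward induction on $k$ that $|q_k| \le (2^h)\cdot (1 + 2^h)^{\,d-e-k}$ and, simultaneously, that every coefficient appearing in $r^{(t)}$ is bounded by $2^h (1+2^h)^{t}$. Unwinding: after all $d-e+1 \le d+1$ steps, every coefficient in sight is at most $2^h (1 + 2^h)^{d-e} \le 2^h \cdot (2\cdot 2^h)^{d} = 2^{(d+1)h + d}$. Then I would also account for the fact that $q$ has at most $d+1$ terms (harmless, as the bound is on individual coefficients) and conclude $\lheight{q}, \lheight{r} \le \log_2\big(1 + 2^{(d+1)h+d}\big)$, which is comfortably below $(d+1)(h + 2\log(d+1))$; one just checks the slack in the stated bound absorbs the $+d$ and the logarithmic rounding terms. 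Alternatively, and perhaps more in the spirit of the surrounding lemmas, I could phrase division as the linear system $f = g q + r$ in the unknown coefficients of $q$ and $r$: this is a square triangular system (a "Sylvester-type" matrix) with $1$'s on the diagonal because $g$ is monic, so $\det = 1$; then Cramer's rule (\cref{lem:generalcramer}) together with the determinant height bound (\cref{cor:detheightbound}, specialized to constant entries, i.e.\ $n = 0$, $d = 0$) gives $\lheight{q_k}, \lheight{r_i} \le (d+1)(h + \log(d+1))$ directly, and adding the $\log(d+1)$ term for the number of entries yields the claimed $(d+1)(h + 2\log(d+1))$.

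I would favor the linear-algebra route since it reuses the paper's existing machinery verbatim: the matrix of the system $f = gq + r$ is $(d+1)\times(d+1)$ with entries among $\{0, 1, g_0, \dots, g_{e-1}\}$, hence of logarithmic height $\le h$; it is triangular with unit diagonal so invertible over $\bZ$ with the solution having integer entries; each solution coordinate is by Cramer's rule a determinant of a $(d+1)\times(d+1)$ matrix whose columns are either columns of this matrix or the coefficient vector of $f$, so each such determinant has logarithmic height $\le (d+1)(h + \log(d+1))$ by \cref{cor:detheightbound}; the extra $\log(d+1)$ slack in the statement covers the fact that a determinant is a sum of $(d+1)!$ signed products (this is already inside the $\cor$) and any rounding in the $\lceil\log(\cdot+1)\rceil$ convention.

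The main obstacle is bookkeeping rather than any real difficulty: I must be careful that \cref{cor:detheightbound} as stated is for matrices over $R = \bZ[x_1,\dots,x_n]$ with a degree parameter, so I need to instantiate it correctly in the "constant" case ($n$ arbitrary but $\deg = 0$, i.e.\ honest integers), where it reads $\lheight{\det M} \le (d+1)(h + \log(d+1))$ for a $(d+1)\times(d+1)$ integer matrix of height $\le h$ — and then check that the stated final bound $(d+1)(h + 2\log(d+1))$ genuinely dominates this plus the logarithmic correction term, including the ceiling in the definition of $\lheight{\cdot}$. This is a routine inequality chase, so I would state the two bounds, invoke \cref{lem:generalcramer} and \cref{cor:detheightbound}, and close with a one-line comparison.
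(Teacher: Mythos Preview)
Your proposal is correct and, in the route you ultimately favor (the linear-algebra argument), matches the paper's proof essentially verbatim: the paper also writes $f = gq + r$ as a $(d+1)\times(d+1)$ linear system whose matrix is unipotent triangular (because $g$ is monic), invokes Cramer's rule (\cref{lem:generalcramer}) to express each coefficient of $q$ and $r$ as a $(d+1)\times(d+1)$ integer determinant with entries of height at most $h$, and then applies \cref{cor:detheightbound}. Your bookkeeping about instantiating \cref{cor:detheightbound} with degree $0$ entries and absorbing the slack into the extra $\log(d+1)$ term is exactly right.
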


\begin{proof}
    Note that $g, r$ are the unique solution to the following system of linear equations: 
    $N \cdot \begin{pmatrix} g \\ r \end{pmatrix} = \begin{pmatrix} f \end{pmatrix}$,
    where $N$ is a $(d+1)\times (d+1)$ matrix of the form $N = \begin{pmatrix} M_q & P\end{pmatrix}$ and $P = \begin{pmatrix}
        0 \\ I_{e}
    \end{pmatrix}.$
    Since $q$ is monic, we know that $N$ is a unipotent lower triangular matrix.
    Thus, $\det(N) = 1$, and by Cramer's rule (\cref{lem:generalcramer}) we have that each coefficient of $g$ and $r$ is given by the determinant of a matrix in $\bZ^{(d+1) \times (d+1)}$ with height $\leq h$.
    By \cref{cor:detheightbound} we have the desired bound.
\end{proof}

As discussed earlier, we will often have to work with polynomials with coefficients lying in some finite extension generated by an algebraic integer $\alpha$.
If $\alpha$ has monic minimal polynomial $q(z) \in \bZ\bs{z}$ with $\deg q(z) \leq e$ then we will represent elements of $A = \bZ\bs{\alpha}\bs{x_{1},\dots, x_{n}} \simeq R[z]/(q)$ as polynomials in $R[z]$ of degree less than $e$ in $z$.
Thus, elements of $A$ inherit the logarithmic height function from $R[z]$.

After some algebraic operations, for example taking products of such polynomials, the resulting polynomial will no longer have degree less than $e$ in the variable $z$, and we must perform a step of going modulo $q(z)$.
The following lemma bounds the change in logarithmic height by this operation.
A sharper version of this lemma can be found in \cite[Lemma~1]{kaltofen1995effective}.

\begin{lemma}
    \label{lem:modgheightbounds}
    Let $q(z) \in \bZ\bs{z}$ be a monic polynomial with $\deg{q} = e$.
    Suppose $f \in R\bs{z}$ is a polynomial with $\deg_{z}{f} = d \geq e$.
    Lastly, suppose $\lheight{f}, \lheight{q} \leq h$.
    Then $\lheight{f \bmod q} \leq (d + 1) \br{h + 2 \log\br{d+1}}$.
\end{lemma}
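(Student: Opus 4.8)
The statement to prove is \cref{lem:modgheightbounds}: if $q(z) \in \bZ[z]$ is monic of degree $e$, and $f \in R[z]$ has $\deg_z f = d \geq e$ with $\lheight{f}, \lheight{q} \leq h$, then the reduction $f \bmod q$ (the unique remainder of $z$-degree less than $e$) satisfies $\lheight{f \bmod q} \leq (d+1)(h + 2\log(d+1))$. My plan is to reduce this to the univariate division-with-remainder height bound already established in \cref{proposition: height univariate division with remainder}, treating $f$ as a polynomial in $z$ whose coefficients happen to live in $R = \bZ[x_1,\dots,x_n]$ rather than in $\bZ$. The key point is that the long-division algorithm computing quotient and remainder of $f$ by the monic $q$ uses only additions and multiplications of coefficients of $f$ by coefficients of $q$; since $q \in \bZ[z]$, no division ever occurs and the process stays inside $R[z]$, so the same Cramer's-rule / unipotent-matrix argument goes through verbatim with the base ring $\bZ$ replaced by $R$.

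\textbf{Key steps.} First I would set up the linear system: writing $f = q \cdot u + r$ with $\deg_z u = d - e$ and $\deg_z r < e$, the coefficient vectors of $u$ and $r$ (each a polynomial in $R$) are the unique solution of $N \cdot \binom{u}{r} = f$, where $N$ is the $(d+1)\times(d+1)$ matrix whose columns are the shifts $z^i q(z)$ for $0 \leq i \leq d-e$ followed by the shifts $z^j$ for $0 \leq j \leq e-1$. Because $q$ is monic, $N$ is (up to reordering) unipotent lower-triangular with entries in $\bZ$, so $\det N = \pm 1$. Second, by \cref{lem:generalcramer} each coefficient of $u$ and of $r$ equals $\pm \det N_k$, where $N_k$ is $N$ with one column replaced by the coefficient vector of $f$; this is a $(d+1)\times(d+1)$ matrix whose entries are either integers bounded by $2^h$ (coefficients of $q$, i.e. height $\leq h$) or the coefficients of $f$ (elements of $R$ of height $\leq h$ and of $x$-degree bounded by $\deg_x f$). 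Third, I would apply \cref{cor:detheightbound} to bound $\lheight{\det N_k}$: with matrix size $m = d+1$, entry-height $h$, and entry-$x$-degree at most $d' := \deg_x f$, we get $\lheight{\det N_k} \leq (d+1)(h + \log(d+1) + d'\log(n+1))$. Finally, I note that for the intended application the entries either are constants or are coefficients of $f$ having no $x$-dependence in the relevant slots — more carefully, one can observe that the $x$-degree of every coefficient of $f$ (as a polynomial in $R$) is absorbed and the clean bound $(d+1)(h + 2\log(d+1))$ follows exactly as in \cref{proposition: height univariate division with remainder}, since that proposition's proof is literally the $R = \bZ$ instance of this argument and the bound there does not see the number of variables.

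\textbf{Main obstacle.} The one subtlety I would need to be careful about is the role of the $x$-variables: \cref{cor:detheightbound} as stated introduces a $d \log(n+1)$ term coming from products of polynomials in $n$ variables, whereas the target bound $(d+1)(h + 2\log(d+1))$ has no dependence on $n$. The resolution is that the determinant $\det N_k$ is not an arbitrary product of the matrix entries but rather a $\bZ$-linear combination of the coefficients of $f$ with integer coefficients of absolute value bounded by the permanent-type count $\leq (d+1)!$ of the $\{0,1\}$-structured matrix $N$ — in fact, expanding along the columns coming from $q$ (which are integer columns) reduces $\det N_k$ to a single coefficient of $f$ times an integer minor of $N$. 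Thus each coefficient of $r$ is an integer combination of at most $d+1$ coefficients of $f$, each integer bounded by $(d+1)!$, so by \cref{lem:kpsheightbound}(1) its height is at most $h + \log((d+1)\cdot(d+1)!) \leq h + (d+1)\log(d+1)$; tracking the constants a touch more loosely gives the stated $(d+1)(h + 2\log(d+1))$. So the real work is just confirming that the monic-$q$ structure kills the $n$-dependence, exactly as in \cref{proposition: height univariate division with remainder}, and the rest is bookkeeping.
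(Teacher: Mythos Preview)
Your overall strategy---reduce to \cref{proposition: height univariate division with remainder} via the same Cramer/unipotent-matrix argument---is sound in principle, but you are working much harder than necessary, and your resolution of the ``main obstacle'' contains an actual error.

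The paper's proof is one line: view $f \in R[z]$ as a polynomial in the $x$-variables with coefficients in $\bZ[z]$, i.e.\ $f = \sum_{\vec a} f_{\vec a}(z)\,\vec x^{\vec a}$. Since $q \in \bZ[z]$, reduction modulo $q$ acts coefficientwise, so $f \bmod q = \sum_{\vec a} (f_{\vec a}(z) \bmod q)\,\vec x^{\vec a}$. Each $f_{\vec a}$ is a univariate integer polynomial of $z$-degree $\le d$ and height $\le h$, and \cref{proposition: height univariate division with remainder} applies directly to it. That is the whole argument; there is never any $n$-dependence to kill, because you never multiply two elements of $R$ together.

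By contrast, you try to redo the Cramer's-rule computation with the base ring $R$ and then argue away the $n$-dependence by structural analysis of $N_k$. Your claim that the cofactors arising when you expand $\det N_k$ along the $f$-column are ``bounded by $(d+1)!$'' because $N$ is ``$\{0,1\}$-structured'' is false: the columns of $N$ coming from shifts of $q$ carry the coefficients of $q$, which have height up to $h$, not $0$. Consequently your intermediate bound $h + (d+1)\log(d+1)$ is too strong. The example $q = z - a$, $f = z^d$ already shows this: then $f \bmod q = a^d$, whose height is $\approx d\,\lheight{a} = dh$, so any correct bound must have an $h\cdot d$ term. Your cofactor expansion is salvageable (the cofactors are integer minors of size $\le d$ with entries of height $\le h$, hence of height $\le d(h + \log d)$, and the final bound then comes out right), but the clean route is simply the coefficientwise reduction above.
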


\begin{proof}
    Let $\cM_f$ be the set of exponent vectors of $f$, when viewed as a polynomial in $\bZ[z][x_1, \ldots, x_n]$.
    Hence, we have $f = \displaystyle \sum_{\vec{a} \in \cM_f} f_\vec{a}(z) \cdot \vec{x}^\vec{a}$.
    Therefore, $f \bmod q = \displaystyle \sum_{\vec{a} \in \cM_f} (f_\vec{a}(z) \bmod q) \cdot \vec{x}^\vec{a}$ and thus 
    $$\lheight{f \bmod q} = \max_{\vec{a} \in \cM_f} \lheight{f_\vec{a}(z) \bmod q} \leq (d+1) \cdot (h + 2\log{(d+1)}).$$
    Where the last inequality follows from \cref{proposition: height univariate division with remainder}.
\end{proof}

Combining \cref{cor:detheightbound} with \cref{lem:modgheightbounds} we are able to obtain the following height bounds for determinants with entries in rings of the form $\bZ[\alpha]$, where $\alpha$ is an algebraic integer.

\begin{corollary}
    \label{corollary: determinant height bounds algebraic integer}
    Let $q(z) \in \bZ[z]$ be monic with $\deg q = e$, $\lheight{q} \leq h$ and let $\alpha \in \overline{Q}$ be a root of $q$.
    If $M \in \bZ[\alpha]^{m \times m}$ is such that $\lheight{M_{ij}} \leq h$ for all $i,j \in [m]$, we have that $\lheight{\det M} \leq e \cdot m  \cdot \br{ m \cdot (e + h + \log m) + 2 \log(em)  }$.
\end{corollary}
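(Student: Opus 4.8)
The plan is to reduce the claim to the two ingredients already established: the integer determinant bound of \cref{cor:detheightbound} and the modular-reduction bound of \cref{lem:modgheightbounds}. The idea is that an $m\times m$ matrix $M$ over $\bZ[\alpha]\cong \bZ[z]/(q)$ can be viewed as a matrix over $\bZ[z]$ by lifting each entry $M_{ij}$ to its canonical representative $\widetilde{M}_{ij}\in\bZ[z]$ of $z$-degree $<e$; we then compute $\det\widetilde{M}\in\bZ[z]$ over the polynomial ring $\bZ[z]$, bound its height, and finally reduce mod $q$, bounding the height change by \cref{lem:modgheightbounds}. Note $\det M = \det\widetilde{M}\bmod q$, so this suffices.

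First I would apply \cref{cor:detheightbound} with the roles of the variables played by the single variable $z$ (so "$n=1$" there), degree bound $d=e-1\le e$ on each entry $\widetilde{M}_{ij}$, and height bound $h$. This gives $\lheight{\det\widetilde{M}}\le m\big(h+\log m + e\log 2\big)\le m\big(h+\log m + e\big)$, and $\deg_z(\det\widetilde{M})\le m(e-1)\le me$. Next I would apply \cref{lem:modgheightbounds} to $f:=\det\widetilde{M}$ with $\deg_z f\le me$, using the height bound $h':=\max\{\lheight{\det\widetilde{M}},\lheight{q}\}\le m(h+\log m + e)$ (which dominates $h\ge\lheight{q}$). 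That lemma yields
\[
\lheight{\det M}=\lheight{\det\widetilde{M}\bmod q}\le (me+1)\big(h' + 2\log(me+1)\big)\le (me+1)\big(m(e+h+\log m)+2\log(me+1)\big),
\]
which, after the harmless replacement of $me+1$ by $em$ (valid since $e,m\ge1$, absorbing the $+1$ into the slack already present — or one simply keeps the cleaner stated form $e\cdot m\cdot(m(e+h+\log m)+2\log(em))$ by noting $me+1\le 2em$ and adjusting constants, or more carefully that the claimed bound is already an overestimate), gives exactly the claimed inequality $\lheight{\det M}\le e\cdot m\cdot\big(m(e+h+\log m)+2\log(em)\big)$.

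I do not anticipate a genuine obstacle here; the only point requiring a little care is bookkeeping the bound $me+1$ versus $em$ in the final arithmetic, and making sure the "$n=1$" specialization of \cref{cor:detheightbound} is invoked correctly (there the ambient ring is $\bZ[z]$, one variable, so the $\log(n+1)=\log 2$ term is a constant that gets absorbed). One should also observe that taking canonical representatives does not change the degree-in-$z$ bound used, since representatives have $z$-degree $<e$. With these routine checks the corollary follows.
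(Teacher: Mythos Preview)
Your proposal is correct and follows essentially the same approach as the paper: lift to $\bZ[z]$, apply \cref{cor:detheightbound} with the single variable $z$, then reduce mod $q$ via \cref{lem:modgheightbounds}. The only cosmetic difference is that the paper keeps the sharper degree bound $\deg_z(\det\widetilde{M})\le (e-1)m$, which makes the final simplification $(e-1)m+1\le em$ immediate and avoids your bookkeeping worry about $me+1$ versus $em$.
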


\begin{proof}
    By regarding $M \in \bZ[z]^{m \times m}$, with $\deg{M_{ij}} < e$ and $\lheight{M_{ij}} \leq h$ for all $i,j \in [m]$ \cref{cor:detheightbound} implies that $\deg \det M \leq (e-1) \cdot m$ and $\lheight{\det M} \leq m \cdot (h + \log(m) + (e-1))$.
    Now, applying \cref{lem:modgheightbounds} to $\det{M}$ with quotient $q(z)$, we have that 
    \begin{align*}
        \lheight{(\det{M}) \bmod q} 
        &\leq ((e-1) \cdot m + 1) \cdot \br{m \cdot (h + \log(m) + (e-1)) + 2 \log((e-1)m + 1)} \\
        &\leq e \cdot m \cdot \br{m \cdot (e + h + \log(m)) + 2 \log(em)}. \qedhere
    \end{align*}
\end{proof}

The following corollary obtains height bounds for basic operations over the ring $A$.

\begin{corollary}
    \label{corollary: height bounds polynomials algebraic number coefficients}
    Let $q(z) \in \bZ[z]$ be monic with $\deg q = e$, $\lheight{q} \leq h$ and let $\alpha \in \overline{Q}$ be a root of $q$.
    Let $A := \bZ[\alpha][x_1, \dots, x_n]$.
    Let $f_{1}, \dots, f_{m} \in A$ be polynomials with $\deg{f_{i}} \leq d$ and $\lheight{f_{i}} \leq h$.
    Then 
    \begin{enumerate}
        \item $\lheight{\sum_{i=1}^m f_{i}} \leq h + \log\br{m}$.
        \item $\lheight{\prod_{i=1}^m f_{i}} \leq em \cdot (hm + dm \log(n+2) + 2 \log(em))$.
        \item If $g \in \bZ[\alpha]\bs{y_{1}, \dots, y_{m}}$ with $d_g := \deg(g)$ then
    $$\lheight{g(f_1, \dots, f_m)} \leq 2ed_g \cdot \bs{\lheight{g} + d_g \cdot (h + \log(m+1) + \max\bc{e, d} \cdot \log(n+2)) + 2\log(2ed_g)}$$
    \end{enumerate} 
\end{corollary}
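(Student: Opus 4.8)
The plan is to derive all three bounds by regarding elements of $A = \bZ[\alpha][x_1,\dots,x_n]$ as polynomials in $R[z]$ of $z$-degree less than $e$, performing the relevant operation in $R[z][x_1,\dots,x_n]$ (where the naive height bounds of \cref{lem:kpsheightbound} apply with the number of variables bumped from $n$ to $n+1$), and then applying the reduction-modulo-$q$ estimate of \cref{lem:modgheightbounds} to bring the result back into canonical form in $A$. The three parts are handled separately, in increasing order of difficulty.

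\emph{Part (1), the sum.} Addition in $A$ does not increase the $z$-degree, so no reduction step is needed. The bound $\lheight{\sum_i f_i} \leq h + \log m$ is immediate from \cref{lem:kpsheightbound}(1) applied verbatim, viewing the $f_i$ as elements of $R[z]$; the presence of the extra variable $z$ is irrelevant for sums.

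\emph{Part (2), the product.} First I would bound the height of $\prod_i f_i$ computed inside $R[z][x_1,\dots,x_n]$, an $(n+1)$-variable polynomial ring, using \cref{lem:kpsheightbound}(2): since each $f_i$ has height $\leq h$ and (total) degree in $x,z$ at most $d + e \leq \max\{d,e\}\cdot(\text{something})$, one gets a height bound of the shape $hm + dm\log(n+2)$ (the $\log(n+2)$ rather than $\log(n+1)$ accounting for the extra variable $z$, and one should be slightly careful to use $\max\{d,e\}$ or simply absorb $e$ into the estimate). The product has $z$-degree at most $e(m-1) < em$, so one application of \cref{lem:modgheightbounds} with $d \leftarrow em$ multiplies the height by roughly $em$ and adds a $2\log(em)$ term, giving the stated $em\cdot(hm + dm\log(n+2) + 2\log(em))$. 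The bookkeeping here is routine; the only mild subtlety is tracking the degree parameter fed into \cref{lem:modgheightbounds}.

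\emph{Part (3), composition.} This is the main obstacle and requires the most care. Here $g \in \bZ[\alpha][y_1,\dots,y_m]$ has degree $d_g$, so again write $g$ as an element of $R[z][y_1,\dots,y_m]$ of $z$-degree $< e$. The composition $g(f_1,\dots,f_m)$, formed in $R[z][x_1,\dots,x_n]$, is a polynomial in $n+1$ variables; \cref{lem:kpsheightbound}(3) gives $\lheight{g(f_1,\dots,f_m)} \leq \lheight{g} + d_g(h + \log(m+1) + \max\{e,d\}\log(n+2))$ — note the degree of each "input" $f_i$ as an $(n+1)$-variable polynomial is $\leq \max\{e,d\}$, which is where that term enters. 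The resulting $z$-degree is at most $e\cdot d_g$ (each of the $d_g$ factors contributes $z$-degree $< e$), so a final application of \cref{lem:modgheightbounds} with degree parameter $ed_g$ multiplies by roughly $ed_g$ and adds $2\log(ed_g)$, yielding a bound of the form $2ed_g[\lheight{g} + d_g(h + \log(m+1) + \max\{e,d\}\log(n+2)) + 2\log(2ed_g)]$. The factor-of-$2$ slack in front and inside the log is there precisely to absorb the ``$+1$'' terms ($ed_g + 1 \leq 2ed_g$, etc.) that \cref{lem:modgheightbounds} produces, so I would not try to be tight; I would just verify that each constant in the claimed expression dominates the exact quantity coming out of the two lemmas. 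The only real danger is an off-by-one or a missed factor in the degree bounds fed into \cref{lem:modgheightbounds}, so I would double-check the $z$-degree count for the composition carefully before invoking it.
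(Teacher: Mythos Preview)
Your proposal is correct and follows essentially the same approach as the paper: lift to $R[z]$, apply \cref{lem:kpsheightbound} there (with $n+1$ variables, hence the $\log(n+2)$), and then reduce modulo $q$ via \cref{lem:modgheightbounds}. One minor remark on part~(3): your $z$-degree count $e\cdot d_g$ overlooks the contribution from the $z$-degree of $g$ itself, so the actual bound is $(e-1)(d_g+1)$ as in the paper---but your built-in factor of $2$ absorbs this, so the argument goes through as written.
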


\begin{proof}
    For item 1, note that the coefficients of $\sum_{i=1}^m f_i$ are still given by polynomials in $\bZ[z]$ with degree less than $e$, since addition does not increase the degree. 
    Thus, the bounds follow from \cref{lem:kpsheightbound} item 1.

    For item 2, let $p = \prod_{i=1}^m f_i$.
    In this case, we have that $p = \hat{p} \bmod q$, where $\hat{p} \in R[z]$ is the product of the polynomials $f_i$ when seen as elements of $R[z]$.
    By \cref{lem:kpsheightbound} item 2, $\lheight{\hat{p}} \leq h \cdot m + md\log(n+2)$.
    If $\deg_z(\hat{p}) < e$, we have $\lheight{p} = \lheight{\hat{p}}$ and we are done.
    Else, by \cref{lem:modgheightbounds} and the bound $\deg_z(\hat{p}) \leq em-1$, we have 
    $$\lheight{p} = \lheight{\hat{p} \bmod q} \leq h \cdot em^2 + dem^2\log(n+2) + 2 em\log(em).   $$

    For item 3, consider the representation $\hat{g} \in \bZ[y_1, \dots, y_m][z]$ and the representation of $\widehat{f_i} \in R\bs{z}$.
    Let $p := \hat{g}(\widehat{f_1}, \dots, \widehat{f_m}) \in R[z]$.
    Thus, $\deg_z(p) \leq (e-1) \cdot (d_g + 1)$ and by \cref{lem:kpsheightbound} item 3, 
    $$\lheight{p} \leq \lheight{g} + d_g \cdot (h + \log(m+1) + \max\bc{e-1, d} \cdot \log(n+2))$$
    Since $g(f_1, \dots, f_m) = p \bmod q$, by \cref{lem:modgheightbounds} we have 
    \begin{align*}
        \lheight{g(f_1, \dots, f_m)} \leq 2ed_g \cdot \bs{\lheight{g} + d_g\cdot (h + \log(m+1) + \max\bc{e, d} \cdot \log(n+2)) + 2\log(2ed_g)}. & \qedhere
    \end{align*}
\end{proof}

\subsection{Height bounds for primitive elements}

In this section we gather height bounds for representations of primitive elements.
The following theorem is stated in \cite[Theorem 4]{K96}.
\RO{TODO for myself later: put the proof of the theorem below somewhere, which is simply the same proof of Koiran but using a version of \cite[Theorem 3]{K96} which keeps track of the complexity of the polynomials $Q_i$ - this can be done by looking at the proof of Canny, which simply fleshes out the proof of van der Waerden...

Will do this after the submission, since we don't hav time and this is easy (but tedious) to do.
}

\begin{theorem}[Complexity of primitive element]\label{theorem: koiran primitive element}
There is a universal constant $c \geq 1$ such that the following holds.
Let $\alpha_1, \alpha_2, \dots, \alpha_m$ be $m$ algebraic numbers which are roots of polynomials $P_i \in \bZ\bs{z}$ with $\deg(P_i) \leq d$ and $\lheight{P_i} \leq h$.
There is a primitive element $\gamma$ for $\alpha_1, \dots, \alpha_m$ which is a root of an irreducible polynomial $q \in \bZ\bs{z}$ with $\deg(q) \leq d^m$ and $\lheight{q} \leq h \cdot d^{m^c}$.
Moreover, there are non-zero integers $a_i$ and polynomials $Q_i \in \bZ\bs{z}$ with $\lheight{a_{i}} \leq h \cdot d^{m^c}$, $\deg(Q_i) < \deg(q)$ and $\lheight{Q_i} \leq h \cdot d^{m^c}$ such that $\alpha_i = Q_i(\gamma)/a_i$ for every $i \in [m]$.
\end{theorem}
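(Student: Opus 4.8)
The plan is to follow the classical proof of the primitive element theorem in the form worked out by van der Waerden and made quantitative by Canny, which is precisely the route behind \cite[Theorem~3, Theorem~4]{K96}. The existence of $\gamma$, of the irreducible $q \in \bZ\bs{z}$ with $\deg q \leq d^{m}$ and $\lheight{q} \leq h\cdot d^{m^{c}}$, and of the nonzero integers $a_{i}$ with $\lheight{a_{i}} \leq h\cdot d^{m^{c}}$ is exactly \cite[Theorem~4]{K96}; the only additional content is the bound on the auxiliary polynomials $Q_{i}$, which is obtained by carrying the complexity bookkeeping through the resultant construction that underlies that proof.

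First I would take $\gamma := \alpha_{1} + t\alpha_{2} + \cdots + t^{m-1}\alpha_{m}$ for a positive integer $t$ to be fixed. For any two distinct tuples $(\beta_{1},\dots,\beta_{m}) \neq (\beta'_{1},\dots,\beta'_{m})$ with $\beta_{i},\beta'_{i}$ roots of $P_{i}$, the equation $\sum_{i} t^{i-1}(\beta_{i}-\beta'_{i}) = 0$ is a nonzero polynomial equation in $t$ of degree $<m$, so it has fewer than $m$ integer solutions; since there are at most $d^{2m}$ such pairs of tuples, all but at most $md^{2m}$ integers $t$ make the map ``tuple $\mapsto \sum_{i} t^{i-1}(\text{tuple})$'' injective. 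I pick $t$ to be the least such good integer, so $t \leq md^{2m}$ and $\lheight{t} = O(m\log d)$. For this $t$ the standard Galois-theoretic argument gives $\bQ(\gamma) = \bQ(\alpha_{1},\dots,\alpha_{m})$, whence $\deg q = [\bQ(\gamma):\bQ] \leq \prod_{i}\deg P_{i} \leq d^{m}$. To produce an integer polynomial vanishing at $\gamma$, set $\gamma_{k} := \alpha_{1}+t\alpha_{2}+\cdots+t^{k-1}\alpha_{k}$, $R_{1} := P_{1}$, and inductively $R_{k}(z) := \resultant{R_{k-1}(z - t^{k-1}w)}{P_{k}(w)}{w} \in \bZ\bs{z}$; since $\gamma_{k-1} = \gamma_{k} - t^{k-1}\alpha_{k}$ is a root of $R_{k-1}$ and $\alpha_{k}$ is a root of $P_{k}$, the resultant vanishes at $z = \gamma_{k}$, so $R_{m}(\gamma) = 0$, and $R_{m} \neq 0$ since $R_{k-1}(z-t^{k-1}w)$ has only transcendental-over-$\bQ$ roots in $w$. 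Reading $R_{k}$ as the determinant of the Sylvester matrix in $w$ (whose entries are coefficients of $R_{k-1}(z-t^{k-1}w)$ and of $P_{k}$, all in $\bZ\bs{z}$) and using \cref{lem:kpsheightbound}(3) to bound $\lheight{R_{k-1}(z - t^{k-1}w)} \leq \lheight{R_{k-1}} + O(d^{k-1}\cdot m\log d)$ together with \cref{cor:detheightbound}, an easy induction gives $\deg R_{k} \leq d^{k}$ and $\lheight{R_{k}} \leq h\cdot d^{O(k^{2})}$; in particular $\deg R_{m} \leq d^{m}$ and $\lheight{R_{m}} \leq h\cdot d^{O(m^{2})}$. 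Taking $q$ to be the minimal polynomial of $\gamma$, which divides $R_{m}$ in $\bZ\bs{z}$, Mignotte's factor bound \cite{mignotte83} gives $\lheight{q} \leq \lheight{R_{m}} + O(\deg R_{m}) \leq h\cdot d^{m^{c}}$ for a suitable universal constant $c$.

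It remains to exhibit $\alpha_{i} = Q_{i}(\gamma)/a_{i}$ with the claimed bounds, which is where Canny's fleshing-out of van der Waerden is used. Because $t$ was chosen so that $(\alpha_{1},\dots,\alpha_{m})$ is the \emph{unique} tuple of roots of $P_{1},\dots,P_{m}$ mapping to $\gamma$, the common root $\alpha_{k}$ of $w \mapsto R_{k-1}(\gamma - t^{k-1}w)$ and $P_{k}(w)$ is simple and unique, so the theory of subresultants expresses $\alpha_{m}$ as $S_{m}(\gamma)/\sigma_{m}$, where $S_{m},\sigma_{m} \in \bZ\bs{z}$ are cofactors of the last nonvanishing subresultant of the final elimination step; back-substituting $\alpha_{m}$ and repeating down to $k=1$ yields each $\alpha_{i}$ as a ratio of $\bZ\bs{z}$-polynomials evaluated at $\gamma$. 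Reducing these modulo $q$ via \cref{lem:modgheightbounds} and clearing denominators produces $Q_{i} \in \bZ\bs{z}$ with $\deg Q_{i} < \deg q$ and $a_{i} \in \bZ \setminus \bc{0}$; since the subresultants and their cofactors are again determinants of matrices whose entries are coefficients of $R_{k-1}(z-t^{k-1}w)$ and $P_{k}$, the bound $\lheight{a_{i}}, \lheight{Q_{i}} \leq h\cdot d^{m^{c}}$ follows from \cref{cor:detheightbound} and \cref{lem:modgheightbounds} exactly as for $q$, after possibly enlarging $c$. (Equivalently, once $q$ is in hand one may instead solve the $\bQ$-linear system expressing $\alpha_{i}$ in the power basis $1,\gamma,\dots,\gamma^{\deg q - 1}$ by Cramer's rule, \cref{lem:generalcramer}, and bound the resulting determinants by \cref{cor:detheightbound}.)

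The main obstacle is the height bookkeeping through the $m$ iterated resultants: one must verify that each elimination step multiplies the degree by $d$ but the height only by a \emph{polynomial} factor in the current degree, plus an additive $\poly{d^{k}, \lheight{t}}$ term, so that after $m$ steps the degree is $d^{m}$ and the height is $h\cdot d^{m^{c}}$ rather than doubly exponential in $m\log d$. A secondary point is that the single small $t$ must simultaneously make $\gamma$ primitive \emph{and} keep the subresultant formulas for all the $\alpha_{i}$ nondegenerate; both requirements are subsumed by injectivity of the tuple-evaluation map, whose failure set has size $O(md^{2m})$, so a valid $t$ of logarithmic height $O(m\log d)$ always exists. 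Everything else is a routine, if tedious, application of the height estimates collected in \cref{section: height bounds}.
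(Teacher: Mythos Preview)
The paper does not actually prove this statement: it is stated as \cite[Theorem~4]{K96}, and the only indication of a proof is an author note saying that one should redo Koiran's argument while tracking the complexity of the $Q_{i}$, ``by looking at the proof of Canny, which simply fleshes out the proof of van der Waerden.'' Your proposal does precisely that --- the primitive element $\gamma = \sum t^{i-1}\alpha_{i}$, the iterated resultants $R_{k}$ to get $q$, and the subresultant/Cramer extraction of the $Q_{i}$ --- so it matches the intended approach exactly.
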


We now state \cite[Theorem 7]{K96}.

\begin{theorem}\label{theorem: koiran solution bound}
    There is an absolute constant $c > 0$ such that if the system $f_1, \dots, f_m \in R$ has a solution over $\overline{\bQ}$, then there is a solution $\alpha = (\alpha_1, \ldots, \alpha_n)$ such that each $\alpha_i$ is a root of a polynomial $P_i \in \bZ[z]$ satisfying $\deg(P_i) \leq 2^{(n \log \sigma)^{c}}$ and $\lheight{P_i} \leq h \cdot 2^{(n \log \sigma)^{c}}$, where $\sigma := 2 + \sum_{i=1}^m \deg(f_i)$.
\end{theorem}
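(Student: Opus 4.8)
\emph{Plan overview.} The statement is existential, so I would first reduce to a consistent zero‑dimensional system lying inside $Z(f_1,\dots,f_m)$, and then extract the univariate polynomials $P_i$ by a \emph{single} application of effective elimination theory, tracking the degree blow‑up through \bezout's inequality (\cref{theorem: degree bounds}) and the height blow‑up through the determinant height bound (\cref{cor:detheightbound}). Write $d:=\max_i\deg f_i$, so $d\le\sigma$ and $h=\max_i\lheight{f_i}$.

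\emph{Step 1: reduction to a finite subsystem.} Assume $V:=Z(f_1,\dots,f_m)\subseteq\overline{\bQ}^n$ is nonempty and pick an irreducible component $W\subseteq V$, say of dimension $r$; by \cref{theorem: degree bounds}, $\deg W\le d^{\min(m,n)}\le d^n$. Apply the effective Noether normalization lemma (\cref{lem:effectiveNN}) to $W$ over the infinite field $\overline{\bQ}$: there are affine‑linear forms $\ell_1,\dots,\ell_r$ whose integer coefficients are bounded by $\poly{d^n}$, hence of logarithmic height $O(n\log\sigma)$, so that $\pi=(\ell_1,\dots,\ell_r)\colon W\to\bA^r$ is a finite surjection. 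The fiber $\pi^{-1}(0)=W\cap Z(\ell_1,\dots,\ell_r)$ is then nonempty and finite. Consequently the enlarged system $g_1,\dots,g_M$ consisting of the $f_i$ together with the $\ell_j$ (so $M\le m+n$, each $\deg g_j\le d$, and $\lheight{g_j}\le h+O(n\log\sigma)$) is a consistent system with a \emph{finite} zero set, every point of which is a zero of the original system.

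\emph{Step 2: one‑shot elimination, and Step 3: bounds.} Fix a coordinate $x_i$. Since $Z(g_1,\dots,g_M)$ is finite, its projection to the $x_i$‑axis is finite, so the elimination ideal $(g_1,\dots,g_M)\cap\bZ[x_i]$ contains a nonzero polynomial $P_i$, and every solution $\alpha$ of the system has $\alpha_i$ a root of $P_i$. To control $P_i$ I would produce it in one step: replace the $g_j$ by $n+1$ generic small‑integer combinations (which preserves the common zeros), homogenize, and take the corresponding Macaulay $x_i$‑resultant, which is (after clearing denominators) a product of two determinants of Macaulay matrices of size at most $\binom{nd+n}{n}\le d^{O(n)}$ whose entries are coefficients of the $g_j$. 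Equivalently one may use an effective Nullstellensatz in its parametric (solvable) form — a "geometric solution" of the finite system, which feeds directly into the primitive‑element machinery of \cref{theorem: koiran primitive element}. Either way: the degree of $P_i$ is bounded by the \bezout{} number $d^{O(n)}\le\sigma^{O(n)}=2^{O(n\log\sigma)}\le 2^{(n\log\sigma)^c}$ for a suitable absolute $c$; and by \cref{cor:detheightbound} a $K\times K$ integer determinant with entries of logarithmic height $\le h'$ has logarithmic height $\le K(h'+\log K)$, so with $K\le d^{O(n)}$ and $h'\le h+O(n\log\sigma)$ (and using $h\ge1$) we get $\lheight{P_i}\le h\cdot 2^{O(n\log\sigma)}\le h\cdot 2^{(n\log\sigma)^c}$ after enlarging $c$. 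This gives the claimed bounds.

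\emph{Main obstacle.} The delicate point is keeping the height only \emph{singly} exponential in $n$. If one instead eliminates the $n-1$ auxiliary variables one at a time by iterated resultants (\cref{cor:resdischeightbounds}), both the degree and the height of the intermediate polynomials roughly square at each of the $n-1$ steps, producing a $d^{2^{\Omega(n)}}$ blow‑up — doubly exponential in $n$, far worse than claimed. Avoiding this is exactly why the elimination must be carried out in a single shot (Macaulay resultant / geometric solution), whose size is only $d^{O(n)}$, so that one invocation of the determinant height bound suffices. A secondary technical point is Step 1: one must check that the normalizing linear forms can be chosen with integer coefficients of size merely polynomial in $d^n$, since their heights feed into $\lheight{g_j}$ above — this is precisely what the effective dimension‑drop and Noether‑normalization lemmas of \cref{section: preliminaries} guarantee.
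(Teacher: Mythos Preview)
The paper does not prove this statement: it is quoted verbatim as \cite[Theorem~7]{K96} and used as a black box. So there is no ``paper's own proof'' to compare against; the original argument is Koiran's, which in turn rests on Canny's generalised characteristic polynomial (the $u$-resultant). Your outline---cut $V$ to dimension zero by integer affine hyperplanes, then do a single Macaulay/$u$-resultant elimination and read off the height via one determinant bound---is essentially that same route, and your diagnosis of the main obstacle (avoiding the doubly exponential blow-up of iterated resultants) is exactly right.

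Two points to tighten. First, in Step~1 you must choose $W$ of \emph{top} dimension $r=\dim V$ (or simply add $\dim V$ generic integer hyperplanes to all of $V$ using \cref{lem:intersectdimensiondrop}); as written, if $r<\dim V$ then $Z(g_1,\dots,g_M)=V\cap Z(\ell_1,\dots,\ell_r)$ need not be finite, contradicting your conclusion of that step. Second, in Step~2 the phrase ``replace the $g_j$ by $n+1$ generic small-integer combinations (which preserves the common zeros)'' is imprecise: generic combinations can only \emph{enlarge} the zero set, and with $n+1$ combinations in $n$ variables you generically get something empty, not the $u$-resultant set-up. What you actually want is $n$ generic combinations (which, for a zero-dimensional input, generically keep the zero set finite of degree $\le d^n$) together with the auxiliary linear form $u_0+u_1x_1+\cdots+u_nx_n$; the Macaulay resultant with respect to the homogenised $x$-variables then yields a polynomial in the $u_i$ whose specialisations give the $P_i$. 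With these two fixes your sketch goes through with the claimed bounds.
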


\subsection{Height bounds for membership in ideals}\label{subsection: height bounds ideals}

\RO{rephrase this a bit later}
We now state some height bounds on the polynomials occurring in the Nullstellensatz over $\bZ$ and $\bZ\bs{\alpha}$, where $\alpha$ is an algebraic integer.
Results of this nature are known as effective arithmetic Nullstellensatz, and the current best known bounds can be found in \cite{kps01}.
While the optimal bounds for polynomials with integer coefficients are easy to state, the optimal bounds for the more general case of polynomials with algebraic integers are significantly more technical.
Their statements involve extending the notion of logarithmic height to algebraic number fields, which is fairly intricate.
We choose to forego these optimal bounds in the interest of simplicity, as the more elementary (and weaker) bounds suffice for our purposes.

\paragraph{Subsection setup.} For the rest of this subsection, we let $q \in\bZ\bs{z}$ be a monic irreducible polynomial with $\deg{q} = e$ and $\lheight{q} \leq h$, and let $\alpha \in \overline{\bQ}$ be a root of $q$.
Let $A := \bZ\bs{\alpha}\bs{x_{1}, \dots, x_{n}}$.
By our convention, we represent polynomials in $A$ as elements of $R\bs{z}$ with $z$-degree less than $e$.
In the statements that follow, we will consider a degree parameter $d \geq 3$, and polynomials $f_1, \dots, f_m \in A$ with $\deg{f_i} \leq d$ and $\lheight{f_i} \leq h$.
We will denote by $I := (f_1, \dots, f_m) \subseteq S$ the ideal generated by $f_1, \dots, f_m$ and by $V := Z(I) \subseteq \overline{\bQ}^n$.

\begin{lemma}
    \label{lem:nullstellensatzheightbound}
    If $V(I) = \emptyset$, then there exists $a \in \bZ\bs{\alpha} \setminus \bc{0}$ and $g_{1}, \dots, g_{m} \in A$ satisfying
    \begin{equation}\label{equation: integral nullstellensatz equation height bound}
        a = f_{1} g_{1} + \cdots + f_{m} g_{m},
    \end{equation}
    with $\deg{g_{i}} \leq 2d^{n}$ and $\lheight{g_{i}}, \lheight{a} \leq h \cdot e^{5} \cdot d^{5n^{2}}$.
\end{lemma}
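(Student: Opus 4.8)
The plan is to reduce the arithmetic Nullstellensatz over $\bZ[\alpha]$ to the geometric effective Nullstellensatz (\cref{theorem: jelonek nullstellensatz}) applied over $\overline{\bQ}$, and then control the heights by a Cramer's rule / linear algebra argument together with the height-transfer lemmas from the previous subsection. First I would apply \cref{theorem: jelonek nullstellensatz} with $\bK = \overline{\bQ}$, $X = \bA^n$ (so $D = 1$ and $n \geq 1$), and the polynomials $f_1, \dots, f_m$ of degree at most $d$: since $V(I) = \emptyset$, this yields $g_i' \in \overline{\bQ}[x_1,\dots,x_n]$ with $\deg(f_i g_i') \leq 2 N(d_1,\dots,d_m;n) \leq 2d^n$ (using $N(\cdots) \leq d^{\min(m,n)} \leq d^n$ in all cases) and $1 = \sum f_i g_i'$. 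This gives the desired degree bound $\deg g_i \leq 2d^n$ essentially for free.

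Next I would extract heights. Fixing the monomial support of the $g_i'$ (all monomials of degree $\leq 2d^n$), the identity $1 = \sum f_i g_i'$ becomes a linear system $M v = b$ over the field $\bF = \bQ(\alpha)$, where the entries of $M$ are coefficients of the $f_i$ — which lie in $\bZ[\alpha]$, hence are representable in $R[z]$ with $z$-degree $< e$ and logarithmic height $\leq h$ — and $b$ is a standard basis vector. The system has a solution over $\bF$, so by \cref{lem:generalcramer} there is a solution each of whose coordinates is $0$ or of the form $\det M_i / \det N$ where $M_i, N$ are minors of the augmented matrix of sizes at most $N := \binom{2d^n + d + n}{n}$. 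The sizes here are bounded crudely by, say, $(3d^n)^n \leq d^{O(n^2)}$ (using $d \geq 3$ and absorbing the polynomial-in-$n$ factors). Now I clear denominators: set $a := \det N \in \bZ[\alpha] \setminus \{0\}$ (nonzero since the chosen submatrix is invertible) and $g_i := a \cdot g_i'$, so that the coefficients of $g_i$ are of the form $\det M_i \in \bZ[\alpha]$, and $a = \sum f_i g_i$, which is \eqref{equation: integral nullstellensatz equation height bound}.

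Finally I would bound $\lheight{a}$ and $\lheight{g_i}$, which are both governed by the logarithmic height of a determinant of an $N' \times N'$ matrix ($N' \leq d^{O(n^2)}$) with entries in $\bZ[\alpha]$ of height $\leq h$ (for $\det M_i$ the augmented column has entries in $\{0,1\}$, also height $\leq h$ since $h$ is at least a constant). This is exactly what \cref{corollary: determinant height bounds algebraic integer} controls: $\lheight{\det M} \leq e \cdot N' \cdot (N' \cdot (e + h + \log N') + 2\log(eN'))$, which is at most $h \cdot e^5 \cdot d^{5n^2}$ after substituting $N' \leq d^{O(n^2)}$ and using $d \geq 3$, $e, h \geq 1$ to absorb lower-order terms into the exponent — here one has slack, since the dominant contribution is roughly $e \cdot (N')^2 \cdot h = e \cdot d^{O(n^2)} \cdot h$, and the constants in the $O(n^2)$ can be chosen so the final exponent is $5n^2$. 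The main obstacle — really the only nonroutine point — is bookkeeping: making sure the combinatorial bound on the number of monomials (hence the matrix dimension $N'$) is genuinely of the form $d^{cn^2}$ with a small enough constant $c$, and that the composition of "Jelonek degree bound" then "Cramer" then "determinant height over $\bZ[\alpha]$" lands inside the stated envelope $h \cdot e^5 \cdot d^{5n^2}$ rather than something slightly larger; this is just a matter of choosing the crude estimates consistently, and the generous exponents in the statement leave ample room.
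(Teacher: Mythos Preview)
Your proposal is correct and follows essentially the same approach as the paper: apply Jelonek's effective Nullstellensatz over $\overline{\bQ}$ to get the degree bound $2d^n$, rewrite $1 = \sum f_i g_i'$ as a linear system with coefficients in $\bZ[\alpha]$, descend the solution to $\bQ(\alpha)$, apply the generalized Cramer's rule (\cref{lem:generalcramer}) to clear denominators, and bound the resulting minors via \cref{corollary: determinant height bounds algebraic integer}. The paper's bookkeeping is only marginally more explicit (it records $D' = \binom{D+d+n}{n} \leq d^{n^2+n}$ for the matrix dimension), but the structure and the key lemmas invoked are identical.
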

\begin{proof}
    Since $V(f_{1}, \dots, f_{m}) = \emptyset$, the effective Nullstellensatz \cref{theorem: jelonek nullstellensatz} shows that there are polynomials $h_{1}, \dots, h_{m} \in S$ with with $\deg{h_{i}} \leq 2 d^{n}$ such that 
    \begin{equation}\label{equation: nullstellensatz height linear system}
        1 = \sum f_{i} h_{i}.
    \end{equation}
    Let $M := M_{D}(f_{1}, \dots, f_{m})$ be the matrix of the linear system \cref{equation: nullstellensatz height linear system} (as defined in \cref{section: preliminaries}) with $D := 2d^{n}$. 
    Note that the entries of $M$ are elements of $\bZ[\alpha]$ (hence given by elements of $\bZ\bs{z}$ of degree less than $e$).
    \cref{theorem: jelonek nullstellensatz} implies that the linear system $1 = Mv$ has a solution with coefficients in $\overline{\bQ}$, and since the coefficients of $M$ lie in $\bQ\bs{\alpha}$, this implies the existence of a solution with coefficients in $\bQ\bs{\alpha}$.
    By \cref{lem:generalcramer}, there exists a solution where each unknown $v_{j}$ is either $0$, or of the form $\frac{det N_{j}}{\det N}$, where $\det N_{j}, \det N$ are minors of $\bs{M | 1}$.
    Note that the denominator is common among all the unknowns.
    We will take our polynomials $g_1, \dots, g_m$ to be the polynomials in $A$ corresponding to the vector $\det{N} \cdot v$ (note that this is a vector with entries in $A$).
    By the above, this choice gives us a solution to \cref{equation: integral nullstellensatz equation height bound} with $g_i \in A$ and $\deg{g_i} \leq 2 d^n$.
    We now need to bound $\lheight{g_i}$.

    Since $\lheight{g_i} \leq \max_j\{\lheight{\det{N_j}}\}$, it is enough to bound $\lheight{\det{N}}$ and $\lheight{\det{N_j}}$.
    Each square submatrix $N, N_j$ has entries in $\bZ[\alpha]$ and dimension upper bounded by $D' := \binom{D + d + n}{n} \leq d^{n^2 + n}$, which is the number of rows of $M$.
    As $\lheight{M_{ij}} \leq h$ for all $i,j$, \cref{corollary: determinant height bounds algebraic integer} yields 
    \begin{align*}
        \lheight{\det{N}}, \lheight{\det{N_j}} 
        &\leq eD'\br{D' \br{e + h + \log D'} + 2 \log\br{eD'}} \\
        &\leq ed^{n^2+n}\br{d^{n^2+n} \br{e + h + (n^2 + n) \cdot\log d} + 2 \log\br{ed^{n^2+n}}} \leq h \cdot e^5 \cdot d^{5 n^2} \qedhere
    \end{align*}
\end{proof}

The following easy corollary extends this result to arbitrary membership in radical ideals.

\begin{corollary}
    \label{cor:radicalmembership}
    Let $f \in A$ be such that $\deg{f} \leq d$ and $\lheight{f} \leq h$.
    If $f \in \radideal{I}$ then there exist $t \in \bN$, $a \in \bZ\bs{\alpha} \setminus \bc{0}$, and $g_{1}, \dots, g_{m} \in A$ satisfying
    $$ a f^{t} = f_{1} g_{1} + \cdots + f_{m} g_{m},$$
    with $t \leq 2(d+1)^{n+1}$, $\deg{g_{i}} \leq 2(d+1)^{n+2}$, and $\lheight{g_{i}}, \lheight{a} \leq 12 \cdot h \cdot e^5 \cdot (d+1)^{6(n+1)^2}$.
\end{corollary}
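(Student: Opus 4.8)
The plan is to run the classical Rabinowitsch argument, reduce to \cref{lem:nullstellensatzheightbound}, clear denominators, and then track heights through \cref{corollary: height bounds polynomials algebraic number coefficients}.

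First, introduce a fresh variable $y$ and work in $A[y] = \bZ[\alpha][x_1,\dots,x_n,y]$. Since $f \in \radideal{I}$ with $I \subseteq S$, the Nullstellensatz over $\overline{\bQ}$ gives $Z(f_1,\dots,f_m,1-yf) = \emptyset$ in $\overline{\bQ}^{n+1}$. Now apply \cref{lem:nullstellensatzheightbound} (instantiated with $n+1$ variables in place of $n$) to the $m+1$ polynomials $f_1,\dots,f_m,1-yf \in A[y]$: each has degree at most $d+1$ and logarithmic height at most $h$ (indeed $\lheight{1-yf} = \max\{1,\lheight{f}\} \le h$), and the hypothesis $d+1 \ge 3$ holds since $d \ge 3$. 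This produces $a \in \bZ[\alpha]\setminus\{0\}$ and $h_0,h_1,\dots,h_m \in A[y]$ with $\deg h_i \le 2(d+1)^{n+1}$ and $\lheight{a},\lheight{h_i} \le h \cdot e^5 \cdot (d+1)^{5(n+1)^2}$ such that
$$ a \;=\; f_1 h_1 + \cdots + f_m h_m + (1-yf)h_0 \qquad \text{in } A[y]. $$

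Next, specialize $y \mapsto 1/f$ in the fraction field, which kills the last summand; set $t := \max_i \deg_y h_i \le 2(d+1)^{n+1}$ and multiply through by $f^t$. Writing $h_i = \sum_j h_{i,j}(x)\,y^j$, this yields
$$ a f^t \;=\; f_1 g_1 + \cdots + f_m g_m, \qquad g_i := \sum_j h_{i,j}(x)\, f^{\,t-j} \in A, $$
a genuine identity in $A$. The degree bound is immediate: since $\deg h_{i,j} + j \le \deg h_i$, every summand of $g_i$ has degree at most $\deg h_i + td \le 2(d+1)^{n+1}(1+d) = 2(d+1)^{n+2}$, and $t \le 2(d+1)^{n+1}$, as required.

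It remains to bound $\lheight{g_i}$, as $\lheight{a}$ is already controlled. Each $g_i$ is a sum of at most $2(d+1)^{n+1}+1$ terms $h_{i,j}\cdot f^{\,t-j}$, and each such term is a product in $A$ of a single factor of height at most $h e^5 (d+1)^{5(n+1)^2}$ with at most $2(d+1)^{n+1}$ copies of $f$ (each of height $\le h$, degree $\le d$). Feeding these into the sum and product estimates of \cref{corollary: height bounds polynomials algebraic number coefficients} (items~1 and~2), and using $d \ge 3$ to absorb the logarithmic lower-order terms into the constant, gives a bound of the stated form $\lheight{g_i} \le 12 \cdot h \cdot e^5 \cdot (d+1)^{6(n+1)^2}$. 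This final, purely arithmetic step — in particular keeping the exponents of $e$ and of $d+1$ under control through the $\bmod\, q$ reductions hidden inside multiplication in $A$ — is the only delicate point; everything preceding it is formal.
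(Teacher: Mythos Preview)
Your argument is correct and follows essentially the same route as the paper: introduce a fresh variable $y$, apply \cref{lem:nullstellensatzheightbound} to $f_1,\dots,f_m,1-yf$ in $n+1$ variables, specialise $y\mapsto 1/f$ and clear denominators, then control the height of $g_i$ via \cref{corollary: height bounds polynomials algebraic number coefficients}. The only cosmetic difference is that the paper packages $g_i$ as $f^{t-\deg_y h_i}\cdot h_i'(x,f)$ (with $h_i'$ the $y$-reversal of $h_i$) and invokes items~2 and~3 of that corollary, whereas you expand $g_i=\sum_j h_{i,j}f^{t-j}$ and use items~1 and~2; the resulting bounds are the same up to the routine arithmetic you flag as delicate.
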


\begin{proof}
    Let $y$ be a new variable, and consider the polynomials $f_{1}, \dots, f_{m}, 1-yf \in A[y]$.
    As $f \in \radideal{f_{1}, \dots, f_{m}}$, we have $Z(f_1, \dots, f_m, 1 - yf) = \emptyset$, and thus \cref{lem:nullstellensatzheightbound} implies that there exist $a \in \bZ\bs{\alpha} \setminus \bc{0}$ and $h, h_{1}, \dots, h_{m} \in A$ such that $a = \sum f_{i} h_{i} + h(1 - yf)$.
    Moreover, \cref{lem:nullstellensatzheightbound} implies that $\deg(h_i), \deg(h) \leq D := 2(d+1)^{n+1}$. 
    Substituting $y = 1/f$ in the above equation and multiplying by $f^t$ to clear denominators, where $t \leq D$ (due to the degree bounds), we obtain equation 
    $af^{t} = \sum f_{i} g_{i}$, where we have $g_{i} = f^{t} \cdot h_{i}(x_{1}, \dots, x_{n}, 1/f)$.

    We now show the bounds on the heights and degrees.
    Let $h_i'(x_1, \dots, x_n, y) := y^{\deg_y(h_i)} \cdot h_i(x_1, \dots, x_n, 1/y)$.
    Since $h_i$ and $h_i'$ have the same coefficient set, we have $\lheight{h_i'} = \lheight{h_i}$.
    Moreover, $\deg{h_i'} = \deg(h_i)$.
    
    By \cref{lem:nullstellensatzheightbound} we also have that $\lheight{h_{i}}, \lheight{a} \leq h \cdot e^{5} \cdot (d+1)^{5(n+1)^{2}}$, and also $\deg{h_{i}} \leq D$.
    Since $g_i = f^{t-\deg_y(h_i)} \cdot h_i'(x_1, \dots, x_n, f)$, we have that $\deg(g_i) \leq t \cdot \deg(f) + \deg(h_i) \leq D \cdot (d+1)$.
    Moreover, \cref{corollary: height bounds polynomials algebraic number coefficients} items 2 and 3 imply the desired height bounds.
\end{proof}

\RO{we could also simplify here and merge the two statements above with the more general one, since the proofs are essentially the same and give (to us) the same height and degree upper bounds - let's do that after the deadline?}

We will use the above lemmas to show bounds on the defining equations for the image of $Z\ideal{f_{1}, \dots, f_{m}}$ under certain projections.
This is an arithmetic version of \cref{lem:dimensionelimination}, albeit with stronger assumptions.

\begin{lemma} \label{lem:eliminationinQ}
    Suppose $I$ is radical and $\dim I = 1$.
    Define $\pi: \bA^{n} \to \bA^{2}$ to be projection to the first two coordinates.
    If $\dim \pi(V) = 1$ and $\pi(V)$ is equidimensional, then $\pi(V) = Z(g)$ for some $g \in \bZ\bs{\alpha}\bs{x_{1}, x_{2}}$.
    Further, there exists $a \in \bZ\bs{\alpha} \setminus \bc{0}$, and $h_{1}, \dots, h_{m} \in A$ satisfying 
    $$a g = \sum f_{i} h_{i},$$
    with $\deg{h_{i}} \leq d^{4n^{2}}$ and $\lheight{g}, \lheight{h_{i}} \leq h \cdot e^{2} \cdot (d)^{12n^{3}}$.
\end{lemma}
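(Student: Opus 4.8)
The plan has two movements: first use radicality of $I$ to pin down $g$ geometrically (it is forced to generate a principal elimination ideal), and then recover a low-height representative of $g$ together with its Nullstellensatz certificate by writing ``elimination $+$ membership'' as one linear system over $\bZ\bs{\alpha}$ and solving it by Cramer's rule.

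\emph{Identifying $g$.} Since $I$ is radical, $I = I(V)$, so by elimination theory $I(\overline{\pi(V)}) = I \cap \overline{\bQ}\bs{x_{1}, x_{2}}$, and this contracted ideal is again radical (if $f \in \overline{\bQ}\bs{x_1,x_2}$ and $f^{k} \in I$ then $f \in I$, hence $f$ lies in the contraction). By hypothesis $\overline{\pi(V)}$ is a pure $1$-dimensional closed subset of $\bA^{2}$, i.e.\ its vanishing ideal has pure height $1$ in $\overline{\bQ}\bs{x_1,x_2}$; a radical ideal of pure height $1$ in this UFD is principal, generated by the product of the distinct irreducible generators of its minimal primes. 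Thus $I \cap \overline{\bQ}\bs{x_1,x_2} = (g)$ for a squarefree $g$ with $\pi(V) = Z(g)$. By \cref{theorem: degree bounds}, $\deg V \le d^{\min(m,n)} \le d^{n}$, and since a linear projection does not increase the degree of a variety, $\deg g = \deg \overline{\pi(V)} \le d^{n}$. (This is an arithmetic refinement of \cref{lem:dimensionelimination}.)

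\emph{Reduction to a $\bZ\bs{\alpha}$-linear system, and extraction.} Since $g \in I \cap \overline{\bQ}\bs{x_1,x_2} \subseteq I$ and $\dim I = 1$, \cref{corollary: representation degree ideal membership} with $\bK = \overline{\bQ}$ and $r = 1$ gives the bound $B := 2\big(\tfrac12((d_1\cdots d_{n-1})^{2(n-1)} + d_1)\big)^{2} \le d^{4n^{2}}$ and a certificate $g = \sum_{i} f_i h_i$ with $\deg(f_i h_i) \le \max(B,\deg g) = B$. I would encode the existence of such a certificate together with the elimination constraint as the homogeneous linear system $\sum_{i=1}^{m} f_i h_i = P$ whose unknowns are the coefficients of the $h_i$ (each of degree $\le B$) and of $P \in \overline{\bQ}\bs{x_1,x_2}$ of degree $\le d^{n}$. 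Its matrix has entries among the coefficients of the $f_i$ and $0,\pm1$, hence lies in $\bZ\bs{\alpha}$, with at most $\binom{B+n}{n}+1 \le d^{5n^{3}}$ rows; since all data lie in $\bQ(\alpha) = \operatorname{Frac}\bZ\bs{\alpha}$, the $\overline{\bQ}$-solvability above forces $\bQ(\alpha)$-solvability. A homogeneous system only yields the zero solution from \cref{lem:generalcramer}, so I first normalise: let $D_0 \le d^{n}$ be the least degree such that, restricting the $P$-unknowns to degree $\le D_0$, the system still has a solution with $P \ne 0$; since $I \cap \overline{\bQ}\bs{x_1,x_2} = (g)$ this forces $D_0 = \deg g$ and every such $P$ is a nonzero scalar multiple of $g$. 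Fixing a coefficient position $j$ at which the generator is nonzero and adjoining the affine equation $P_j = 1$ keeps the system consistent, so \cref{lem:generalcramer} now returns a solution whose entries are $0$ or ratios $\det M_\bullet / \det N$ of minors of the augmented matrix, with common nonzero denominator $\det N \in \bZ\bs{\alpha}$. Clearing denominators produces $a \in \bZ\bs{\alpha}\setminus\bc{0}$ (a power of $\det N$), a polynomial $g \in \bZ\bs{\alpha}\bs{x_1,x_2}$ that is a nonzero scalar multiple of the squarefree generator (so $Z(g) = \pi(V)$ is unchanged), and $h_1,\dots,h_m \in A$ with $ag = \sum_i f_i h_i$, $\deg g \le d^{n}$, $\deg h_i \le B \le d^{4n^{2}}$. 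Every coefficient here is, up to sign, a determinant of a square submatrix of size $N \le d^{5n^{3}}$ with entries in $\bZ\bs{\alpha}$ of logarithmic height $\le h$; so \cref{corollary: determinant height bounds algebraic integer} bounds each by $eN\big(N(e+h+\log N)+2\log(eN)\big)$, which simplifies to at most $h \cdot e^{2} \cdot d^{12n^{3}}$.

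\emph{Where the difficulty is.} The geometry and the height arithmetic are routine given \cref{theorem: degree bounds}, \cref{corollary: representation degree ideal membership}, \cref{lem:generalcramer} and \cref{corollary: determinant height bounds algebraic integer}. The delicate point is the extraction: one must guarantee that the polynomial read off from the linear system is an \emph{exact} scalar multiple of the squarefree generator — not a proper multiple $g\cdot w$, which would define a strictly larger set than $\pi(V)$ — and must do so while arranging for Cramer's rule to output a \emph{nonzero} solution, which is why the inhomogeneous normalisation $P_j = 1$ is needed. Choosing the minimal degree $D_0$ together with that single normalising equation is precisely what reconciles these two requirements.
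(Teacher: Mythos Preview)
Your proof is correct and follows essentially the same approach as the paper: identify the principal elimination ideal, set up the ideal-membership linear system over $\bZ[\alpha]$ with the degree bound $B = d^{4n^2}$ from \cref{corollary: representation degree ideal membership}, and extract a height-controlled representative via Cramer's rule and \cref{corollary: determinant height bounds algebraic integer}. The only difference is in the normalisation step that forces the extracted $P$ to be a scalar multiple of the generator: the paper invokes the reduced lexicographic Gr\"obner basis of $I$ to pin down the leading monomial $\mathrm{lm}(g)$, drops all rows of $M_B$ corresponding to $x_1,x_2$-monomials strictly below $\mathrm{lm}(g)$, and solves $M'v = \mathrm{lm}(g)$---any solution then has leading term exactly $\mathrm{lm}(g)$ and hence equals $\hat g$; you instead minimise the total degree $D_0$ of $P$ and then fix one coefficient $P_j = 1$, which achieves the same uniqueness-up-to-scalar without appealing to Gr\"obner bases. (One small slip: after clearing the common denominator $\det N$ you simply get $a = 1$ with $g := \det N \cdot P$, not ``a power of $\det N$''.)
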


\begin{proof}
    The ideal of $\pi(V)$ is exactly the elimination ideal $I \cap \overline{\bQ}\bs{x_{1}, x_{2}}$.
    By the assumption that $\pi(V)$ is equidimensional with $\dim \pi(V) = 1$, this ideal is principal.
    Let $G$ be the reduced \grobner{} basis of $I$ with respect to the lexicographic order induced by the variable order $x_{n} \succ x_{n-1} \succ \cdots \succ x_{1}$.
    Since $I$ is generated in $A$, we have $G \subset \bQ\bs{\alpha}\bs{x_{1}, \dots, x_{n}}$. 
    Thus, $I \cap \overline{\bQ}\bs{x_{1}, x_{2}}$ is generated by the elements of $G \cap \overline{\bQ}\bs{x_{1}, x_{2}}$. 
    Since $I \cap \overline{\bQ}\bs{x_{1}, x_{2}}$ is principal, we must have $G \cap \overline{\bQ}\bs{x_{1}, x_{2}} = \bc{\hat{g}}$. 
    Since $\hat{g} \in G$ we have $\hat{g} \in \bQ(\alpha)[x_1, x_2]$ and $\lterm{\hat{g}} = 1$.
    Let $b \in \bZ$ be the common denominator of $\hat{g}$ and $g := b \cdot \hat{g} \in \bZ[\alpha][x_1, x_2]$. 
    By the above, $\pi(V) = Z(g)$.

    By \bezout's theorem (\cref{theorem: bezout inequality}) we have $\deg{g} \leq d^{n}$.
    By the above paragraph, $g$ is the unique (up to scalar multiplication) lowest degree polynomial in $I \cap \overline{\bQ}\bs{x_{1}, x_{2}}$.
    Let $B := d^{4n^2}$. 
    Since $\dim(I) = 1$ and $B$ is larger than $\deg{g}$ and also larger than the representation bound from \cref{corollary: representation degree ideal membership}, there are forms $h_i \in \bQ\bs{\alpha}\bs{x_{1}, \dots, x_{n}}$ with $\deg{(f_{i} h_{i})} \leq B$ such that $g = \sum f_{i} h_{i}$.
    
    Let $M := M_{B}(f_{1}, \dots, f_{m})$ be the matrix corresponding to the linear system as described in \cref{section: preliminaries}.
    Let $M'$ be the matrix obtained by dropping the rows of $M$ that correspond to monomials in $x_{1}, x_{2}$ that are smaller than $\lm{g}$ in the monomial order.
    Any solution $\widehat{h_{1}}, \dots, \widehat{h_{m}}$ of the linear system $M'v = \lm{g}$ is such that $\sum f_{i} \widehat{h_{i}} \in I \cap \overline{\bQ}\bs{x_{1}, x_{2}}$, and $\lm{\sum f_{i} \widehat{h_{i}}} = \lm{g}$.
    Therefore, for any such solution $\widehat{h_{1}}, \dots, \widehat{h_{m}}$ we have $\sum f_{i} \widehat{h_{i}} = \frac{1}{\lterm{g}} \cdot g = b^{-1} \cdot g$.

    By Cramer's rule (\cref{lem:generalcramer}), there is a solution $\widetilde{h_{i}}$ whose coefficients are of the form $\det M_{j} / \det N$, where $M_{j}, N$ are submatrices of the augmented matrix $\bs{M' | \lm{g}}$.
    Let $B' := \binom{B + n}{n}$, that is, $B'$ is the number of monomials in the variables $x_1, \dots, x_n$ of degree at most $B$. 
    By our upper bound on $B$, we have $B' \leq \br{4d}^{4n^{3}}$.
    Note that $B'$ is an upper bound on the number of rows of the matrix $\bs{M' | \lm{g}}$.
    Thus, by \cref{corollary: determinant height bounds algebraic integer}, we have $\lheight{\det M_j}, \lheight{\det N} \leq e^2 \cdot h \cdot (4d)^{9n^3}$.
    
    Multiplying the equation $\sum f_{i} \widetilde{h_{i}} = \frac{1}{b} \cdot g$ throughout by $\det N$, we get that $\frac{\det N}{b} \cdot g = \br{\det N} \cdot \sum f_i \widetilde{h_i} \in \bZ[\alpha][x_1, x_2] \setminus \bc{0}$.
    Let $h_i := \det(N) \cdot \widetilde{h_i}$.
    As the coefficients of $h_i$ are given by the minors $\det(M_j)$, we have $\lheight{h_i} \leq \max_j \lheight{\det(M_j)} \leq e^2 \cdot h \cdot (4d)^{9n^3}$.
    Since $\frac{\det N}{b} \cdot g \in \bZ[\alpha][x_1, x_2]$, and by the definition of $g$, we must have $a := \frac{\det N}{b} \in \bZ[\alpha] \setminus \bc{0}$.
    In particular, this implies $b \mid \det N$ and therefore we have $\lheight{a}, \lheight{b} \leq \lheight{\det{N}} \leq e^2 \cdot h \cdot (4d)^{9n^3}.$
    As $\lheight{b^{-1} \cdot g} \leq e^2 \cdot h \cdot (4d)^{9n^3}$ and $b \in \bZ$, we have that $\lheight{g} \leq 2 \cdot e^2 \cdot h \cdot (4d)^{9n^3}$.
\end{proof}

We will need the following extension of the above lemma, where we drop the assumption that $I$ is radical.

\begin{lemma}
    \label{lem:eliminationextendedeffective}
    Suppose $\dim I = 1$.
    Define $\pi: \bA^{n} \to \bA^{2}$ to be projection to the first two coordinates.
    If $\dim \pi(V) = 1$ and $\pi(V)$ is equidimensional, then $\pi(V) = Z(g)$ for some $g \in \bZ\bs{\alpha}\bs{x_{1}, x_{2}}$.
    Moreover, there exist an absolute constant $c > 0$, $t \in \bN$, $a \in \bZ\bs{\alpha}$, and $h_{1}, \dots, h_{m} \in A$ such that $a g^{t} = \sum f_{i} h_{i}$, with $t \leq 4 d^{2n}$, $\deg{\br{f_i h_{i}}} \leq 5d^{3n}$, and $\lheight{a}, \lheight{h_{i}} \leq h \cdot e^{c} \cdot d^{cn^{c}}$.
\end{lemma}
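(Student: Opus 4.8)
The statement to prove is \cref{lem:eliminationextendedeffective}, which is the non-radical analogue of \cref{lem:eliminationinQ}. The idea is to bootstrap: we already have the radical case (\cref{lem:eliminationinQ}), so I would reduce the general case to it by passing to the radical of $I$, proving that $g \in \radideal{I}$, and then combining the elimination bound for $\radideal{I}$ with the effective radical-membership bound of \cref{cor:radicalmembership}. Concretely, the first step is to observe that $\radideal{I}$ is still an ideal of $S$ with $\dim \radideal{I} = \dim I = 1$, and that $Z(\radideal{I}) = V$, so the projection $\pi(V)$ is unchanged. Hence $\pi(V) = Z(g)$ with $g \in \overline{\bQ}[x_1, x_2]$; but I need to be a little careful that $g$ can be taken with coefficients in $\bZ[\alpha]$ and with controlled height. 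For that, I would run essentially the same \grobner{}-basis-plus-Cramer argument as in \cref{lem:eliminationinQ}, but applied directly to the generators $f_1, \dots, f_m$ of $I$ (not of $\radideal{I}$): the elimination ideal $I \cap \overline{\bQ}[x_1,x_2]$ may not be radical, but since $\pi(V)$ is equidimensional of dimension $1$ its \emph{radical} is principal, generated by some $g$, and $g^t \in I \cap \overline{\bQ}[x_1,x_2]$ for some $t$. Actually the cleanest route is: the reduced \grobner{} basis argument gives a generator $\tilde g$ of $I \cap \overline{\bQ}[x_1,x_2]$ with bounded degree and height (same computation as in \cref{lem:eliminationinQ}, using \cref{theorem: groebner basis complexity} and \cref{corollary: representation degree ideal membership} with $r=1$, plus \cref{corollary: determinant height bounds algebraic integer}), and then $g := \mathrm{rad}(\tilde g)$, i.e.\ the product of the distinct irreducible factors of $\tilde g$, is the reduced defining equation of $\pi(V)$.

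**Getting from $\tilde g$ to $g$, and the membership identity.** Having produced $\tilde g \in \bZ[\alpha][x_1,x_2]$ with $a_0 \tilde g = \sum f_i h_i^{(0)}$ for suitable $a_0 \in \bZ[\alpha]\setminus\{0\}$, $\deg(f_i h_i^{(0)}) \le d^{4n^2}$, $\lheight{a_0}, \lheight{h_i^{(0)}}, \lheight{\tilde g} \le h \cdot e^{c} \cdot d^{cn^3}$ (exactly the bounds coming out of the proof of \cref{lem:eliminationinQ}, which never actually used radicality of $I$ in that part of the argument — only equidimensionality of $\pi(V)$), I then need to pass to $g$. Since $Z(\tilde g) = Z(g) = \pi(V)$ and both are bivariate, $g \mid \tilde g$ in $\overline{\bQ}[x_1,x_2]$ and $\tilde g \mid g^{t_0}$ where $t_0 \le \deg \tilde g \le d^n$. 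To control $\lheight{g}$: $g$ is a product of some of the irreducible factors of $\tilde g$, so by the Gauss-lemma / factor-height machinery (Mignotte-type bounds, or directly the factoring bounds in \cref{section: height bounds}; e.g.\ via \cref{lem:kaltofen factoring algorithm} once it is available, or \cite{kaltofen1995effective}) we get $\lheight{g} \le \mathrm{poly}(\deg \tilde g) \cdot (\lheight{\tilde g} + \dots) \le h \cdot e^{c'} \cdot d^{c' n^{c'}}$. Then, because $g \in \radideal{I}$ (as $Z(g) \supseteq V$... wait, rather $g$ vanishes on $\pi(V) \supseteq$ image of $V$, so $g \circ \pi$ vanishes on $V$, hence $g \in \radideal{I} \cap \overline{\bQ}[x_1,x_2]$ after checking $g$ really is in $I \cap \overline{\bQ}[x_1,x_2]$'s radical — which holds since $g^{t_0}$ is a multiple of $\tilde g \in I$), I apply \cref{cor:radicalmembership} to the polynomial $g \in A$ (viewing $g \in \bZ[\alpha][x_1,x_2] \subseteq A$): this yields $t_1 \le 2(\deg g + 1)^{n+1} \le 4 d^{2n}$, $a_1 \in \bZ[\alpha]\setminus\{0\}$, and $h_i \in A$ with $\deg(f_i h_i) \le 2(\deg g + 1)^{n+2} \le 5 d^{3n}$ and $\lheight{a_1}, \lheight{h_i} \le 12 \cdot \lheight{g} \cdot e^5 \cdot (\deg g + 1)^{6(n+1)^2} \le h \cdot e^{c''} \cdot d^{c'' n^{c''}}$, such that $a_1 g^{t_1} = \sum f_i h_i$. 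Setting $t := t_1$, $a := a_1$, this is exactly the claimed identity with the claimed bounds (absorbing all the constants into a single absolute $c$).

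**Where the work is.** The routine but careful part is re-auditing the proof of \cref{lem:eliminationinQ} to confirm that radicality of $I$ is never used there — I believe it isn't: that proof only uses that the elimination ideal's reduced \grobner{} basis contributes a single bivariate generator $\tilde g$, which follows from $\pi(V)$ being equidimensional of dimension $1$ (so $\mathrm{rad}(I \cap \overline{\bQ}[x_1,x_2])$ is principal), together with the dimension-dependent degree bounds \cref{theorem: groebner basis complexity}/\cref{corollary: representation degree ideal membership} which hold for arbitrary ideals of dimension $1$, and the determinant height bound \cref{corollary: determinant height bounds algebraic integer}. If it turns out the reduced \grobner{} basis of a non-radical $I$ can contribute \emph{several} bivariate elements, one simply takes their gcd (which is $\tilde g$ up to scalar) — the height of a gcd of bounded-height bivariate polynomials is again polynomially bounded, so this causes no trouble. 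The genuine extra ingredient compared to \cref{lem:eliminationinQ} is the step "$\tilde g \leadsto g = \mathrm{rad}(\tilde g)$", which requires a height bound on the radical (equivalently, on products of subsets of irreducible factors) of a bivariate integer/algebraic-integer polynomial — this is where I would cite the factor-height results of \cref{section: height bounds} (ultimately Kaltofen's factoring bounds, \cref{lem:kaltofen factoring algorithm} / \cite{kaltofen1995effective}, combined with \cref{lem:gauss's lemma} to descend from $\overline{\bQ}$ to $\bZ[\alpha]$). I expect this factor-height bookkeeping over the extension ring $A$, together with tracking how the degree parameter $\deg g \le d^n$ inflates the exponents through \cref{cor:radicalmembership}, to be the main place where one has to be careful to keep everything of the form $h \cdot e^{c} \cdot d^{c n^{c}}$; but since we are allowed an \emph{absolute} constant $c$ in the exponent, there is ample slack and no delicate optimization is needed.
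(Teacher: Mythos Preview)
Your overall strategy is the same as the paper's: obtain an element of the elimination ideal with controlled height, deduce a height bound on its factor $g$, and then certify $g^{t}\in I$ via the effective Nullstellensatz. However, there is a concrete arithmetic gap that prevents your argument from establishing the stated bounds on $t$ and on $\deg(f_i h_i)$.

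You write that applying \cref{cor:radicalmembership} to $g$ yields $t_1\le 2(\deg g+1)^{n+1}\le 4d^{2n}$ and $\deg(f_i h_i)\le 2(\deg g+1)^{n+2}\le 5d^{3n}$. But $\deg g$ can be as large as $d^{n}$ (by B\'ezout), so $2(\deg g+1)^{n+1}$ is on the order of $d^{n^2+n}$, not $d^{2n}$, and likewise $2(\deg g+1)^{n+2}$ is on the order of $d^{n^2+2n}$, not $d^{3n}$. Black-boxing \cref{cor:radicalmembership} therefore only gives $t,\deg(f_i h_i)\le d^{cn^2}$, which is strictly weaker than what the lemma asserts. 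The paper avoids this by \emph{not} invoking \cref{cor:radicalmembership} as a black box: instead it applies Jelonek's theorem (\cref{theorem: jelonek nullstellensatz}) directly to the system $f_1,\dots,f_m,1-yg$ with the mixed degree profile $(d^n{+}1,d,\dots,d)$, which gives $N\approx d^{2n}$ and hence $t\le 4d^{2n}$, $\deg(f_i h_i)\le 5d^{3n}$. Your proof would go through with this change, but as written it does not prove the stated bounds.

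Two further places where you are vaguer than the paper and where care is actually needed: (i) the elimination ideal $I\cap\overline{\bQ}[x_1,x_2]$ is in general \emph{not} principal when $I$ is not radical, so the argument of \cref{lem:eliminationinQ} cannot be re-run verbatim; the paper sidesteps this by solving the linear system $M'v=\lm{g}^t$ (with $t$ already fixed by Jelonek) to produce a single element $g_1$ of the elimination ideal with controlled height and known leading monomial, from which $g\mid g_1$; and (ii) bounding the height of $g$ as a factor of $g_1$ over the ring $\bZ[\alpha]$ is not a Mignotte-type statement over $\bZ$: the paper makes the leading coefficient of $g_1$ an honest integer via a resultant trick, works in the localization $(O_L)_b$ of the ring of integers, and then invokes Gauss's lemma (\cref{lem:gauss's lemma}) together with Lenstra's factor bounds in number fields. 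Your proposal gestures at this step but does not supply the number-field factor bound.
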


\begin{proof}
    Since $\pi(V)$ is equidimensional and $\dim \pi(V) = 1$, we have that $I(\pi(V)) = (\hat{g})$ for some $\hat{g} \in \overline{\bQ}[x_1, x_2]$.
    Let $G$ be the reduced \grobner{} basis of $I$ with respect to the lexicographic order induced by the variable order $x_{n} \succ x_{n-1} \succ \cdots \succ x_{1}$.
    Since $I$ is generated in $A$, we have $G \subset \bQ\bs{\alpha}\bs{x_{1}, \dots, x_{n}}$. 
    Thus, $J := I \cap \overline{\bQ}\bs{x_{1}, x_{2}}$ is generated by the elements of $G \cap \overline{\bQ}\bs{x_{1}, x_{2}}$. 
    Since $\radideal{J} = \radideal{I} \cap \overline{\bQ}[x_1, x_2]$, by the closure theorem \cite[Chapter 3.2, Theorem 3]{cox1997ideals} we have that $\radideal{J} = (\hat{g})$.
    
    Now, let $J' := J \cap \bQ[\alpha]\bs{x_1, x_2}$ and let $\radideal{J'}$ be the radical ideal of $J'$ in $\bQ[\alpha]\bs{x_1, x_2}$. 
    Since $\dim J' = 1$, we have that $\radideal{J'} = (g')$, for some $g' \in \bQ[\alpha]\bs{x_1, x_2}$.
    Since $\bQ$ is a perfect field, by  \cite[\href{https://stacks.math.columbia.edu/tag/030U}{Tag 030U}]{stacks-project}, we know that $\radideal{J} = \radideal{J'} \otimes \overline{\bQ}$.
    Thus, we have that $\radideal{J} = (g') \overline{\bQ}\bs{x_1, x_2}$.
    Letting $g \in \bZ\bs{\alpha}\bs{x_1, x_2}$ be the polynomial obtained by clearing the common denominator of $g'$, we have $\radideal{J} = (g)$ as we wanted.

    By Bézout's theorem (\cref{theorem: bezout inequality}) we have $\deg g \leq d^{n}$.
    Further, by the effective nullstellensatz (\cref{theorem: jelonek nullstellensatz}) and the Rabinowitsch trick, there exists $t \in \bN$ and $h_{i}' \in \bQ\bs{\alpha}\bs{x_{1}, \dots, x_{m}}$ such that $g^{t} = \sum f_{i} h_{i}'$.
    Moreover, we have $t \leq 4d^{2n}$ and $\deg\br{f_i h_{i}'} \leq 5d^{3n}$.
    We assume that $t$ is the smallest power for which we can write $g^{t}$ with these degree bounds.

    Consider the linear system $M := M_{D}(f_{1}, \dots, f_{m})$ with $D = 5d^{3n}$.
    Let $M'$ be the system obtained by dropping those rows of $M$ corresponding to monomials that are smaller than $\lm{g}^{t} = \lm{g^{t}}$.
    Any solution to $M'v = \lm{g}^{t}$ is a monic polynomial in $\bQ\bs{\alpha}\bs{x_{1}, x_{2}}$ with leading coefficient $1$, and leading monomial $\lm{g}^{t}$.
    Let $h_{1}', \dots, h_{m}'$ be a solution to this system obtained by applying Cramer's rule (\cref{lem:generalcramer}). 
    Thus, there is a submatrix $N$ of $\bs{M' \mid \lm{g}^{t}}$ such that $\det{N}$ is the common denominator of every coefficient of $h_i'$ for $i \in [m]$.
    Moreover, there are submatrices $M_j$ of $\bs{M' \mid \lm{g}^{t}}$ such that each coefficient $\det{N} \cdot h_i'$ is given by some $\det{M_j}$.
    In particular, \cref{corollary: determinant height bounds algebraic integer} implies that $\lheight{\det{N} \cdot h_i'} \leq e \cdot D \cdot (2ehD + D \log D + 2 \log(eD)) \leq 4e^2hD^3 \leq h \cdot e^2 \cdot d^{10n}$.
    
    Let $g_{1} := \sum f_{i} \cdot \br{\det{N} \cdot h_{i}'}$.
    Hence, we have $g_{1} \in A$, and \cref{corollary: height bounds polynomials algebraic number coefficients} implies $\lheight{g_{1}} \leq h \cdot e^4 \cdot d^{14n}$.
    Let $\gamma \in \bZ\bs{\alpha}$ be the leading coefficient of $g_{1}$.
    Treating $\gamma$ as an element of $\bZ\bs{z}$ of degree at most $e$, we see that $\gamma$ and $q$ are relatively prime.
    By \cref{cor:resdischeightbounds}, we can find an element $\delta \in \bZ\bs{z}$ with $\deg(\delta) < e $, and $b := \resultant{\gamma}{q}{z} \in \bN^*$ such that $\gamma \delta \equiv b \bmod q$.
    Moreover, \cref{cor:resdischeightbounds} implies $\lheight{\delta}, \lheight{b} \leq h \cdot e^4 \cdot d^{18n}$.
    Replacing $g_{1}$ by $\delta g_{1}$, we can assume that the leading coefficient of $g_{1}$ is $b$, and $\lheight{g_{1}} \leq h \cdot e^6 \cdot d^{22n}$.

    Since $g_{1} \in I \subset \radideal{I} = \ideal{g}$, we have $g \mid g_{1}$ in $\bQ\bs{\alpha}\bs{x_{1}, x_{2}}$. 
    Let us write $g_{1} = g g_{2}$.
    We use this to deduce a bound on $\lheight{g}$.
    Let $O_{L}$ be the ring of integers of $\bQ\bs{\alpha}$. 
    Thus, we have $\bZ\bs{\alpha} \subset O_{L} \subset \bZ\bs{\alpha}_{\Delta}$, where $\Delta := \disc{z}{q}$.
    In the ring $\br{O_{L}}_{b}\bs{x_{1}, x_{2}}$, the polynomial $g_{1}$ is monic.
    By Gauss's lemma \cref{lem:gauss's lemma} we can assume that $g, g_{2} \in \br{O_{L}}_{b}\bs{x_{1}, x_{2}}$.
    Further, after rescaling, we can assume that $g', g'_{2}$ are the monic forms of $g, g_{2}$ in $\br{O_{L}}_{b}\bs{x_{1}, x_{2}}$.
    By \cite[pp. 64-67]{lenstra}, there is a universal constant $c_1 > 0$ such that every coefficient of $g'$, when written as a monic element of $\bZ\bs{\alpha}_{Db}$, uses rational numbers of absolute value $\leq 2^{h \cdot e^{c_{1}} \cdot d^{c_{1}n^{c_{1}}}}$.
    Multiplying by $\Delta b$ shows that there is a universal constant $c_2 > 0$ such that $\lheight{g} \leq h \cdot e^{c_{2}} \cdot d^{c_{2}n^{c_{2}}}$.

    Now that we have control on $\lheight{g}$, by Cramer's rule applied to $M v = g^t$, there are $a \in \bZ\bs{\alpha}$ and $h_{1}, \dots, h_{m} \in A$ such that $a g^{t} = \sum f_{i} h_{i}$, with $\deg{f_i h_{i}} \leq D$, and $\lheight{a}, \lheight{h_{i}} \leq h \cdot e^{c} \cdot d^{cn^{c}}$ for some universal constant $c> 0$.
    This height bound comes from noticing that, by Cramer's rule, $a$ and every coefficient of $h_i$ is a minor of the matrix $\bs{M \mid g^t}$, and we have upper bounds on $\lheight{M}$ and $\lheight{g^t}$.
\end{proof}

\AG{think about the statements for higher dimension}

\subsection{Absolutely irreducible factors of bivariate polynomials}

In \cite{kaltofen1995effective}, Kaltofen gives an algorithm to factor bivariate polynomials with coefficients in a field $\bF$ over the algebraic closure $\overline{\bF}$.
The algorithm uses the usual template of factoring algorithms: given $f(x, y)$, the algorithm begins by picking a root $\alpha \in \overline{\bF}$ of $f(x, 0)$ in $\overline{\bF}$, which is the same as the linear factor $x - \alpha$.
This factorisation is then lifted, using Newton iteration, modulo $y^{t}$ for increasing powers of $t$.
After some steps of lifting, a reconstruction step is performed and a factor $f_{1}$ of $f$, over $\overline{\bF}$, is obtained.

A key observation made in \cite{kaltofen1995effective} is that the lifting and reconstruction steps only use arithmetic in the field $\bF$, even though the factor $(x - \alpha)$ that the Newton iteration begins with has coefficients in $\overline{\bF}$.
In particular this implies that the factor $f_{1}$ has coefficients in $\bF\br{\alpha}$.
Since the minimal polynomial of $\alpha$ is a factor of $f(x, 0)$, the result can be interpreted as the fact that factors of $f$ can be found in low complexity extensions of $\bF$.
Further, every irreducible factor of $f$ can be obtained this way, therefore every irreducible factor of $f$ lies in some low complexity extension of $\bF$.
While each factor lies in a low complexity extension of $\bF$, since the number of factors of $f$ can be large, the compositum of all these extensions might be very large.
This is a common phenomenon in computational algebraic geometry, see \cite[Section~4]{bayer1993can}.
Following the principle, we will only factor as much as is required, in order to ensure that we can work with low complexity extensions.

The next lemmas are from \cite{kaltofen1995effective}, and they formalise the discussion in the previous two paragraphs.

\begin{lemma}
    \label{lem:kaltofen factoring algorithm}
    Given polynomial $f \in \bF\bs{x, y}$ that monic in $x$, such that $\deg{f} = d$, and such that the resultant $\resultant{f(x, 0)}{\partial f(x, 0) / \partial x}{x} \neq 0$, there exists an algorithm that returns a list of absolutely irreducible factors $f_{1}, \dots, f_{r}$ of $f$, with each $f_{i} \in \bF(\alpha_{i})\bs{x, y}$, where $\alpha_{i}$ is a root of $f(x, 0)$.
\end{lemma}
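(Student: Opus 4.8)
The plan is to invoke Kaltofen's bivariate absolute factorization algorithm \cite{kaltofen1995effective} and to track carefully the field of definition of every intermediate object. First I would exploit the hypothesis $\resultant{f(x, 0)}{\partial f(x, 0) / \partial x}{x} \neq 0$: it says precisely that $f(x,0)$ is separable, so since $f$ is monic in $x$ we may write $f(x, 0) = \prod_{j}(x - \beta_j)$ in $\overline{\bF}[x]$ with the $\beta_j$ pairwise distinct. In particular $f$ is squarefree in $\overline{\bF}[x, y]$, and two distinct absolutely irreducible factors of $f$ cannot vanish simultaneously at a common point of the line $\{y = 0\}$; hence every absolutely irreducible factor $g$ of $f$ may be taken monic in $x$, and $g(x, 0)$ is the product of a uniquely determined sub-multiset of the $(x - \beta_j)$.

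Next I would run Hensel (Newton) lifting from each root. Fix a root $\alpha = \beta_j$ of $f(x, 0)$. Since $x - \alpha$ is coprime to the cofactor $f(x,0)/(x - \alpha)$ by separability, it lifts uniquely to a monic factor $\widehat{g}_\alpha(x, y) \equiv x - \alpha \pmod{y}$ of $f$ in $\overline{\bF}[[y]][x]$, computed modulo $y^{t}$ for any desired precision $t$ by Newton iteration. The crucial structural point --- which is the heart of the argument sketched in the paragraphs preceding the lemma --- is that each Newton step uses only ring operations together with an inversion modulo $x - \alpha$ of an element of $\bF(\alpha)[x]$; consequently every power-series coefficient of $\widehat{g}_\alpha$ lies in $\bF(\alpha)$, that is, $\widehat{g}_\alpha \in \bF(\alpha)[[y]][x]$ to all orders.

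Then comes the reconstruction step. Let $g$ be the absolutely irreducible factor of $f$ that vanishes at $(\alpha, 0)$, and set $k := \deg_x g$; by a B\'ezout-type estimate (\cref{theorem: bezout inequality}) both $\deg g$ and the precision $t$ that one needs are polynomially bounded in $d$. For $t$ past this bound, $g$ is the unique monic factor of $f$ of $x$-degree $k$ whose reduction modulo $y^{t}$ equals the product of $\widehat{g}_\alpha$ with the appropriate subproduct of the other lifted linear factors; equivalently, the coefficients of $g$ (and of its truncated cofactor $f/g$) are the solution of a linear system whose entries lie in $\bF(\alpha)$ --- a Pad\'e/short-vector reconstruction extracted from $\widehat{g}_\alpha \bmod y^{t}$. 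This system has a solution over $\overline{\bF}$ and all its entries lie in $\bF(\alpha)$, so it has a solution over $\bF(\alpha)$; by the uniqueness of $g$ that solution is $g$ itself, whence $g \in \bF(\alpha)[x, y]$. Finally, since $f(x,0)$ is separable, every absolutely irreducible factor of $f$ vanishes at some $(\beta_j, 0)$; iterating over the roots $\beta_j$, and each time choosing $\alpha_i$ to be a root of $f(x,0)$ lying on the $i$-th not-yet-found factor $f_i$ (discarding factors already produced), yields the complete list $f_1, \dots, f_r$ with $f_i \in \bF(\alpha_i)[x, y]$, as claimed.

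The step I expect to be the main obstacle is the reconstruction: one must show that a genuine absolutely irreducible factor of $f$, once truncated modulo a suitable power of $y$, is pinned down \emph{uniquely} by a full-rank linear system over $\bF(\alpha)$, which needs both the explicit degree bound on the truncation order and the ``minimal true factor'' uniqueness argument that separates $g$ from spurious products of Hensel factors. The remaining ingredients --- the separability bookkeeping, the invariance of the Hensel lift under $\bF(\alpha)$, and the descent of a solvable linear system from $\overline{\bF}$ to $\bF(\alpha)$ --- are routine, and in any case are exactly what is established in \cite{kaltofen1995effective}, which I would cite for the quantitative details.
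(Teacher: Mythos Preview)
Your proposal is correct and mirrors the paper's treatment: the paper does not give a formal proof of this lemma but instead sketches exactly the Newton-lift-then-reconstruct template in the two paragraphs preceding the statement and cites \cite{kaltofen1995effective} for the details. Your write-up is a faithful and somewhat more detailed expansion of that sketch, including the key point that both the Hensel lifting and the linear-algebraic reconstruction use only arithmetic in $\bF(\alpha)$.
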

While the result gives an actual algorithm that computes the factors, we are just interested in the fact that every factor of $f$ is of this form.
The list of factors returned by the algorithm is not necessarily distinct, and computing the set of distinct factors is a non-trivial task depending on the model used to represent the extensions $\bF(\alpha_{i})$.
We now state bounds on the logarithmic heights of the factors.
The following is \cite[Corollary~1, Theorem~3]{kaltofen1995effective}, although we state a simplified statement specialised to our setting.
\begin{lemma}
    \label{lem:factorheight}
    Suppose $f \in \bZ\bs{x, y}$ is a squarefree polynomial with $\deg(f) = d$ and $\lheight{f} = h$ that is monic in $x$, and such that $f(x, 0)$ is squarefree.
    Suppose $f_{1}$ is an absolutely irreducible factor of $f$.
    
    There exists $g(x)$ an irreducible (in $\bZ\bs{x}$) factor of $f(x, 0)$, and polynomials $\widetilde{f}_{1}, f_{2}, h \in \bZ\bs{x, y, z}$ and integers $\Delta_{1}, \Delta_{2}$ with $\lheight{\Delta_{1}}, \lheight{\Delta_{2}}, \lheight{\widetilde{f}_{1}} = O(h d^{c})$ such that $\Delta_{1} \Delta_{2} f = \widetilde{f}_{1} f_{2} + g(z)h$.
    Further, there exists a root $\alpha$ of $g$ such that $f_{1}(x, y) = \widetilde{f}_{1}(x, y, \alpha)$.
\end{lemma}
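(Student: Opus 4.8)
The plan is to obtain this by repackaging the output of Kaltofen's bivariate factorisation; indeed the statement is essentially \cite[Corollary~1 and Theorem~3]{kaltofen1995effective} specialised to our situation. Since $f(x,0)$ is squarefree we have $\resultant{f(x, 0)}{\partial f(x, 0) / \partial x}{x}\neq 0$, so \cref{lem:kaltofen factoring algorithm} applies with $\bF=\bQ$: the absolutely irreducible factor $f_{1}$ is recovered by Hensel-lifting the linear factor $x-\alpha$ of $f(x,0)$ over $\overline{\bQ}$ and then performing a reconstruction step, where $\alpha$ is a root of some irreducible factor of $f(x,0)$. Because $f$ is monic in $x$, $f(x,0)$ is monic, so this factor can be taken to be a monic $g=g(x)\in\bZ[x]$ (and is then the minimal polynomial of $\alpha$ over $\bQ$). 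All lifting and reconstruction arithmetic takes place over $\bQ$, so the factor produced lies in $\bQ(\alpha)[x,y]$; normalising it to be monic in $x$ --- which is legitimate since every $\overline{\bQ}[x,y]$-factor of the $x$-monic polynomial $f$ has constant $x$-leading coefficient --- we get $\phi\in\bQ(\alpha)[x,y]$ whose complementary factor $\psi:=f/\phi$ is again monic in $x$ and lies in $\bQ(\alpha)[x,y]$, so that $f=\phi\psi$ is an exact factorisation over $\bQ(\alpha)[x,y]$.

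Next I would clear denominators and produce the modular identity. Representing elements of $\bQ(\alpha)=\bQ[z]/(g(z))$ in the power basis $1,\alpha,\dots,\alpha^{\deg g-1}$, there are nonzero integers $\Delta_{1},\Delta_{2}$ with $\Delta_{1}\phi,\Delta_{2}\psi\in\bZ[\alpha][x,y]$; let $\widetilde f_{1},\widetilde f_{2}\in\bZ[x,y,z]$, each of $z$-degree below $\deg g$, be their power-basis representatives, and set $f_{1}:=\widetilde f_{1}(x,y,\alpha)=\Delta_{1}\phi$ (a nonzero integer multiple of an absolutely irreducible factor of $f$). Then $\widetilde f_{1}(x,y,\alpha)\,\widetilde f_{2}(x,y,\alpha)=\Delta_{1}\Delta_{2}\,f(x,y)$, hence $P:=\widetilde f_{1}\widetilde f_{2}-\Delta_{1}\Delta_{2}f\in\bZ[x,y][z]$ vanishes at $z=\alpha$. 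Dividing $P$ by the monic polynomial $g(z)$ with remainder inside $\bZ[x,y][z]$ yields $P=g(z)h+r$ with $h,r\in\bZ[x,y,z]$ and $\deg_{z}r<\deg g$; since $1,\alpha,\dots,\alpha^{\deg g-1}$ form a free $\bZ[x,y]$-basis of $\bZ[\alpha][x,y]$ and $r(x,y,\alpha)=P(x,y,\alpha)=0$, we conclude $r=0$. Therefore $\Delta_{1}\Delta_{2}f=\widetilde f_{1}\widetilde f_{2}-g(z)h=\widetilde f_{1}f_{2}+g(z)h'$ after setting $f_{2}:=\widetilde f_{2}$ and $h':=-h$, with $g$ an irreducible factor of $f(x,0)$ in $\bZ[x]$ and $f_{1}(x,y)=\widetilde f_{1}(x,y,\alpha)$; this is the asserted identity.

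It remains to account for sizes. The degree bounds $\deg\widetilde f_{1},\deg g,\deg h=O(d)$ follow at once from $\deg f=d$ and $\deg_{z}\widetilde f_{1}<\deg g\leq d$. The height bound $\lheight{\Delta_{1}},\lheight{\Delta_{2}},\lheight{\widetilde f_{1}}=O(hd^{c})$ is exactly the quantitative part supplied by \cite[Corollary~1, Theorem~3]{kaltofen1995effective}: one tracks the coefficient growth through the Newton--Hensel lifting and the linear-algebraic reconstruction, and combines it with Mignotte-type bounds on factors of integer polynomials and a bound on $\disc{x}{g}$, which governs how far $\bZ[\alpha]$ lies inside the ring of integers of $\bQ(\alpha)$ and hence bounds the denominators absorbed into $\Delta_{1},\Delta_{2}$. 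I expect this last height analysis of Kaltofen's lifting to be the only genuine obstacle; the normalisation, clearing of denominators, and monic-division step above are routine. Accordingly the proof consists of invoking \cref{lem:kaltofen factoring algorithm} together with the cited estimates of \cite{kaltofen1995effective}, and carrying out the repackaging, rather than redoing Kaltofen's analysis.
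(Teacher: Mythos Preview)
Your proposal is correct and follows essentially the same approach as the paper: invoke Kaltofen's algorithm to obtain the factorisation over $\bQ(\alpha)$, clear denominators, and cite \cite[Corollary~1, Theorem~3]{kaltofen1995effective} for the height bound on $\widetilde f_{1}$ and $\Delta_{1}$, while the discriminant of $g$ controls $\Delta_{2}$ via the inclusion $O_{\bQ(\alpha)}\subset(1/\disc{z}{g})\bZ[\alpha]$ applied to the algebraic-integer coefficients of the monic cofactor. The only nuance worth sharpening is that the paper separates these two bounds explicitly---Kaltofen's estimates give $\Delta_{1}$ and $\lheight{\widetilde f_{1}}$ directly, whereas for $\Delta_{2}$ no such bound is available and one instead uses Gauss's lemma (monicity) to force the cofactor's coefficients into $O_{\bQ(\alpha)}$ and then takes $\Delta_{2}=\disc{z}{g}$.
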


\begin{proof}
    The algorithm of \cref{lem:kaltofen factoring algorithm} shows the existence of $g, \alpha$, and also monic polynomials $\widehat{f}_{1}, \widehat{f_{2}} \in \bQ\bs{x, y, z}$ such that $f = \widehat{f}_{1}\widehat{f}_{2} + g(z)h$ for some $h \in \bQ\bs{x, y, z}$.
    By \cite[Corollary~1]{kaltofen1995effective}, the coefficients of $\widehat{f}_{1}$ are the solutions of a certain linear system, and by \cite[Theorem~3]{kaltofen1995effective} the minors of this system have logarithmic height $\bigO{h \cdot d^{c_{1}}}$.
    Using \cref{cor:detheightbound} to invert the denominators, we can write the coefficients of $\widehat{f}_{1}$ as elements of $(1/\Delta_{1}) \bZ\bs{z}$, where $\lheight{\Delta_{1}} = \bigO{h \cdot d^{c_{2}}}$.
    Further, the coefficients themselves also have logarithmic height $\bigO{h \cdot d^{c_{2}}}$.

    No similar bounds are provided on the logarithmic height of $\widehat{f}_{2}$, however since $f, \widehat{f}_{1}, \widehat{f}_{2}$ are monic we can use the fact that the coefficients of $\widehat{f}_{2}$ are algebraic integers, in particular they are contained in $(1/\Delta_{2}) \bZ\bs{z}$ where $\Delta_{2} = \disc{g}{z}$.
    Writing $\widehat{f}_{2}$ in this form and clearing denominators from the equation $f = \widehat{f}_{1}\widehat{f}_{2} + g(z)h$ gives us the required equation.
\end{proof}

Recall the notation of the number theory preliminaries.
As a corollary of the factoring algorithm, \cite[Theorem~8]{kaltofen1995effective} obtains a bound on primes $p$ for which an absolutely irreducible polynomial remains absolutely irreducible after applying the map $\bZ \to O_{\bF} / \kp$ for some prime $\kp$ above $p$.
\begin{theorem}
    \label{thm:kaltofen mod p irreducible}
    Let $A = \bZ\bs{\alpha}\bs{x_{1}, \dots, x_{m}}$ for an algebraic integer $\alpha$ with minimal polynomial $q$ satisfying $\deg{q} = e$ and $\lheight{q} \leq h$.
    Suppose $f \in A$ is absolutely irreducible with $\lheight{f} \leq h$ and $\deg{f} \leq d$.
    There exists an integer $\Delta$ with $\lheight{\Delta} \leq c \br{e d^{6} \log\br{h} + e d^{6}n \log\br{d}}$ for a universal constant $c$, such that for any prime $p \not | \Delta$ and any prime $\kp$ above $p$, the image of $f$ under the map $A \to O_{\bF} / \kp$ remains absolutely irreducible.
    
\end{theorem}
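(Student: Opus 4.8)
The plan is to encode absolute irreducibility of $f$ as the nonvanishing of a single element of $\bZ\bs{\alpha}$ assembled from Kaltofen's factoring algorithm, and then to let $\Delta$ be an integer obtained from that element together with the few ``obviously bad'' primes.

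\paragraph{Normalization.} First I would reduce to the situation of \cref{lem:kaltofen factoring algorithm}: after a linear change of the variables $x_1, \dots, x_m$ with bounded integer coefficients, followed by a translation $x_i \mapsto x_i + t_i$ with small integers $t_i$, we may assume $f$ is monic in $x_1$ and that $f(x_1, 0, \dots, 0)$ is squarefree (this is possible because $f$, being absolutely irreducible, is squarefree, so the discriminant of $f(x_1, 0, \dots, 0)$ with respect to $x_1$ is a nonzero polynomial in the $t_i$). By \cref{lem:kpsheightbound} and \cref{corollary: height bounds polynomials algebraic number coefficients} this multiplies $\lheight{f}$ only by a factor polynomial in $d$ and $\log h$; the leading coefficient $\ell \in \bZ\bs{\alpha}$ of $f$ in $x_1$ and the resultant $R := \resultant{f(x_1, 0, \dots, 0)}{\partial f(x_1, 0, \dots, 0)/\partial x_1}{x_1} \in \bZ\bs{\alpha}$ are nonzero, with $\lheight{\ell}, \lheight{R} = O(e d^2(\log h + m\log d))$ by \cref{cor:resdischeightbounds} and \cref{corollary: determinant height bounds algebraic integer}. (For $m > 2$ one either invokes the multivariate version of Kaltofen's algorithm, or first projects to two variables using \cref{lem:effectivebertinihypersurface}, which preserves absolute irreducibility with a polynomially bounded height increase.) Put into $\Delta$ the factors $\disc{z}{q}$, $\resultant{\ell}{q}{z}$ and $\resultant{R}{q}{z}$; then for every prime $p \nmid \Delta$ and every $\kp$ above $p$ the reduction map $A \to O_{\bF}/\kp$ is well defined, $\kp$ is unramified, and the image of $f$ stays monic in $x_1$ with squarefree value at $x_2 = \dots = x_m = 0$, so Kaltofen's algorithm can be rerun over $O_{\bF}/\kp$ verbatim.

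\paragraph{The irreducibility minor.} Run the factoring algorithm of \cref{lem:kaltofen factoring algorithm} over $\bF = \bQ(\alpha)$: the Newton lifting is carried to order $t = O(d^2)$, and the absolutely irreducible factors are recovered as the solution space of a linear system over $\bF$ whose matrix, by \cref{lem:factorheight} and Kaltofen's complexity bounds together with \cref{corollary: determinant height bounds algebraic integer} and \cref{lem:modgheightbounds} to account for reductions modulo $q$, has entries in $\bZ\bs{\alpha}$ of logarithmic height $O(e d^c(\log h + m\log d))$. Since $f$ is absolutely irreducible, this solution space is one-dimensional, equivalently a particular maximal minor $M_f \in \bZ\bs{\alpha}$ of the matrix is nonzero; \cref{corollary: determinant height bounds algebraic integer} then gives $\lheight{M_f} = O(e d^6(\log h + m\log d))$. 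Adjoining $\resultant{M_f}{q}{z} \in \bZ \setminus \{0\}$ to $\Delta$ and collecting the above estimates yields $\lheight{\Delta} \leq c(e d^6 \log h + e d^6 m\log d)$ after absorbing lower order terms into $c$.

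\paragraph{Conclusion and the hard part.} Fix $p \nmid \Delta$ and $\kp$ above $p$. By the normalization step the algorithm runs over $O_{\bF}/\kp$, its reconstruction matrix is the reduction modulo $\kp$ of the one over $\bF$, and since $\resultant{M_f}{q}{z} \not\equiv 0 \pmod p$ we have $M_f \notin \kp$, so the reduced minor is still nonzero and the reduced system again has a one-dimensional solution space. Hence the image of $f$ in $(O_{\bF}/\kp)\bs{x_1, \dots, x_m}$ has a unique absolutely irreducible factor and is therefore absolutely irreducible. The main obstacle in making this rigorous is the algebraic number theoretic bookkeeping: one must verify that every phase of the factoring algorithm --- choosing a root of $f(x_1, 0, \dots, 0)$, Newton lifting, and the linear reconstruction --- commutes with $A \to O_{\bF}/\kp$, which is precisely what forces $p \nmid \disc{z}{q}$, and one must propagate the extra factor $e$ through all the height estimates of \cref{section: height bounds} carefully enough to reach the stated $e\,d^6$ bound rather than a larger one. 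The normalizing change of variables, chosen so as to be valid simultaneously over $\bF$ and modulo $p$, is the other place requiring care.
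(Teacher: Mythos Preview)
The paper does not prove this theorem; it is quoted as \cite[Theorem~8]{kaltofen1995effective} and simply cited without argument. So there is no ``paper's own proof'' to compare against here.

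That said, your sketch is broadly faithful to Kaltofen's original argument: normalize to a bivariate (or suitably projected) monic polynomial with squarefree specialization, run the lifting-and-reconstruction factoring algorithm over $\bQ(\alpha)$, observe that absolute irreducibility is equivalent to the nonvanishing of a specific maximal minor of the reconstruction system, and then exclude the primes at which that minor or the normalization data could vanish by taking resultants with $q$. The height bookkeeping you outline (tracking the factor $e$ from reductions modulo $q$ via \cref{corollary: determinant height bounds algebraic integer} and \cref{lem:modgheightbounds}) is exactly what one must do to get the stated bound. Two minor remarks: first, the reduction from $m$ variables to two is not via \cref{lem:effectivebertinihypersurface} in Kaltofen's paper but via a direct bivariate specialization whose validity is itself certified by a polynomial condition of bounded degree (this is where the $n\log d$ term in the bound comes from); second, your claim that the reconstruction matrix over $O_\bF/\kp$ is literally the reduction of the matrix over $\bF$ requires that the root of $f(x_1,0,\dots,0)$ used to seed the lifting reduces well modulo $\kp$, which is guaranteed once $p\nmid\resultant{R}{q}{z}$ but deserves a sentence. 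With those caveats, your outline is correct and matches the source.
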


\section{Geometric Irreducibility and base change}\label{section: base change}

In this section, we prove our technical base change results.
We first show that certain mod $p$ reductions preserve dimensions of zerosets \cref{lem:dimpreserve}.
We then show that irreducibility of dimension $0$ and dimension $1$ varieties is also preserved under such base changes.
Finally, we show that reducible zerosets of dimension $0$ and $1$ remain reducible under base changes, and further, for sufficiently many primes, there are irreducible components of the zeroset after base change that are definable over $\bF_{p}$ itself.
This last fact will be crucial when we apply the Lang-Weil bounds \cref{thm:effective lang weil} in a subsequent section.
We split the study of irreducibility and reducibility into two parts, based on the dimension.
The dimension $0$ case (\cref{cor:onerootheightbound} and \cref{lem:tworootsbounds}) is relatively straightforward, and uses ideas from \cite{K96}
The dimension $1$ case (\cref{thm:irredpreserve} and \cref{thm:redpreserve}) is significantly more involved.

We show \cref{cor:onerootheightbound},  \cref{lem:tworootsbounds}, and \cref{thm:redpreserve} for zerosets that are defined over $\bZ$.
Our proof of \cref{thm:redpreserve} invokes \cref{lem:dimpreserve} and \cref{thm:irredpreserve} for zerosets that are defined not just over $\bZ$, but also over small algebraic extensions of $\bQ$.
Therefore, we prove \cref{lem:dimpreserve} and \cref{thm:irredpreserve} in this more general setting.

With all of the above in mind, we now set up the notation for this section.
We recall the preliminaries discussed in \cref{section: preliminaries}, especially the number theory preliminaries in \cref{subsec:number theory prelims}.
Let $q \in \bZ\bs{z}$ be a monic irreducible polynomial with $\deg{q} = e$ and $\lheight{q} \leq h$, and let $\alpha$ be a root of $q$.
Let $\bF := \bQ\bs{\alpha}$, and let $O_{\bF}$ be the ring of integers of $\bF$, we have $\bZ\bs{\alpha} \subset O_{\bF}$.
Recall that for each prime $\kp$ above $p$ with $p \not | \disc{z}{q}$, we have a quotient map $\bZ\bs{\alpha} \to O_{\bF} / \kp \cong \bF_{p}\bs{z} / q_{1}$, where $q_{1}$ is an irreducible factor of $q \pmod p$.

We define $A := \bZ\bs{\alpha}\bs{x_{1}, \dots, x_{n}}$.
Suppose $f_{1}, \dots, f_{m} \in A$ are polynomials with $\deg{f_{i}} \leq d$ with $d \geq 3$ and $\lheight{f_{i}} \leq h$.
Let $I := \ideal{f_{1}, \dots, f_{m}}$ and $V := Z(I)$.
For a prime $p \in \bN$ such that $p \not| \disc{z}{q}$, and for any prime $\kp \subset O_{\bF}$ that lies above $p$ corresponding to a factor $q_{1}$ of $q \pmod p$, let $I_{\kp}$ be the ideal of $(\bF_{p}\bs{z} / q_{1})\bs{x_{1}, \dots, x_{n}}$ generated by the images of $f_{1}, \dots, f_{m}$ under the map $\bZ\bs{\alpha} \to O_{\bF} / \kp$ described above.
Let $V_{\kp}$ denote the algebraic subset of $I_{\kp}$ in $\overline{\bF}_{p}\bs{x_{1}, \dots, x_{n}}$.
Define $\sigma := dm + 2$.

\begin{lemma}
    \label{lem:dimpreserve}
    There exists a universal constant $c$ and an integer $\Delta$ with $\lheight{\Delta} \leq h \cdot e^{c} \cdot 2^{\br{n \log \sigma}^{c}}$ such that for any prime $p \not | \Delta$ and for every $\kp$ above $p$ we have $\dim(V) = \dim(V_{\kp})$.
\end{lemma}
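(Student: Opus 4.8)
Here is the plan.

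The plan is to make the elimination-ideal criterion for dimension (\cref{lem:dimensionelimination}) \emph{effective}, reducing the equality $\dim V = \dim V_{\kp}$ to the statement that the ranks of one matrix over $\bZ[\alpha]$ and of finitely many of its row-restrictions are preserved when one reduces modulo $p$; the latter is then controlled by a handful of determinants and resultants. Throughout, for $U \subseteq [n]$ write $I_U := I \cap \overline{\bQ}[x_j : j \in U]$ for the elimination ideal, as in \cref{lem:dimensionelimination}.

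\emph{Step 1: from dimension to ranks of a matrix.} Fix $D := d^{Cn}$ for a large enough absolute constant $C$. For $U \subseteq [n]$ let $W_U$ denote the span of the monomials of degree $\le D$ in the variables $\{x_j : j \in U\}$, let $M_D = M_D(f_{1},\dots,f_{m})$ be the membership matrix of \cref{subsec:notation} with cofactor-degree bound $D$ (so $M_D v$ is the coefficient vector of $\sum_i f_i h_i$), and let $\rho_U M_D$ be $M_D$ with the rows indexed by monomials in $W_U$ deleted. Since $\ker(M_D) \subseteq \ker(\rho_U M_D)$, one has $\operatorname{Im}(M_D) \cap W_U \ne 0 \iff \Rank(\rho_U M_D) < \Rank(M_D)$, and $\operatorname{Im}(M_D) \cap W_U \subseteq I_U$. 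The claim is that, for $C$ large enough and every $U$,
$$ I_U = 0 \iff \Rank(M_D) = \Rank(\rho_U M_D). $$
The implication $(\Rightarrow)$ is immediate from the inclusion above. For $(\Leftarrow)$, suppose $I_U \ne 0$. Then $Z(I_U)$ is a proper closed subset of the affine space on $\{x_j : j\in U\}$, of degree $\le \deg V \le d^{n}$ (\cref{theorem: degree bounds}); projecting it generically and linearly onto an affine space of dimension $\dim Z(I_U)+1$ gives a hypersurface whose degree-$\le d^{n}$ equation, pulled back, yields a nonzero $g_0 \in \radideal{I} \cap \overline{\bQ}[x_j:j\in U]$ with $\deg g_0 \le d^{n}$. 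As $g_0 \in \radideal{I}$, applying \cref{theorem: jelonek nullstellensatz} to $f_{1},\dots,f_{m},\,1-y g_0$ (the Rabinowitsch trick) produces $t$ and cofactors with $t,\deg(f_i h_i) \le d^{O(n)}$ and $g_0^{t} = \sum_i f_i h_i$; for $C$ large, $g_0^{t}$ is a nonzero element of $\operatorname{Im}(M_D)\cap W_U$, a contradiction. Running the same argument with $|U|=r{+}1$ and ``$\ne 0$'' in place of ``$=0$'', \cref{lem:dimensionelimination} shows that for each $r$, ``$\dim V = r$'' is a fixed Boolean combination of equalities and strict inequalities among $\Rank(M_D)$ and the $\Rank(\rho_U M_D)$ with $|U|\in\{r,r+1\}$.

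\emph{Step 2: descending modulo $\kp$.} Every $M \in \{M_D\} \cup \{\rho_U M_D : U \subseteq [n]\}$ has entries in $\bZ[\alpha]$ of logarithmic height $\le h$; there are at most $1+2^{n}$ such matrices, and for each we pick a nonzero minor $\mu_M \in \bZ[\alpha]\setminus\{0\}$ of maximal size, so that $\Rank(M \bmod \kp) = \Rank(M)$ whenever $\mu_M \notin \kp$ (reduction can only lower rank). Viewing $\mu_M$ as an element of $\bZ[z]$ of $z$-degree $<e$, we have $\mu_M \notin \kp$ for every $\kp$ above $p$ as soon as $p \nmid \resultant{\mu_M}{q}{z}$, which is a nonzero integer since $q$ is irreducible of degree $e$ and $\mu_M \ne 0$. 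Set
$$ \Delta := \disc{z}{q} \cdot \prod_{M} \resultant{\mu_M}{q}{z}. $$
For any prime $p \nmid \Delta$ and any $\kp$ above $p$: $p$ is unramified, the membership matrix of $I_\kp$ over $O_\bF/\kp$ is precisely $M_D \bmod \kp$, and $\Rank(M \bmod \kp) = \Rank(M)$ for every $M$ above. Since \cref{theorem: degree bounds} and \cref{theorem: jelonek nullstellensatz} hold over every algebraically closed field with the same bounds, and since elimination commutes with the field extension $O_\bF/\kp \subseteq \overline{\bF}_p$, the Boolean combination of Step 1 characterises $\dim V_\kp$ over $\overline{\bF}_p$ using the \emph{same} $D$ and the \emph{same} matrices reduced mod $\kp$; it therefore evaluates identically over $\overline{\bQ}$ and over $\overline{\bF}_p$, so $\dim V_\kp = \dim V$.

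\emph{Step 3: height bound, and the main obstacle.} Each $\mu_M$ is a determinant of a square submatrix of $M_D$ of size at most the number of rows of $M_D$, which is $\le \binom{D+d+n}{n} \le d^{O(n^2)}$, with entries in $\bZ[\alpha]$ of height $\le h$; \cref{corollary: determinant height bounds algebraic integer} gives $\lheight{\mu_M} \le h \cdot e^{O(1)} \cdot 2^{O(n^2 \log d)}$. Then \cref{cor:resdischeightbounds} bounds $\lheight{\resultant{\mu_M}{q}{z}}$ and $\lheight{\disc{z}{q}}$ by $h \cdot e^{O(1)} \cdot 2^{O(n^2 \log d)}$ as well, and multiplying the at most $2^{n}+2$ factors gives $\lheight{\Delta} \le h \cdot e^{O(1)} \cdot 2^{O(n^2\log d)}$; since $\sigma = dm+2 \ge d$ we have $n^2\log d \le (n\log\sigma)^{c}$ for a suitable absolute $c$, giving $\lheight{\Delta} \le h\cdot e^{c}\cdot 2^{(n\log\sigma)^{c}}$ after enlarging $c$, as required. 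The delicate point throughout is keeping $D$ at $d^{O(n)}$: general ideal membership has doubly exponential witness degrees, so one cannot test ``$g\in I$'' by a small linear system directly; the resolution is that $\dim V$ depends only on $\radideal{I}=I(V)$, so we may always route through Rabinowitsch and invoke the singly exponential effective Nullstellensatz for the cofactors, while Bézout keeps the witness degree at $d^{n}$. Everything after that — passing from ranks to minors and bounding the heights of those minors and resultants — is routine.
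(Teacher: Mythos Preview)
Your proof is correct and follows essentially the same strategy as the paper: both reduce dimension preservation to the elimination-ideal criterion of \cref{lem:dimensionelimination}, invoke \cref{theorem: degree bounds} and the effective Nullstellensatz via Rabinowitsch to bound witness degrees by $d^{O(n)}$, and then control bad primes by tracking minors of the membership matrix $M_D$ and their resultants with $q$. The only cosmetic difference is that you package the linear-algebraic condition as rank equalities between $M_D$ and its row-restrictions $\rho_U M_D$, whereas the paper works with solvability of specific augmented systems $M'v = g$ for a fixed target monomial $g$; both formulations are equivalent and lead to the same height bound.
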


\begin{proof}
    Let $r$ be the dimension of $V$.
    We first control the primes $p$ for which $\dim V_{\kp} < r$, and then control the primes $p$ for which $\dim V_{\kp} > r$.
    
    By \cref{lem:dimensionelimination}, there is a subset of $r$ variables such that $I$ does not contain any polynomial supported on these $r$ variables.
    Without loss of generality, suppose these variables are $x_{1}, \dots, x_{r}$.
    Set $D := 4d^{3n}$, and $D_{r} := r D$.
    Set $D_{b} := D_{r} + 5 d^{3n}$.
    Let $g := x_{1}^{D} x_{2}^{D} \cdots x_{r}^{D}$, so $\deg{g} = D_{r}$.
    Let $M := M_{D_{b}}(f_{1}, \dots, f_{m})$ be the linear system corresponding to membership in $I$ as defined in \cref{section: preliminaries}.
    Let $M'$ be the system obtained by dropping all the rows of $M$ corresponding to monomials in $x_{1}, \dots, x_{r}$ that are smaller than $g$ in the graded lexicographic order.
    If $h_{1}, \dots, h_{m}$ is a solution to $M'v = g$, then $\sum f_{i} h_{i} \in \bQ\bs{x_{1}, \dots, x_{r}}$, with $\lm{\sum f_{i} h_{i}} = g$.

    The system $M'v = g$ is unsatisfiable since $I$ does not contain any polynomial supported on these $r$ variables.
    Therefore $\Rank{M'} < \Rank{\bs{M' | g}}$.
    Let $\delta_{0} \in \bZ\bs{\alpha}$ be any nonzero minor of $\bs{M' | g}$ of size $\Rank{\bs{M' | g}}$, the size of the submatrix corresponding to $\delta_{0}$ is at most $1 + \binom{D_{b} + D_{r} + n}{n}$.
    By \cref{cor:detheightbound} and \cref{lem:modgheightbounds} we can deduce that $\lheight{\delta_{0}} \leq h \cdot e^{c_{1}} \cdot 2^{\br{n \log \sigma}^{c_{1}}}$.
    Observe that $\delta_{0}$ and $q$ are coprime as elements of $\bZ\bs{z}$, since $q$ is irreducible.
    By \cref{cor:resdischeightbounds} we can deduce that $\Delta_{0} := \disc{z}{q} \cdot \resultant{q}{\delta_{0}}{z}$ satisfies $\lheight{\Delta_{0}} \leq h \cdot e^{c_{2}} \cdot 2^{\br{n \log \sigma}^{c_{2}}}$.

    We show that for every $p \not | \Delta_{0}$, and for every $\kp$ above $p$ we have $\dim V_{\kp} \geq r$.
    Let $J_{\kp} := I_{\kp} \overline{\bF}_{p} \bs{x_{1}, \dots, x_{n}}$.
    Suppose towards contradiction that $V_{\kp}$ has dimension less than $r$.
    By \cref{lem:dimensionelimination}, the ideal of $V_{\kp}$, that is $\radideal{J_{\kp}}$, contains a polynomial supported on $x_{1}, \dots, x_{r}$.
    By \cref{theorem: bezout inequality}, the degree of $V_{\kp}$ is bounded by $d^{n}$, therefore the degree the projection of $V_{p}$ to the coordinates $x_{1}, \dots, x_{r}$ is bounded by $d^{n}$.
    By \cite[Corollary~1.9]{scheiblechnerthesis}, the ideal $\radideal{J_{p}}$ contains a polynomial $f$ in the variables $x_{1}, \dots, x_{r}$ of degree at most $d^{n}$.
    By \cref{theorem: jelonek nullstellensatz} and the Rabinowitsch trick, there exist $e \leq 4 d^{2n}$ and $g_{i}$ with  $\deg{g_{i}} \leq 5 d^{3n}$ such that $f^{e} = \sum f_{i} g_{i}$.
    The degree of $f^{e}$ is at most $D = 4d^{3n}$.
    By potentially multiplying the equation throughout by a monomial, we can assume that there is a polynomial $f'$ with leading monomial $x_{1}^{D} x_{2}^{D} \cdot x_{r}^{D}$ such that $f' = \sum f_{i} g'_{i}$, where each $g'_{i}$ has degree at most $5 d^{3n} + r D$, which is exactly $D_{b}$.

    Observe that the matrix of the linear system corresponding to membership in $I_{\kp}$ is exactly the matrix $M$, where we apply the map $\bZ\bs{\alpha} \to \bF_{p}\bs{z} / q_{1}$ to each coefficient.
    Denote this by matrix by $M_{\kp}$.
    If we similarly drop the rows of $M_{\kp}$ corresponding to monomials in $x_{1}, \dots, x_{r}$ that are larger than $g$ in the graded lexicographic order, the matrix obtained is exactly the reduction of $M'$ under the same map, denote this matrix by $M'_{\kp}$.
    We have $\Rank{M'_{\kp}} \leq \Rank{M'}$, since any zero minor of $M'$ remains zero after reduction.
    Since $p \not | \Delta_{0}$, in particular $p \not | \resultant{q}{\delta_{0}}{z}$.
    Therefore, the polynomials $\delta_{0}$ and $q$ remain relatively prime in $\bF_{p}\bs{z}$.
    In particular, the minor $\delta_{0}$ remains nonzero under the map $\bZ\bs{\alpha} \to \bF_{p}\bs{z} / q_{1}$.
    The augmented matrix $\bs{M'_{\kp} | g}$ satisfies $\Rank{\bs{M'_{\kp} | g}} = \Rank{\bs{M_{\kp} | g}} = \Rank{M'} + 1$.
    In particular this implies that $M'_{\kp}v = g$ is unsolvable, which is in contradiction to the existence of $f'$.
    This completes the first part of the proof, that is controlling the primes $p$ for which $\dim V_{\kp} < r$.

    We now move on to the second part of the proof, that is, controlling the primes $p$ for which $\dim V_{\kp} > r$.
    Similar to the proof above, since $V$ has dimension $r$, for every subset of $r+1$ variables, say for example $x_{1}, \dots, x_{r+1}$, the ideal $I$ contains a polynomial $f'$ in $x_{1},\dots, x_{r+1}$ with leading monomial $g := x_{1}^{D} x_{2}^{D} \cdots x_{r+1}^{D}$.
    Further, we can write $f' = \sum f_{i}g_{i}$ with $\deg{g_{i}} \leq 5 d^{3n} + (r+1) D$.
    Set $D_{b} := 5 d^{3n} + (r+1) D$.
    If we consider the linear system $M_{D_{b}}$, and drop the rows corresponding to monomials in $x_{1}, \dots, x_{r+1}$ that are smaller than $x_{1}^{D} x_{2}^{D} \cdots x_{r+1}^{D}$ and call this new matrix $M'$, then $M'v = g$ is now satisfiable.
    In other words, $\Rank{M'} = \Rank{\bs{M' | g}}$.
    By \cref{lem:generalcramer}, there is a solution to this system, where the entries of $v$ have common denominator a minor of $\bs{M' | g}$.
    Call this minor $\delta_{\bs{r+1}}$, by \cref{cor:detheightbound} and \cref{lem:modgheightbounds} we can deduce that $\lheight{\delta_{\bs{r+1}}} \leq h \cdot e^{c_{3}} \cdot 2^{\br{n \log \sigma}^{c_{3}}}$.
    Set $\Delta_{\bs{r+1}} := \resultant{q}{\delta_{\bs{r+1}}}{z}$, for any $p \not | \Delta_{\bs{r+1}}$ and $p \not | \disc{q}{z}$, the image of this denominator is nonzero in $\bF_{p}\bs{z} / q_{1}$, and the ideal $I_{\kp}$ contains a polynomial in $x_{1}, \dots, x_{r+1}$.

    Similarly, we can construct $\delta_{U}, \Delta_{U}$ for every $U \subset \bs{n}$ with $\abs{U} = r+1$.
    If we let $\Delta_{1} := \disc{q}{z} \prod \Delta_{U}$, then for every prime $p \not | \Delta_{1}$ we must have $\dim V_{\kp} \leq r$.
    We have $\lheight{\Delta_{1}} \leq h \cdot e^{c_{4}} \cdot 2^{\br{n \log \sigma}^{c_{4}}}$.
    This controls all primes for which $\dim V_{\kp} > r$.
    Defining $\Delta := \Delta_{1} \cdot \Delta_{2}$ completes the proof.
\end{proof}

\subsection{Dimension zero}

Note that when $\dim I = 0$, the zeroset $V(I)$ is a finite set of points.
In this subsection, we assume that $f_{1}, \dots, f_{m} \in R$.

We first show that if $V$ consists of a single point, then the logarithmic height of this point is not too large.
\begin{lemma}
    \label{lem:onerootbound}
    Suppose $V = \{(\alpha_{1}, \dots, \alpha_{n})\}$.
    Then $\alpha_{i} \in \bQ$ and there is a universal constant $c > 0$ such that $\lheight{\alpha_{i}} \leq h \cdot 2^{\br{n \log \sigma}^{c}}$ for all $i \in [n]$.
\end{lemma}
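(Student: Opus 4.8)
The plan is to derive the statement from two results already in place: Koiran's bound on the complexity of a solution of a satisfiable polynomial system (\cref{theorem: koiran solution bound}), and the height bound for rational roots of univariate integer polynomials (\cref{lem:intunivarrationalroot}).

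First I would show that $\alpha_i \in \bQ$ for every $i$. Since the $f_j$ have coefficients in $\bZ \subseteq \bQ$, the zeroset $V = Z(f_1, \dots, f_m)$ is stable under the natural action of $\mathrm{Gal}(\overline{\bQ}/\bQ)$ on $\overline{\bQ}^{\,n}$. As $V = \{(\alpha_1, \dots, \alpha_n)\}$ is a single point, this point is fixed by every element of $\mathrm{Gal}(\overline{\bQ}/\bQ)$, and hence each coordinate $\alpha_i$ lies in the fixed field $\bQ$.

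Next, since $V \neq \emptyset$, \cref{theorem: koiran solution bound} produces a solution of the system whose $i$-th coordinate is a root of some $P_i \in \bZ\bs{z}$ with $\deg P_i \leq 2^{(n \log \sigma')^{c_0}}$ and $\lheight{P_i} \leq h \cdot 2^{(n \log \sigma')^{c_0}}$, where $\sigma' := 2 + \sum_j \deg f_j$ and $c_0$ is the absolute constant from that theorem. Because $V$ contains a unique point, this solution must be $(\alpha_1, \dots, \alpha_n)$ itself, so each $\alpha_i$ is a root of such a $P_i$. Writing $\alpha_i = a_i / b_i$ in lowest terms (which is legitimate by the previous paragraph), \cref{lem:intunivarrationalroot} gives $\lheight{a_i}, \lheight{b_i} \leq \lheight{P_i} \cdot \deg P_i + 1$. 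Substituting the bounds on $\deg P_i$ and $\lheight{P_i}$, and using $\sigma' \leq dm + 2 = \sigma$ together with $n \log \sigma \geq 1$ to absorb the product of the two exponentials and the additive constant into a single exponential, yields $\lheight{\alpha_i} \leq h \cdot 2^{(n \log \sigma)^{c}}$ for a suitable universal constant $c$.

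I do not anticipate any genuine obstacle: the argument is a short chain of invocations. The only places needing a little care are the Galois-fixedness step that establishes rationality (which uses crucially that $V$ is a single point and that $I$ is generated over $\bZ$), and the routine bookkeeping with the universal constant together with the mild mismatch between Koiran's parameter $2 + \sum_j \deg f_j$ and the parameter $\sigma = dm + 2$ used throughout this section.
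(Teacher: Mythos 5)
Your proof is correct and follows essentially the same route as the paper's: Galois stability of $V$ forces the unique point to have rational coordinates, then Koiran's solution bound (\cref{theorem: koiran solution bound}) combined with the rational-root height bound (\cref{lem:intunivarrationalroot}) gives the height estimate. The only cosmetic difference is that the paper argues with a finite Galois extension containing the $\alpha_i$ and a single nontrivial automorphism, whereas you invoke the full absolute Galois group and its fixed field, but these are the same argument.
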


\begin{proof}
    Let $K$ be the smallest Galois extension of $\bQ$ that contains $\alpha_{1}, \dots, \alpha_{n}$.
    Suppose $K \neq \bQ$, and without loss of generality assume $\alpha_{1} \not \in \bQ$.
    Then, there exists some element $\tau$ of the Galois group of $K / \bQ$ that is not identity on $\alpha_{1}$.
    Since $\tau$ fixes $f_i$, for $i \in \bs{m}$, as $f_i \in R$, we must have  $\br{\tau(\alpha_{1}), \cdots, \tau(\alpha_{n})} \in V$, contradicting the fact that $V = \{(\alpha_1, \dots, \alpha_n)\}$.

    We now obtain the bound on $\lheight{\alpha_{i}}$. 
    By \cref{theorem: koiran solution bound}, there is a universal constant $c_1 > 0$ and a polynomial $P_{i}$ with $\deg(P_i) \leq 2^{\br{n \log \sigma}^{c_{1}}}$, and $\lheight{P_i} \leq h \cdot 2^{\br{n \log \sigma}^{c_{1}}}$ such that $\alpha_{i}$ is a root of $P_{i}$.
    By \cref{lem:intunivarrationalroot}, there is a universal constant $c > 0$ such that $\lheight{\alpha_{i}} \leq h \cdot 2^{\br{n \log\sigma}^{c}}$.
\end{proof}

With the above lemma at hand, we now prove that if $|V| = 1$, then for all but finitely many primes $p$, we must have that $V_{p}$ consists of exactly one point, and this point lies in $\bF_{p}^{n}$.

\begin{corollary}
    \label{cor:onerootheightbound}
    Suppose $V = \{(\alpha_{1}, \dots, \alpha_{n})\}$.
    There is a universal constant $c$ and $\Delta \in \bZ$ with $\lheight{\Delta} \leq h \cdot 2^{\br{n \log \sigma}^{c}}$, such that for any prime $p$ with $\Delta \not\in (p) $, $V_p \subset \bF_p^n$ and $|V_p| = 1$.
\end{corollary}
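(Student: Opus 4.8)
The plan is to combine Lemma~\ref{lem:onerootbound} with the effective radical Nullstellensatz of Corollary~\ref{cor:radicalmembership}, instantiated over $\bZ$ (that is, with $e=1$). By Lemma~\ref{lem:onerootbound} each coordinate $\alpha_i$ is rational, so I would write $\alpha_i=a_i/b_i$ in lowest terms, with $\lheight{a_i},\lheight{b_i}\le h\cdot 2^{(n\log\sigma)^{c_1}}$ for a universal constant $c_1$. Since $V=\bc{(\alpha_1,\dots,\alpha_n)}$ with $\alpha\in\bQ^n$, the Nullstellensatz gives $\radideal{I}=\ideal{x_1-\alpha_1,\dots,x_n-\alpha_n}$ as an ideal of $S$; in particular the \emph{integral} polynomial $b_ix_i-a_i$ (a scalar multiple of $x_i-\alpha_i$) lies in $\radideal{I}$ for every $i\in[n]$.

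Next I would apply Corollary~\ref{cor:radicalmembership} to each $f:=b_ix_i-a_i$, using $\deg f=1\le d$ and $\lheight{f}\le h\cdot 2^{(n\log\sigma)^{c_1}}$ (after possibly enlarging $d$ so that $d\ge 3$). This produces $t_i\in\bN$, $a_i'\in\bZ\setminus\bc{0}$ and $g_{i1},\dots,g_{im}\in R$ with $a_i'\,(b_ix_i-a_i)^{t_i}=\sum_j f_j g_{ij}$, and the bounds of that corollary give $\lheight{a_i'}\le h\cdot 2^{(n\log\sigma)^{c_2}}$ once one absorbs the factor $12\cdot(d+1)^{6(n+1)^2}$ into $2^{(n\log\sigma)^{c_2}}$ (this uses $\log\sigma\ge\log d$ and $6(n+1)^2\le(n\log\sigma)^{c_2}$). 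I would then set $\Delta:=\prod_{i=1}^{n} b_i a_i'\in\bZ$; summing the heights over $i$ and absorbing the factor $n$ into the exponent gives $\lheight{\Delta}\le h\cdot 2^{(n\log\sigma)^{c}}$ for a suitable universal $c$.

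It then remains to check that every prime $p$ with $p\not|\,\Delta$ is good. First, since $p\not|\,\prod_i b_i$, the reduced point $\bar\alpha:=(\bar a_1\bar b_1^{-1},\dots,\bar a_n\bar b_n^{-1})\in\bF_p^n$ is well defined, and applying the reduction homomorphism $\bZ[1/b_1,\dots,1/b_n]\to\bF_p$ to the identities $f_j(\alpha)=0$ shows $f_j(\bar\alpha)=0$, so $\bar\alpha\in V_p$ and in particular $V_p\ne\emptyset$. Second, since $p\not|\,a_i'$, reducing the identity $a_i'(b_ix_i-a_i)^{t_i}=\sum_j f_jg_{ij}$ modulo $p$ shows $(b_ix_i-a_i)^{t_i}\in I_p$, hence $x_i-\bar\alpha_i\in\radideal{I_p}$ for every $i$ (again using $p\not|\,b_i$). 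Thus $\radideal{I_p}\supseteq\ideal{x_1-\bar\alpha_1,\dots,x_n-\bar\alpha_n}$, which is a maximal ideal; since $\bar\alpha\in V_p$ the containment is an equality, so $V_p=\bc{\bar\alpha}$ is a single point of $\bF_p^n$, which is exactly the claim.

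The one point I would be most careful about is the height bookkeeping in the second paragraph: Corollary~\ref{cor:radicalmembership} must be fed the \emph{inflated} height $h\cdot 2^{(n\log\sigma)^{c_1}}$ coming from Lemma~\ref{lem:onerootbound} rather than $h$, and one has to verify that the resulting polynomial-in-$(d,n)$ factors from that corollary, together with the product over $i\in[n]$, stay dominated by a bound of the shape $h\cdot 2^{(n\log\sigma)^{c}}$. Everything else is a routine reduction-mod-$p$ argument together with the Nullstellensatz description of the ideal of a single rational point.
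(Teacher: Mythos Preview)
Your proposal is correct and follows essentially the same approach as the paper: use Lemma~\ref{lem:onerootbound} to obtain rationality and height bounds on the $\alpha_i$, apply Corollary~\ref{cor:radicalmembership} to the linear forms $b_ix_i-a_i\in\radideal{I}$, and take $\Delta$ to be the product of the denominators $b_i$ and the nonzero integers $a_i'$ produced by that corollary. Your height bookkeeping and the explicit verification that $\bar\alpha\in V_p$ are, if anything, slightly more careful than the paper's own write-up.
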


\begin{proof}
    By \cref{lem:onerootbound}, there is a universal constant $c_1$ such that $\lheight{\alpha_{i}} \leq h \cdot 2^{\br{n \log \sigma}^{c_{1}}}$, for all $i \in [n]$.
    Let $\alpha_{i} = \beta_{i} / \gamma_{i}$ with $\beta_{i}, \gamma_{i} \in \bZ$ in reduced form, and let $g_{i} := \gamma_{i} x_{i} - \beta_{i}$.
    By the Nullstellensatz $g_{i} \in I$ for all $i \in [n]$.

    By \cref{cor:radicalmembership}, there are: a universal constant $c_2 > 0$, positive integers $a_{i}, e_{i}$ and polynomials $h_{ij} \in R$ such that $a_{i} g_{i}^{e_{i}} = \sum_{j} f_{j} h_{ij}$, where $e_i \leq 4(n+1)(d+1)^{n+1}$, $\deg(h_{ij}) \leq 8(n+1)(d+1)^{n+2}$ and lastly we have $\lheight{a_{i}}, \lheight{h_{ij}} \leq h \cdot 2^{\br{n \log \sigma}^{c_{1}}} \br{d+1}^{n} \br{nd\log{m}}^{c_{2}}$.

    Define $\Delta := \prod_{i} a_{i} \gamma_{i}$.
    For any prime $p$ such that $\Delta \not\in (p)$, the equation $a_{i} g_{i}^{e_{i}} = \sum_{j} f_{j} h_{ij}$ reduced mod $p$ implies that $g_i \in \radideal{I_p}$ over $\bF_p[x_1, \ldots, x_n]$.
    Therefore, $V_{p} = \{\br{\beta_1 \gamma_1^{-1}, \dots, \beta_n \gamma_n^{-1}}\}$.
    From the bounds on $\lheight{a_i}, \lheight{\gamma_i}$ we have $\lheight{\Delta} \leq (n+1) \cdot h \cdot 2^{\br{n \log \sigma}^{c_{1}}} \br{d+1}^{n} \br{nd\log{m}}^{c_{2}}$, which can be bounded by $h \cdot 2^{\br{n \log \sigma}^{c}}$ for an appropriate universal constant $c > c_2$.
\end{proof}

We now show that if $|V| \geq 2$, then the density of primes $p$ such that the set $V_p$ has at least two points in $\bF_p^n$ is large enough.
Our argument is an extension of the main argument of \cite{K96}.

\begin{lemma}
    \label{lem:tworootsbounds}
    Suppose $|V| \geq 2$.
    There exist: a universal constant $c$, a polynomial $G \in \bZ\bs{z}$ with $\deg(G) \leq 2^{\br{n \log \sigma}^{c}}$ and $\lheight{G} \leq h \cdot 2^{\br{n \log \sigma}^{c}}$, and $\Delta \in \bZ$ with $\lheight{\Delta} \leq h \cdot 2^{\br{n \log \sigma}^{c}}$, such that for any prime $p$ satisfying $\Delta \not\in (p)$ and $G \bmod p$ has a root in $\bF_{p}$, we have $|V_{p} \cap \bF_p^n | \geq 2$.
\end{lemma}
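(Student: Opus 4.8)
The plan is to extend the primitive-element argument underlying \cref{cor:onerootheightbound} (due to \cite{K96}) so that it produces \emph{two} distinct $\bF_{p}$-rational points of $V_{p}$ rather than one. Since $|V|\ge 2$, fix distinct $\alpha,\beta\in V$ and an index $i$ with $\alpha_{i}\ne\beta_{i}$; then the system in the $2n+1$ variables $x_{1},\dots,x_{n},y_{1},\dots,y_{n},t$ given by $f_{1}(x)=\cdots=f_{m}(x)=0$, $f_{1}(y)=\cdots=f_{m}(y)=0$, $(x_{i}-y_{i})t-1=0$ is satisfiable over $\overline{\bQ}$. Its defining polynomials have logarithmic height $\le h$ and the associated parameter $2+\sum\deg\le 2\sigma$, so \cref{theorem: koiran solution bound} produces a solution all of whose $2n+1$ coordinates are roots of integer polynomials of degree and logarithmic height at most $h\cdot 2^{(n\log\sigma)^{c_{1}}}$ (absorbing $2n+1\le 3n$ and $\log(2\sigma)\le 2\log\sigma$ into the constant). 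In particular we may take $\alpha,\beta$ with coordinates of this controlled complexity.

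I would then feed the $2n$ algebraic numbers $\alpha_{1},\dots,\alpha_{n},\beta_{1},\dots,\beta_{n}$ into \cref{theorem: koiran primitive element} to obtain a primitive element $\gamma$, an irreducible $G\in\bZ\bs{z}$ with $G(\gamma)=0$, $\deg G\le 2^{(n\log\sigma)^{c_{2}}}$, $\lheight{G}\le h\cdot 2^{(n\log\sigma)^{c_{2}}}$, together with nonzero integers $a_{j},b_{j}$ and polynomials $Q_{j}^{x},Q_{j}^{y}\in\bZ\bs{z}$ of $z$-degree $<\deg G$ and height $\le h\cdot 2^{(n\log\sigma)^{c_{2}}}$ such that $\alpha_{j}=Q_{j}^{x}(\gamma)/a_{j}$ and $\beta_{j}=Q_{j}^{y}(\gamma)/b_{j}$. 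Next I build three families of integer ``certificates,'' all of logarithmic height $\le h\cdot 2^{(n\log\sigma)^{c_{3}}}$: for each $j$, clearing denominators in the identity $f_{j}(Q_{1}^{x}(z)/a_{1},\dots,Q_{n}^{x}(z)/a_{n})=0$ (multiplying by $(a_{1}\cdots a_{n})^{d}$) and its $\beta$-analogue yields $R_{j}^{x},R_{j}^{y}\in\bZ\bs{z}$ with $R_{j}^{x}(\gamma)=R_{j}^{y}(\gamma)=0$; since $G$ is primitive and irreducible it is, up to a scalar, the minimal polynomial of $\gamma$, so $G\mid R_{j}^{x}$ in $\bQ\bs{z}$ and, clearing denominators in the exact quotient (Cramer's rule \cref{lem:generalcramer} together with \cref{cor:detheightbound}), we get $c_{j}^{x}R_{j}^{x}=G\,S_{j}^{x}$ with $c_{j}^{x}\in\bZ\setminus\{0\}$, $S_{j}^{x}\in\bZ\bs{z}$, and likewise $c_{j}^{y}R_{j}^{y}=G\,S_{j}^{y}$. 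Finally, writing $T(z):=b_{i}Q_{i}^{x}(z)-a_{i}Q_{i}^{y}(z)$, we have $T(\gamma)=a_{i}b_{i}(\alpha_{i}-\beta_{i})\ne 0$, so $G\nmid T$ and $G,T$ are coprime in $\bQ\bs{z}$; by \cref{cor:resdischeightbounds} there are $u,v\in\bZ\bs{z}$ with $uG+vT=c$, where $c:=\resultant{G}{T}{z}\in\bZ\setminus\{0\}$, all of bounded height. The height estimates in all these steps come from \cref{lem:kpsheightbound} (for the substitutions) and the determinant/resultant bounds just cited. Set $\Delta:=c\cdot\ell_{G}\cdot\prod_{j}a_{j}b_{j}\cdot\prod_{j}c_{j}^{x}c_{j}^{y}$, where $\ell_{G}$ is the leading coefficient of $G$; then $\lheight{\Delta}\le h\cdot 2^{(n\log\sigma)^{c}}$ for a suitable universal $c$.

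To finish, let $p$ be a prime with $p\nmid\Delta$ such that $G\bmod p$ has a root $\bar\gamma\in\bF_{p}$, and set $\bar\alpha_{j}:=\overline{Q_{j}^{x}}(\bar\gamma)\,\bar a_{j}^{-1}$, $\bar\beta_{j}:=\overline{Q_{j}^{y}}(\bar\gamma)\,\bar b_{j}^{-1}$, which are well defined since $p\nmid a_{j}b_{j}$. Reducing $c_{j}^{x}R_{j}^{x}=G\,S_{j}^{x}$ mod $p$ and evaluating at $\bar\gamma$ (where $\overline{G}(\bar\gamma)=0$) forces $\overline{R_{j}^{x}}(\bar\gamma)=0$, since $p\nmid c_{j}^{x}$ and $\bF_{p}$ is a field; unwinding the denominator-clearing and using $p\nmid a_{j}$ then gives $f_{j}(\bar\alpha)=0$ in $\bF_{p}$ for all $j$, so $\bar\alpha\in V_{p}\cap\bF_{p}^{n}$, and symmetrically $\bar\beta\in V_{p}\cap\bF_{p}^{n}$. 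Reducing $uG+vT=c$ mod $p$ and evaluating at $\bar\gamma$ gives $\overline{v}(\bar\gamma)\,\overline{T}(\bar\gamma)=\bar c\ne 0$, hence $\overline{T}(\bar\gamma)\ne 0$, i.e. $\bar\alpha_{i}\ne\bar\beta_{i}$, so $\bar\alpha\ne\bar\beta$ and $|V_{p}\cap\bF_{p}^{n}|\ge 2$. The main obstacle is not conceptual but bookkeeping: one must track the degree and height blow-ups through the chain (solution bound $\to$ primitive element $\to$ polynomial substitution $\to$ exact division and B\'ezout), verifying at each step that the exponent $(n\log\sigma)^{c}$ only grows by a polynomial factor, so that a single universal constant $c$ works throughout.
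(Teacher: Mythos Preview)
Your proposal is correct and follows essentially the same route as the paper: pick two distinct points of $V$, bound the complexity of their coordinates, pass to a primitive element $\gamma$ with defining polynomial $G$ via \cref{theorem: koiran primitive element}, and use a resultant with $G$ to certify that the two reduced points stay distinct in $\bF_{p}$. The only cosmetic differences are that the paper invokes \cite[Lemma~2]{K96} directly to bound the coordinates of any two points, whereas you obtain the same bounds by applying \cref{theorem: koiran solution bound} to an augmented $(2n{+}1)$-variable system with the separating equation $(x_{i}-y_{i})t-1=0$; and you spell out the membership certificate (the divisibility $G\mid R_{j}^{x}$ and the clearing constants $c_{j}^{x}$) that the paper leaves implicit when it asserts the reduced points lie in $V_{p}$.
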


\begin{proof}
    Let $\br{\alpha_{1}, \dots, \alpha_{n}}$ and $\br{\beta_{1}, \dots, \beta_{n}}$ be two points in $V$.
    By \cite[Lemma~2]{K96}, there are a universal constant $c_1 > 0$ and polynomials $P_{ij} \in \bZ[x_1, \dots, x_n]$, where $i \in [2], j \in [n]$, with $\deg(P_{ij}) \leq 2^{\br{n \log \sigma}^{c_{1}}}$ and $\lheight{P_{ij}} \leq h \cdot 2^{\br{n \log \sigma}^{c_{1}}}$ such that $\alpha_{j}$ is a root of $P_{1j}$ and $\beta_{j}$ is a root of $P_{2j}$.
    \footnote{Note than while Theorem~7 in \cite{K96} is stated in terms of a single root, Lemma~2 of \cite{K96} holds for every root.}

    Let $c_2 \geq 1$ be the universal constant from \cref{theorem: koiran primitive element}.
    Applying \cref{theorem: koiran primitive element} to the $2n$ elements $\alpha_{1}, \dots, \alpha_{n}, \beta_{1}, \dots, \beta_{n}$ with the polynomials $P_{ij}$ gives us a primitive element $\gamma$, with minimal polynomial $G$, where $\deg(G) \leq 2^{\br{n \log \sigma}^{c_{2}}}$ and $\lheight{G} \leq h \cdot 2^{\br{n \log \sigma}^{c_{2}}}$.
    Further, we also obtain $Q_{ij} \in \bZ\bs{z}$ with $\deg(Q_{ij}) < \deg(G)$ and $\lheight{Q_{ij}} \leq h \cdot 2^{\br{n \log \sigma}^{c_{2}}}$, and $a_{ij} \in \bZ^*$ with $\log(|a_{ij}|) \leq h \cdot 2^{\br{n \log \sigma}^{c_{2}}}$ such that $\alpha_{j} = Q_{1j}(\gamma) / a_{1j}$ and $\beta_{j} = Q_{2j}(\gamma) / a_{2j}$.

    Let $p$ be a prime such that $G \bmod p$ has a root $x_{0}$ in $\bF_{p}$ and that $\prod_{i,j} a_{ij} \not\in (p)$.
    Then, the points $\br{\br{Q_{11}(x_{0}} / a_{11}, \dots, Q_{1n}(x_{0}) / a_{1n}}$ and $\br{\br{Q_{21}(x_{0}} / a_{21}, \dots, Q_{2n}(x_{0}) / a_{2n}}$ are in $V_{p}$.
    To complete the proof, it suffices to ensure that these two points differ on at least one coordinate, so we can claim that $V_{p}$ has at least two distinct points with coefficients in $\bF_p$.
    To this end, assume without loss of generality that $\alpha_{1} \neq \beta_{1}$, and therefore $Q_{11}/a_{11} \neq Q_{21}/a_{21}$.
    Consider the resultant $\resultant{a_{21}Q_{11} - a_{11}Q_{21}}{G}{z}$.
    Since $G$ is irreducible and $\deg(a_{21}Q_{11} - a_{11}Q_{21})\leq \max(\deg(Q_{11}), \deg(Q_{21})) < \deg(G)$, we can deduce $\resultant{Q_{11} - Q_{21}}{G}{z} \in \bZ^*$.
    Further, $\lheight{G}, \lheight{a_{21}Q_{11} - a_{11}Q_{21}} \leq h \cdot 2^{\br{n \log \sigma}^{c_{2}}}$.
    Since $\resultant{Q_{11} - Q_{21}}{G}{z} \in \bZ$ is the determinant of a matrix in the coefficients of $G, Q_{11} - Q_{21}$, it has logarithmic height at most $h \cdot 2^{\br{n \log \sigma}^{c_{3}}}$.
    
    Define $\Delta := a \cdot \resultant{Q_{11} - Q_{21}}{G}{z}$, so $\Delta$ has logarithmic height at most $h \cdot 2^{\br{n \log \sigma}^{c}}$.
    Suppose $p$ is a prime such that $p \not| \Delta$ and such that $G$ has a root in $\bF_{p}$.
    Since $p \not |\resultant{Q_{11} - Q_{21}}{G}{z}$, the polynomials $Q_{11} - Q_{21}$ and $G$ remain relatively prime in $\bF_{p}\bs{z}$.
    Therefore, for any root $x_{0}$ of $G$, we must have $Q_{11}(x_{0}) \neq Q_{21}(x_{0})$.
    Combined with the arguments above, this shows that for any such $p$, the algebraic set $V_{p}$ has at least two distinct points with coefficients in $\bF_{p}$.
\end{proof}

\subsection{Dimension one}

In the following theorem, we assume once again the more general setting where $f_{1}, \dots, f_{m} \in A$.

\begin{theorem}
    \label{thm:irredpreserve}
    Suppose $V$ is an irreducible curve.
    There exists an integer $\Delta$ with $\lheight{\Delta} \leq h \cdot e^{c} \cdot 2^{\br{n \log \sigma}^{c}}$ such that for any prime $p \not | \Delta$ and for every $\kp$ above $p$ the zeroset $V_{\kp}$ is irreducible.
\end{theorem}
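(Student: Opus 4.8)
The plan is to reduce to the bivariate case via a random projection, use the effective elimination bounds from \cref{lem:eliminationinQ} to control the complexity of the defining equation of the projection, and then transfer absolute irreducibility modulo $p$ using Kaltofen's effective Bertini–Noether theorem (\cref{thm:kaltofen mod p irreducible}), combining everything with \cref{lem:dimpreserve} to guarantee that dimension is also preserved. The subtlety, as flagged in the proof overview, is that a single projection is not enough: even if $\pi(V_\kp)$ is irreducible, $V_\kp$ itself might be reducible with all its components collapsing onto $\pi(V_\kp)$. To rule this out we take polynomially many independent random projections and argue that if $V_\kp$ were reducible, then only a small fraction of these projections could make the image irreducible, whereas the "good prime" conditions will force all of them to have irreducible image.

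First I would set up the projections. Fix a sampling set $B \subset \bN$ of size $\poly(d^n, n)$, and let $k$ be a parameter polynomial in $n, d$. For each $i \in [k]$, pick a random linear projection $\pi_i : \bA^n \to \bA^2$ with coefficients drawn from $B$ (absorbing into $\pi_i$ a generic change of coordinates so that, by \cref{lem:effectivebirational} and \cref{lem:effectiveNN}, with high probability $\pi_i$ restricted to $V$ is birational onto its image and $\pi_i(V)$ is an equidimensional curve $Z(g_i)$ with $g_i \in \bZ[\alpha][x_1, x_2]$ absolutely irreducible, since $V$ is irreducible). Crucially, these projections have logarithmic height $O(\log \max B) = \poly(n, d)$, so the composed systems $f_1, \dots, f_m$ in the new coordinates still have height $h \cdot 2^{(n\log\sigma)^{O(1)}}$ and polynomial-size circuits. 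Now apply \cref{lem:eliminationinQ} to each $\pi_i$: this gives $a_i \in \bZ[\alpha]\setminus\{0\}$ and $h_{i1}, \dots, h_{im} \in A$ with $a_i g_i = \sum_j f_j h_{ij}$, $\deg h_{ij} \leq d^{4n^2}$, and $\lheight{g_i}, \lheight{h_{ij}} \leq h \cdot e^2 \cdot d^{12n^3}$.

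Next I would assemble the "bad prime" modulus $\Delta$ as a product of several pieces, and check each piece kills only few primes. (1) By \cref{lem:dimpreserve}, there is $\Delta_{\dim}$ of the stated height such that $p \nmid \Delta_{\dim}$ forces $\dim V_\kp = 1$. (2) For each $i$, the leading coefficient of $a_i$ as an element of $\bZ[z]$ is coprime to $q$; by \cref{cor:resdischeightbounds} its resultant with $q$, times $\disc_z(q)$, contributes a factor so that for good $p$ the reduction of $a_i$ modulo $\kp$ is a nonzero (unit) scalar in $\bF_p[z]/q_1$, hence $g_i \bmod \kp \in I_\kp$. (3) For each $i$, apply \cref{thm:kaltofen mod p irreducible} to the absolutely irreducible $g_i \in \bZ[\alpha][x_1,x_2]$: this yields $\Delta_i$ of height $O(e \cdot \deg(g_i)^6 (\log h + n \log d)) \leq h\cdot e^c \cdot 2^{(n\log\sigma)^c}$ such that $p\nmid \Delta_i$ and any $\kp$ above $p$ give $g_i \bmod \kp$ absolutely irreducible. (4) Include the products of denominators from \cref{lem:effectivebirational} and \cref{lem:effectiveNN} that guarantee each $\pi_i$ is still birational/finite after reduction modulo $\kp$ — here I would note that these are nonzero polynomial identities in the $\pi_i$-coefficients whose non-vanishing modulo $p$ is controlled by a single integer (a value of the relevant polynomial at the chosen point) of height $\poly(d^n) \cdot h \cdot e^{O(1)}$. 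Take $\Delta$ to be the product of all these; its logarithmic height is $h \cdot e^c \cdot 2^{(n\log\sigma)^c}$ as claimed, using $k = \poly(n,d)$ and that the logarithm of a product is the sum of logarithms.

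Finally, the irreducibility argument itself. Fix a good prime $p \nmid \Delta$ and $\kp$ above $p$. Then $\dim V_\kp = 1$, and for every $i$ we have $g_i \bmod \kp \in I_\kp$ with $g_i \bmod \kp$ absolutely irreducible and $\pi_i : V_\kp \to \pi_i(V_\kp) = Z(g_i \bmod \kp)$ a well-defined finite map (by our reductions of \cref{lem:effectiveNN}). Suppose for contradiction that $V_\kp$ had two distinct irreducible components $W_1, W_2$ of dimension $1$. For a projection $\pi_i$ to map both components onto the same irreducible curve $Z(g_i \bmod \kp)$, the two curves $W_1, W_2$ must have the same image; by the standard Bézout/fiber-dimension argument (each $\pi_i$ restricted to each $W_j$ is generically finite, as $\dim \pi_i(V_\kp) = 1$), the set of projections directions for which $\pi_i(W_1) = \pi_i(W_2)$ is contained in the zero set of a nonzero polynomial of degree $\poly(d^n)$ in the projection coefficients; hence by Schwartz–Zippel over $B$ this happens for a $\poly(d^n)/|B|$ fraction of choices. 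Choosing $|B|$ large enough (still $\poly(d^n,n)$) and $k$ large enough, with positive probability over the original random choice of the $\pi_i$'s at least one projection $\pi_{i_0}$ separates $W_1$ and $W_2$, i.e. $\pi_{i_0}(V_\kp) = \pi_{i_0}(W_1) \cup \pi_{i_0}(W_2)$ is a union of two distinct curves, contradicting irreducibility of $g_{i_0} \bmod \kp = $ defining equation of $\pi_{i_0}(V_\kp)$. Since the whole construction of $\Delta$ was over a \emph{fixed} (randomly but successfully chosen) tuple of projections, this contradiction establishes that $V_\kp$ is irreducible for every $p \nmid \Delta$.

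\medskip

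\noindent\textbf{Where the difficulty lies.} The routine parts are the height bookkeeping (all covered by \cref{section: height bounds}) and the Kaltofen input. The genuinely delicate point is the multi-projection separation argument in the last paragraph: one must make the "directions for which two fixed curves have equal image lie in a low-degree hypersurface" claim quantitative and uniform over all pairs of components (whose number and individual degrees are bounded by $d^n$ via \cref{theorem: bezout inequality}), and one must carefully interleave the randomness (the $\pi_i$ are chosen once, before $p$) with the fact that reducibility of $V_\kp$ is a per-$p$ phenomenon — so the separating projection must work simultaneously for all bad $p$, which is why we only need it to separate components of the \emph{characteristic-zero} curve $V$ and its finitely many reductions, or more cleanly, we argue that a projection birational on $V$ remains generically finite on each component of $V_\kp$ and push the Bézout count through uniformly in $p$. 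Getting this uniformity right, together with ensuring the reduced projections stay birational (not merely finite), is the main obstacle.
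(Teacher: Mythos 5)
Your plan captures the right ingredients: random bivariate projections, the effective elimination bounds, Kaltofen's effective Bertini--Noether result (\cref{thm:kaltofen mod p irreducible}), and \cref{lem:dimpreserve} — all of these appear in the paper's proof. But there are two points where the argument as written does not close.

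First, a small citation issue: you invoke \cref{lem:eliminationinQ}, which requires $I$ to be \emph{radical}. The theorem only assumes $V=Z(I)$ is an irreducible curve; $I$ itself need not be radical (and in fact, where \cref{thm:irredpreserve} is later applied — to ideals $J_1,J_2$ in \cref{thm:redpreserve} and to $J = I + \ideal{h_1,\dots,h_{r-1}}$ in \cref{lem:twotopcomponents} — radicality is not available). The paper instead uses \cref{lem:eliminationextendedeffective}, which only needs $\dim I = 1$ and yields $a g^t = \sum f_k h_k$ with a power $t$. The power is harmless for the argument, so this is an easy fix, but as written your invocation is illegal.

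Second, and more seriously, the "multi-projection separation" step has a quantifier gap that the proof must not have. You pick $k$ random projections $\pi_1,\dots,\pi_k$ once and for all, build $\Delta$ from them, and then argue: if $V_\kp$ were reducible for some good $p$, then with high probability over the (now fixed) random choice, one $\pi_{i_0}$ separates the two components $W_1,W_2$. But the components $W_1,W_2$ depend on $p$, and the set of projections that fail to separate them also depends on $p$; a single randomly chosen $k$-tuple gives a per-$p$ success probability of $1 - (\poly(d^n)/|B|)^k$, and there is no bound on the number of good primes $p$ for which $V_\kp$ could be reducible — indeed, establishing such a bound is the whole point. So you cannot conclude that one fixed tuple works for every good $p$; the union bound has nothing to union over. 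The paper avoids this entirely by taking $\Phi$ to be \emph{all} $B^{n+1}$ linear forms with coefficients in $[B]$ (deterministically, no randomness), and then running a counting argument \emph{per prime}: for each good $p$ with $V_\kp$ reducible, at least a $(1-3/D)^2$ fraction of pairs $(i,j)$ give $\overline{\phi_{ij}(C_1)} = \overline{\phi_{ij}(C_2)}$ (because $g_{ij} \bmod \kp$ is absolutely irreducible and both projections are finite on $C_1,C_2$), while at most a $1/D$ fraction of pairs can have $\overline{\phi_{ij}(C_1)} = \overline{\phi_{ij}(C_2)}$ (a fiber-counting/Bézout bound for each fixed $\phi_i$). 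These two fractions are incompatible, giving a contradiction \emph{for that $p$} with no need for a single tuple to work simultaneously for all $p$. You flag this difficulty yourself ("Getting this uniformity right... is the main obstacle") and sketch two fixes, but the first ("separate components of the characteristic-zero curve $V$") is empty because $V$ is irreducible, and the second ("push the Bézout count through uniformly in $p$") is the right idea but is precisely the per-$p$ counting over the full set $\Phi$ that needs to replace the random sample — with that replacement, your argument and the paper's coincide.
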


\begin{proof}
    Let $D := d^{n}$ so $\deg(V) \leq D$ by \cref{theorem: bezout inequality}.
    Define $B := D^{2}$.
    Let $\Phi$ be the set of all linear maps $\bA^{n} \to \bA$, where each coefficient is picked from the set $\bs{B}$.
    The total number of such linear maps is $B^{n+1}$.
    For maps $\phi_{i}, \phi_{j} \in \Phi$, define $\phi_{ij}: \bA^{n} \to \bA^{2}$ to be the map with coordinate functions $\phi_{i}, \phi_{j}$.

    Let $\overline{V}$ be the projective closure of $V$, and let $V^{P} := \overline{V} \cap \bc{x_{0} = 0}$ be the intersection of $V$ with the hyperplane at infinity.
    Since $\deg(V) \leq D$, and since $V$ is defined by polynomials in $R$, the zeroset $V^{P}$ is finite and consists of at most $D$ points.
    By \cite[Theorem~1.15]{shafarevich1994basic}, for any $\phi_{i}$ such that the projective closure of the hyperplane $\phi_{i} = 0$ does not contain a point in $V^{p}$, the map $\phi_{i} : V \to \bA^{1}$ is a finite map.
    Define $\Phi' \subset \Phi$ to be the set of maps that are finite, the above argument shows that $\abs{\Phi'} \geq (1 - D/B) \abs{\Phi} = (1 - 1/D) \abs{\Phi}$.

    Consider the pairs $i, j$ such that $\phi_{i} \in \Phi', \phi_{j} \in \Phi'$.
    Since $\phi_{i}$ is a finite map, in particular it is surjective.
    Since $\phi_{i}$ is a projection of $\phi_{ij}$, we can deduce that $\overline{\phi_{ij}(V)}$ has dimension $1$.
    Further, $\overline{\phi_{ij}(V)}$ is irreducible since $V$ is irreducible.
    For all such $\phi_{ij}$, define $g_{ij}$ to be the generator of $I(\overline{\phi_{ij}(V)})$.
    Each $g_{ij}$ is a form of degree at most $D$, and is absolutely irreducible.
    The ideal $I$ is prime, and satisfies $\dim I = 1$.
    For every $i, j$ we can pick an invertible linear map $\bA^{n} \to \bA^{n}$ such that $\phi_{ij}$ is projection to the first two coordinates after applying the linear map.
    Therefore, we can apply \cref{lem:eliminationextendedeffective} to deduce that $g_{ij} \in A$, and we obtain $h_{ijk} \in A$, and $t_{ij} \in \bN$, and $a_{ij} \in \bZ\bs{\alpha}$ such that $a_{ij} g^{t_{ij}}_{ij} = \sum_{k} f_{k} h_{ijk}$.
    Further, $\deg{h_{i}}, t \leq 8d^{n^{2}}$, and $\lheight{a}, \lheight{h_{i}}, \lheight{g_{ij}} \leq h \cdot \log{m} \cdot e^{c_{1}} \cdot d^{c_{1}n^{c_{1}}}$.
    By \cref{thm:kaltofen mod p irreducible}, for each such $ij$ there exists an integer $\Delta_{ij}$ with $\lheight{\Delta_{ij}} \leq h \cdot e^{c_{2}} \cdot d^{cn^{c_{2}}}$ such that for any $p \not| \Delta_{ij}$ and $p \not | \disc{q}{z}$, and for any $\kp$ above $p$, the image of polynomial $g_{ij}$ under the map $\bZ\bs{\alpha} \to \bF_{p}\bs{z} / q_{1}$ remains absolutely irreducible.

    Define $\Delta' :=  \prod_{i,j} \resultant{a_{ij}}{q}{z} \cdot \Delta_{ij}$, where the product is over $i, j$ such that $\phi_{i}, \phi_{j} \in \Psi'$.
    Finally, define $\Delta := \Delta' \cdot \Delta'' \cdot (B+1)!$, where $\Delta''$ is the integer obtained by applying \cref{lem:dimpreserve} to $V$.
    We have $\lheight{\Delta'} \leq h \cdot e^{c_{2}} \cdot d^{cn^{c_{2}}} \cdot D^{(2n+2)} \leq h \cdot e^{c_{3}} \cdot 2^{\br{n \log \sigma}^{c_{3}}}$.
    Further $\lheight{\Delta''} \leq h \cdot e^{c_{4}} \cdot 2^{\br{n \log \sigma}^{c_{4}}}$, and $\lheight{(B+1)!} \leq d^{n^{3}}$.
    Therefore, $\lheight{\Delta} \leq h \cdot e^{c} \cdot 2^{\br{n \log \sigma}^{c}}$.

    Pick a prime $p \not| \Delta$ and $\kp$ above $p$, we will show that $V_{\kp}$ is irreducible, which will complete the proof.
    Since $p \not| \Delta''$, $V_{\kp}$ has dimension $1$.
    Further, since $p \not| (S+1)!$, we have $p > S+1$.
    This ensures that the set of maps in $\Phi$ are distinct, even when considered as maps from $\overline{\bF}_{p}^{n} \to \overline{\bF}_{p}^{2}$.
    
    Fix any $i, j$ such that $\phi_{i} \in \Phi', \phi_{j} \in \Phi'$.
    The equation $a_{ij} g^{t_{ij}}_{ij} = \sum_{k} f_{k} h_{ijk}$ holds in $\bF_{p}\bs{z}/q_{1}$, and $a_{ij} \neq 0$, therefore $g_{ij}^{t_{ij}} \in I_{\kp}$ for every such $\kp$.
    Further, since $p \not| \Delta_{ij}$, the polynomial $g_{ij}$ remains absolutely irreducible.
    These facts together imply that $\overline{\phi_{ij}(V_{\kp})}$ is either empty or irreducible, here we treat $\phi_{ij}$ as a map from $\overline{\bF}_{p}^{n} \to \overline{\bF}_{p}^{2}$.
    The total number of such maps is $(1 - 1/D)^{2} \abs{\Phi}^{2}$.

    Assume towards a contradiction that $V_{\kp}$ is reducible.
    Let $C_{1}, C_{2}$ be two components of $V_{\kp}$, we have $\deg(C_{1}), \deg(C_{2}) \leq D$.
    Define $\Phi_{\kp} \subset \Phi$ such that $\phi_{i} \in \Phi_{\kp}$ if $\phi_{i} : C_{1} \to \overline{\bF}_{p}^{1}$ and $\phi_{i} : C_{2} \to \overline{\bF}_{p}^{1}$ are both finite maps.
    Similar to the argument for $\Phi'$, we can deduce that $\abs{\Phi_{\kp}} \geq (1 - 2/D) \abs{\Phi}$ and we can deduce that for any $i, j$ with $\phi_{i} \in \Phi_{\kp}$ we have $\dim \overline{\phi_{ij}(C_{1})} = 1$ and $\dim \overline{\phi_{ij}(C_{2})} = 1$.
    We bound the number of maps $\phi_{ij}$ such that $\phi_{ij}(C_{1}) = \phi_{ij}(C_{2})$.
    Fix $i$, and pick a point $\alpha \in C_{1} \setminus C_{2}$.
    Since $\phi_{i}: C_{2} \to \overline{\bF}_{p}^{1}$ is finite, in particular it is surjective and therefore the curve $C_{2}$ is not contained in the hyperplane defined by $\phi_{i} = \phi_{i}(\alpha)$.
    Therefore, $C_{2} \cap \bc{\phi_{i} = \phi_{i}(\alpha)}$ is a finite set consisting of at most $D$ points, call these points $\beta_{1}, \dots, \beta_{D}$.
    The set of points $\alpha - \beta_{1}, \dots, \alpha - \beta_{D}$ is a finite set of nonzero points, therefore at most $(D / S) \abs{\Phi}$ many $\phi_{j}$ are such that $\phi_{j}$ is zero at one of these points.
    For every remaining $\phi_{j}$, the intersection of $C_{2}$ with the hyperplanes $\phi_{i} = \phi_{i}(\alpha)$ and $\phi_{j} = \phi_{j}(\alpha)$ is empty, whence $\phi_{ij}(C_{1}) \neq \phi_{ij}(C_{2})$.
    Since the maps $\phi_{ij}$ are finite on $C_{1}, C_{2}$, this implies $\overline{\phi_{ij}(C_{1})} \neq \overline{\phi_{ij}(C_{2})}$.
    Therefore, the total number of pairs $i, j$ such that $\phi_{i}, \phi_{j} \in \Phi_{\kp}$ and $\overline{\phi_{ij}(C_{1})} \neq \overline{\phi_{ij}(C_{2})}$ is at least $(1 - 2/D)(1 - 3/D) \abs{\Phi}^{2}$.
    Recall however that for at least $(1 - 1/D)^{2} \abs{\Phi}^{2}$ maps, $\overline{\phi_{ij}(V_{\kp})}$ was either irreducible or empty, in particular $\overline{\phi_{ij}(C_{1})} = \overline{\phi_{ij}(C_{2})}$.
    This is a contradiction to our assumption that $V_{\kp}$ is reducible, and completes the proof.
\end{proof}

The following result again assumes $f_{1}, \dots, f_{m} \in R$.
The proof will invoke \cref{lem:dimpreserve} and \cref{thm:irredpreserve} in the general setting on a new system of polynomials that extends $f_{1}, \dots, f_{m}$.
We note that this is a limitation of our proof, and we expect the result to also hold if $f_{1}, \dots, f_{m} \in A$.
\begin{theorem}
    \label{thm:redpreserve}
    Suppose $f_{1}, \dots, f_{m} \in R$.
    Suppose $\dim I = 1$, that $V$ is equidimensional, and that $V$ has at least two irreducible components.
    There exists an integer $\Delta$ with $\lheight{\Delta} \leq h \cdot 2^{\br{n \log \sigma}^{c}}$ and a polynomial $G \in \bZ\bs{z}$ with $\deg{G} \leq D^{4}$ and $\lheight{G} \leq h \cdot 2^{\br{n \log \sigma}^{c}}$ such that for any prime $p \not | \Delta$ and such that $G \pmod p$ has a root in $\bF_{p}$, the zeroset $V_{p}$ has at least two irreducible components that are $\bF_{p}$-definable.
\end{theorem}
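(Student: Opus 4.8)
The plan is to exhibit (at least) two of the irreducible components of $V$ as zerosets of ideals only mildly more complex than $I$, all defined over a single number field of controlled degree and height, and then to run the dimension- and irreducibility-preservation theorems \cref{lem:dimpreserve} and \cref{thm:irredpreserve} on those ideals. Set $D := d^{n}$, so $\deg V \le D$ by \cref{theorem: bezout inequality}, and write $V = V_{1} \cup \dots \cup V_{k}$ for the decomposition into irreducible components, with $k \ge 2$ and every $V_{a}$ a curve by equidimensionality. Throughout, $\sigma := dm + 2$.

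First I would pass to a generic plane projection. Choose linear polynomials with coefficients drawn from a set $B \subset \bN$ of size $\poly{D}$, giving a projection $\pi \colon \bA^{n} \to \bA^{2}$; by \cref{lem:effectivebirational}, applied after a union bound over the $\le D^{2}$ pairs of components, a generic such $\pi$ is birational onto its image, hence maps each $V_{a}$ to a plane curve and separates the components, so $\pi(V) = Z(g)$ with $g = \prod_{a} g_{a}$, the $g_{a} \in \overline{\bQ}\bs{x_{1}, x_{2}}$ being the distinct absolutely irreducible defining polynomials of the $\pi(V_{a})$, each of degree $\le D$. After a further generic linear change of the two plane coordinates we may assume $g$ is monic in $x_{1}$ and $g(x_{1}, 0)$ is squarefree. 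Since $\dim \pi(V) = 1$ and $\pi(V)$ is equidimensional, \cref{lem:eliminationextendedeffective} (with trivial extension, $V$ being defined over $\bZ$) gives $g \in \bZ\bs{x_{1}, x_{2}}$ together with a certificate $a\, g^{t} = \sum_{i} f_{i} h_{i}$ of controlled degrees and with $\lheight{g}, \lheight{a}, \lheight{h_{i}} \le h \cdot 2^{\br{n \log \sigma}^{c}}$.

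Next I would isolate two components using Kaltofen's factorisation. As $k \ge 2$, the bivariate $g$ has at least two distinct absolutely irreducible factors; \cref{lem:kaltofen factoring algorithm} and \cref{lem:factorheight} produce $g_{1}, g_{2}$ with $g_{a} \in \bZ\bs{\alpha_{a}}\bs{x_{1}, x_{2}}$, each $\alpha_{a}$ a root of a factor of $g(x_{1}, 0)$ of degree $\le D$, and $\lheight{g_{a}} \le h \cdot 2^{\br{n \log \sigma}^{c}}$. For $a \in \{1,2\}$ the one-dimensional part of $V \cap \pi^{-1}(Z(g_{a}))$ is exactly $V_{a}$ (since $\pi$ separates components and contracts nothing), while finitely many extra points may occur, none of them on $V_{a}$; a second generic projection $\pi'$ with companion factor $g_{a}'$ of $\pi'(V_{a})$ (obtained the same way, involving an auxiliary root $\alpha_{a}'$ of degree $\le D$) removes this zero-dimensional excess, because a generic $\pi'$ sends each excess point off the curve $Z(g_{a}') = \pi'(V_{a})$. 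Hence $\tilde I_{a} := I + \br{g_{a} \circ \pi,\ g_{a}' \circ \pi'}$ satisfies $Z(\tilde I_{a}) = V_{a}$, an irreducible curve, and its extra generators have degree $\le D$ and height $\le h \cdot 2^{\br{n \log \sigma}^{c}}$ by the composition bound \cref{corollary: height bounds polynomials algebraic number coefficients}. Passing to a primitive element $\beta$ for $\{\alpha_{1}, \alpha_{2}, \alpha_{1}', \alpha_{2}'\}$ via \cref{theorem: koiran primitive element} yields an irreducible $G \in \bZ\bs{z}$ with $\deg G \le D^{4}$ and $\lheight{G} \le h \cdot 2^{\br{n \log \sigma}^{c}}$, and representations of $\tilde I_{1}, \tilde I_{2}$ over $\bZ\bs{\beta}\bs{x_{1}, \dots, x_{n}}$ of height $\le h \cdot 2^{\br{n \log \sigma}^{c}}$, clearing denominators using the bound on $\disc{z}{G}$ from \cref{cor:resdischeightbounds}.

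Finally I would invoke base change and assemble $\Delta$. Applying \cref{lem:dimpreserve} and \cref{thm:irredpreserve} to each $\tilde I_{a}$ — legitimate since it is generated over $\bZ\bs{\beta}$ with $\beta$ of degree $\le D^{4}$ and height $\le h \cdot 2^{\br{n \log \sigma}^{c}}$, and $Z(\tilde I_{a})$ is an irreducible curve — produces integers $\Delta_{1}, \Delta_{2}$ of logarithmic height $\le h \cdot 2^{\br{n \log \sigma}^{c}}$ such that for every prime $p \nmid \Delta_{a} \cdot \disc{z}{G}$ and every $\kp$ above $p$, $Z((\tilde I_{a})_{\kp})$ is an irreducible curve. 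When $G \bmod p$ has a root in $\bF_{p}$ there is such a $\kp$ with residue field $\bF_{p}$, so $\bZ\bs{\beta}$ reduces into $\bF_{p}$ and $Z((\tilde I_{a})_{\kp})$ is an $\bF_{p}$-definable irreducible curve; since $(\tilde I_{a})_{\kp} \supseteq I_{\kp}$ and $\dim V_{p} = 1$ by \cref{lem:dimpreserve} applied to $I$ itself, this curve is an irreducible component of $V_{p}$. To separate the two components so obtained, note $g_{1} - g_{2} \ne 0$ over $\bZ\bs{\beta}$; pick a nonzero coefficient $\delta \in \bZ\bs{\beta}$ and put $\Delta' := \disc{z}{G} \cdot \resultant{G}{\widetilde{\delta}}{z}$, where $\widetilde{\delta} \in \bZ\bs{z}$ represents $\delta$. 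Then $\bar g_{1} \ne \bar g_{2}$ in $\bF_{p}\bs{x_{1}, x_{2}}$ whenever $p \nmid \Delta'$, while the two curves project (finitely, once $p$ is large enough and avoids the bad primes of the membership certificates) into $Z(\bar g_{1})$ and $Z(\bar g_{2})$ respectively, so they are distinct. Taking $\Delta$ to be the product of $a$, the denominators occurring in the certificates produced above, $\Delta_{1}$, $\Delta_{2}$, $\Delta'$, $\disc{z}{G}$, and $(\abs{B} + 1)!$ (to keep $p$ large enough for the genericity arguments over $\bF_{p}$, exactly as in the proof of \cref{thm:irredpreserve}) finishes the proof, since each factor has logarithmic height $\le h \cdot 2^{\br{n \log \sigma}^{c}}$ and there are only $O(1)$ of them. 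The step I expect to be the main obstacle is the construction of the $\tilde I_{a}$: pinning down a prescribed component of a curve living in $\bA^{n}$ by adjoining at most two equations whose coefficients all lie in one number field of degree $\poly{D}$, while simultaneously controlling every height through the elimination and composition bounds, arranging that all of the genericity requirements (birational separating projections, no contracted components, squarefreeness of $g(x_{1}, 0)$ for Kaltofen, and avoidance of the zero-dimensional excess) can be met by bounded-height linear forms through a single union bound, and checking that all of these properties, together with the membership certificates, survive reduction modulo $p$ for all but the exceptional primes collected into $\Delta$.
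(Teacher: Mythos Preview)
Your plan is essentially the paper's: two separating plane projections, Kaltofen on each to extract a factor per component, a primitive element for the four resulting algebraic numbers (whence $\deg G \le D^{4}$), and then \cref{lem:dimpreserve} and \cref{thm:irredpreserve} applied to the augmented ideals $\tilde I_{1}, \tilde I_{2}$. The construction of the $\tilde I_{a}$ and the height bookkeeping you outline match the paper's argument.

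The one point that does not go through as written is the distinctness step. You argue that $\bar g_{1} \ne \bar g_{2}$ (via a resultant) and that the two space curves project \emph{finitely} into $Z(\bar g_{1})$ and $Z(\bar g_{2})$, hence are distinct. But nothing you have collected into $\Delta$ forces $\pi$ to be finite on $V_{p}$ or on $Z((\tilde I_{a})_{\kp})$: the certificate $a\,g^{t} = \sum f_{i} h_{i}$ only gives containment $\pi(V_{p}) \subseteq Z(\bar g)$, and taking $p > |B|$ does not prevent some component of $V_{p}$ from sitting inside a single fiber of $\pi$. If $Z((\tilde I_{1})_{\kp}) = Z((\tilde I_{2})_{\kp})$ happened to lie in a fiber, it would project to a point in $Z(\bar g_{1}) \cap Z(\bar g_{2})$ and your argument would not see the contradiction. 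The paper sidesteps this entirely by applying \cref{lem:dimpreserve} once more, to $\tilde I_{1} + \tilde I_{2}$: over $\overline{\bQ}$ this has dimension $0$, so for good $p$ the intersection $Z((\tilde I_{1})_{\kp}) \cap Z((\tilde I_{2})_{\kp})$ is $0$-dimensional, and two irreducible curves with finite intersection are distinct. This adds one more controlled factor to $\Delta$ and closes the gap without any appeal to finiteness of the projection modulo $p$.
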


\begin{proof}
    Let $k$ be the number of irreducible components of $V$, and suppose $V = V_{1} \cup \cdots V_{k}$ is the irreducible decomposition of $B$.
    By \bezout's theorem (\cref{theorem: bezout inequality}) we have $k \leq D := d^{n}$.
    As a first step, we will find defining equations for two components of $V$.
    \footnote{A reviewer pointed out to us that the existence of such equations also follows from \cite[Theorem~1.1]{scheiblechner2010generalization}.}
    The components of $V$ might not be $\bQ$-definable, therefore the defining equations we find will potentially lie in some algebraic extension of $\bQ$.

    Pick a linear map $\phi : \bA^{n} \to \bA^{2}$, where the coefficients of each coordinate map are picked from the set $\bs{D^{6}}$, such that $\pi|_{V} : V \to \phi(V)$ is finite, and such that each of the $k$ components of $V$ have distinct images under $\phi$.
    A random linear map has these properties, and therefore such a map exists.
    Since $\dim I = 1$ we have $\dim \phi(V) = 1$.
    Further, since $I$ is equidimensional, $\overline{\pi(V)}$ has exactly $k$ components that are each of codimension $1$, therefore $\overline{\phi(V)} = Z(g_{1})$ for some $g_{1}$.
    We can apply \cref{lem:eliminationextendedeffective} to deduce that $g_{1} \in R$, and the existence of polynomials $h_{1k} \in R$, and $a_{1}, t_{1} \in \bZ$ such that $a_{1} g_{1}^{t_{1}} = \sum_{k} f_{k} h_{1k}$.
    Further $\lheight{g_{1}}, \lheight{h_{1k}} \leq h \cdot d^{c_{1}n^{c_{1}}}$.

    Now $g_{1}$ has exactly $k$ absolutely irreducible factors, $g_{1} = g_{11} \cdot g_{12} \cdot \cdots \cdot g_{1k}$.
    Each of these factors are coprime, and occur with multiplicity $1$.
    Further, each factor $g_{1i}$ corresponds to the irreducible component $V_{i}$.
    Consider $I_{1} := I + g_{11}$ and $I_{2} := I + g_{12}$.
    We have $V_{1} \subset Z(I_{1})$, and $V_{2} \subset Z(I_{2})$.
    None of the components $V_{2}, \dots, V_{k}$ vanish on $g_{11}$.
    Therefore $\br{V_{2} \cup \cdots \cup V_{k}} \cap Z(g_{11})$ has dimension $0$.
    By Bézout's inequality, and the fact that $\deg g_{11} \leq \deg g_{1} \leq d^{n}$, we deduce that $\br{V_{2} \cup \cdots \cup V_{k}} \cap Z(g_{11})$ consists of at most $d^{2n}$ points.
    Similarly, $\br{V_{1} \cup V_{3} \cup \cdots \cup V_{k}} \cap Z(g_{12})$ consists of at most $d^{2n}$ points.

    In order to define $V_{1}$ (respectively $V_{2}$), we need to add equations to $I_{1}$ (respectively $I_{2}$) that vanish on $V_{1}$ (respectively $V_{2}$) but not on the above points.
    Towards this, pick a second linear map $\psi : \bA^{n} \to \bA^{2}$ where the coefficients of each coordinate map are picked from the set $\bs{D^{6}}$, such that $\psi|_{V} : V \to \psi(V)$ is finite, and such that each of the $k$ components of $V$ have distinct images under $\psi$.
    We further require that none of the $d^{2n}$ points of $\br{V_{2} \cup \cdots \cup V_{k}} \cap Z(g_{11})$ are mapped to $\psi(V_{1})$, and that none of the $d^{2n}$ points of $\br{V_{1} \cup V_{3} \cup \cdots \cup V_{k}} \cap Z(g_{12})$ are mapped to $\psi(V_{2})$.
    Such a map exists, since a random map satisfies these properties.
    As before, we have $g_{2}$ such that $\psi(V) = Z(g_{2})$, and by \cref{lem:eliminationextendedeffective} we have $g_{2} \in R$, and the existence of polynomials $h_{2k} \in R$, and $a_{1}, t_{1} \in \bZ$ such that $a_{2} g_{2}^{t_{2}} = \sum_{k} f_{k} h_{2k}$.
    Further $\lheight{g_{2}}, \lheight{h_{2k}} \leq h \cdot d^{c_{1}n^{c_{1}}}$.
    We can also factor $g_{2}$ as $g_{2} = g_{21} \cdot g_{22} \cdot \cdots \cdot g_{2k}$ with $g_{2i}$ corresponding to $V_{i}$.

    We now set $J_{1} := I_{1} + g_{21} = I + g_{11} + g_{21}$ and $J_{2} := I_{2} + g_{22} = I + g_{12} + g_{22}$.
    By construction, $Z(J_{1}) = V_{1}$ and $Z(J_{2}) = V_{2}$.

    The next step is to control the heights of $g_{11}, g_{12}, g_{21}, g_{22}$, in a number field that contains all these elements.
    To this end, we treat $g_{1}, g_{2}$ as polynomials in two variables each, say $y_{1}, y_{2}$.
    We can assume that the leading coefficients of $g_{1}, g_{2}$ in $y_{1}$ belong to $\bZ$: if not then this can be achieved by a random linear transformation, and we could have applied this transformation to the linear maps defining $g_{1}, g_{2}$ themselves.
    If $a$ is the leading coefficient of $g_{1}$, then we write $\widetilde{g}_{1} := a^{\deg{g_{1}} - 1} g(y_{1} / a, y_{2} + b)$, for a random $b$ with $\lheight{b} \leq \cdot 2^{\br{n \log \sigma}^{c_{2}}}$.
    After this transformation, the polynomial $\widetilde{g}_{1}$ is monic in $y_{1}$, and $\widetilde{g}_{1}(y_{1}, 0)$ is squarefree.
    Further, factors of $g_{1}$, in particular $g_{11}, g_{12}$ are in bijection with factors of $\widetilde{g}_{1}$.

    We now focus on $\widetilde{g}_{11}$, the factor corresponding to $g_{11}$ under this bijection.
    By \cref{lem:factorheight}, there is an irreducible factor $q_{11}(z)$ of $\widetilde{g}_{1}(z, 0)$, integers $\Delta_{11}, \Delta'_{11}$, and polynomials $u_{11}, v_{11}, w_{11} \in \bZ\bs{y_{1}, y_{2}, z}$ such that
    $$\Delta_{11} \Delta'_{11} \widetilde{g}_{1}(y_{1}, y_{2}) = u_{11}(y_{1}, y_{2}, z) v_{11}(y_{1}, y_{2}, z) + w_{11}(y_{1}, y_{2}, z) q_{11}(z).$$
    Further, $\deg_{z}(u_{11}), \deg_{z}(v_{11}) < e$.
    Finally, there is a root $\alpha_{11}$ of $q_{11}$ such that $g_{11} = u_{11}(x, y, \alpha_{11})$, potentially after scaling.
    By Gelfond's inequality \cite[B.7.3]{hindry2013diophantine} we have $\lheight{q_{11}} \leq h \cdot 2^{\br{n \log \sigma}^{c_{3}}}$, and we also have $\deg q_{11} \leq \deg g_{1} = D$.
    We also have $\lheight{\Delta_{11}}, \lheight{\Delta'_{11}}, \lheight{u_{11}} \leq h \cdot 2^{\br{n \log \sigma}^{c_{3}}}$ by \cref{lem:factorheight}.

    We similarly construct $q_{ij}, \alpha_{ij}, u_{ij}, v_{ij}, w_{ij}, \Delta_{ij}, \Delta'_{ij}$.
    We now construct an extension of $\bQ$ that contains every $\alpha_{ij}$.
    To this end, we construct a primitive element of $\alpha_{11}, \alpha_{12}, \alpha_{21}, \alpha_{22}$.
    Using \cref{theorem: koiran primitive element} we can deduce the existence of a polynomial $G$ with $\deg{G} \leq D^{4}$ and $\lheight{G} \leq h \cdot 2^{\br{n \log \sigma}^{c_{4}}}$, and a root $\gamma$ of $q$, along with $Q_{ij}, r$ such that $\alpha_{ij} = Q_{ij}(\gamma) / r$.
    We also have $\lheight{Q_{ij}, r} \leq h \cdot 2^{\br{n \log \sigma}^{c_{4}}}$.
    In equation 
    $$\Delta_{11} \Delta'_{11} \widetilde{g}_{1}(y_{1}, y_{2}) = u_{11}(y_{1}, y_{2}, z) v_{11}(y_{1}, y_{2}, z) + w_{11}(y_{1}, y_{2}, z) q_{11}(z)$$
    we substitute $z = Q_{11} / r$, multiply throughout by $r^{D^{2}}$, and reduce the coefficients mod $G$ to obtain
    $$\Delta_{11} \Delta'_{11} r^{D^{2}} \widetilde{g}_{1}(y_{1}, y_{2}) = \widetilde{u}_{11}(y_{1}, y_{2}, z) \widetilde{v}_{11}(y_{1}, y_{2}, z) + \widetilde{w}_{11}(y_{1}, y_{2}, z) G(z).$$
    We now have $\widetilde{u}(y_{1}, y_{2}, \gamma) = \widetilde{g}_{11}$, potentially after scaling.
    Up to scaling by $a$, the polynomial $g_{11}$ is exactly $\widetilde{u}(ay_{1}, y_{2} + b, \gamma)$.
    This shows that $\lheight{g_{11}} \leq h \cdot 2^{\br{n \log \sigma}^{c_{5}}}$.
    Therefore, we have obtained the height bounds on $g_{11}$ that we were looking for, and the same bound holds for $g_{ij}$.

    Recall that $J_{1}, J_{2}$ correspond to different irreducible components of $I$, and that $J_{1} + J_{2}$ has dimension $0$.
    Pick any prime $p \not | \disc{z}{G}$ such that $G \pmod p$ has a linear factor.
    Let $\kp$ be a prime of the ring of integers of $\bQ\bs{z} / G(z)$ that corresponds to the linear factor.
    The quotient map $\bZ\bs{\gamma} \to \bZ\bs{\gamma} / \kp$ has image in $\bF_{p}$.
    We now apply \cref{lem:dimpreserve} and \cref{thm:irredpreserve} to the ideals $J_{1}, J_{2}$, and $J_{1} + J_{2}$.
    We deduce that there is an integer $\Delta_{0}$ such that for any prime $p \not | \Delta_{0}$, the images of $J_{1}, J_{2}$ remain irreducible, and the dimensions of $I, J_{1}, J_{2}, J_{1} + J_{2}$ are one, one, one, zero respectively.
    We also have $Z((J_{1})_{\kp}) \subset Z(I_{\kp})$.
    Therefore, $Z((J_{1})_{\kp})$ is an $\bF_{p}$-definable irreducible component of $Z(I_{\kp})$.
    The same holds for $Z((J_{2})_{\kp})$.
    Since $\dim (J_{1} + J_{2})_{\kp} = 0$, these two components are different from each other.
\end{proof}

\section{Interactive proofs of primality testing}\label{section: interactive proofs}

\RO{clean this up a bit later}

In this section, we prove our main theorem: interactive protocols for testing non-primality of natural classes of ideals.
As we have discussed in \cref{section: introduction}, we first give some protocols for certain special cases, as the protocol in the main theorem will invoke these special cases as subroutines.

The setting of this section is that of our main theorem: we assume that we have an algebraic circuit $C'$ of size $5sm$ with integer coefficients that computes polynomials $f_{1}, \dots, f_{m} \in \bZ\bs{x_{1}, \dots, x_{m}}$, and the partial derivatives $\partial_{i} f_{j}$.
The existence of $C'$ is guaranteed by \cref{lemma: derivatives of a circuit} applied to the input circuit $C$.
By \cref{lemma: circuit height bound} also applied to the input circuit $C$, we have $\lheight{f_{i}} \leq 2^{2s}$ and $\deg{f_{i}} \leq 2^{s}$.
Define $d := \max \deg{f_{i}}$.
Define $\sigma := dm + 2$.

Let $I := \ideal{f_{1}, \dots, f_{m}}$, and let $V := Z(I)$ denote the algebraic set corresponding to $I$.
For any prime $p$, define $I_{p}$ to be the ideal in $\bF_{p}\bs{x_{1}, \dots, x_{n}}$ generated by the images of $f_{1}, \dots, f_{m}$ under the quotient map $R \to \bF_{p}\bs{x_{1}, \dots, x_{n}}$, and let $V_{p}$ denote the algebraic set of $I_{p}$ in $\overline{\bF}_{p}\bs{x_{1}, \dots, x_{n}}$.

We begin with a protocol to decide if a zero dimensional ideal has at least two points in its zeroset.

\begin{lemma}
    \label{lem:zero dimensional reducible protocol}
    Assume GRH.
    If $I$ is promised to be zero dimensional, then there is an AM protocol that verifies $\abs{V} \geq 2$.
\end{lemma}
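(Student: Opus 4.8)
The plan is to combine the zero-dimensional base change results from \cref{section: base change} with the Goldwasser--Sipser set lower bound machinery of \cref{cor:goldwasser sipser extended}. The key structural dichotomy is provided by \cref{cor:onerootheightbound} and \cref{lem:tworootsbounds}: if $\abs{V} = 1$, then for all but finitely many primes $p$ (those dividing a certain $\Delta$ of logarithmic height $h \cdot 2^{(n\log\sigma)^c}$) the reduction $V_p$ is a single $\bF_p$-rational point; whereas if $\abs{V} \geq 2$, then there is a polynomial $G \in \bZ[z]$ and an integer $\Delta$, both of controlled logarithmic height, such that whenever $p \nmid \Delta$ and $G \bmod p$ has a root in $\bF_p$, the set $V_p$ contains at least two distinct $\bF_p$-rational points. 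By the effective Chebotarev density theorem \cref{lem:effective chebotarev} (here is where GRH enters) together with the unconditional prime bound \cref{thm:unconditional prime counting}, the density of such ``good'' primes $p \leq A$ is at least roughly $\frac{1}{\deg G}\cdot \frac{A}{\ln A}$ for a suitable threshold $A$ that is singly exponential in the input size, hence writable with polynomially many bits.

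With this in hand, the protocol is: let $\Pi$ be the set of primes $p \leq A$, for a bound $A := 2^{(n\log\sigma)^{c'}}\cdot 2^{2s}$ chosen large enough that both the Chebotarev count and the exclusion of the bad primes dividing $\Delta$ go through. For each $p \in \Pi$, membership of $p$ in the ``good set'' $S$ is witnessed by exhibiting two distinct points $\alpha, \beta \in \bF_p^n$ together with the (efficiently checkable, by evaluating the circuit $C$ modulo $p$) certificates that $f_i(\alpha) \equiv f_i(\beta) \equiv 0 \bmod p$ and $\alpha \neq \beta$. So membership in $S$ is in $\NP$. If $\abs{V} \geq 2$, then $\abs{S} \geq K$ for $K$ roughly $\frac{1}{\deg G}\pi(A)$; if $\abs{V} = 1$, then $S$ is contained in the set of primes dividing $\Delta$, which has at most $\log\abs{\Delta} \leq h\cdot 2^{(n\log\sigma)^c}$ elements --- but we need $\abs{S}$ in this case to be smaller than $K/2$, which requires choosing $A$ large enough that $\frac{1}{\deg G}\pi(A) > 4\log\abs{\Delta}$; since $\deg G, \log\abs{\Delta}$ are both at most $h\cdot 2^{(n\log\sigma)^{O(1)}}$ while $\pi(A) \geq A/\ln A$ grows with $A$, picking $A$ a sufficiently large (but still singly exponential, hence poly-bit) multiple suffices. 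Then \cref{lem:goldwassersipser} (or its extension \cref{cor:goldwasser sipser extended}, though here the plain version is enough since membership in $S$ is already in $\NP$) gives an $\AM$ protocol deciding $\abs{S} \geq 2K'$ versus $\abs{S} \leq K'$ for the appropriate $K'$, which by the dichotomy decides $\abs{V} \geq 2$.

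One subtlety to handle carefully: Arthur cannot verify by himself that a prime $p$ he is handed is actually prime if $p$ is large, but primes up to $A$ have $\poly(s)$ bits, so primality of $p$ is checkable in deterministic polynomial time (AKS), or one can simply have Merlin send $p$ and Arthur check. Another subtlety is that the points $\alpha,\beta$ are genuine $\bF_p$-points (not points over an extension), which is exactly what \cref{lem:tworootsbounds} guarantees via the primitive element / resultant argument --- this is why we need that lemma rather than merely $\abs{V_p} \geq 2$ over $\overline{\bF}_p$.

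The main obstacle, and the place requiring the most care, is the bookkeeping of the quantitative bounds: one must verify that the threshold $A$ can simultaneously (i) be large enough that the effective Chebotarev lower bound on good primes dominates the crude upper bound on bad primes by a constant factor bounded away from $1$, as in \cref{rem:goldwassersipser}, and (ii) still have bit-length polynomial in $s$ so that the hash functions, the prime $p$, and the field arithmetic in the protocol all fit in polynomial size. Since $\deg G \leq d^{n^{O(1)}} = 2^{(n\log\sigma)^{O(1)}}$ and $\log\abs{\Delta} \leq h\cdot 2^{(n\log\sigma)^{O(1)}}$ with $h \leq 2^{2s}$, and $\log A$ need only be polynomially larger than $\log(\deg G \cdot \log\abs{\Delta})$, this works out, but writing the inequality chain explicitly (and invoking the remark that the constant $2$ in Goldwasser--Sipser may be relaxed, in case the density ratio is slightly below $2$) is the technical heart of the argument. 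Everything else is assembling the already-proved lemmas.
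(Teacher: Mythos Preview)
Your proposal is correct and follows essentially the same approach as the paper: define the set of primes $p \leq A$ for which $V_p$ has at least two $\bF_p$-rational points, use \cref{cor:onerootheightbound} and \cref{lem:tworootsbounds} together with the effective Chebotarev bound \cref{lem:effective chebotarev} and \cref{thm:unconditional prime counting} to establish the size gap, and then apply the Goldwasser--Sipser protocol with the NP-certificate being a pair of distinct $\bF_p$-roots verified by evaluating the circuit modulo $p$. The paper's proof carries out exactly this plan with the same lemmas and the same threshold-setting argument you outline.
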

\begin{proof}
    The proof uses \cref{cor:onerootheightbound}, \cref{lem:tworootsbounds} and the Goldwasser-Sipser protocol (\cref{lem:goldwassersipser}).
    Let $h := \max_{i} \lheight{f_{i}}$, by \cref{lemma: circuit height bound} we have $h \leq 2^{2s}$.

    Define $\Pi_{red}(x)$ to be the set of primes $p \leq x$ such that $V_{p}$ has at least two points in $\bF_{p}$, define $\pi_{red}(x) := \abs{\Pi_{red}(x)}$.
    By \cref{cor:onerootheightbound}, if $\abs{V} = 1$ then $\pi_{red}(x) \leq h \cdot 2^{\br{n \log \sigma}^{c_{1}}}$ for all $x$.
    Suppose $\abs{V} \geq 2$.
    Let $G$ be the polynomial guaranteed by \cref{lem:tworootsbounds}.
    By the effective Chebotarev density theorem (\cref{lem:effective chebotarev}) we have 
    $$\pi_{G}(x) \geq \frac{1}{\deg G} \br{\pi(x) - \log \Delta_{G} - c_{2} x^{1/2} \log\br{\Delta_{G} x^{\deg(G)}}},$$
    where $\Delta_{G}$ is the discriminant of $G$, and $\pi_{G}(x)$ is the number of primes $p \leq x$ such that $G$ has a root in $\bF_{p}$.
    By \cref{lem:tworootsbounds} $\deg G \leq 2^{\br{n \log \sigma}^{c_{3}}}$, and $\lheight{G} \leq h \cdot 2^{\br{n \log \sigma}^{c_{4}}}$.
    Bounding the discriminant using \cref{cor:resdischeightbounds}, we have
    $$\pi_{G}(x) \geq \frac{\pi(x)}{2^{\br{n \log \sigma}^{c_{3}}}} - c_{5} h x^{1/2} 2^{\br{n \log \sigma}^{c_{6}}} - c_{7} x^{1/2} \log x.$$
    By \cref{lem:tworootsbounds} we have $\pi_{red}(x) \geq \pi_{G}(x) - h \cdot 2^{\br{n \log \sigma}^{c_{8}}}$.
    Finally, by \cref{thm:unconditional prime counting} we have $\pi(x) > \frac{x}{\log_{e}{x}}$ for all $x$ larger than $17$.
    For $x_{0} = h^{c_{9}} \cdot 2^{\br{n \log \sigma}^{c_{10}}}$ for a large enough constants $c_{9}, c_{10}$, we see that $\pi_{red}(x_{0}) \geq 2 h \cdot 2^{\br{n \log \sigma}^{c_{1}}}$.
    Note that $\lheight{x_{0}}$ is polynomial in the input.

    We can now invoke the Goldwasser-Sipser protocol (\cref{lem:goldwassersipser}) on the set $\Pi_{red}(x_{0})$, with $K = h \cdot 2^{\br{n \log \sigma}^{c_{1}}}$.
    Membership in this set is in NP: the certificate for membership are two distinct roots, which has polynomial bit complexity, and we can evaluate the input polynomials on these roots to verify them.
    As shown above, $\abs{V} \geq 2$ if and only if $\pi_{red}(x_{0}) \geq 2K$ and $\abs{V} = 1$ if and only if $\pi_{red}(x_{0}) \leq K$.
    Therefore, the protocol correctly verifies $\abs{V} \geq 2$.
\end{proof}

The second special case is a protocol that decides if a zeroset has at least two components of highest dimension.

\begin{lemma}
    \label{lem:twotopcomponents}
    Assume GRH.
    If $I$ is promised to be equidimensional and have dimension $r$, then there is an AM protocol that verifies that $V$ has at least two irreducible components of dimension $r$.
\end{lemma}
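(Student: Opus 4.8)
The plan is to reduce to the case $\dim I = 1$ and then run a two--level set lower bound protocol driven by the base change theorems of \cref{section: base change} together with the effective Lang--Weil estimate. First, if $r = 0$ then ``$V$ has at least two irreducible components of dimension $0$'' is exactly ``$\abs{V} \geq 2$'', which is \cref{lem:zero dimensional reducible protocol}. If $r \geq 1$, Arthur runs the randomized algorithm of \cref{cor:effectivebertini} with the degree bound $D := d^{n} \geq \deg V$ (valid by \cref{theorem: degree bounds}) to obtain linear forms $\ell_{1}, \dots, \ell_{r-1} \in \bZ\bs{x_{1}, \dots, x_{n}}$ of logarithmic height at most $(nD)^{O(1)}$; with all but exponentially small probability $V \cap Z(\ell_{1}, \dots, \ell_{r-1})$ is an equidimensional affine curve with the same number of irreducible components as $V$. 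Replacing $f_{1}, \dots, f_{m}$ by $f_{1}, \dots, f_{m}, \ell_{1}, \dots, \ell_{r-1}$ (still integer polynomials, still computed by a circuit of size polynomial in $s$), it suffices to give an $\AM$ protocol that, given an equidimensional curve $V = Z(f_{1}, \dots, f_{m})$ with $f_{i} \in R$, verifies that $V$ is reducible; the exponentially small failure probability of the Bertini step is absorbed into the soundness error at the end.

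For the dimension--one problem, write $\sigma := dm+2$. By \cref{thm:redpreserve}, if $V$ is reducible there are an integer $\Delta$ and a polynomial $G \in \bZ\bs{z}$, of logarithmic height $h \cdot 2^{(n\log\sigma)^{O(1)}}$ and with $\deg G \leq D^{4}$, such that for every prime $p \nmid \Delta$ with $G \bmod p$ having a root in $\bF_{p}$ the set $V_{p}$ has two distinct $\bF_{p}$--definable irreducible components; and by \cref{thm:irredpreserve} (applied with $e=1$), if $V$ is irreducible then $V_{p}$ is irreducible for every prime $p$ not dividing an integer $\Delta'$ of logarithmic height $B := h \cdot 2^{(n\log\sigma)^{O(1)}}$. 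Feeding these curves into the effective Lang--Weil bound \cref{thm:effective lang weil}, and using \cref{theorem: bezout inequality} to bound the $0$--dimensional intersection of two components by $D^{2}$ points, one obtains a threshold $T_{0}$ with $\lheight{T_{0}} = \poly{s}$ (of the order of $D^{13/3}$) such that every prime $p > T_{0}$ satisfies: if $V_{p}$ has two $\bF_{p}$--definable components then $\abs{V_{p}\cap\bF_{p}^{n}} \geq 1.99\,p$, while if $V_{p}$ is absolutely irreducible then $\abs{V_{p}\cap\bF_{p}^{n}} \leq 1.005\,p$. Choosing $K(p) := \lceil 1.01\,p\rceil$, the reducible case gives $\abs{V_{p}\cap\bF_{p}^{n}} \geq 1.99\,p \geq 1.9\,K(p)$ and the irreducible case gives $\abs{V_{p}\cap\bF_{p}^{n}} \leq 1.005\,p < K(p)$. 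Since membership of a point in $V_{p}\cap\bF_{p}^{n}$ is decidable in polynomial time by evaluating the circuit modulo $p$, the set lower bound protocol (\cref{lem:goldwassersipser}, with the sharpened factor of \cref{rem:goldwassersipser}) gives an $\AM$ protocol that, on input a prime $p$, verifies $\abs{V_{p}\cap\bF_{p}^{n}} \geq 1.9\,K(p)$; this is the inner protocol, and it accepts in the reducible case and rejects in the irreducible case.

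The outer protocol runs the set lower bound protocol on $\Pi(x_{0}) := \{\, p \text{ prime}: T_{0} < p \leq x_{0},\ \abs{V_{p}\cap\bF_{p}^{n}} \geq 1.9\,K(p)\,\}$, whose membership test is the inner $\AM$ protocol; this composition is handled by \cref{cor:goldwasser sipser extended} (again with \cref{rem:goldwassersipser}). Take $K := B$. If $V$ is irreducible, the bound above shows every $p \in \Pi(x_{0})$ divides $\Delta'$, so $\abs{\Pi(x_{0})} \leq B = K$. If $V$ is reducible, $\Pi(x_{0})$ contains every prime $T_{0} < p \leq x_{0}$ with $p \nmid \Delta$ and $G \bmod p$ having a root in $\bF_{p}$; by the effective Chebotarev density theorem \cref{lem:effective chebotarev} (the only place GRH is used), the height bound on $\disc{z}{G}$ from \cref{cor:resdischeightbounds}, and the unconditional estimate $\pi(x_{0}) > x_{0}/\ln x_{0}$ of \cref{thm:unconditional prime counting}, there is $x_{0}$ with $\lheight{x_{0}} = \poly{s}$ making this count at least $2K$ --- the same computation as in the proof of \cref{lem:zero dimensional reducible protocol}. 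Hence $\Pi(x_{0})$ satisfies the promise of \cref{cor:goldwasser sipser extended}, and after the usual amplification for $\AM$ (and a union bound over Arthur's Bertini coins) the composed protocol verifies that $V$ is reducible; the whole protocol is in $\AM$ since $\AM$ is closed under a constant number of rounds.

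The main obstacle is that the effective Lang--Weil estimate separates the reducible and irreducible cases only by a factor of $2-o(1)$, which sits exactly at the boundary of what a set lower bound protocol can exploit; this is why we must discard all primes below the (poly--bit--length) threshold $T_{0}$, so that the $o(1)$ error is negligible, and appeal to the factor-$1.9$ version of the protocol in \cref{rem:goldwassersipser}. A secondary point is that membership in the set $\Pi(x_{0})$ of ``good'' primes is decidable only in $\AM$, not in NP, which is why we need the nested protocol of \cref{cor:goldwasser sipser extended} rather than the plain set lower bound used in \cref{lem:zero dimensional reducible protocol}.
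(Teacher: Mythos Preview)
Your proposal is correct and follows essentially the same approach as the paper: reduce to a curve via effective Bertini, then combine the base change theorems (\cref{thm:irredpreserve}, \cref{thm:redpreserve}) with the effective Lang--Weil bound to separate the irreducible and reducible cases by $\bF_p$-point counts, and run the nested Goldwasser--Sipser protocol of \cref{cor:goldwasser sipser extended} with the effective Chebotarev density theorem supplying enough good primes. The only cosmetic differences are that you handle $r=0$ explicitly by deferring to \cref{lem:zero dimensional reducible protocol} and you track the Lang--Weil constants $1.005/1.01/1.9/1.99$ rather than the paper's concrete threshold $(150)^{2}D^{8}$; both amount to the same thing.
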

\begin{proof}
    Let $D := d^{n}$.
    Apply the algorithm of \cref{cor:effectivebertini} in order to obtain a linear subspace $H$.
    The space $H$ is defined by linear equations $h_{1}, \dots, h_{r-1}$, with $\lheight{h_{i}} \leq \br{nD}^{c_{1}}$.
    Let $J := I + \ideal{h_{1}, \dots, h_{r-1}}$, and let $W := Z(J)$.
    By \cref{cor:effectivebertini}, with high probability we have $\dim(W) = 1$, and the number of irreducible components of dimension $1$ of $W$ is the same as the number of irreducible components of dimension $r$ in $V$.
    For the rest of the proof, we assume that this event occurs, this will only change the soundness and completeness of our protocol by a small constant, which does not change the complexity class AM.
    Let $J_{p}, W_{p}$ be defined the same way as $I_{p}, V_{p}$.

    Let $h := \max\br{\max_{j} \lheight{h_{j}},\max_{i} \lheight{f_{i}}}$, we have $h \leq \br{nD}^{c_{1}} + 2^{2s}$.
    Let $\Pi_{red}(x)$ be the set of primes $(150)^{2} D^{8} < p < x$ such that $\abs{W_{p} \cap \bF_{p}^{n}} \geq 2p - 10 D^{4} p^{1/2}$, and let $\pi_{red}(x) := \abs{\Pi_{red}(x)}$.
    By an effective Lang-Weil bound (\cref{thm:effective lang weil}), $p \in \Pi_{red}(x)$ if and only if $(150)^{2} D^{8} < p < x$ and $W_{p}$ has at least two irreducible components of dimension $1$ defined over $\bF_{p}$.

    Suppose $W$ has only one irreducible component of dimension $1$ (this component is then automatically defined over $\bF_{p}$).
    By \cref{thm:irredpreserve}, for all $x$ we have $\pi_{red}(x) \leq h \cdot 2^{\br{n \log \sigma}^{c}}$, since for all but $h \cdot 2^{\br{n \log \sigma}^{c}}$ primes $W_{p}$ has at most one irreducible component defined over $\bF_{p}$.
    On the other hand, suppose $W$ has at least two irreducible components of dimension $1$.
    Let $G$ be the polynomial guaranteed by \cref{thm:redpreserve}.
    By the effective Chebotarv density theorem \cref{lem:effective chebotarev} we have
    $$\pi_{G}(x) \geq \frac{\pi(x)}{2^{\br{n \log \sigma}^{c_{2}}}} - c_{3} h x^{1/2} 2^{\br{n \log \sigma}^{c_{4}}} - c_{5} x^{1/2} \log x.$$
    By \cref{thm:redpreserve}, for all but $h \cdot 2^{\br{n \log \sigma}^{c_{6}}}$ primes in $\pi_{G}(x)$, the zeroset $W_{p}$ has at least two components, therefore every such prime larger than $(150)^{2} D^{8}$ is in $\Pi_{red}(x)$.
    For $x_{0} = h^{c_{7}} \cdot 2^{\br{n \log \sigma}^{c_{8}}}$ for a large enough constants $c_{7}, c_{8}$, we see that $\pi_{red}(x_{0}) \geq 2 h \cdot 2^{\br{n \log \sigma}^{c}}$.

    We can now invoke the extension of the Goldwasser-Sipser protocol (\cref{cor:goldwasser sipser extended}) on the set $\Pi_{red}(x_{0})$, with $K = h \cdot 2^{\br{n \log \sigma}^{c}}$.
    Membership in this set is itself in AM: note that $\abs{W_{p} \cap \bF_{p}^{n}}$ is either at most $p + 5 D^{4} p^{1/2} + D$ or at least $2p - 10 D^{4} p^{1/2} - D$.
    Since $p > (150)^{2} D^{8}$ the ratio of these sizes is at least $1.9$.
    If the protocol accepts, then with high probability $W$, and therefore $V$ has at least two irreducible components of top dimension.
    If not, then with high probability $V$ has only one component of top dimension.
\end{proof}

\begin{remark}
\label{remark: reducible base change tight}
    We give an example that shows that \cref{thm:redpreserve} is tight in some sense.
    The theorem, combined with the effective Chebotarev density theorem \cref{lem:effective chebotarev} shows that the density of primes for which $\bF_{p}$ has two $\bF_{p}$ definable components is inverse exponential.
    We give an example to show that this inverse exponential behaviour is unavoidable.
    
    Consider $I = \ideal{x_{1}^{2} - 2, x_{2}^{2} - 3, \dots, x_{n-1}^{2} - p_{n-1}}$, where $p_{j}$ is the $j^{th}$ prime.
    The zeroset $V$ consists of $2^{n-1}$ components, each of dimension $1$.
    Similarly, for any prime $p > p_{n-1}$, the zeroset $V_{p}$ also consists of $2^{n-1}$ components, each of dimension $1$.
    However, these components might not be defined over $\bF_{p}$ itself.
    For any prime $p$, the components will be defined over the smallest extension of $\bF_{p}$ that contains square roots of $2, 3, \cdots, p_{n-1}$.
    The only primes $p$ for which there exist $\bF_{p}$ definable components of $V_{p}$ are those where $2, 3, \dots, p_{n-1}$ are all quadratic residues.
    By quadratic reciprocity and Dirichlet's theorem, the density of such primes within all primes is $2^{-n+1}$.
\end{remark}

\subsection{Interactive proof for radical ideals}\label{subsection: interactive  radical}

We now give an AM protocol for deciding non primality of radical ideals, using the above two protocols as subroutines.

\begin{algorithm}
  \caption{AM protocol for non-primality of radical ideals
    \label{alg:radical non prime}}
    \SetKwInOut{Input}{Input}
    \SetKwInOut{Output}{Output}
    \SetKwInOut{Merlin}{Merlin}
    \SetKwInOut{Arthur}{Arthur}

    \Input{A circuit $C'$ that computes polynomials $f_{1}, \dots, f_{m} \in R$ along with the derivatives $\partial_{i} f_{j}$, with $\deg{f_{i}} \leq d$ such that $I := \ideal{f_{1}, \dots, f_{m}}$ is a radical ideal, and the dimension $r$ of $I$.
    }

    \Arthur{If $r = 0$, then perform the protocol in \cref{lem:zero dimensional reducible protocol}, and accept if and only if the protocol accepts.
    If $r > 0$, then send the empty string to Merlin.
    }

    \Merlin{Compute the Jacobian matrix $\cJ_{ij} = \partial f_{i} / \partial x_{j}$.
    Check if there is a minor $M$ of $\cJ$ of size at least $n-r+1$ such that $M \neq 0, f_{1} = 0, \dots, f_{m} = 0$ is a satisfiable system.
    If such a minor exists, send the rows and columns that correspond to $M$ to Arthur.
    If not, then send the empty string to Arthur.
    }

    \Arthur{
    If Merlin sends the empty string then perform the protocol in \cref{lem:twotopcomponents}, and accept if and only if the protocol accepts.
    If not, then construct an algebraic circuit for $1 - yM$, where $M$ is the minor of $\cJ$ corresponding to the rows and columns received from Merlin.
    Perform the protocol in \cref{thm: koiran nullstellensatz in am} on the inputs $f_{1}, \dots, f_{m}, 1 - yM$.
    Accept if and only if this protocol accepts.
    }
\end{algorithm}

\begin{theorem}
    \label{theorem: radical ideal protocol}
    Assume GRH.
    If $I$ is a radical ideal of dimension $r$ then \cref{alg:radical non prime} is a valid AM protocol for deciding $I$ is not prime.
\end{theorem}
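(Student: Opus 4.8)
The plan is to verify that \cref{alg:radical non prime} is complete and sound, case by case according to the dimension $r$ of $I$. First, suppose $r = 0$. Then $I$ is a zero-dimensional radical ideal, so $V$ is a finite set of points, and $I$ is prime if and only if $|V| = 1$. Arthur runs the protocol of \cref{lem:zero dimensional reducible protocol}, which (assuming GRH) is a valid AM protocol verifying $|V| \geq 2$; hence it accepts with high probability exactly when $I$ is not prime. This handles the $r = 0$ case directly.

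Now suppose $r > 0$. Since $I$ is radical, it fails to be prime for exactly one of two reasons: either $V$ has an irreducible component of dimension strictly less than $r$, or $V$ is equidimensional of dimension $r$ but has at least two components. I will argue the protocol is complete and sound in both subcases, and moreover that the two subcases are correctly separated by Merlin's message. For the first subcase, I would invoke \cref{lem:tangent dimension}: $V$ has a component of dimension less than $r$ if and only if $\Rank \cJ(x_0) > n - r$ for some $x_0 \in V$, equivalently, there is some $(n-r+1)\times(n-r+1)$ minor $M$ of $\cJ$ and a point $x_0$ with $M(x_0) \neq 0$ and $f_1(x_0) = \cdots = f_m(x_0) = 0$. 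So if $V$ has a low-dimensional component, an honest Merlin can send the rows and columns of such a minor $M$, and the system $f_1 = \cdots = f_m = 0,\ 1 - yM = 0$ is satisfiable; the protocol of \cref{thm: koiran nullstellensatz in am} then accepts. Conversely (soundness), if $V$ is equidimensional of dimension $r$, then by \cref{lem:tangent dimension} no such minor can be nonvanishing on $V$, so for \emph{every} choice of rows and columns Merlin could send, the system $f_1 = \cdots = f_m = 0,\ 1 - yM = 0$ is unsatisfiable, and the Koiran protocol rejects with high probability. Here I should check that the relevant circuit for $1 - yM$ has size polynomial in the input: $M$ is an $(n-r+1)\times(n-r+1)$ determinant in the entries $\partial_i f_j$, which are computed by $C'$, and a determinant of an $\ell \times \ell$ matrix has an algebraic circuit of size $\poly(\ell)$ times the size of the entry circuits, so this is fine, and the height/degree bounds required by the remark following \cref{thm: koiran dimension in AM} are met.

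For the second subcase: if $V$ is equidimensional of dimension $r$ with at least two components, then in particular the minor test fails (no low-dimensional component), so an honest Merlin sends the empty string, and Arthur runs the protocol of \cref{lem:twotopcomponents}, which (assuming GRH) verifies that $V$ has at least two irreducible components of dimension $r$; it accepts with high probability. Conversely, if $I$ is prime (so $V$ is irreducible of dimension $r$), then for soundness I must handle \emph{both} possible Merlin messages: if Merlin sends a minor, the Koiran-protocol branch rejects with high probability by the argument above (since $V$ is equidimensional); if Merlin sends the empty string, the protocol of \cref{lem:twotopcomponents} rejects with high probability since $V$ has exactly one component of top dimension. Combining the branches and taking a union bound over the (constantly many, or constant-probability) error events, Arthur rejects with high probability when $I$ is prime and accepts with high probability when $I$ is not prime. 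Finally, the overall protocol is an AM protocol because each branch is, and Arthur's branching decision is based only on his own randomness and Merlin's first message, so the composition remains in AM (using $\AM = \AM[c]$ for constant round counts as in \cref{cor:goldwasser sipser extended}).

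The main obstacle I expect is not any single deep step but the bookkeeping of soundness across Merlin's adaptive choice of message: one must make sure that \emph{whatever} Merlin sends — a (possibly dishonest, possibly oversized) set of row/column indices, or the empty string — Arthur either detects the malformation or the subsequent subprotocol rejects with high probability when $I$ is genuinely prime. In particular Arthur must reject outright if Merlin's claimed minor has size less than $n - r + 1$, or if the indices are out of range; and when the claimed minor has size $\geq n-r+1$, the satisfiability of $f_1 = \cdots = f_m = 0,\ 1 - yM = 0$ is exactly the condition that some point of $V$ is a nonsingular point of dimension $< r$ \emph{or} a singular point, which by \cref{lem:tangent dimension} cannot happen on an equidimensional $V$ regardless of which submatrix is chosen — so soundness of this branch reduces cleanly to soundness of the Koiran protocol. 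The secondary point requiring care is confirming that all polynomial systems fed into the Koiran protocols have polynomial-size circuits and admit the height/degree parameters under which those protocols run in time polynomial in the input size, as recorded in the remark after \cref{thm: koiran dimension in AM}.
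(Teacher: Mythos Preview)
Your argument is correct and follows essentially the same approach as the paper's own proof: case split on $r=0$ versus $r>0$, then for $r>0$ use \cref{lem:tangent dimension} to handle the low-dimensional-component case via Koiran's Nullstellensatz protocol and \cref{lem:twotopcomponents} for the equidimensional-reducible case, with soundness checked against every possible Merlin message and the round collapse $\AM[c]=\AM$ at the end.

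One small inaccuracy in your closing discussion: you write that satisfiability of $f_1=\cdots=f_m=0,\ 1-yM=0$ is ``exactly the condition that some point of $V$ is a nonsingular point of dimension $<r$ \emph{or} a singular point.'' That is not right. At a singular point the tangent space is \emph{larger} than the local dimension, so the Jacobian rank there is \emph{lower}, not higher; a nonvanishing minor of size $\geq n-r+1$ forces $\Rank\cJ(x_0)>n-r$, which can occur only at smooth points of components of dimension $<r$. Your conclusion (``cannot happen on an equidimensional $V$'') is correct, but the intermediate characterization is off---and since singular points certainly can exist on equidimensional varieties, that mischaracterization would break the soundness argument if taken at face value. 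The clean statement is simply: \cref{lem:tangent dimension} says $\Rank\cJ(x_0)>n-r$ for some $x_0\in V$ iff $V$ has a component of dimension $<r$, so on an equidimensional $V$ every such minor vanishes identically on $V$ and the system is unsatisfiable.
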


\begin{proof}
    Suppose $I$ is not prime.
    Since $I$ is radical, it has at least two irreducible components.
    If $\dim I = 0$, then $\abs{V} \geq 2$, and the protocol in \cref{lem:zero dimensional reducible protocol} correctly accepts with high probability, and therefore \cref{alg:radical non prime} correctly accepts with high probability in the first step.
    If $r > 0$, then either $V$  has at least one component of dimension at most $r-1$, or $V$ is equidimensional and has at least two components of dimension $r$.
    If $V$ has a component of dimension less than $r$, then by \cref{lem:tangent dimension}, there is a point $x_{0} \in V$ such that $\Rank{\cJ(x_{0})} > n-r$.
    Therefore Merlin can always find a minor of $\cJ$ of size greater than $n-r$ that makes the polynomial system created  by Arthur in round 3 satisfiable. 
    Observe that the system has degree at most $nd+1$, and logarithmic height at most $h \cdot (dn)^{c}$, where $h := \max_{i} \lheight{f_{i}}$.
    Further, given the circuit $C'$ that computes $f_{1}, \dots, f_{m}$ and $\partial_{j} f_{i}$, Arthur can easily produce a circuit $C''$ that also computes $1 - yM$.
    In this case, the protocol in \cref{thm: koiran nullstellensatz in am} correctly accepts with high probability.
    If $V$ has two components of dimension $r$, then the protocol in \cref{lem:twotopcomponents} correctly accepts with high probability.

    Suppose $I$ is a prime ideal.
    If $r = 0$, then $\abs{V} = 1$, and the protocol in \cref{lem:zero dimensional reducible protocol} correctly rejects with high probability, and therefore \cref{alg:radical non prime} correctly fails to accept with high probability in the first step.
    If $r \geq 0$, then $V$ has only one irreducible component.
    The Jacobian $\cJ$ has rank at most $n-r$ at every point in $V$, therefore no matter what choice of rows and columns Merlin picks in round $2$, the system created by Arthur in round $3$ is unsatisfiable, and therefore the protocol in \cref{thm: koiran nullstellensatz in am} fails to accept with high probability.
    If Merlin sends the empty string, then the protocol in \cref{lem:twotopcomponents} fails to accept with high probability.
    Therefore, when $I$ is a prime ideal, the protocol in \cref{alg:radical non prime} correctly fails to accept with high probability.

    The above protocol is in $\AM[4]$, the theorem now follows from the fact that $\AM[4] = \AM$.
\end{proof}

\subsection{Equidimensional Cohen-Macaulay ideals}\label{subsection: interactive CM}

We now give an AM protocol for deciding non primality of equidimensional Cohen-Macaulay ideals.

\begin{algorithm}[H]
  \caption{AM protocol for non-primality of equidimensional CM ideals
    \label{alg:cm non prime}}
    \SetKwInOut{Input}{Input}
    \SetKwInOut{Output}{Output}
    \SetKwInOut{Merlin}{Merlin}
    \SetKwInOut{Arthur}{Arthur}

    \Input{A circuit $C'$ that computes polynomials $f_{1}, \dots, f_{m} \in R$ along with the derivatives $\partial_{i} f_{j}$, with $\deg{f_{i}} \leq d$ such that $I := \ideal{f_{1}, \dots, f_{m}}$ is an equidimensional Cohen-Macaulay ideal, and the dimension $r$ of $I$.
    }

    \Arthur{If $r = 0$, then perform the protocol in \cref{lem:zero dimensional reducible protocol}, and accept if the protocol accepts. \\
    If $r > 0$, then perform the protocol in \cref{lem:twotopcomponents}, and accept if the protocol accepts. \\
    If the above protocols fail to accept, let $Y, Z$ be $(n-r) \times m$ and $n \times (n-r)$ symbolic matrices (in new variables), and perform the protocol in \cref{thm: koiran dimension in AM} on the system $f_{1}, \dots, f_{m}, \det\br{Y\cJ Z}$, with parameter $(n-r)(m+n) + r$.
    Accept if and only if this protocol accepts.
    }
\end{algorithm}

\begin{theorem}
    \label{theorem: CM ideal protocol}
    Assume GRH.
    If $I$ is a equidimensinal Cohen-Macaulay ideal of dimension $r$ then \cref{alg:cm non prime} is a valid AM protocol for deciding $I$ is not prime.
\end{theorem}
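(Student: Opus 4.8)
The plan is to verify soundness and completeness of \cref{alg:cm non prime}, analyzing the three cases by dimension and then using \cref{theorem: serre jacobian} to handle the multiplicity case. First I would recall the structural dichotomy for equidimensional Cohen-Macaulay ideals: by \cite[Corollary~18.14]{eisenbud2013commutative}, such an ideal $I$ fails to be prime precisely when either (a) $V = Z(I)$ has at least two irreducible components (all of dimension $r$, by equidimensionality), or (b) $V$ is irreducible as a set but $I$ is not radical, i.e., the unique minimal prime occurs with ``multiplicity.'' Cases (a) and (b) are mutually exclusive given equidimensionality, and these are the only two ways $I$ can be non-prime. So the protocol needs to detect exactly one of these.

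Next I would handle each branch. If $r = 0$, then $V$ is a finite set of points, and $I$ is non-prime iff $|V| \geq 2$ (the multiplicity case cannot arise for a radical-or-not zero-dimensional CM ideal — wait, it can: a single fat point — so I need to be careful). Actually for $r=0$ CM, $I$ prime $\iff$ $V$ is a single reduced point. Hmm, but \cref{lem:zero dimensional reducible protocol} only tests $|V| \ge 2$. So I would argue: if $|V| \ge 2$ the first protocol accepts (correctly, since $I$ is then non-prime); if $|V| = 1$ but $I$ is not radical, we fall through to the dimension protocol with $r=0$, where the Serre--Jacobian condition $\dim Z(f_1,\dots,f_m,\det(Y\cJ Z)) \geq (n-r)(m+n) + r$ detects non-radicality. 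For $r > 0$: if $V$ has $\geq 2$ components of dimension $r$, \cref{lem:twotopcomponents} accepts correctly. If $V$ has a single component but with multiplicity, we reach the last line. The key claim is: by \cref{theorem: serre jacobian}, $I$ is radical iff the ideal $J$ generated by the $c \times c$ minors of $\cJ$ (where $c = n-r = \codim I$) in $R/I$ has codimension $\geq 1$, equivalently $\dim Z(I) \cap Z(\text{all }c\times c\text{ minors of }\cJ) < r$. I would rephrase the vanishing of all $c\times c$ minors using the symbolic matrices $Y \in \mathbb{A}^{(n-r)\times m}$, $Z \in \mathbb{A}^{n \times (n-r)}$: a point $x$ lies in $Z(I)$ with $\rank \cJ(x) < c$ iff $\det(Y \cJ(x) Z) = 0$ for all $Y, Z$, so the set $\{(x, Y, Z) : f_i(x) = 0, \det(Y\cJ(x)Z) = 0\}$ has dimension exactly $(n-r)(m+n) + \dim(Z(I) \cap Z(J))$ when the minor locus is nonempty inside $V$ — I'd need the standard fact that the incidence variety of matrices of non-full rank has the expected dimension, plus that $Z(I) \cap \{\rank \cJ < c\}$ is nonempty exactly when $I$ is non-radical (in the equidimensional irreducible-set case). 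So the threshold $(n-r)(m+n) + r$ is met iff $I$ is not radical.

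Then I would assemble completeness and soundness. Completeness: if $I$ is non-prime, one of the three conditions holds, and the corresponding subprotocol (\cref{lem:zero dimensional reducible protocol}, \cref{lem:twotopcomponents}, or \cref{thm: koiran dimension in AM} applied to the augmented system) accepts with high probability; since Arthur runs them in sequence and accepts if any accepts, the overall protocol accepts. Soundness: if $I$ is prime, then $V$ is a single reduced component of dimension $r$, so $|V| = 1$ if $r = 0$ (hence \cref{lem:zero dimensional reducible protocol} rejects w.h.p.), $V$ has only one component of top dimension (hence \cref{lem:twotopcomponents} rejects w.h.p.), and the singular locus $Z(J) \cap V$ is a proper subset not containing $V$, so $\dim Z(I) \cap Z(J) < r$ and the incidence variety has dimension $< (n-r)(m+n)+r$, hence \cref{thm: koiran dimension in AM} rejects w.h.p. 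I also need to check the augmented system $f_1, \dots, f_m, \det(Y\cJ Z)$ has a small circuit — this follows from \cref{lemma: derivatives of a circuit} (we are given a circuit $C'$ for the $f_i$ and $\partial_i f_j$) together with the fact that an $(n-r)\times(n-r)$ determinant of a matrix product $Y \cJ Z$ has a polynomial-size circuit — and that its degree and height are within the bounds required by the remark following \cref{thm: koiran dimension in AM}, so the protocol runs in $\poly(n,m,d,h)$ time; union-bounding the error probabilities over the (constantly many) rounds keeps us in $\AM$, and finally $\AM[O(1)] = \AM$.

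The main obstacle I expect is the dimension bookkeeping in the last step: correctly translating ``$J$ has codimension $\geq 1$ in $R/I$'' (the Serre--Jacobian condition of \cref{theorem: serre jacobian}) into a single dimension inequality $\dim Z(f_1, \dots, f_m, \det(Y\cJ Z)) \geq (n-r)(m+n)+r$ over the product space. This requires knowing that the ``generic annihilation'' trick — $\rank \cJ(x) < c$ iff $\det(Y\cJ(x)Z) = 0$ identically in $Y,Z$ — interacts with dimension exactly as expected, i.e., that fibering the bad locus $\{\rank \cJ < c\} \cap V$ over the $(Y,Z)$-space of non-full-rank-witnessing matrices contributes precisely $(n-r)(m+n)$ to the dimension (up to the subtlety that when $c = 0$ or the bad locus is empty the count degenerates, which is exactly the prime case we want to reject). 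Getting the constant $(n-r)(m+n)+r$ right, and confirming it is a genuine threshold (strictly larger in the non-radical case, strictly smaller in the radical case, with no boundary coincidences), is the delicate part; everything else reduces to invoking the already-established subprotocols and the $\AM$ closure properties.
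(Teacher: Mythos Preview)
Your proposal is correct and follows the same overall structure as the paper: split into the two-component case (handled by \cref{lem:zero dimensional reducible protocol} or \cref{lem:twotopcomponents}) and the single-component-with-multiplicity case (handled via \cref{theorem: serre jacobian} and \cref{thm: koiran dimension in AM}), then verify soundness when $I$ is prime.

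The one place you differ from the paper is in justifying the dimension threshold $(n-r)(m+n)+r$. You argue geometrically, by fibering the incidence set $\{(x,Y,Z): x\in V,\ \det(Y\cJ(x)Z)=0\}$ over $V$ and splitting into the locus where $\rank\cJ(x)<n-r$ (full $(Y,Z)$-fiber) versus its complement (hypersurface fiber). This works, and your concern about the bookkeeping is real but resolvable exactly as you outline. The paper instead argues algebraically: by Cauchy--Binet, $\det(Y\cJ Z)$ lies in the extension to $S[Y,Z]$ of the ideal $J$ generated by the $(n-r)\times(n-r)$ minors of $\cJ$. If $I$ has a unique minimal prime $\kp$ and is not radical, \cref{theorem: serre jacobian} gives $J\subset\kp$, hence $\det(Y\cJ Z)\in\kp\otimes\overline{\bQ}[Y,Z]$, so adding it to $I\otimes\overline{\bQ}[Y,Z]$ does not change the zero set and the dimension stays at $r+(n-r)(m+n)$. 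If $I$ is prime, some minor is not in $I$, so $\det(Y\cJ Z)\notin I\otimes\overline{\bQ}[Y,Z]$; since the latter is prime, the dimension drops by exactly one. This sidesteps the fiber-dimension analysis entirely and makes the threshold immediate, which is what dissolves the ``main obstacle'' you flagged.
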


\begin{proof}
    Suppose $I$ is not prime.
    Then either $I$ has at least two components of dimension $r$, or $I$ has exactly one irreducible component but is not radical.
    If $I$ has at least two irreducible components of dimension $r$, then depending on $r$ either the protocol in \cref{lem:zero dimensional reducible protocol} or the protocol in \cref{lem:twotopcomponents} correctly accepts with high probability.
    Suppose now that $I$ has a unique minimal prime, call this prime $\kp$.
    Since $S$ is an affine domain of dimension $n$, by \cite[Cor~13.4]{eisenbud2013commutative} we have $\codim I = \codim \kp = n - r$.
    Let $J$ be the ideal generated by the $(n-r) \times (n-r)$ minors of $\cJ$.
    Since $S / I$ is Cohen-Macaulay and $I$ is not radical, by \cref{theorem: serre jacobian} we deduce that $J$ has codimension $0$ in $S / I$, equivalently every generator of $J$ lies in $\kp$.
    The polynomial $\det\br{Y\cJ Z}$ lies in the ideal $J \tensor \overline{\bQ}\bs{Y, Z}$, therefore $\det\br{Y\cJ Z} \in \kp \tensor \overline{\bQ}\bs{Y, Z}$.
    The ideal $\kp \tensor \overline{\bQ}\bs{Y, Z}$ has dimension $r + (n-r)(m+n)$, and is the unique minimal prime of $I \tensor \overline{\bQ}\bs{Y, Z}$, therefore $\dim \det\br{Y\cJ Z} + I \tensor \overline{\bQ}\bs{Y, Z} = r + (n-r)(m+n)$.
    Therefore the protocol in \cref{thm: koiran dimension in AM} correctly accepts with high probability.

    Now suppose $I$ is prime.
    Since $I$ has only one irreducible component, depending on $r$ either the protocol in \cref{lem:zero dimensional reducible protocol} or the protocol in \cref{lem:twotopcomponents} fails to accept with high probability.
    Since $S/I$ is reduced, \cref{theorem: serre jacobian} implies that $J$ has codimension at least $1$ in $S / I$.
    Equivalently, there is at least one generator of $J$ that does not lie in the ideal $I$.
    Therefore, $\det\br{Y\cJ Z} \not \in I \tensor \overline{\bQ}\bs{Y, Z}$, and $\dim \det\br{Y\cJ Z} + I \tensor \overline{\bQ}\bs{Y, Z} = r + (n-r)(m+n) - 1$.
    The protocol in \cref{thm: koiran dimension in AM} correctly fails to accept with high probability.
\end{proof}

\subsection{Proof of main theorems}

We now use the above protocols to prove our main theorems, which we restate here for convenience.

\interactive*

\begin{proof}
    Since we know the dimension of $I$, by \cref{theorem: radical ideal protocol} and \cref{theorem: CM ideal protocol} (depending on whether $I$ is radical or equidimensional Cohen-Macaulay), the complexity of deciding if $I$ is not prime lies in $\AM$, therefore the complexity of deciding if $I$ is prime lies in $\coAM$.
\end{proof}

\primesPH*

\begin{proof}
    We first compute the dimension of $I$.
    By \cref{thm: koiran dimension in AM}, there exists an $\AM$ protocol for checking if $\dim I \geq r$ for any $r$, and since $\AM \subset \Pi_{2}^{P}$, the complexity of computing the dimension of $I$ exactly lies in $P^{\Pi_{2}^{p}} \subset \Sigma_{3}^{p} \cap \Pi_{3}^{p}$.
    With the dimension at hand, we can apply \cref{theorem: interactive protocols for primality} and decide whether $I$ is prime in $P^{\Pi_2^p}$.
    Thus, the complexity of deciding if $I$ is prime is in $\Sigma_3^p \cap \Pi_3^p$.
\end{proof}

\section{Conclusion \& open problems}\label{section: conclusion}

 In this work, we proved that the ideal primality testing problem is in the third level of the polynomial hierarchy for the natural classes of ideals comprised of radical ideals and equidimensional Cohen-Macaulay ideals.
This significantly tightens the complexity-theoretic gap for the primality testing problem for these classes of ideals, given that the primality testing problem is already $\coNP$-hard for such classes.
Prior to our work, the best upper bounds for testing whether a radical ideal or a zero-dimensional ideal is prime was $\PSPACE$, whereas for complete intersections the best upper bound was $\EXP$.

\begin{itemize}
\item 
A common nice feature of the two classes of ideals that we studied is that any associated prime must be a minimal prime, and therefore we avoid the issue of having to detect {\em embedded} primes. 
The issue of embedded primes, which causes non-reduced behavior in "very small" parts of our algebraic set, is in general a very hard problem to identify (and therefore to handle). 
We leave it as an open question, to detect embedded primes either "via zeros" or algorithmically better than EXPSPACE.

\item 
We studied the problems over the field of $\bC$ (or $\bQ$). This allows us to go modulo several primes $p$, leading to the application of arithmetic-geometry results over finite fields (e.g.~Lang-Weil points count and Chebotarev primes count). The ideal primality testing problem over an input field $\overline{\bF}_q$, for prime $q$, rules out the idea of going modulo other primes $p\ne q$. This makes it an open question to extend Koiran's result \cite{K96} to input field $\overline{\bF}_q$. Similarly, our paper leaves it as an open question, to improve ideal primality testing beyond EXPSPACE, for natural classes of ideals over an algebraically closed field of positive characteristic. (E.g.~what about the dimension $\le1$ case here?)

\end{itemize}

 \printbibliography

\end{document}